\theoremstyle{plain}
\newtheorem{theorem}[]{Theorem}
\newtheorem{lemma}[]{Lemma}
\newtheorem{proposition}[]{Proposition}
\newtheorem{definition}[]{Definition}
\newtheorem{assumption}[]{Assumption}
\newtheorem{remark}[]{Remark}
\newcommand{\bigCI}{\mathrel{\text{\scalebox{1.07}{$\perp\mkern-10mu\perp$}}}}
\icmltitlerunning{Open Ad Hoc Teamwork with Cooperative Game Theory}
\begin{document}

\twocolumn[
% \icmltitle{A Cooperative Game-Theoretical Model for Agent's Cooperation \\ with Teams of Uncertain Compositions and Team Sizes}
\icmltitle{Open Ad Hoc Teamwork with Cooperative Game Theory}

% It is OKAY to include author information, even for blind
% submissions: the style file will automatically remove it for you
% unless you've provided the [accepted] option to the icml2024
% package.

% List of affiliations: The first argument should be a (short)
% identifier you will use later to specify author affiliations
% Academic affiliations should list Department, University, City, Region, Country
% Industry affiliations should list Company, City, Region, Country

% You can specify symbols, otherwise they are numbered in order.
% Ideally, you should not use this facility. Affiliations will be numbered
% in order of appearance and this is the preferred way.
\icmlsetsymbol{equal}{*}

\begin{icmlauthorlist}
\icmlauthor{Jianhong Wang}{yyy}
\icmlauthor{Yang Li}{yyy}
\icmlauthor{Yuan Zhang}{comp}
\icmlauthor{Wei Pan}{yyy}
\icmlauthor{Samuel Kaski}{yyy,aaa}
% \icmlauthor{Firstname6 Lastname6}{sch,yyy,comp}
% \icmlauthor{Firstname7 Lastname7}{comp}
%\icmlauthor{}{sch}
% \icmlauthor{Firstname8 Lastname8}{sch}
% \icmlauthor{Firstname8 Lastname8}{yyy,comp}
%\icmlauthor{}{sch}
%\icmlauthor{}{sch}
\end{icmlauthorlist}

\icmlaffiliation{yyy}{Center for AI Fundamentals, University of Manchester, UK}
\icmlaffiliation{aaa}{Aalto University, Finland}
\icmlaffiliation{comp}{Neurorobotics Lab, University of Freiburg, Germany}
% \icmlaffiliation{sch}{School of ZZZ, Institute of WWW, Location, Country}

\icmlcorrespondingauthor{Jianhong Wang}{jianhong.wang@manchester.ac.uk}

% You may provide any keywords that you
% find helpful for describing your paper; these are used to populate
% the "keywords" metadata in the PDF but will not be shown in the document
\icmlkeywords{Machine Learning, ICML}

\vskip 0.3in
]

% this must go after the closing bracket ] following \twocolumn[ ...

% This command actually creates the footnote in the first column
% listing the affiliations and the copyright notice.
% The command takes one argument, which is text to display at the start of the footnote.
% The \icmlEqualContribution command is standard text for equal contribution.
% Remove it (just {}) if you do not need this facility.

\printAffiliationsAndNotice{}  % leave blank if no need to mention equal contribution
% \printAffiliationsAndNotice{\icmlEqualContribution} % otherwise use the standard text.

\begin{abstract}
% This document provides a basic paper template and submission guidelines.
% Abstracts must be a single paragraph, ideally between 4--6 sentences long.
% Gross violations will trigger corrections at the camera-ready phase.
% each following different fixed policies—
Ad hoc teamwork poses a challenging problem, requiring the design of an agent to collaborate with teammates without prior coordination or joint training. Open ad hoc teamwork (OAHT) further complicates this challenge by considering environments with a changing number of teammates, referred to as open teams. One promising solution in practice to this problem is leveraging the generalizability of graph neural networks to handle an unrestricted number of agents with various agent-types, named graph-based policy learning (GPL). However, its joint Q-value representation over a coordination graph lacks convincing explanations. In this paper, we establish a new theory to understand the representation of the joint Q-value for OAHT and its learning paradigm, through the lens of cooperative game theory. Building on our theory, we propose a novel algorithm named CIAO, based on GPL's framework, with additional provable implementation tricks that can facilitate learning. The demos of experimental results are available on \url{https://sites.google.com/view/ciao2024}, and the code of experiments is published on \url{https://github.com/hsvgbkhgbv/CIAO}.
\end{abstract}

\section{Introduction}
\label{sec:introduction}
    % Owing to the training paradigm of MARL in these tasks where all agents are controllable and necessary to be coordinated during training, it is difficult to tackle many real-world tasks where
    Multi-agent reinforcement learning (MARL) has achieved partial success on multiple tasks including playing strategy games \citep{rashid2020weighted}, power system operation \citep{wang2021multi}, and dynamic algorithm configuration \citep{xue2022multi}. These tasks fit to the training paradigm of MARL, which requires all agents to be controllable and to be coordinated during training. However, with this paradigm it is difficult to tackle many real-world tasks where not all agents are controllable and even prior coordination may not be possible. For example, in search and rescue, a robot must collaborate with other robots it has not seen before (e.g., manufactured by various companies without a common coordination protocol) or humans to rescue survivors \citep{barrett2015cooperating}. Similar situations occur in AI that helps trading markets \citep{albrecht2013game}, as well as in the human-machine and machine-machine collaboration emerging from the prevailing embodied AI settings \citep{smith2005development,duan2022survey} and large language models \citep{brown2020language,zhao2023survey}.
    % applications of 

    To tackle the ad hoc teamwork problem, we explore a scenario where one agent, referred to as the \textbf{learner}, operates under our control and seeks to collaborate without prior coordination with \textbf{teammates} which have unknown types and policies \citep{stone2010ad}. When dealing with teams of dynamic sizes, commonly termed \textbf{open teams}, the research problem addressed in this paper is often referred to as \textbf{open ad hoc teamwork} (OAHT) \citep{mirsky2022survey}. One promising solution for OAHT in practice is graph-based policy learning (GPL) \citep{rahman2021towards}. GPL presents a three-fold framework, encompassing a type inference model, a joint action value model and an agent model. Although GPL reaps the success of performance due to the generalizability of graph neural networks to handle an unrestricted number of agents with various agent-types, its weakness is that the representation of the joint Q-value over a coordination graph lacks convincing explanations. This restricts its applicability to real-world problems requiring trustworthy algorithms \citep{bhat20216g,wang2021multi}.
    % to tackle this problem
    
    % following a novel solution concept
    We propose to describe OAHT using a game model from cooperative game theory, namely the \textbf{coalitional affinity game} (CAG) \citep{branzei2009coalitional}. Specifically, we extend the CAG by incorporating Bayesian games \citep{harsanyi1967games} to depict uncertain agent-types and stochastic games \citep{shapley1953stochastic} to represent the long-horizon goal. The resulting game is termed the \textbf{open stochastic Bayesian coalitional affinity game} (OSB-CAG). In this game, the learner aims to influence other teammates (via its actions) to collaborate in achieving a shared goal. To formalize this, we extend the standard cooperative game theory notion of strict core to a novel solution concept which we call \textbf{dynamic variational strict core} (DVSC). The DVSC transforms collaboration in a temporary team into the task of forming a stable temporary team, where no agent has incentives to leave. We model the OAHT process under the learner's influence as a \textbf{dynamic affinity graph} (equivalent to a coordination graph), generalizing the classical static CAG. Based on the dynamic affinity graph, we further conceptualize an agent's preference for a temporary team to measure whether they prefer to stay in the team under the learner's influence. GPL's joint action value model is proven to be the sum of any temporary agents' preferences over a long horizon.
    % The sum of any temporary agents' preferences over a long horizon has been proven to be consistent with GPL's joint action value model.
    
    The main contributions of this paper can be summarized as follows: (1) We conceptualize OAHT as a dynamic coalitional affinity game, OSB-CAG. In this model, the learner seeks to influence teammates through its actions, without prior coordination, to establish a stable temporary team. (2) The theoretical model of OSB-CAG gives an understanding of GPL's joint action value model. It ensures collaboration within any temporary team under open team settings. (3) Building on the OSB-CAG theory, we derive a constraint for representing the joint action value to facilitate learning, and an additional regularization term depending on the graph structure to rationalize solving DVSC as an RL problem. The novel algorithm, named CIAO (\textbf{C}ooperative game theory \textbf{I}nspired \textbf{A}d hoc teamwork in \textbf{O}pen teams), is implemented based on GPL and incorporates the above novel and provable tricks. (4) We discuss and understand the learning paradigm employed in GPL that aims to learn the joint Q-value for open team settings. (5) We conduct experiments, primarily comparing two instances of CIAO (CIAO-S and CIAO-C) based on GPL framework in two environments: Level-based Foraging (LBF) and Wolfpack under open team settings \citep{rahman2021towards}. Finally, we conduct a comprehensive review and discussion of related works on both theoretical and algorithmic aspects of AHT and explore its relationship to MARL in Appendix~\ref{sec:related_work}.
    % , implemented in star and complete graph structures, respectively
    % The experimental performance shows the theoretical validity of leveraging OSB-CAG to formalize OAHT. 
    % These elements are absent in GPL.
    % Finally, we conduct a comprehensive review and discussion of related works on both theoretical and algorithmic aspects of AHT and explore its relationship to MARL in Appendix~\ref{sec:related_work}.
    % These contributions collectively enhance the understanding and practical application of open ad hoc teamwork, offering insights into team formation, collaboration, and theoretical extensions to existing models.
    
\section{Background}
\label{sec:background}
    Let $\Delta(\Omega)$ indicate the set of probability distributions over a random variable on a sample space $\Omega$ and let $\mathbb{P}(\mathcal{X})$ denote the power set of an arbitrary set $\mathcal{X}$. To simplify the notation, let $i$ exclusively denote the learner and $-i$ denote the set of all temporary teammates at any timestep. $P(\mathcal{X})$ indicates the generic probability distribution over a random variable $\mathcal{X}$ and $|\mathcal{X}|$ indicates the cardinality of an arbitrary set $\mathcal{X}$.
    
    \subsection{Coalitional Affinity Game}
    \label{subsec:hedonic_graph_games}
        As a subclass of non-transferable utility games, \textit{hedonic game} \citep{chalkiadakis2022computational} is defined as a tuple $\langle \mathcal{N}, \succeq \rangle$, where $\mathcal{N}$ is a set of all agents; and $\succeq = (\succeq_{1}, ..., \succeq_{n})$ is a sequence of agents' preferences over the subsets of $\mathcal{N}$ called coalitions. $\mathcal{C} \succeq_{j} \mathcal{C}'$ implies that coalition $\mathcal{C}$ is no less preferred by agent $j$ than coalition $\mathcal{C}'$. For each agent $j \in \mathcal{N}$, $\succeq_{j}$ describes a complete and transitive preference relation over a collection of all feasible coalitions $\mathcal{N}(j) = \{ \mathcal{C} \ \mathlarger{\mathlarger{\subseteq}} \ \mathcal{N} \ \vert \ j \in \mathcal{C} \}$. The outcome of a hedonic game is a coalition structure $\mathcal{CS}$, i.e., a partition of $\mathcal{N}$ into disjoint coalitions. We denote by $\mathcal{CS}(j)$ the coalition including agent $j$. The ordinal preferences can be represented as the cardinal form with preference values \citep{sliwinski2017learning}. More specifically, an agent $j$ has a preference value function such that $v_{j}: \mathcal{N}(j) \rightarrow \mathbb{R}_{\geq 0}$. $v_{j}(\mathcal{C}) \geq v_{j}(\mathcal{C}')$ if $\mathcal{C} \succeq_{j} \mathcal{C}'$, which implies that agent $j$ weakly prefers $\mathcal{C}$ to $\mathcal{C}'$; $v_{j}(\mathcal{C}) > v_{j}(\mathcal{C}')$ if $\mathcal{C} \succ_{j} \mathcal{C}'$, which implies that agent $j$ strictly prefers $\mathcal{C}$ to $\mathcal{C}'$.

        To concisely represent the preference value, a hedonic game is equipped with an affinity graph $G = \langle \mathcal{N}, \mathcal{E} \rangle$, where each edge $(j, k) \in \mathcal{E}$ describes an affinity relation between agents $j$ and $k$. For each edge $(j, k)$, it defines an \textit{affinity weight} $w(j, k) \in \mathbb{R}$ to indicate the value that agent $j$ can receive from agent $k$, while if $(j, k) \notin \mathcal{E}$, $w(j, k) = 0$. For any coalition $\mathcal{C} \ \mathlarger{\mathlarger{\subseteq}} \ \mathcal{N}_{j}$, the preference value of agent $j$ is specified as $v_{j}(\mathcal{C}) = \sum_{(j,k) \in \mathcal{E}, k \in \mathcal{C}} w(j, k)$ if $\mathcal{C} \neq \{j\}$, otherwise, $v_{j}(\{j\}) = b_{j} \in \mathbb{R}_{\geq 0}$.\footnote{In the original CAG setting \citep{sliwinski2017learning}, $v_{j}(\{j\})$ is conventionally set to zero. Herein, we extend it to non-negative values for generality (see Appendix~\ref{sec:generalization_of_preference_values}).} \textit{An affinity graph is symmetric if $w(j, k) = w(k, j)$, for all $(j,k), (k,j) \in \mathcal{E}$.} The hedonic game with an affinity graph is named as \textit{coalitional affinity game} (CAG) \citep{branzei2009coalitional}. Strict core stability is a principal solution concept of CAG (see Definition \ref{def:stability_concepts}).
        % \vspace{-1mm}
        \begin{definition}
        \label{def:stability_concepts}
            We say that a blocking coalition $\mathcal{C}$ \textit{weakly blocks} a coalition structure $\mathcal{CS}$ if every agent $j \in \mathcal{C}$ weakly prefers $\mathcal{C}$ to $\mathcal{CS}(j)$ and there exists at least one agent $k \in \mathcal{C}$ who strictly prefers $\mathcal{C}$ to $\mathcal{CS}(j)$. A coalition structure admitting no weakly blocking coalition $\mathcal{C} \ \mathlarger{\mathlarger{\subseteq}} \ \mathcal{N}$ is called strict core stable.
        \end{definition}
        % On the other hand, a coalition structure $\mathcal{CS} = \{ \mathcal{C}_{1}, ..., \mathcal{C}_{m} \}$ admitting no weakly blocking coalition $\mathcal{C} \ \mathlarger{\mathlarger{\subset}} \ \mathcal{C}_{k}$, for some $1 \leq k \leq m$, is called inner stable.
        % and inner stability are
        
    \subsection{Graph-Based Policy Learning}
    \label{subsec:graph-based_policy_optimization}
        We now briefly review GPL's empirical framework \citep{rahman2021towards} to solve OAHT (see Appendix~\ref{sec:gpl_framework} for more details). GPL consists of the following modules: the type inference model, the joint action value model and the agent model. To align with our motivation, we transform the framework to be adaptable to any coordination graph structure, as opposed to being restricted to only the complete graph as in GPL.
        % The learner is unable to observe the teammates' types and their policies, which can only be inferred through the history states and actions.
        % The teammates' joint policy is represented as that $\pi_{t}^{-i}: \mathcal{S} \times \Theta_{\scriptscriptstyle\mathcal{N}} \rightarrow \Delta(\mathcal{A}_{\scriptscriptstyle{\mathcal{N}}})$.

        \textbf{Type Inference Model.} This is modelled as a LSTM \citep{hochreiter1997long} to infer agent-types of a team at timestep $t$ given the teammates' agent-types and the state at timestep $t-1$. The agent-type is modelled as a fixed-length hidden-state vector of LSTM, referred to as agent-type embedding. To address the issue of variable team size, the embedding of the agents who leave a team would be removed at each timestep, while the type embedding of the newly added agents would be set to a zero vector.
    
        \textbf{Joint Action Value Model.} The joint Q-value $\hat{Q}^{\pi^{i}}(s_{t}, a_{t})$ is approximated as the sum of the individual utility $\hat{Q}_{j}^{\pi^{i}}(a_{t}^{j}, \vert s_{t})$ and pairwise utility $\hat{Q}_{jk}^{\pi^{i}}(a_{t}^{j}, a_{t}^{k} \vert s_{t})$:
        \begin{equation}
        \label{eq:joint_q-value_model}
            \hat{Q}^{\pi^{i}}(s_{t}, a_{t}) = \sum_{(j, k) \in \mathcal{E}_{t}} \hat{Q}_{jk}^{\pi^{i}}(a_{t}^{j}, a_{t}^{k} \vert s_{t}) + \sum_{j \in \mathcal{N}_{t}} \hat{Q}_{j}^{\pi^{i}}(a_{t}^{j} \vert s_{t}),
        \end{equation}
        where the superscript $\pi^{i}$ implies that the above terms can only be optimized by the learner's policy $\pi^{i}$.
        
        \textbf{Agent Model.} To address the open team setting, GNN is applied to process the joint agent-type embedding $\theta_{t}$ produced from the type inference model, where each agent is represented as a node and the coordination graph is consistent with that for the joint action value model. The resulting node representation $\bar{n}_{t}$ is applied as input to infer the estimated teammates' joint policy, denoted as $\hat{\pi}^{-i}(a_{t}^{-i} \vert s_{t}, \theta_{t}^{-i})$.
        % More specifically, a GNN model called relational forward model \citep{TacchettiSMZKRG19} parameterised by $\eta$ is applied to transform $\theta_{t}$ as the initial node representation to $\bar{n}_{t}$ as the new joint node representation considering other agents' effects.
        % infer $q_{\zeta,\eta}(a_{t}^{-i} \vert s_{t})$ implemented by MLP as 
         % denoted as $\pi_{t}^{-i}(a_{t}^{-i} \vert s_{t})$
        
        \textbf{Learner's Decision Making.} The learner's approximate action value function $\hat{Q}^{\pi^{i}}(s_{t}, a_{t}^{i})$ is defined as follows:
        \begin{equation}
        \label{eq:objective_ad_hoc}
            \hat{Q}^{\pi^{i}}(s_{t}, a_{t}^{i}) = \mathbb{E}_{a_{t}^{-i} \sim \pi_{t}^{-i}} \left[ \hat{Q}^{\pi^{i}}(s_{t}, a_{t}^{i}, a_{t}^{-i}) \right],
        \end{equation}
        where $s_{t}$ is a state at timestep $t$, $a_{t}^{-i}$ is a joint action of teammates $-i$ at timestep $t$ and $a_{t}^{i}$ is the learner $i$'s action at timestep $t$. The learner's decision making is conducted by selecting the action that maximizes $\hat{Q}^{\pi^{i}}(s_{t}, a_{t}^{i})$. 
        % ; $Q^{\pi^{i}}(s_{t}, a_{t}^{i}, a_{t}^{-i})$ is a joint Q-value under control of learner's policy $\pi^{i}$
        % ; $\pi_{t}^{-i}$ is a joint policy of teammates $-i$
        
\section{A New Game Model to Formalize OAHT}
\label{sec:open_stochastic_bayesian_coalition_affinity_games }
    In this section, we generalize the coalitional affinity game framework to formalize OAHT, by integrating a graph to represent relationships among agents. It is essential to emphasize that, for the sake of brevity, our focus of this work is exclusively on fully observable scenarios. 
    % Nonetheless, it is worth noting that the introduced concepts can be easily extended to accommodate settings with partial observation.
    % \footnote{If the dynamic affinity graph is with no edges, the OSB-CAG will degrade to an OSBG.}
    \subsection{Problem Formulation}
    \label{subsec:problem_formulation}
        In an environment, the learner \(i\) interacts with other uncontrollable temporary teammates \(-i\) to achieve a shared goal. To model this process, we introduce Open Stochastic Bayesian Coalitional Affinity Game (OSB-CAG), defined as a tuple \(\langle \mathcal{N}, \mathcal{S}, (\mathcal{A}_{j})_{j \in \mathcal{N}}, \Theta, (R_{j})_{j \in {\scriptscriptstyle \mathcal{N}}}, P_{T}, P_{I}, P_{A}, \mathcal{E}, \gamma \rangle\). Here, \(\mathcal{N}\) represents the set of all possible agents, \(\mathcal{S}\) is the set of states, \(\mathcal{A}_{j}\) is the action set for agent \(j\), and \(\Theta\) denotes the set of all possible agent-types. Let the joint action set under a variable agent set \(\mathcal{N}_{t} \subseteq \mathcal{N}\) be defined as \(\mathcal{A}_{{\scriptscriptstyle \mathcal{N}}_{t}} = \times_{j \in \mathcal{N}_{t}} \mathcal{A}_{j}\). Therefore, the joint action space under a variable number of agents is defined as \(\mathcal{A}_{\scriptscriptstyle\mathcal{N}} = \bigcup_{\mathcal{N}_{t} \in \mathbb{P}(\mathcal{N})} \{a \vert a \in \mathcal{A}_{{\scriptscriptstyle \mathcal{N}}_{t}} \}\), while the joint agent-type space under a variable number of agents is defined as \(\Theta_{\scriptscriptstyle\mathcal{N}} = \bigcup_{\mathcal{N}_{t} \in \mathbb{P}(\mathcal{N})} \{\theta \vert \theta \in \Theta^{{\scriptscriptstyle |\mathcal{N}_{t}|}} \}\). A dynamic affinity graph, denoted as \(G_{t} = \langle \mathcal{N}_{t}, \mathcal{E}_{t} \rangle\), is introduced to describe the relationships among agents. Here, \(\mathcal{E}_{t} = \{ (j, k) \ \vert \ j, k \in \mathcal{N}_{t}\} \ \mathlarger{\mathlarger{\subseteq}} \ \mathcal{E}\), and \(\mathcal{E}\) is a set of possible edges represented by pairs \((j, k)\). This graph is referred to as the coordination graph in GPL.
        % \begin{figure}[ht!]
        %     \centering
        %     \includegraphics[width=0.95\columnwidth]{}
        %     % in the environment
        %     \caption{Illustration of open ad hoc teamwork with a maximum of the learner (L) and three teammates (T1, T2 and T3). Each row serves as a placeholder for a role, with white intervals indicating teammate absence and colored intervals representing teammate activation with assigned agent-types (``Type \#''). Below the axis, examples justify $P_{O}$ and $P_{E}$. The left example shows the team's transition from timestep $1$ to $2$, where $s_{2}$ and $\mathcal{N}_{2}$ are generated from $P_{O}$, and $P_{E}$ assigns $\theta_{2}=\{0,1,3\}$. The right example illustrates team expansion at timestep $5$ with $a$, $b$, and $d$ joining.}
        % \label{fig:open_team_illustration}
        % \end{figure}

        % ; (4) Steps 2 and 3 are repeated iteratively
        \textbf{Transition Function.} We now introduce three primitive probability distributions denoted as $P_{T}: \mathbb{P}(\mathcal{N}) \times \mathcal{S} \times \mathcal{A}_{\scriptscriptstyle\mathcal{N}} \rightarrow \Delta(\mathbb{P}(\mathcal{N}) \times \mathcal{S})$, $P_{I}: \mathbb{P}(\mathcal{N}) \times \mathcal{S} \rightarrow [0, 1]$, and $P_{A}: \mathcal{N} \times \mathcal{S} \rightarrow \Delta(\Theta)$. These probability functions characterize the dynamics of the environment in the following procedure: (1) At the initial timestep $0$, $P_{I}(\mathcal{N}_{0}, s_{0})$ generates an initial set of agents $\mathcal{N}_{0}$ and an initial state $s_{0}$. (2) $P_{A}(\theta_{t}^{j} \vert \{j\}, s_{t})$ represents a type assignment function that randomly assigns agent-types to the generated agent set. (3) $P_{T}(\mathcal{N}_{t}, s_{t} \vert \mathcal{N}_{t-1}, s_{t-1}, a_{t-1})$ generates the agent set $\mathcal{N}_{t}$ and state $s_{t}$ for the next time step $t$. (4) Stage 2 and 3 above are repeated. To succinctly represent the aforementioned process, we derive a composite transition function $T(\mathcal{N}_{t}, s_{t}, \theta_{t} \vert s_{t-1}, a_{t-1}, \theta_{t-1})$ (see Proposition \ref{prop:existence_transition_preference}) in place of stage 2 and 3 from timesteps $t \geq 1$. This function can be factorized, clarifying the GPL's framework, as follows:
        \begin{equation}
        \label{eq:preference_trans_decomposition}
        \begin{split}
            T(\mathcal{N}_{t}, &s_{t}, \theta_{t} \vert s_{t-1}, a_{t-1}, \theta_{t-1}) \\
            &= P_{E}(\theta_{t} \vert \mathcal{N}_{t}, s_{t}) P_{O}(\mathcal{N}_{t}, s_{t} \vert s_{t-1}, a_{t-1}, \theta_{t-1}).
        \end{split}
        \end{equation}
        % $P_{E}(\theta_{t} \vert s_{t}, \mathcal{N}_{t}) = \prod_{j=1}^{|\mathcal{N}_{t}|} P_{A}(\theta_{t}^{j} \vert \{j\}, s_{t})$
        Herein, $P_{O}(\mathcal{N}_{t}, s_{t} \vert s_{t-1}, a_{t-1}, \theta_{t-1})$ is a probability distribution composed of $P_{T}$, $P_{I}$ and $P_{A}$ (see the sketch of proof of Proposition~\ref{prop:existence_transition_preference}) that generates a variable agent set $\mathcal{N}_{t}$ and a state $s_{t}$, \textit{observable} by the learner. In contrast, a joint agent-type $\theta_{t}$ generated from $P_{E}(\theta_{t} \vert \mathcal{N}_{t}, s_{t}) = \prod_{j=1}^{|\mathcal{N}_{t}|} P_{A}(\theta_{t}^{j} \vert \{j\}, s_{t})$ is \textit{unobservable} by the learner. However, it plays a crucial role in the agent model for the learner's decision making in the empirical framework of GPL, motivating the estimation of this term in practice, as conducted by the type inference model (see Section~\ref{subsec:graph-based_policy_optimization}). To distinguish between and clarify the observation generated from $P_{O}$ and the agent-types generated from $P_{E}$ during the decision process, both functions will be concurrently utilized to describe the composite transition function $T$ in the subsequent sections. To simplify the notation, we would use $P_{O}$ in place of $P_{I}$ for $t=0$ in the following sections.
        \begin{assumption}
        \label{assm:conditional_independencies}
            The following conditional independencies are assumed to hold in any distribution $P$ over the set of variables in an OSB-CAG: (1) $( \theta_{t} \bigCI \theta_{t-1}, s_{t-1}, a_{t-1} \ \vert \ \mathcal{N}_{t}, s_{t} )$; (2) $(\mathcal{N}_{t}, s_{t} \bigCI \theta_{t-1} \ \vert \ \mathcal{N}_{t-1}, s_{t-1}, a_{t-1})$; (3) $(\mathcal{N}_{t} \bigCI a_{t} \vert s_{t}, \theta_{t})$; (4) $( \theta_{t}^{j} \bigCI -j, \theta_{t}^{-j} \ \vert \ \{j\}, s_{t} )$.
        \end{assumption}    
        \begin{proposition}
        \label{prop:existence_transition_preference}
            % $T(\theta_{t}, s_{t}, \mathcal{N}_{t} \vert \theta_{t-1}, s_{t-1}, a_{t-1})$ is existent for timesteps $t \geq 1$, given that the following conditions hold: $P_{T}(\mathcal{N}_{t}, s_{t} \vert \mathcal{N}_{t-1}, s_{t-1}, a_{t-1})$ is well defined for timesteps $t \geq 1$; $P_{I}(\mathcal{N}_{0} | s_{0})$ is well defined; and $P_{A}(\theta_{t}^{j} \vert \{j\}, s_{t})$ is well defined for timesteps $t \geq 0$.
            % $T(\theta_{t}, s_{t}, \mathcal{N}_{t} \vert \theta_{t-1}, s_{t-1}, a_{t-1})$ is existent for timesteps $t \geq 1$, given that the following conditions hold: $P_{T}(\mathcal{N}_{t}, s_{t} \vert \mathcal{N}_{t-1}, s_{t-1}, a_{t-1})$ is well defined for timesteps $t \geq 1$; $P_{I}(\mathcal{N}_{0} | s_{0})$ is well defined; and $P_{A}(\theta_{t}^{j} \vert \{j\}, s_{t})$ is well defined for timesteps $t \geq 0$.
            $T(\mathcal{N}_{t}, s_{t}, \theta_{t} \vert s_{t-1}, a_{t-1}, \theta_{t-1})$ for $t \geq 1$ can be expressed in terms of the following well-defined probability distributions: $P_{I}(\mathcal{N}_{0}, s_{0})$, $P_{T}(\mathcal{N}_{t}, s_{t} \vert \mathcal{N}_{t-1}, s_{t-1}, a_{t-1})$ for $t \geq 1$, and $P_{A}(\theta_{t}^{j} \vert \{j\}, s_{t})$ for $t \geq 0$.
        \end{proposition}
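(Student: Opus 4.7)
The plan is to establish the decomposition of $T$ in two stages: first split the joint conditional $T(\theta_t, s_t, \mathcal{N}_t \vert \theta_{t-1}, s_{t-1}, a_{t-1})$ into a product of an agent-type assignment factor and an observation-and-team generation factor via the chain rule of probability, and then express each factor in terms of the primitive distributions $P_I$, $P_T$, and $P_A$ listed in the statement.

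For the first stage, I would apply the chain rule to write
\begin{equation*}
T = P(\theta_t \vert s_t, \mathcal{N}_t, \theta_{t-1}, s_{t-1}, a_{t-1}) \cdot P(s_t, \mathcal{N}_t \vert \theta_{t-1}, s_{t-1}, a_{t-1}),
\end{equation*}
and then invoke two conditional independencies built into the OSB-CAG specification. The first, that $\theta_t$ is independent of $(\theta_{t-1}, s_{t-1}, a_{t-1})$ given $(s_t, \mathcal{N}_t)$, collapses the first factor to $P_E(\theta_t \vert s_t, \mathcal{N}_t)$, which I would further factorize across agents using the fact that each agent is independently assigned a type by $P_A$, giving $P_E(\theta_t \vert s_t, \mathcal{N}_t) = \prod_{j \in \mathcal{N}_t} P_A(\theta_t^j \vert \{j\}, s_t)$. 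The second, that $(s_t, \mathcal{N}_t)$ is independent of $\theta_{t-1}$ given $(\mathcal{N}_{t-1}, s_{t-1}, a_{t-1})$, together with the observation that $\mathcal{N}_{t-1}$ is recoverable from the index set of the vector $\theta_{t-1}$, reduces the second factor to $P_T(\mathcal{N}_t, s_t \vert \mathcal{N}_{t-1}, s_{t-1}, a_{t-1})$.

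To ensure the conditioning events have positive probability so that all the conditionals above are well-defined, I would set up a short induction on $t$, using $P_I(\mathcal{N}_0, s_0)$ together with $P_A$ as the base case to build the joint distribution over $(\mathcal{N}_0, s_0, \theta_0)$, and then propagate forward by $P_T$ and $P_A$. Combining this with the factorization above yields the claimed closed-form expression of $T$ in terms of $P_I$, $P_T$, and $P_A$, and in particular recovers the product form of Equation~\eqref{eq:preference_trans_decomposition}.

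I expect the main obstacle to be the clean articulation of the two conditional independence assumptions: they are natural once one draws the directed graphical model implicit in the OSB-CAG (agent-types are emitted by the current state and team, while the next state and team are driven by the previous state, action, and team), but they are not spelled out in the main text and must be stated explicitly for the derivation to be rigorous. A minor technical subtlety is the handling of $\theta_{t-1}$ as a vector whose dimension is itself random, which is what makes the recovery of $\mathcal{N}_{t-1}$ from $\theta_{t-1}$ non-trivial to formalize; I would handle it by working on the disjoint union $\Theta_{\scriptscriptstyle\mathcal{N}}$ already defined in the problem formulation, where each element carries its index set with it.
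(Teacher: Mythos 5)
Your proposal follows the same skeleton as the paper's proof: chain rule to split $T$ into a type-assignment factor and a team/state-generation factor, conditional independence to collapse the first factor to $P_{E}(\theta_{t}\vert s_{t},\mathcal{N}_{t})$, a per-agent factorization of $P_{E}$ into the product $\prod_{j}P_{A}(\theta_{t}^{j}\vert\{j\},s_{t})$, and an induction anchored at $P_{I}$ to make everything well-defined. The one place you genuinely diverge is the second factor. You reduce $P(s_{t},\mathcal{N}_{t}\vert\theta_{t-1},s_{t-1},a_{t-1})$ directly to $P_{T}(\mathcal{N}_{t},s_{t}\vert\mathcal{N}_{t-1},s_{t-1},a_{t-1})$ by asserting that $\mathcal{N}_{t-1}$ is a deterministic function of $\theta_{t-1}$; the paper instead keeps $\mathcal{N}_{t-1}$ random and writes this factor as $\sum_{\mathcal{N}_{t-1}}P_{T}(\mathcal{N}_{t},s_{t}\vert\mathcal{N}_{t-1},s_{t-1},a_{t-1})\,P(\mathcal{N}_{t-1}\vert\theta_{t-1},s_{t-1})$, and the bulk of its remaining work (the Bayes-rule expression for $P(\mathcal{N}_{t}\vert\theta_{t},s_{t})$ and the induction establishing the existence of $P(\mathcal{N}_{t},s_{t})$) exists precisely to justify that posterior. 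Your shortcut is only valid under the disjoint-union reading of $\Theta_{\scriptscriptstyle\mathcal{N}}$ that you flag at the end; the paper's actual definition, $\Theta_{\scriptscriptstyle\mathcal{N}}=\bigcup_{\mathcal{N}_{t}}\{\theta\,\vert\,\theta\in\Theta^{\scriptscriptstyle\vert\mathcal{N}_{t}\vert}\}$, indexes type vectors by cardinality alone, so two distinct agent sets of equal size are not distinguishable from $\theta_{t-1}$ and the recovery you rely on fails there. So under the paper's own conventions your step has a gap that the marginalization repairs, whereas under your (reasonable, explicitly stated) reformalization the argument goes through and is cleaner, dispensing with the posterior-existence machinery. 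Your induction also serves a slightly different purpose (positivity of conditioning events) than the paper's (existence of the marginal $P(\mathcal{N}_{t},s_{t})$ feeding the Bayes step), but the two are essentially the same forward recursion from $P_{I}$ via $P_{T}$ and the policy.
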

        \begin{proof}
            We show the sketch of proof here. The following derivation is obtained by Assumption~\ref{assm:conditional_independencies}. For validity of conditions in Assumption~\ref{assm:conditional_independencies}, please refer to Appendix~\ref{sec:appendix_assumptions}. About the complete version of proof, please refer to Appendix \ref{subsec:proof_of_proposition1}.
            \begin{equation*}
                \begin{split}
                    T(\mathcal{N}_{t}, s_{t}, \theta_{t} \vert s_{t-1}, a_{t-1}, \theta_{t-1})
                    &= \\
                    P_{E}(\theta_{t} \vert \mathcal{N}_{t}, s_{t}&) P_{O}(\mathcal{N}_{t}, s_{t} \vert s_{t-1}, a_{t-1}, \theta_{t-1}),
                \end{split}
            \end{equation*}
            where $P_{E}(\theta_{t} \vert \mathcal{N}_{t}, s_{t}) = \prod_{j=1}^{|\mathcal{N}_{t}|} P_{A}(\theta_{t}^{j} \vert \{j\}, s_{t})$ and 
            \begin{equation*}
                \begin{split}
                    P_{O}(&\mathcal{N}_{t}, s_{t} \vert s_{t-1}, a_{t-1}, \theta_{t-1}) = \\
                    &\sum_{\scriptscriptstyle{\mathcal{N}}_{t-1}} P_{T}(\mathcal{N}_{t}, s_{t} \vert \mathcal{N}_{t-1}, s_{t-1}, a_{t-1}) P(\mathcal{N}_{t-1} \vert s_{t-1}, \theta_{t-1}).
                \end{split}
            \end{equation*}
            We have
            \begin{equation*}
                P(\mathcal{N}_{t} | s_{t}, \theta_{t}) = \frac{\sum_{s_{t}} P_{E}(\theta_{t} \vert \mathcal{N}_{t}, s_{t}) P(\mathcal{N}_{t}, s_{t})}{\sum_{\scriptscriptstyle{\mathcal{N}}_{t}} \sum_{s_{t}} P_{E}(\theta_{t} \vert \mathcal{N}_{t}, s_{t}) P(\mathcal{N}_{t}, s_{t}) }.
            \end{equation*}
            Also, we have $P(\mathcal{N}_{0}, s_{0}) = P_{I}(\mathcal{N}_{0}, s_{0})$ and when $t \geq 1$,
            \begin{equation*}
                \begin{split}
                    P(&\mathcal{N}_{t}, s_{t}) \\
                    &= \sum_{\scriptscriptstyle{\mathcal{N}}_{t-1}} \sum_{s_{t-1}} \sum_{a_{t-1}} \sum_{\theta_{t-1}} P(\mathcal{N}_{t}, s_{t}, \mathcal{N}_{t-1}, s_{t-1}, a_{t-1}, \theta_{t-1}),
                \end{split}
            \end{equation*}
            where $P(\mathcal{N}_{t}, s_{t}, \mathcal{N}_{t-1}, s_{t-1}, a_{t-1}, \theta_{t-1})$ can be expressed in terms of $P(\mathcal{N}_{t-1}, s_{t-1})$ and the probability distributions we have defined. The sketch of proof is completed.
        \end{proof}
        
        % , decided by $\pi_{t} = \times_{k \in {\scriptscriptstyle \mathcal{N}_{t}}} \pi_{t}^{k}$
        % by considering the state space and the action space, referred to as \textit{agent $j$'s preference reward}
        \textbf{Preference Reward.} The function $R_{j}: \mathcal{A}_{\scriptscriptstyle \mathcal{N}} \times \mathcal{S} \rightarrow \mathbb{R}_{\geq 0}$ extends an agent $j$'s preference value, of the original stateless CAG, to the agent $j$’s preference reward $R_{j}$ which depends on the state and action. For example, $R_{j}(a_{t} \vert s_{t})$ indicates agent $j$'s preference reward for a temporary team $\mathcal{N}_{t} \ \mathlarger{\mathlarger{\subseteq}} \ \mathcal{N}$ with the corresponding joint action $a_{t} = \times_{j \in {\scriptscriptstyle \mathcal{N}_{t}}} a_{t}^{j}$, whereas $R_{j}(a_{t}^{j} \vert s_{t})$ indicates agent $j$'s preference reward for a coalition only including itself. To capture the relationship between agents \(j\) and \(k\) in terms of both the current state and the actions taken, the affinity weight is generalized accordingly as $w_{jk}:  \mathcal{A}_{j} \times \mathcal{A}_{k} \times \mathcal{S} \rightarrow \mathbb{R}$. Following the specification of preference values through affinity weights, the preference reward of any agent \(j\) for a coalition \(\mathcal{N}_{t}\) can be represented as \(R_{j}(a_{t} \vert s_{t}) = \sum_{(j,k) \in \mathcal{E}_{t}, k \in \mathcal{N}_{t}} w_{jk}(a_{t}^{j}, a_{t}^{k} \vert s_{t})\). This summation aggregates the affinity weights for all pairs of agents \((j,k)\) in the coalition, where \(k\) is a member of \(\mathcal{N}_{t}\). The learner's reward function $R(s_{t}, a_{t})$ for any $\mathcal{N}_{t}$ is specified by $R_{j}(a_{t} \vert s_{t})$, which will be introduced in Section~\ref{subsec:bayesian_core_ as_rl_obj}.
        
    \subsection{Dynamic Variational Strict Core}
    \label{subsec:dynamic_variational_Strict_core}
        We now extend the game theoretical concept of strict core from CAG to OSB-CAG as a criterion to evaluate the extent of collaboration among the agents in a temporary team (a coalition $\mathcal{N}_{t}$ at each timestep $t$), named as \textit{dynamic variational strict core} (DVSC). Unlike the strict core defined in CAG that evaluates coalition formation based on the given preference values, DVSC evaluates whether the learner $i$'s policy can influence temporary teammates' decisions (measured by preference rewards), so that they intend to collaborate (so called variational). This is analogous to forming a temporary team as a desired coalition. Next we derive a result on strict core stability to motivate a result on DVSC. The following two statements are equivalent when the affinity graph is \textit{symmetric}: Team maximizes social welfare, and team reaches strict core stability (see Lemma \ref{lemm:symmetry_strict_core_grand_coalition} in Appendix \ref{sec:derivation_details_of_definition_DVSC}). This inspires using the objective of maximizing social welfare as a surrogate criterion to evaluate strict core stability, and this criterion can be further generalized to dynamic scenarios to derive the DVSC (see Definition \ref{def:symmetry_strict_core_stability}).
        % Thanks to the result that the social welfare being maximized under a team is equivalent to that the team reaches the strict core stability when the affinity graph is \textit{symmetric} (see Lemma \ref{lemm:symmetry_strict_core_grand_coalition} in Appendix \ref{sec:derivation_details_of_definition_DVSC}), the objective of maximizing social welfare can serve as a surrogate criterion to evaluate the strict core stability. This criterion is further generalised by considering the dynamic scenarios to derive the DVSC (see Definition \ref{def:symmetry_strict_core_stability}).
        \begin{definition}
        \label{def:symmetry_strict_core_stability}
            If a dynamic affinity graph is symmetric, then maximizing the long-horizon social welfare is equivalent to reaching strict core stability under the variable teammates of uncertain agent-types generated by $P_{E}$ and the uncertain states generated by $P_{O}$.
        \end{definition}
        
        Following the inspiration shown in Definition \ref{def:symmetry_strict_core_stability}, DVSC can be equivalently expressed in the form shown in Eq.~\eqref{eq:dsvc}. The detailed derivation of DVSC is left in Appendix \ref{sec:derivation_details_of_definition_DVSC}.
        \begin{equation}
        \label{eq:dsvc}
           \begin{split}
               \texttt{DVSC} &:= \Big\{ \ \pi^{i,*} \ \Big\vert \ \mathbb{E}_{\pi^{i,*}}\big[ \sum_{t=0}^{\infty} \gamma^{t} \sum_{j \in \mathcal{N}_{t}} R_{j}(a_{t} \vert s_{t}) \big] \\ 
               &\geq \mathbb{E}_{\pi^{i}}\big[ \sum_{t=0}^{\infty} \gamma^{t} \sum_{j \in \mathcal{N}_{t}} R_{j}(a_{t} \vert s_{t}) \big], \forall s_{0} \in \mathcal{S}, \forall \pi^{i} \ \Big\},
           \end{split}
        \end{equation}
        where $a_{t}^{i} \sim \pi^{i}$ and $a_{t}^{-i} \sim \pi_{t}^{-i}$; $\mathbb{E}_{\pi^{i}}[\cdot]$ denotes the expectation that also implicitly depends on $\theta_{t} \sim P_{E}$ and $\mathcal{N}_{t}, s_{t} \sim P_{O}$, and $a_{t}^{-i} \sim \pi_{t}^{-i}$; and $\pi^{i,*}$ indicates the solution to DVSC.

    % for Collaboration among Ad Hoc Agents
    \subsection{Is Stability of any Temporary Team a Reasonable Metric for Describing Open Ad Hoc Collaboration?}
    \label{subsec:insight_behind_describing_collaboration_among_temporary_teammates}
        Recall that all agents in AHT have a shared goal, which implies that they intrinsically aim to collaborate on solving a shared task \citep{mirsky2022survey}, but their preferences for collaborating with each other are not necessarily compatible. This compatibility can be interpreted as stability of a temporary team, determined by the preferences of ad hoc agents for collaborating with each other. If those ad hoc agents are incompatible with each other, the temporary team becomes unstable but still with hope of collaborating as a team to solve the shared task. Thus, the learner's aim is to tweak the compatibility of a temporary team through its actions, to influence the temporary teammates' preferences, equivalent to maintaining the stability of the temporary team, across timesteps. Furthermore, the definition of DVSC in Eq.~\eqref{eq:dsvc} is invariant to team size, aligned to open team settings.
    
    \subsection{Solving DVSC by Reinforcement Learning}
    \label{subsec:bayesian_core_ as_rl_obj}
        We proceed to define the learner's reward function, initially left blank in Section~\ref{subsec:problem_formulation} and convert DVSC from Eq.~\eqref{eq:dsvc} into an RL problem. Since the learner's objective is to execute actions that influence any temporary teammates to collaboratively solve a shared task, we naturally interpret the learner's reward function as $R(s_{t}, a_{t}) = \sum_{j \in \mathcal{N}_{t}} R_{j}(a_{t} \vert s_{t})$. The reward function represents the social welfare of preference rewards for a temporary team $\mathcal{N}_{t}$, serving as a metric to measure agents' preferences for collaborating on a shared task.\footnote{In practical scenarios, $R(s_{t}, a_{t})$ only needs to implicitly encode the shared goal that multiple agents are required to achieve.} Substituting $R(s_{t}, a_{t})$ into Eq.~\eqref{eq:dsvc}, we derive an RL problem equivalent to solving DVSC:
        \begin{equation}
        \label{eq:rl_obj}
            \max_{\pi_{i}} \mathbb{E}_{{\scriptscriptstyle \mathcal{N}}_{t}, s_{t} \sim P_{O}, \theta_{t} \sim P_{E}, a_{t}^{-i} \sim \pi_{t}^{-i}, a_{t}^{i} \sim \pi^{i}} \Big[ \sum_{t=0}^{\infty} \gamma^{t} R(s_{t}, a_{t}) \Big]. 
        \end{equation}
        In the following section, we will explore how the optimization problem in Eq.~\eqref{eq:rl_obj} can be solved by a novel algorithm.

\section{A Novel Algorithm Building on OSB-CAG}
\label{sec:bayesian_core_ as_rl_obj}
    % \begin{figure*}[ht!]
    %     \centering
    %     \includegraphics[width=0.85\textwidth]{}
    %     \caption{Illustration of the proposed algorithm named as CA-GPL, where Proposition \ref{prop:proxy_variable} is shown in Appendix \ref{sec:further_details_of_implementation} due to the page limit.}
    %     \label{fig:ca-gpl}
    % \end{figure*}
     % and therefore the collaboration between the learner and temporary teammates can be guaranteed to be almost stable
     % Furthermore, we prove the validity of learning the optimal joint Q-value under OAHT. 
    In this section, we derive a novel graph-based RL algorithm to solve OAHT based on the OSB-CAG, with DVSC as a solution concept. We first derive the joint Q-value's representation to narrow down its hypothesis space including the solution of DVSC. The representation aligns with and gives an interpretation to the GPL's heuristic joint action value model. Note that we also acquire a condition to further confine the joint Q-value's hypothesis space thanks to our theory (see Section~\ref{subsec:representation_of_preference_q-values}). With the estimated type inference model and agent model, the optimal learner's policy obtained by GPL's optimization problem approximately reaches DSVC (see Section~\ref{subsec:open_ad_hoc_bellman_optimality_equation}). Finally, we derive a novel practical algorithm, named CIAO (see Section~\ref{subsec:practical_implementation}).
    
    \subsection{Representation of Joint Q-Value}
    \label{subsec:representation_of_preference_q-values}
        % in Theorem \ref{thm:dvsc_core_existence}
        \begin{figure}[ht!]
            \centering
            \includegraphics[width=0.95\columnwidth]{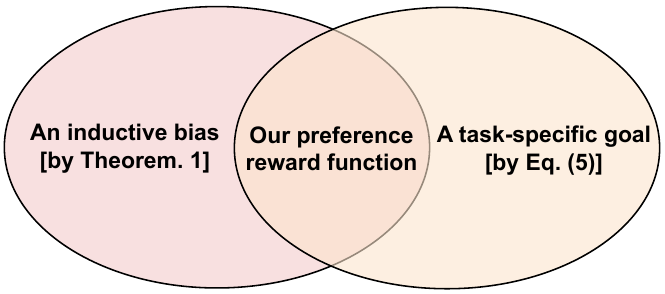}
            \caption{Illustration of the relationship between the conditions for our preference reward function, ensuring the existence of DVSC under its confined hypothesis space, and its alignment to a task-specific reward $R(s_{t}, a_{t})$ in Eq.~\eqref{eq:rl_obj}.}
        \label{fig:preference_rewards}
        \end{figure}
        % , so as to facilitate learning
        % , specifying $\pi^{i,*}$
        Given the joint actions generated under the influence by the optimal learner's policy $\pi^{i,*}$, we have a sufficient condition, as an inductive bias, for \textit{any preference reward function} to narrow down its hypothesis space meeting DVSC in Theorem \ref{thm:dvsc_core_existence}. Solving the RL problem outlined in Eq.~\eqref{eq:rl_obj} based on this condition to specify $\pi^{i,*}$, the preference reward function is aligned to a task-specific reward $R(s_{t}, a_{t})$. The relationship between the above conditions to generate our preference reward function is shown in Fig.~\ref{fig:preference_rewards}.
        \begin{theorem}
        \label{thm:dvsc_core_existence}
            In an OSB-CAG, for any dynamic affinity graph $G_{t} = \langle \mathcal{N}_{t}, \mathcal{E}_{t} \rangle$ at any timestep $t$, if there exists a joint action $a_{t} \in \mathcal{A}_{\scriptscriptstyle \mathcal{N}_{t}}$, for any agent $j \in \mathcal{N}_{t}$, satisfying $R_{j}(a_{t} \vert s_{t}) \geq R_{j}(a_{t}^{j} \vert s_{t})$ for any $s_{t} \in \mathcal{S}$, then DVSC always exists.
        \end{theorem}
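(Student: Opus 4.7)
The plan is to prove existence of DVSC by recasting Eq.~\eqref{eq:rl_obj} as a standard infinite-horizon discounted MDP over an augmented state space and invoking classical optimality results, while using the hypothesized condition to furnish a non-trivial individual-rationality lower bound on the optimum so that the resulting DVSC element is non-vacuous.

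First, I would translate the hypothesis into social-welfare language. Summing the pointwise inequality $R_{j}(a_{t}\vert s_{t}) \geq R_{j}(a_{t}^{j}\vert s_{t})$ over $j \in \mathcal{N}_{t}$ shows that at every state $s_{t}$, every reachable team $\mathcal{N}_{t}$, and every graph $G_{t}$, there exists a joint action whose team social welfare $\sum_{j \in \mathcal{N}_{t}} R_{j}(a_{t}\vert s_{t})$ dominates the all-singletons baseline $\sum_{j \in \mathcal{N}_{t}} R_{j}(a_{t}^{j}\vert s_{t})$. Choosing such a witness action measurably in $(s_{t}, \mathcal{N}_{t}, \theta_{t})$ and projecting onto the learner's axis yields a Markov baseline policy $\pi^{i,b}$ whose expected reward, after marginalizing $a_{t}^{-i} \sim \pi_{t}^{-i}$, is bounded below by the singleton baseline at every reachable state.

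Second, I would recast Eq.~\eqref{eq:rl_obj} as a discounted MDP with augmented state $(s_{t}, \mathcal{N}_{t}, \theta_{t})$, composite transition $T$ from Proposition~\ref{prop:existence_transition_preference}, controlled action $a_{t}^{i}$, uncontrolled stochastic component $a_{t}^{-i}\sim\pi^{-i}_t$ folded into the transition, and stage reward $R(s_{t}, a_{t})$. Since preference rewards are non-negative and may be assumed uniformly bounded, and $\gamma<1$, the Bellman optimality operator is a $\gamma$-contraction on the Banach space of bounded value functions. Banach's fixed-point theorem yields a unique optimal value function $V^{*}$, and greedy selection with respect to $V^{*}$ returns a deterministic Markov policy $\pi^{i,*}$ satisfying the defining inequality of DVSC in Eq.~\eqref{eq:dsvc} for every initial state $s_{0}$ and every comparison $\pi^{i}$. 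The baseline $\pi^{i,b}$ certifies that $V^{*}$ attains at least the singleton-baseline value at every reachable state, so $\pi^{i,*} \in \texttt{DVSC}$ is not a trivial null-reward maximizer and the set is non-empty, which is what "DVSC always exists" asserts.

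The main obstacle is the open-team structure: the augmented state space genuinely varies with $\mathcal{N}_{t}$, so I must set up the Banach space the Bellman operator acts on (bounded measurable functions over $\bigcup_{\mathcal{N}_{t} \in \mathbb{P}(\mathcal{N})} \mathcal{S} \times \{\mathcal{N}_{t}\} \times \Theta^{|\mathcal{N}_{t}|}$) consistently with the factorization $T = P_{E}\,P_{O}$, and justify a measurable-selection step for the witness joint action uniformly in $(s_{t}, \mathcal{N}_{t}, \theta_{t})$. A secondary subtlety is that the learner only controls $a_{t}^{i}$, so the singleton lower bound supplied by the hypothesis is realized only in expectation over $\pi_{t}^{-i}$; this is exactly the form Eq.~\eqref{eq:rl_obj} takes, so it is not a genuine gap but does need to be stated explicitly when passing from the pointwise condition to the value function comparison.
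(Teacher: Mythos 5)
Your proposal proves a different statement from the one the theorem is actually making, and the tell is your own observation that the hypothesis only enters to make the optimum ``non-vacuous.'' If the content of ``DVSC exists'' were merely that the set in Eq.~\eqref{eq:dsvc} is non-empty as an optimization problem, then Banach fixed-point existence of an optimal policy for a bounded discounted MDP would deliver it with no hypothesis at all, and the condition $R_{j}(a_{t} \vert s_{t}) \geq R_{j}(a_{t}^{j} \vert s_{t})$ would be dead weight. What the theorem is actually asserting is a core non-emptiness result: that under this condition the temporary team $\mathcal{N}_{t}$ (the grand coalition at time $t$) can be made strict-core stable, so that the social-welfare-maximization problem in Eq.~\eqref{eq:dsvc} is a legitimate surrogate for coalition stability rather than just an arbitrary RL objective. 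The paper's proof goes through a generalization of the Brânzei--Larson result (Theorem~\ref{thm:grand_coalition_core_cag} and Lemma~\ref{lemm:grand_coalition_core_cag_generalised}): the hypothesis guarantees the existence of a decomposition $w_{jk} = \alpha_{jk} + \beta_{jk}$ with $\alpha_{jk} \geq 0$ and $R_{j}(a_{t}^{j} \vert s_{t}) = \sum_{(j,k) \in \mathcal{E}_{t}} \beta_{jk}(a_{t}^{j} \vert s_{t})$, which places the grand coalition in the strict core at every timestep; this is then lifted to the long horizon. Your argument never touches any blocking coalition, so it cannot establish stability.

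There is also a second, more local problem with the one place you do use the hypothesis. You sum the pointwise inequality over $j \in \mathcal{N}_{t}$ and conclude that the grand coalition's social welfare dominates the all-singletons baseline. Strict core stability requires ruling out \emph{every} weakly blocking coalition $\mathcal{C} \subseteq \mathcal{N}_{t}$, agent by agent (each member of $\mathcal{C}$ must weakly prefer $\mathcal{C}$, with one strict preference), not just showing that the aggregate welfare of $\mathcal{N}_{t}$ beats the aggregate welfare of the partition into singletons. An aggregate comparison against one particular alternative partition is strictly weaker than what the solution concept demands; the per-agent, per-edge decomposition in Lemma~\ref{lemm:grand_coalition_core_cag_generalised} is precisely the device that handles arbitrary sub-coalitions. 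To repair your proof you would need to reinstate the cooperative game-theoretic step: fix $(s_{t}, \mathcal{N}_{t})$, identify $w(j,k) = w_{jk}(a_{t}^{j}, a_{t}^{k} \vert s_{t})$ and $v_{j}(\{j\}) = R_{j}(a_{t}^{j} \vert s_{t})$, and invoke the generalized grand-coalition core theorem before any MDP machinery is brought in.
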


        % approximating the long-horizon task specific rewards 
        To meet the condition that $R_{j}(a_{t} \vert s_{t}) \geq R_{j}(a_{t}^{j} \vert s_{t})$ as shown in Theorem~\ref{thm:dvsc_core_existence}, we derive a representation of $w_{jk}(a_{t}^{j}, a_{t}^{k} \vert s_{t})$ in Proposition~\ref{prop:affinity_weight_representation}. Recall that an agent $j$'s preference reward function for a temporary team $\mathcal{N}_{t}$ at timestep $t$ is defined as $R_{j}(a_{t} \vert s_{t}) = \sum_{(j,k) \in \mathcal{E}_{t}} w_{jk}(a_{t}^{j}, a_{t}^{k} \vert s_{t})$ (see Section~\ref{subsec:problem_formulation}). 
        % $Q^{\pi^{i,*}}(s_{t}, a_{t}) = \sum_{(j,k) \in \mathcal{E}_{t}} Q_{jk}^{\pi^{i,*}}(a_{t}^{j}, a_{t}^{k} \vert s_{t}) + \sum_{j \in \mathcal{N}_{t}} Q_{j}^{\pi^{i,*}}(a_{t}^{j} \vert s_{t})$
        \begin{proposition}
        \label{prop:affinity_weight_representation}
            In a dynamic affinity graph $G_{t} = \langle \mathcal{N}_{t}, \mathcal{E}_{t} \rangle$, for any state $s_{t} \in \mathcal{S}$ and any joint action $a_{t} \in \mathcal{A}_{\scriptscriptstyle \mathcal{N}_{t}}$, if for all $(j, k) \in \mathcal{E}_{t}$, $w_{jk}(a_{t}^{j}, a_{t}^{k} \vert s_{t}) = \alpha_{jk}(a_{t}^{j}, a_{t}^{k} \vert s_{t}) + \beta_{jk}(a_{t}^{j} \vert s_{t})$ with the conditions that $\alpha_{jk}(a_{t}^{j}, a_{t}^{k} \vert s_{t}) \geq 0$ and $R_{j}(a_{t}^{j} \vert s_{t}) = \sum_{(j,k) \in \mathcal{E}_{t}} \beta_{jk}(a_{t}^{j} \vert s_{t})$, then $R_{j}(a_{t} \vert s_{t}) \geq R_{j}(a_{t}^{j} \vert s_{t})$ for any agent $j \in \mathcal{N}_{t}$.
        \end{proposition}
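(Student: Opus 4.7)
The plan is to prove Proposition \ref{prop:affinity_weight_representation} by direct substitution into the definition of the preference reward and then exploiting the non-negativity assumption on $\alpha_{jk}$. First I would recall from Section~\ref{subsec:problem_formulation} that an agent $j$'s preference reward for a temporary team at timestep $t$ is defined as $R_{j}(a_{t} \vert s_{t}) = \sum_{(j,k) \in \mathcal{E}_{t}} w_{jk}(a_{t}^{j}, a_{t}^{k} \vert s_{t})$. This is the natural starting point, since both sides of the target inequality can be read off from the affinity weights.

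Next, I would substitute the assumed decomposition $w_{jk}(a_{t}^{j}, a_{t}^{k} \vert s_{t}) = \alpha_{jk}(a_{t}^{j}, a_{t}^{k} \vert s_{t}) + \beta_{jk}(a_{t}^{j} \vert s_{t})$ into the sum above and split it into two parts:
\begin{equation*}
R_{j}(a_{t} \vert s_{t}) = \sum_{(j,k) \in \mathcal{E}_{t}} \alpha_{jk}(a_{t}^{j}, a_{t}^{k} \vert s_{t}) + \sum_{(j,k) \in \mathcal{E}_{t}} \beta_{jk}(a_{t}^{j} \vert s_{t}).
\end{equation*}
The second summand is precisely the quantity that the hypothesis identifies with the singleton preference reward, $R_{j}(a_{t}^{j} \vert s_{t}) = \sum_{(j,k) \in \mathcal{E}_{t}} \beta_{jk}(a_{t}^{j} \vert s_{t})$. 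Inserting this identity gives $R_{j}(a_{t} \vert s_{t}) = \sum_{(j,k) \in \mathcal{E}_{t}} \alpha_{jk}(a_{t}^{j}, a_{t}^{k} \vert s_{t}) + R_{j}(a_{t}^{j} \vert s_{t})$.

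Finally, invoking the non-negativity assumption $\alpha_{jk}(a_{t}^{j}, a_{t}^{k} \vert s_{t}) \geq 0$ on every edge, the remaining summation is non-negative, and the inequality $R_{j}(a_{t} \vert s_{t}) \geq R_{j}(a_{t}^{j} \vert s_{t})$ follows at once for every $j \in \mathcal{N}_{t}$, which is exactly the sufficient condition required by Theorem~\ref{thm:dvsc_core_existence}. Because this is essentially a one-step rearrangement, I do not expect a real technical obstacle; the hypothesis of the proposition is in fact designed so that the sufficient condition in Theorem~\ref{thm:dvsc_core_existence} holds by construction. The only subtlety worth highlighting in the writeup is conceptual rather than computational: why this particular decomposition is the right one to single out. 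The term $\alpha_{jk}$ captures the genuine pairwise contribution that teammate $k$ brings to agent $j$, while $\beta_{jk}$ merely redistributes $j$'s singleton preference reward across its incident edges so that the edge-wise additive representation of $w_{jk}$ stays internally consistent with $R_{j}(a_{t}^{j} \vert s_{t})$, which is what later permits the GPL-style joint Q-value factorisation used in Theorem~\ref{thm:joint_q_representation}.
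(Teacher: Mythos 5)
Your proof is correct and follows exactly the route the paper intends: the paper's own proof is a one-line remark that the result "can be directly obtained by the definition" of $R_{j}(a_{t} \vert s_{t})$ as a sum of affinity weights, and you have simply written out the substitution, the identification of $\sum_{(j,k)\in\mathcal{E}_{t}}\beta_{jk}$ with $R_{j}(a_{t}^{j}\vert s_{t})$, and the non-negativity step explicitly. No gaps; your version is just a more detailed rendering of the same argument.
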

        \begin{proof}
            This result can be directly obtained by the definition that $R_{j}(a_{t} \vert s_{t}) = \sum_{(j,k) \in \mathcal{E}_{t}, k \in \mathcal{N}_{t}} w_{jk}(a_{t}^{j}, a_{t}^{k} \vert s_{t})$.
        \end{proof}
        
        Plugging in the expression of $w_{jk}(a_{t}^{j}, a_{t}^{k} \vert s_{t})$, we can obtain the representation of an arbitrary agent $j$'s preference Q-value under the learner's optimal policy $\pi^{i,*}$,
        $Q_{j}^{\pi^{i,*}}(a_{t} \vert s_{t}) = \sum_{(j,k) \in \mathcal{E}_{t}} Q_{jk}^{\pi^{i,*}}(a_{t}^{j}, a_{t}^{k} \vert s_{t}) + Q_{j}^{\pi^{i,*}}(a_{t}^{j} \vert s_{t})$, and the joint Q-value under the learner's optimal policy $\pi^{i,*}$, $Q^{\pi^{i,*}}(s_{t}, a_{t}) = \sum_{j \in \mathcal{N}_{t}} Q_{j}^{\pi^{i,*}}(a_{t} \vert s_{t})$, outlined in Theorem~\ref{thm:joint_q_representation}.
        \begin{assumption}
        \label{assm:agent_leaves_env}
            Suppose that $\alpha_{jk}(a_{t}^{j}, a_{t}^{k} \vert s_{t}) = 0$ for $t \geq T$, where $T$ is the timestep when agent $j$ or $k$ leaves the environment, and $R_{j}(a_{t}^{j} \vert s_{t})=0$ for $t \geq T'$, where $T'$ is the timestep when agent $j$ leaves the environment.
        \end{assumption}
        
        \begin{theorem}
        \label{thm:joint_q_representation}
            Under Assumption \ref{assm:agent_leaves_env}, if $w_{jk}(s_{\tau}, a_{\tau}^{j}, a_{\tau}^{k}) = \alpha_{jk}(s_{\tau}, a_{\tau}^{j}, a_{\tau}^{k}) + \beta_{jk}(s_{\tau}, a_{\tau}^{j})$, then the joint Q-value of the learner's policy $\pi^{i}$ can be expressed as follows:
            \begin{equation*}
                \begin{split}
                    Q^{\pi^{i}}(s_{t}, a_{t}) &= \sum_{(j,k) \in \mathcal{E}_{t}} Q_{jk}^{\pi^{i}}(a_{t}^{j}, a_{t}^{k} \vert s_{t}) + \sum_{j \in \mathcal{N}_{t}} Q_{j}^{\pi^{i}}(a_{t}^{j} \vert s_{t}) \\
                    = \sum_{j \in \mathcal{N}_{t}} &\Big\{ \sum_{(j,k) \in \mathcal{E}_{t}} Q_{jk}^{\pi^{i}}(a_{t}^{j}, a_{t}^{k} \vert s_{t}) + Q_{j}^{\pi^{i}}(a_{t}^{j} \vert s_{t}) \Big\} \\
                    := \sum_{j \in \mathcal{N}_{t}} &Q_{j}^{\pi^{i}}(a_{t} \vert s_{t}),
                \end{split}
            \end{equation*}
            where $Q_{jk}^{\pi^{i}}(a_{t}^{j}, a_{t}^{k} \vert s_{t}) = \mathbb{E}_{\pi^{i}}[\sum_{\tau=t}^{\infty} \gamma^{\tau - t} \alpha_{jk}(a_{\tau}^{j}, a_{\tau}^{k} \vert s_{\tau})]$ and $Q_{j}^{\pi^{i}}(a_{t}^{j} \vert s_{t}) = \mathbb{E}_{\pi^{i}}[\sum_{\tau=t}^{\infty} \gamma^{\tau - t} R_{j}(a_{\tau}^{j} \vert s_{\tau})]$.
        \end{theorem}
        \begin{remark}
        \label{rmk:equivalence_to_GPL}
            The result of Theorem \ref{thm:joint_q_representation} verifies that the optimal joint Q-value representation derived from our theory is consistent with the GPL's joint action value model, as shown in Eq.~\eqref{eq:joint_q-value_model}, but additionally with $\hat{Q}_{jk}^{\pi^{i}}(a_{t}^{j}, a_{t}^{k} \vert s_{t}) \geq 0$, following our theory, which is requisite for satisfying $\alpha_{jk}(a_{t}^{j}, a_{t}^{k} \vert s_{t}) \geq 0$, as shown in Proposition~\ref{prop:affinity_weight_representation}.
        \end{remark}
        % and succeeds in interpreting 
        
        Recall that the condition for solving DSVC as a RL problem is \textit{the symmetry of a dynamic affinity graph} (see Definition \ref{def:symmetry_strict_core_stability}). To meet this condition, we outline in Proposition \ref{prop:condition_of_symmetry_star_graph} the constraints that must be fulfilled for the case of a dynamic affinity graph being a star graph (see Remark~\ref{rmk:ad_hoc_affinity_graph} for its validity in OAHT). Similarly, we provide the relevant constraints articulated in Proposition \ref{prop:condition_of_symmetry_full_graph} for situations where the dynamic affinity graph takes the form of a complete graph (as applied to GPL). The implementation of the constraints for these two cases are shown in Remark \ref{rmk:implement_symmetry_dynamic_affinity_graph}.
        \begin{definition}
        \label{def:ad_hoc_affinity_graph}
            In this paper, we introduce a novel dynamic affinity graph structured as a star graph, with the learner serving as the internal node and temporary teammates as the leaf nodes.
        \end{definition}
        
        \begin{remark}
        \label{rmk:ad_hoc_affinity_graph}
            We introduce a novel architecture for the dynamic affinity graph in the context of OAHT, assuming teammates lack prior coordination \citep{mirsky2022survey}. Given an additional assumption that teammates cannot adapt their policies or types in response to other agents,\footnote{For simplicity in presenting our theory in this paper, we tentatively disregard scenarios where temporary teammates can adapt to other agents (e.g. establishing an affinity model).} it is reasonable to presume the absence of relationships among any temporary teammates. Besides, this is also in line with the assumption in AHT that the learner's temporary teammates might not be familiar with one another before the interaction \citep{stone2010ad,mirsky2022survey}. In particular, this implies that no edges between any two teammates are necessary to form a dynamic affinity graph. However, the learner's goal is to establish collaboration with a variable number of temporary teammates at each timestep, necessitating the existence of edges between the learner and each teammate. To meet all these requirements, we design the learner's dynamic affinity graph as a star graph, as detailed in Definition \ref{def:ad_hoc_affinity_graph}. Consequently, the preference reward of any teammate $j$ for a temporary team $\mathcal{N}_{t}$ is determined as $R_{j}(s_{t}, a_{t}) = w_{ji}(s_{t}, a_{t}^{j}, a_{t}^{i})$, while the learner $i$'s preference reward for the temporary team $\mathcal{N}_{t}$ is expressed as $R_{i}(s_{t}, a_{t}) = \sum_{j \in -i} w_{ij}(s_{t}, a_{t}^{i}, a_{t}^{j})$.
        \end{remark}

        \begin{proposition}
        \label{prop:condition_of_symmetry_star_graph}
            For the learner $i$ and any teammate $j$ or $k$, the constraints $R_{i}(a_{t}^{i} \vert s_{t}) = \sum_{j \in -i} R_{j}(a_{t}^{j} \vert s_{t})$ and $\alpha_{jk}(a_{t}^{j}, a_{t}^{k} \vert s_{t}) = \alpha_{kj}(a_{t}^{k}, a_{t}^{j} \vert s_{t})$, for any $a_{t} \in \mathcal{A}_{\scriptscriptstyle \mathcal{N}_{t}}$ and $s_{t} \in \mathcal{S}$, are necessary for a star dynamic affinity graph to be symmetric. 
        \end{proposition}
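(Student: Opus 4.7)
The plan is to substitute the affinity weight decomposition from Proposition~\ref{prop:affinity_weight_representation} into the graph symmetry condition specialised to a star topology, and then exploit the differing functional dependencies of the pair-interaction term $\alpha$ and the singleton term $\beta$ to extract the two stated necessary conditions.

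First, I would record what symmetry demands under the star structure. Centering the star at the learner $i$, the only edges are $(i,j)$ and $(j,i)$ for $j \in -i$. Hence by $R_j(a_t^j \vert s_t) = \sum_{(j,k) \in \mathcal{E}_t} \beta_{jk}(a_t^j \vert s_t)$, each teammate satisfies $R_j(a_t^j \vert s_t) = \beta_{ji}(a_t^j \vert s_t)$, while the learner satisfies $R_i(a_t^i \vert s_t) = \sum_{j \in -i} \beta_{ij}(a_t^i \vert s_t)$. Graph symmetry on the edge $(i,j)$ then reads
\begin{equation*}
\alpha_{ij}(a_t^i, a_t^j \vert s_t) + \beta_{ij}(a_t^i \vert s_t) = \alpha_{ji}(a_t^j, a_t^i \vert s_t) + \beta_{ji}(a_t^j \vert s_t),
\end{equation*}
required to hold for every $a_t^i, a_t^j, s_t$.

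Second, I would split this equality along the dependency pattern assigned by Proposition~\ref{prop:affinity_weight_representation}: both $\alpha_{ij}$ and $\alpha_{ji}$ are jointly parameterised in $(a_t^i, a_t^j)$, while by construction $\beta_{ij}$ depends only on $a_t^i$ and $\beta_{ji}$ only on $a_t^j$. Since these components are parameterised independently in the algorithmic realisation, the identity above forces the pair-dependent and singleton parts to match separately, yielding the second necessary condition $\alpha_{ij}(a_t^i, a_t^j \vert s_t) = \alpha_{ji}(a_t^j, a_t^i \vert s_t)$ together with a residual identity $\beta_{ij}(a_t^i \vert s_t) = \beta_{ji}(a_t^j \vert s_t)$ valid for all $a_t^i, a_t^j, s_t$.

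Third, from the residual identity I would deduce that $\beta_{ij}(\cdot \vert s_t)$ is constant in its action argument (vary $a_t^i$ while holding $a_t^j$ fixed) and symmetrically $\beta_{ji}(\cdot \vert s_t)$ is constant in its argument; let $c_{ij}(s_t)$ denote their common value. Substituting back into the expressions recorded in the first step gives $R_j(a_t^j \vert s_t) = c_{ij}(s_t)$ and $R_i(a_t^i \vert s_t) = \sum_{j \in -i} c_{ij}(s_t) = \sum_{j \in -i} R_j(a_t^j \vert s_t)$, which is the first necessary condition.

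The main obstacle I expect is the second step: the additive decomposition $w = \alpha + \beta$ is not numerically unique, so I must justify why symmetry of $w$ forces symmetry of the $\alpha$ component and of the $\beta$ component \emph{separately}. I would therefore argue structurally, invoking the independent parameterisation of $\alpha$ and $\beta$ and the normalisation $R_j = \sum_k \beta_{jk}$ that pins down the role of $\beta$ as the singleton contribution. Once this separation is granted, the remaining steps reduce to standard variable-separation arguments and direct substitution.
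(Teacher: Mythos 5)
Your proposal is correct and takes essentially the same route as the paper: both write the star-graph symmetry condition $w_{ij}=w_{ji}$ on each edge $(i,j)$, split it into the component-wise equalities $\alpha_{ij}=\alpha_{ji}$ and $\beta_{ij}=\beta_{ji}$, and recover $R_{i}(a_{t}^{i}\vert s_{t})=\sum_{j\in -i}R_{j}(a_{t}^{j}\vert s_{t})$ by summing the singleton parts over teammates. The obstacle you flag---that symmetry of $w$ does not by itself force symmetry of the $\alpha$ and $\beta$ components because the additive split is non-unique---is present but unacknowledged in the paper's own proof (which simply asserts the component-wise ``one-to-one correspondence''), and your variable-separation step showing $\beta_{ij}(\cdot\vert s_{t})$ must be constant in its action argument is a slightly more explicit justification of the first constraint than the paper provides.
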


        \begin{proposition}
        \label{prop:condition_of_symmetry_full_graph}
            For any two agents $j$ or $k$, the constraints $R_{j}(a_{t}^{j} \vert s_{t}) = R_{k}(a_{t}^{k} \vert s_{t})$ and $\alpha_{jk}(a_{t}^{j}, a_{t}^{k} \vert s_{t}) = \alpha_{kj}(a_{t}^{k}, a_{t}^{j} \vert s_{t})$, for any $a_{t} \in \mathcal{A}_{\scriptscriptstyle \mathcal{N}_{t}}$ and $s_{t} \in \mathcal{S}$, are necessary for the complete dynamic affinity graph to be symmetric.
        \end{proposition}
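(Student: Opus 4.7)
The plan is to follow the same template as Proposition~\ref{prop:condition_of_symmetry_star_graph}, but now exploit the fact that in a complete dynamic affinity graph every unordered pair $\{j,k\} \subseteq \mathcal{N}_t$ contributes an edge in both directions. First I would write the symmetry requirement edgewise,
\begin{equation*}
w_{jk}(a_t^j, a_t^k \mid s_t) \;=\; w_{kj}(a_t^k, a_t^j \mid s_t),
\end{equation*}
for every ordered pair $(j,k)$ with $j \neq k$ and every $a_t \in \mathcal{A}_{\scriptscriptstyle \mathcal{N}_t}$, $s_t \in \mathcal{S}$, and then substitute the canonical decomposition $w_{jk} = \alpha_{jk} + \beta_{jk}$ from Proposition~\ref{prop:affinity_weight_representation}.

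Second, I would separate the pairwise and individual pieces of the resulting identity. Since $\alpha_{jk}(a_t^j, a_t^k \mid s_t)$ genuinely depends on \emph{both} actions while $\beta_{jk}(a_t^j \mid s_t)$ depends only on $a_t^j$, matching the joint functional dependence on $(a_t^j, a_t^k)$ on the two sides of the equality forces, component-wise,
\begin{equation*}
\alpha_{jk}(a_t^j, a_t^k \mid s_t) = \alpha_{kj}(a_t^k, a_t^j \mid s_t), \qquad \beta_{jk}(a_t^j \mid s_t) = \beta_{kj}(a_t^k \mid s_t).
\end{equation*}
This already yields the $\alpha$-constraint in the statement. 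Because the second identity holds for \emph{arbitrary} $a_t^j$ and $a_t^k$ independently, each side must actually be constant in its action argument, so $\beta_{jk}(\cdot \mid s_t)$ collapses to a symmetric state-only quantity $c_{jk}(s_t) = c_{kj}(s_t)$.

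Third, I would exploit completeness. For every agent $j$, the defining constraint $R_j(a_t^j \mid s_t) = \sum_{(j,k) \in \mathcal{E}_t} \beta_{jk}(a_t^j \mid s_t)$ applies to all $|\mathcal{N}_t|-1$ neighbors. Under the canonical equal-share parameterization used by CIAO-C, namely $\beta_{jk}(a_t^j \mid s_t) = R_j(a_t^j \mid s_t)/(|\mathcal{N}_t|-1)$ (which is exactly the identifiable choice after absorbing the state-only freedom into $\alpha$), the symmetry $\beta_{jk} = \beta_{kj}$ derived above propagates to
\begin{equation*}
\frac{R_j(a_t^j \mid s_t)}{|\mathcal{N}_t|-1} \;=\; \frac{R_k(a_t^k \mid s_t)}{|\mathcal{N}_t|-1},
\end{equation*}
which yields the remaining necessary condition $R_j(a_t^j \mid s_t) = R_k(a_t^k \mid s_t)$ for all ordered pairs, actions, and states.

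The main obstacle I expect is step two: cleanly decoupling the $\alpha$ and $\beta$ contributions. A priori, any function of $s_t$ alone, or any purely state-dependent term added to $\alpha$ and subtracted from $\beta$, leaves $w$ invariant, so the decomposition is only unique once one fixes a canonical splitting rule. I would therefore explicitly invoke the identifiability convention in the CIAO-C parameterization (where $\alpha_{jk}$ is the genuinely joint-action component and the individual component is shared equally among a node's edges) to justify reading off the two necessary conditions component-wise rather than only in combination.
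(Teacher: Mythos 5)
Your proposal is correct and follows essentially the same route as the paper's proof: write the edgewise symmetry condition $w_{jk}(a_{t}^{j}, a_{t}^{k} \vert s_{t}) = w_{kj}(a_{t}^{k}, a_{t}^{j} \vert s_{t})$, substitute the decomposition $w_{jk} = \alpha_{jk} + \beta_{jk}$ with $R_{j}(a_{t}^{j} \vert s_{t}) = \sum_{k \in -j} \beta_{jk}(a_{t}^{j} \vert s_{t})$, and read off the two constraints component-wise. The only divergence is the last step: the paper obtains $R_{j}(a_{t}^{j} \vert s_{t}) = R_{k}(a_{t}^{k} \vert s_{t})$ directly as the condition needed for the one-to-one correspondence $\beta_{jk}(a_{t}^{j} \vert s_{t}) = \beta_{kj}(a_{t}^{k} \vert s_{t})$ to exist, rather than invoking your equal-share parameterization $\beta_{jk} = R_{j}/(|\mathcal{N}_{t}|-1)$, though your explicit handling of the identifiability of the $\alpha$/$\beta$ split is a welcome clarification of a point the paper leaves implicit.
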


        \begin{remark}
        \label{rmk:implement_symmetry_dynamic_affinity_graph}
            The following implementation is necessary to satisfy the symmetry of a dynamic affinity graph: (1) meeting $Q_{jk}^{\pi^{i}}(a_{t}^{j}, a_{t}^{k} \vert s_{t}) = Q_{kj}^{\pi^{i}}(a_{t}^{k}, a_{t}^{j} \vert s_{t}) \geq 0$ in constructing preference Q-values; (2) If the dynamic affinity graph is a star graph with the learner as the internal node, $Q_{i}^{\pi^{i}}(a_{t}^{i} \vert s_{t}) = \sum_{j \in -i} Q_{j}^{\pi^{i}}(a_{t}^{j} \vert s_{t})$ is implemented as a regularizer. If the dynamic affinity graph is a complete graph, $Q_{i}^{\pi^{i}}(a_{t}^{i} \vert s_{t}) = Q_{j}^{\pi^{i}}(a_{t}^{j} \vert s_{t})$ is implemented as a regularizer.
        \end{remark}
        
    \subsection{Bellman Optimality Equation for OSB-CAG}
    \label{subsec:open_ad_hoc_bellman_optimality_equation}
        We now define the Bellman optimality equation for OSB-CAG to evaluate the learner $i$'s optimal policy $\pi^{i,*}$ as a solution of the DVSC following Theorem \ref{thm:open_team_ad_hoc_bellman_optimality}, such that
        \begin{equation}
        \label{eq:open_team_ad_hoc_bellman_optimality}
            \begin{split}
                Q^{\pi^{i,*}}(s_{t}, a_{t}) = R(s_{t}, a_{t}) + \gamma \mathbb{E}_{{\scriptscriptstyle \mathcal{N}_{t+1}}, s_{t+1} \sim P_{O}} \Big[ \\
                \max_{a^{i}} \mathbb{E}_{\substack{\theta_{t+1} \sim P_{E},\ a_{t+1}^{-i} \sim \pi_{t+1}^{-i}}} \big[ Q^{\pi^{i,*}}(s_{t+1}, a_{t+1}^{-i}, a^{i}) \big] \Big].
            \end{split}
        \end{equation}
        The regularity condition of Eq.~\eqref{eq:open_team_ad_hoc_bellman_optimality} is that $\mathcal{N}_{t+1} \subseteq \mathcal{N}_{t}$, since it is pathological to consider an agent $j \in \mathcal{N}_{t+1}$ but, $\notin \mathcal{N}_{t}$ at timestep $t$ when expanding $Q^{\pi^{i,*}}(s_{t}, a_{t})$ across timesteps, which is clarified in an illustrative example in Fig.~\ref{fig:bellman_optimality_equation_expansion}.
        \begin{figure}[ht!]
            \centering
            \includegraphics[width=0.95\columnwidth]{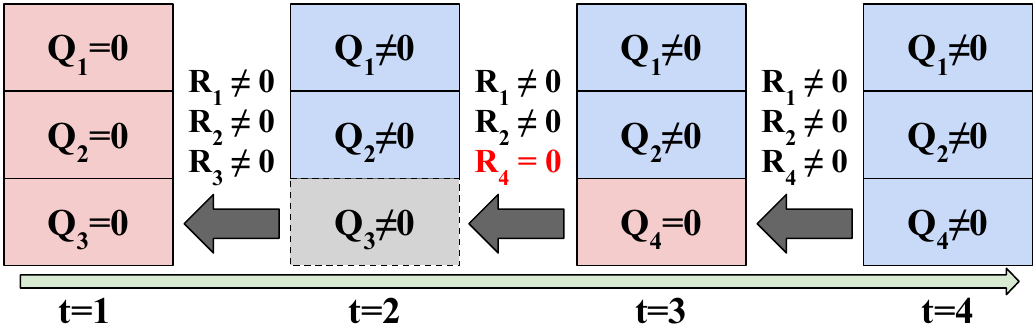}
            % (see Appendix \ref{subsec:proof_of_theorem_thm:open_team_ad_hoc_bellman_optimality} for details)
            \caption{Illustration of the expansion of Bellman optimality equation for OSB-CAG. The thin green arrow indicates the time axis, while the thick black arrow indicates the expansion direction of Bellman optimality equation. In the theory of OSB-CAG, we have $Q^{\pi^{i,*}}(s_{t}, a_{t}) = \sum_{j \in \mathcal{N}_{t}} Q_{j}^{\pi^{i,*}}(a_{t} \vert s_{t})$, where $Q_{j}^{\pi^{i,*}}$ is denoted by $\mathbf{Q}_{j}$ in the figure. $R_{j}$ indicates the preference reward of agent $j$ to the team, measuring the agent's preference to stay in the team to solve the shared task. At each timestep, the preference Q-value $\mathbf{Q}_{j}$ of an agent $j$ that joins the game is filled in a red box, that remains in the game is filled in a blue box, and that leaves the game filled in a grey box with the dashed outline. At timestep $1$, agent $1$, $2$ and $3$ just join the game, so with all preference Q-values as zeros. At timestep $2$, agent $3$ leaves the game, and the expansion works as usual, since agent $3$ has influence to the team. At timestep $3$, agent $4$ joins the game, but it has not any influence to the team. For this reason, it is unnecessary to consider the expansion for agent $4$'s preference Q-value \textit{only at timestep $3$}, since it would be trivially zero. To satisfy the more generic representation of the joint Q-value with respect to preference Q-values (other than the linear decomposition described in our theory), we rule out the transition samples of $\mathcal{N}_{t} \subset \mathcal{N}_{t+1}$. At timestep $4$, the expansion considers all existing agents as usual.}
        \label{fig:bellman_optimality_equation_expansion}
        \end{figure}
        \begin{theorem}
        \label{thm:open_team_ad_hoc_bellman_optimality}
            Under Assumption \ref{assm:agent_leaves_env} and an arbitrary learner's deterministic stationary policy $\pi^{i}$, the Bellman equation for the OSB-CAG with DVSC as a solution concept is expressed as follows: $Q^{\pi^{i}}(s_{t}, a_{t}) = R(s_{t}, a_{t}) + \gamma \mathbb{E}_{{\scriptscriptstyle \mathcal{N}_{t+1}}, s_{t+1} \sim P_{O}} \Big[ 
\mathbb{E}_{\substack{\theta_{t+1} \sim P_{E}, \ a_{t+1} \sim \pi_{t+1}}} \big[ Q^{\pi^{i}}(s_{t+1}, a_{t+1}) \big] \Big]$.
        \end{theorem}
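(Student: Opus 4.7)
The plan is to derive the policy-evaluation Bellman equation by the standard unroll-and-telescope argument, adapted to the composite transition $T = P_E \cdot P_O$ of the OSB-CAG and its variable team sizes. I would begin from the definition of $Q^{\pi^{i}}(s_t, a_t)$ implicit in Eq.~\eqref{eq:rl_obj}, namely the expected discounted sum $\mathbb{E}[\sum_{\tau=t}^{\infty} \gamma^{\tau-t} R(s_\tau, a_\tau) \mid s_t, a_t]$ taken under $(\mathcal{N}_\tau, s_\tau)\sim P_O$, $\theta_\tau \sim P_E$, $a_\tau^{-i}\sim \pi_\tau^{-i}$, and $a_\tau^{i}$ chosen by $\pi^{i}$. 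Peeling off the $\tau = t$ term immediately gives $Q^{\pi^{i}}(s_t, a_t) = R(s_t, a_t) + \gamma\,\mathbb{E}[\sum_{\tau=t+1}^{\infty} \gamma^{\tau-(t+1)} R(s_\tau, a_\tau) \mid s_t, a_t]$, so the task reduces to identifying the tail with $\gamma$ times a conditional expectation of $Q^{\pi^{i}}(s_{t+1}, a_{t+1})$.

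Next I would apply the tower property to the tail by conditioning on the next-step variables in the order dictated by the factorization in Eq.~\eqref{eq:preference_trans_decomposition}: first sample $(\mathcal{N}_{t+1}, s_{t+1}) \sim P_O(\cdot \mid s_t, a_t, \theta_t)$, then $\theta_{t+1} \sim P_E(\cdot \mid s_{t+1}, \mathcal{N}_{t+1})$, and finally $a_{t+1} \sim \pi_{t+1}$ (with $a_{t+1}^{-i}$ governed by the teammates' policies and $a_{t+1}^{i}$ by $\pi^{i}$). Under the stationarity of $\pi^{i}$ and the Markov property of the augmented process $(s_\tau, \mathcal{N}_\tau, \theta_\tau)$ induced by $T$, the inner conditional expectation of the telescoped sum is exactly $Q^{\pi^{i}}(s_{t+1}, a_{t+1})$. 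Re-collecting the layers yields the stated recursion, with the outer expectation over $(\mathcal{N}_{t+1}, s_{t+1}) \sim P_O$ and the inner over $\theta_{t+1} \sim P_E$ and $a_{t+1} \sim \pi_{t+1}$, matching Eq.~\eqref{eq:open_team_ad_hoc_bellman_optimality} with the $\max$ removed.

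The main obstacle will be giving a rigorous treatment of the variable-size team inside both the expectation and the Markov argument. The immediate reward $R(s_\tau, a_\tau) = \sum_{j \in \mathcal{N}_\tau} R_j(a_\tau \mid s_\tau)$ is indexed by the random set $\mathcal{N}_\tau$, so different sample paths contribute sums over different index sets, and agents can be added or removed between $\tau$ and $\tau+1$. Assumption~\ref{assm:agent_leaves_env} is exactly what makes this harmless: by forcing $R_j(a_t^{j} \mid s_t)$ and the pairwise terms $\alpha_{jk}$ to vanish once an agent has left, no residual reward is ever charged to a departed agent, and the reward ledger aligns cleanly with $\mathcal{N}_\tau$ at each $\tau$. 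A secondary subtlety is that $P_O$ itself depends on $\theta_t$, so the tower ordering must respect this dependence; once it does, the derivation reduces to the standard telescoping identity and delivers the claimed Bellman equation.
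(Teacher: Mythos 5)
Your argument is correct, but it takes a different route from the paper's. You telescope the joint value directly: peel off the $\tau=t$ reward, apply the tower property in the order dictated by the factorization $T = P_{E}\cdot P_{O}$, and invoke the Markov property of the augmented process $(s_{\tau},\mathcal{N}_{\tau},\theta_{\tau})$ to identify the tail with $\gamma\,\mathbb{E}\left[Q^{\pi^{i}}(s_{t+1},a_{t+1})\right]$. The paper instead works agent-by-agent: it first invokes Theorem~\ref{thm:joint_q_representation} to write $Q^{\pi^{i}}(s_{t},a_{t})=\sum_{j\in\mathcal{N}_{t}}Q^{\pi^{i}}_{j}(a_{t}\vert s_{t})$, expands each preference Q-value $Q^{\pi^{i}}_{j}$ by its own one-step Bellman recursion, and then re-sums; the crux is that the index set of the next-step sum can be changed from $\mathcal{N}_{t}$ to $\mathcal{N}_{t+1}$ because $Q^{\pi^{i}}_{j}(a_{t+1}\vert s_{t+1})=0$ for every departed agent $j\in\mathcal{N}_{t}\setminus\mathcal{N}_{t+1}$ by Assumption~\ref{assm:agent_leaves_env}. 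Your route is the more elementary and standard policy-evaluation argument, and it uses Assumption~\ref{assm:agent_leaves_env} only to keep the reward ledger aligned with the shrinking team; the paper's route buys two things you do not get for free. First, it makes the per-agent decomposition load-bearing, which is what the CIAO regularizers act on, so the Bellman equation is established in exactly the representation used downstream. Second, it surfaces the regularity condition $\mathcal{N}_{t+1}\subseteq\mathcal{N}_{t}$ and the singularity for newly joining agents (one cannot expand $Q^{\pi^{i}}_{k}(a_{t}\vert s_{t})$ for $k\notin\mathcal{N}_{t}$ without forcing $0$ to equal a positive quantity); your direct telescoping silently passes over this case, so if you keep your approach you should still state that the recursion is only claimed under $\mathcal{N}_{t+1}\subseteq\mathcal{N}_{t}$, since otherwise the joint value at $t$ cannot account for rewards attributable to agents absent at time $t$ under the representation of Theorem~\ref{thm:joint_q_representation}.
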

        % with the same regularity condition
        To solve Eq.~\eqref{eq:open_team_ad_hoc_bellman_optimality}, we further propose an operator with the same regularity condition, such that $\Gamma: Q \mapsto \Gamma Q$, specified as follows:
        \begin{equation}
        \label{eq:open_team_hedonic_bellman_operator}
            \begin{split}
                \Gamma Q^{\pi^{i}} \left( s_{t+1},a_{t+1}^{-i}, a^{i} \right) := R(s_{t}, a_{t}) + \gamma \mathbb{E}_{{\scriptscriptstyle \mathcal{N}_{t+1}}, s_{t+1} \sim P_{O}} \Big[ \\
                \max_{a^{i}} \mathbb{E}_{\substack{\theta_{t+1} \sim P_{E},\ a_{t+1}^{-i} \sim \pi_{t+1}^{-i}}} \big[ Q^{\pi^{i}}(s_{t+1}, a_{t+1}^{-i}, a^{i}) \big] \Big].
            \end{split}
        \end{equation}
        Eq.~\eqref{eq:open_team_hedonic_bellman_operator} is a standard form of Bellman operator. Therefore, recursively running Eq.~\eqref{eq:open_team_hedonic_bellman_operator} converges to the Bellman optimality equation in Eq.~\eqref{eq:open_team_ad_hoc_bellman_optimality}, following the well-known value iteration algorithm \citep[Ch. 4]{sutton2018reinforcement}. 
        \begin{remark}
        \label{rmk:ignorance_effects_of_joining_team}
            In implementation, the effect of $\mathcal{N}_{t} \subset \mathcal{N}_{t+1}$ can be omitted, due to its low proportions during the process. Therefore, solving the GPL optimization problem of fitted Q-learning \citep{ernst2005tree} that omits the effect of $\mathcal{N}_{t} \subset \mathcal{N}_{t+1}$ is a reasonable approximation of Bellman operator in Eq.~\eqref{eq:open_team_hedonic_bellman_operator}, which reduces the computational cost of filtering out the transition samples of $\mathcal{N}_{t} \subset \mathcal{N}_{t+1}$ in practice. The GPL optimization problem is shown as follows:
            \begin{equation}
            \label{eq:fitted_q-learning}
                \begin{split}
                % _{\substack{s_{t}, a_{t}, \\ {\scriptscriptstyle \mathcal{N}_{t+1}}, \\ s_{t+1}}}
                    &\min_{\beta} L(\beta) = \mathbb{E} \Big[ \frac{1}{2} \Big( R(s_{t}, a_{t}) + \gamma \max_{a^{i}} \mathbb{E}_{\substack{\theta_{t+1} \sim P_{E}, \\ a_{t+1}^{-i} \sim \pi_{t+1}^{-i}}} \big[ \\
                    &\hat{Q}^{\pi^{i}}(s_{t+1},a_{t+1}^{-i}, a^{i}; \beta^{-}) \big] - \hat{Q}^{\pi^{i}}(s_{t},a_{t}; \beta) \Big)^{2} \Big],
                \end{split}
            \end{equation}
            where $\hat{Q}^{\pi^{i}}(\cdot \ ; \ \beta^{-})$ is the approximate target optimal joint Q-value parameterised by $\beta^{-}$ and $\hat{Q}^{\pi^{i}}(\cdot \ ; \ \beta)$ is the approximate optimal joint Q-value parameterised by $\beta$.
        \end{remark}
    
    \subsection{Practical Implementation}
    \label{subsec:practical_implementation}
        Based on our theory, we introduce a novel algorithm, CIAO, representing the algorithm for \textbf{C}ooperative game theory \textbf{I}nspired \textbf{A}d hoc teamwork in \textbf{O}pen teams. We implement CIAO in dynamic affinity graphs as a star graph (refer to Remark~\ref{rmk:ad_hoc_affinity_graph} for more insights into this topology) and a complete graph, denoted as \textit{CIAO-S} and \textit{CIAO-C}, respectively, where ``S'' signifies \textbf{S}tar graph and ``C'' signifies \textbf{C}omplete graph. In addition to the joint Q-value representation model (derived from Theorem~\ref{thm:joint_q_representation}) and the training losses for estimating the unknown type inference model and the unknown agent model (as detailed in Section~\ref{subsec:graph-based_policy_optimization}), we introduce novel Q losses tailored for variant dynamic affinity graphs based on our theory. These losses incorporate regularization terms with multipliers $\lambda > 0$.

        \textbf{CIAO-S.} If the dynamic affinity graph is a star graph, the training loss with the regularizer is as follows:
        \begin{equation*}
            \begin{split}
                L_{s}(&\beta) = L(\beta) \\
                &+ \lambda \mathbb{E}_{s_{t}, a_{t}}\Big[ \frac{1}{2} \big( \sum_{j \in -i} \hat{Q}_{j}^{\pi^{i}}(a_{t}^{j} \vert s_{t}) - \hat{Q}_{i}^{\pi^{i}}(a_{t}^{i} \vert s_{t}; \beta) \big)^{2} \Big].
            \end{split}
        \end{equation*}
        \vspace{-5mm}

        \textbf{CIAO-C.} If the dynamic affinity graph is a complete graph, the training loss with the regularizer is as follows:
        \begin{equation*}
            \begin{split}
                L_{c}(&\beta) = L(\beta) \\
                &+ \lambda \mathbb{E}_{s_{t}, a_{t}}\Big[ \sum_{j \in -i} \frac{1}{2} \big( \hat{Q}_{i}^{\pi^{i}}(a_{t}^{i} \vert s_{t}) - \hat{Q}_{j}^{\pi^{i}}(a_{t}^{j} \vert s_{t}; \beta) \big)^{2} \Big].
            \end{split}
        \end{equation*}
        \vspace{-5mm}

        Note that it is also requisite to enforce that $\hat{Q}_{jk}^{\pi^{i}}(a_{t}^{j}, a_{t}^{k} \vert s_{t}) = \hat{Q}_{kj}^{\pi^{i}}(a_{t}^{k}, a_{t}^{j} \vert s_{t}) \geq 0$ by Remark \ref{rmk:implement_symmetry_dynamic_affinity_graph}. Following our theoretical model, the learner's reward $R(s_{t}, a_{t})$ ought to be non-negative, while the designated reward of an environment could be negative. However, this can be adjusted by adding the maximum difference between these two rewards among states and joint actions denoted by $\Delta R(s_{t}, a_{t})$ without changing the original goal. In practice, Eq.~\eqref{eq:fitted_q-learning} is solved by DQN \citep{mnih2013playing}. The learner's actions are decided by Eq.~\eqref{eq:objective_ad_hoc}, employing the estimated teammates' agent models $\hat{\pi}^{-i}$ (see Section~\ref{subsec:graph-based_policy_optimization}) to marginalize $a_{t}^{-i}$ of $\hat{Q}^{\pi^{i}}(s_{t},a_{t}; \beta)$, as implemented in GPL. The further implementation details are left to Appendix~\ref{sec:further_details_of_implementation}.
        
\section{Experiments}
\label{sec:experiments}
    We assess the effectiveness of the proposed algorithms CIAO-S and CIAO-C in two established environments, LBF and Wolfpack, featuring open team settings \citep{rahman2021towards}. In these settings, teammates are randomly selected to enter the environment and remain for a certain number of time steps. During experiments, the learner is trained in an environment with a maximum of 3 agents at each timestep. Subsequently, testing is conducted in environments with a maximum of 5 and 9 agents at each timestep, showcasing the model's ability to handle both unseen compositions and varied team sizes. All experiments are conducted with five random seeds, and the results are presented as the mean performance with a 95\% confidence interval. Our experimental design aims to answer the following questions: (1) Does the joint Q-value representation outlined in our theory effectively facilitate collaboration between the learner and temporary teammates? (2) Is it necessary to generalize the preference reward function from zero, as in CAG, to a non-negative range in our theory (see Appendix \ref{sec:generalization_of_preference_values})? (3) Is the claim in Remark \ref{rmk:ignorance_effects_of_joining_team} valid in practice? (4) Is CIAO able to improve the generalization of agent-type sets?
    
    \textbf{Baselines and Ablation Variants.} The state-of-the-art baseline we use in this experiment is GPL-Q (shortened as GPL) \citep{rahman2021towards}. The ablation variants of the proposed CIAO are as follows: \textbf{CIAO-X-FI}, \textbf{CIAO-X-ZI} and \textbf{CIAO-X-NI} are variants that remove enforcement of individual utility, enforce individual utility as zero and enforce individual utility as negative values, respectively. \textbf{CIAO-X-NP} is a variant that enforces negative pairwise utility. ``X'' above indicates either ``S'' or ``C''. Further details on experimental settings can be found in Appendix~\ref{sec:experimental_settings}. 
    %The code of experiments is published on \url{https://github.com/hsvgbkhgbv/CIAO}.
    % The demo of experiments is available on \url{https://sites.google.com/view/ciao2024} and t
    
    \subsection{Main Results}
    \label{sec:main_results}
        \begin{figure}[ht!]
        \centering
            \begin{subfigure}[b]{0.238\textwidth}
            \centering
                \includegraphics[width=\textwidth]{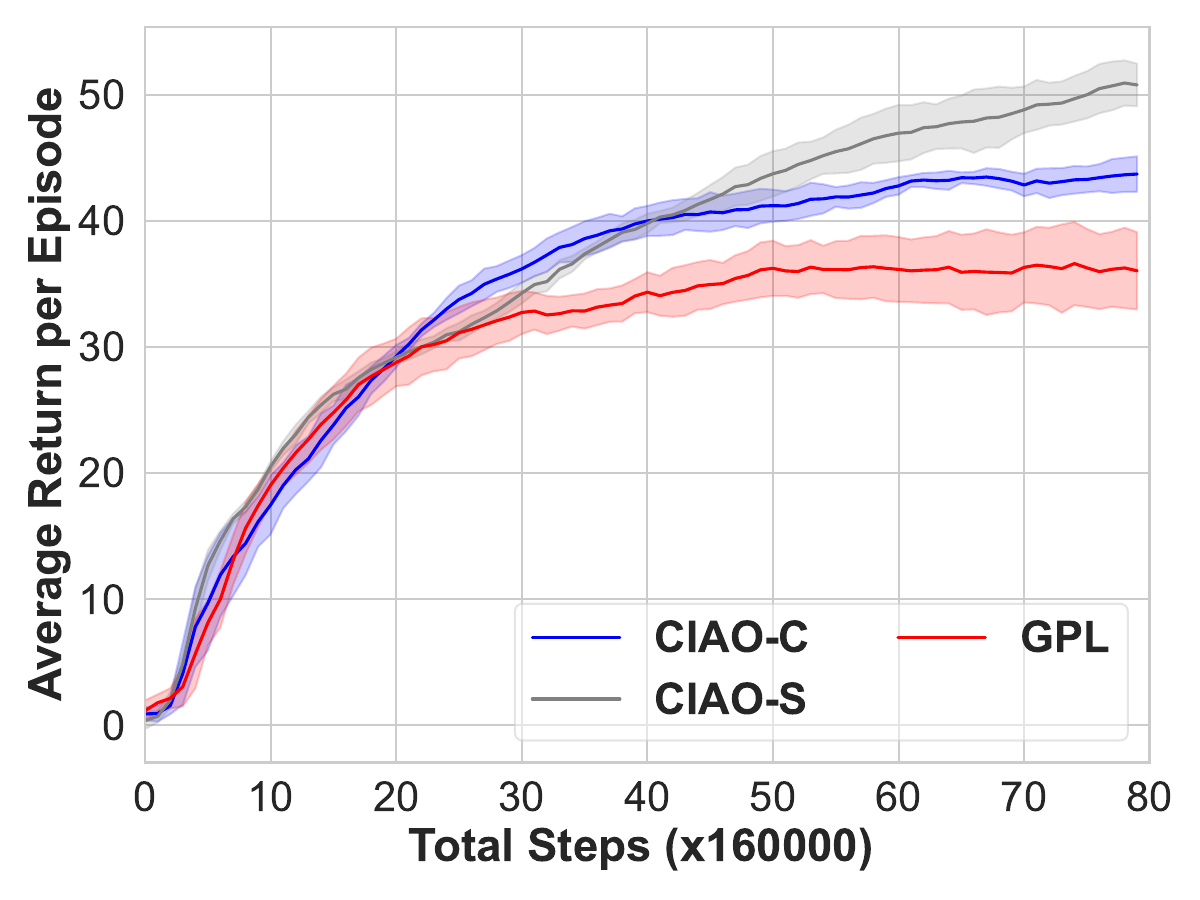}
                \caption{Wolfpack: max. of 5 agents.}
                \label{fig:wolfpack-test-5-normal}
            \end{subfigure}
            \hfill
            \begin{subfigure}[b]{0.238\textwidth}
            \centering
                \includegraphics[width=\textwidth]{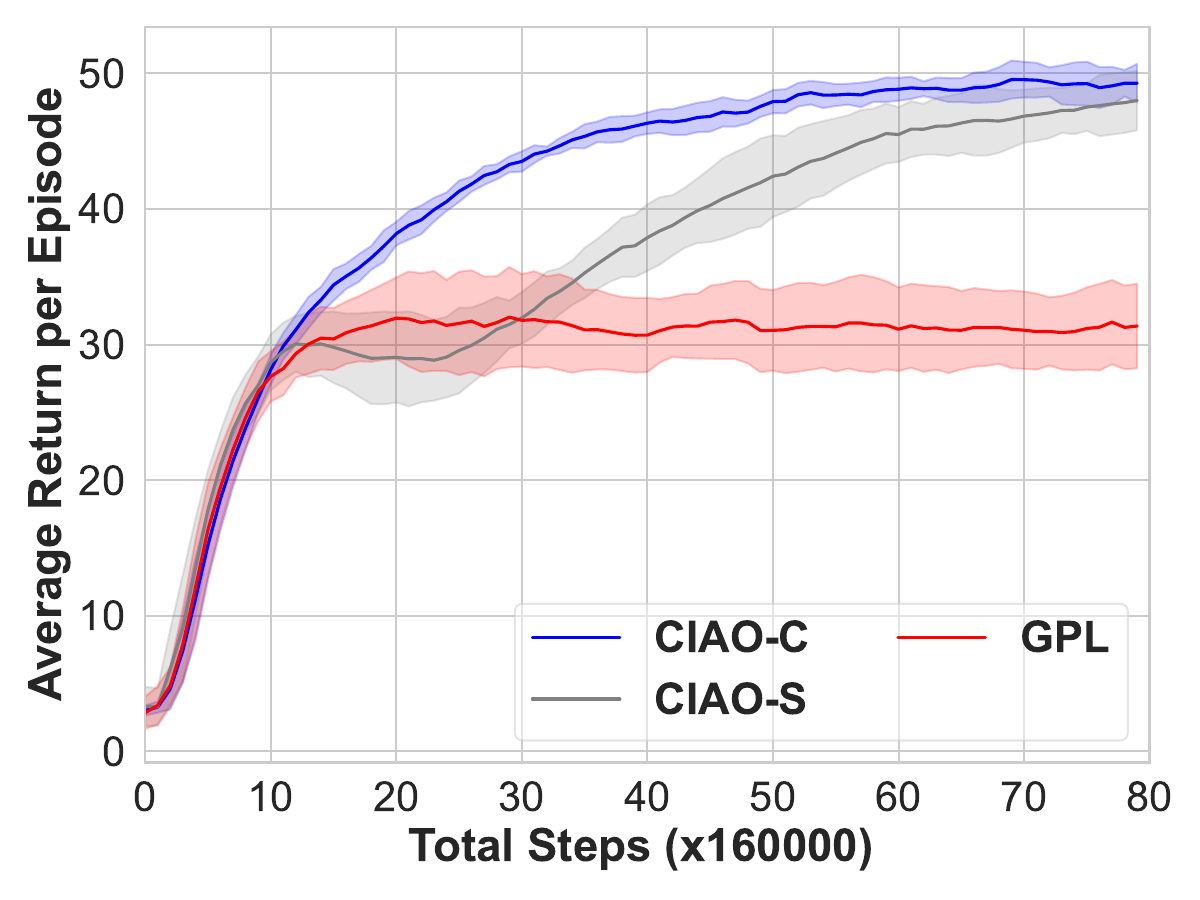}
                \caption{Wolfpack: max. of 9 agents.}
                \label{fig:wolfpack-test-9-normal}
            \end{subfigure}
            \hfill
            \begin{subfigure}[b]{0.238\textwidth}
            \centering
                \includegraphics[width=\textwidth]{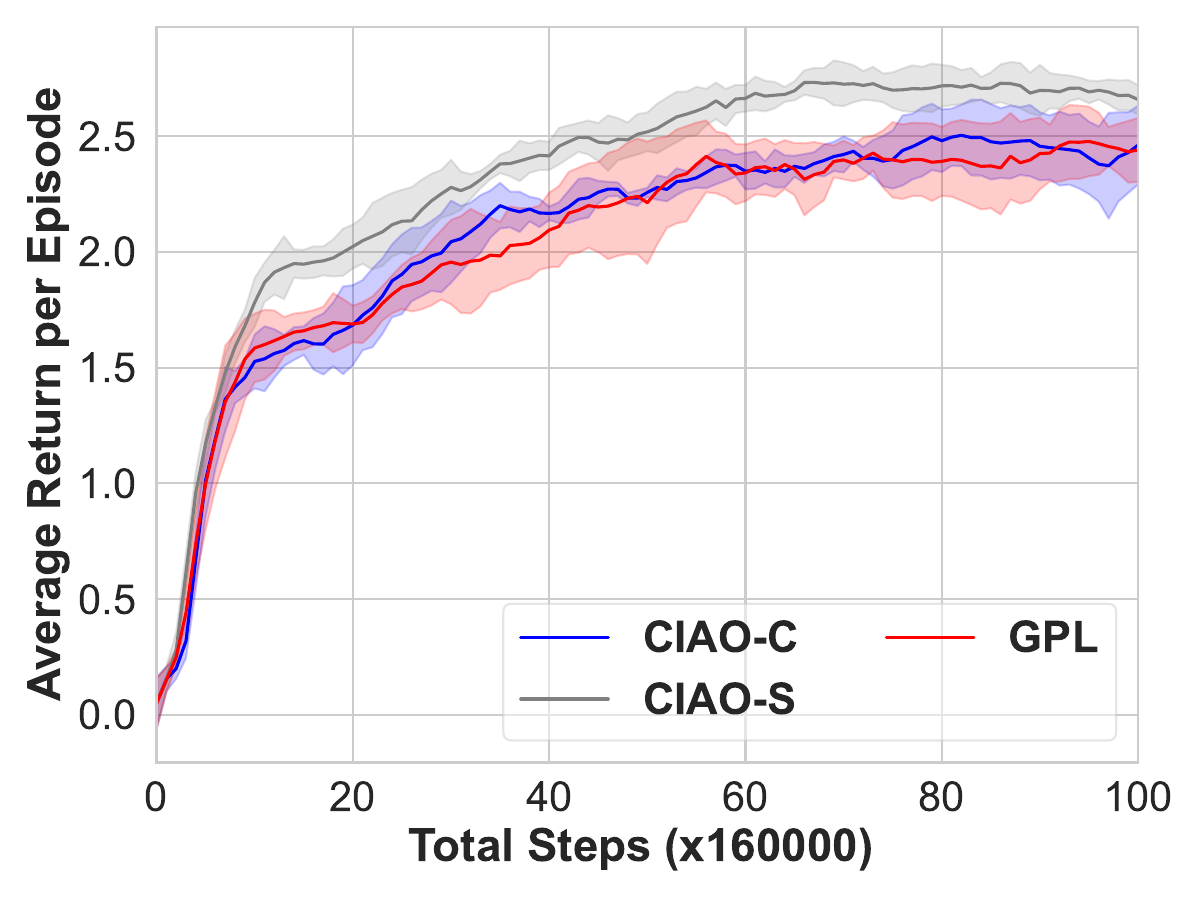}
                \caption{LBF: max. of 5 agents.}
                \label{fig:lbf-test-5-normal}
            \end{subfigure}
            \hfill
            \begin{subfigure}[b]{0.238\textwidth}
            \centering
                \includegraphics[width=\textwidth]{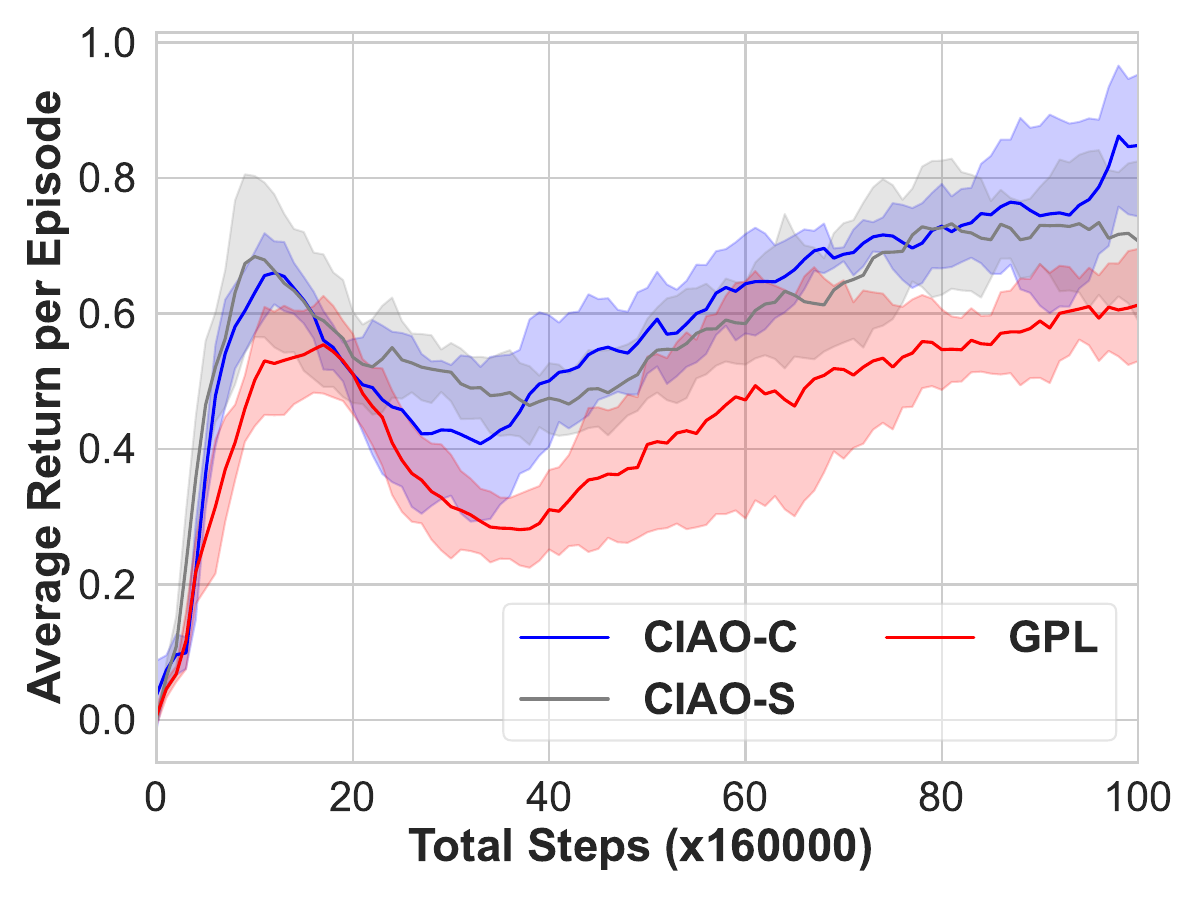}
                \caption{LBF: max. of 9 agents.}
                \label{fig:lbf-test-9-normal}
            \end{subfigure}
            \caption{Comparison between CIAO and GPL in Wolfpack and LBF with a maximum of 5 and 9 agents.}
            \label{fig:main_results}
        \end{figure}
        % We first conduct experiments to answer Question 1 and 2 on the original version of Wolfpack and LBF as shown in Fig.~\ref{fig:main_results}. It is apparent that CIAO-C performs better than GPL in most of scenarios with variant maximum number of agents, which verifies the correctness of our theory (unrelated to dynamic affinity graph structures) and its effectiveness of facilitating collaboration between the learner and temporary teammates in the open ad hoc teamwork problem. As a result, the answer to Question 1 is yes. Comparing CIAO-C and CIAO-S, we can conclude that for the scenarios with less agents the star graph could be more effective, while for the scenarios with more agents the complete graph could be more effective. This aligns to the intuition that the direct influence from the learner to each teammate is not enough when the number of agents increases. Instead, the indirect influence from the learner to some teammates is needed, e.g., a teammate is influenced by the learner to further influence another teammate. As a result, the answer to Question 2 is depending on the number of agents.
        We initially address Question 1 through experiments conducted on the original versions of Wolfpack and LBF, as depicted in Fig.~\ref{fig:main_results}. It is evident that CIAO-C outperforms GPL in the majority of scenarios with varying maximum numbers of agents. This not only verifies the correctness and effectiveness of our theory, irrespective of dynamic affinity graph structures but also demonstrates its capability in facilitating collaboration between the learner and temporary teammates in the open ad hoc teamwork problem. Upon comparing CIAO-C and CIAO-S, it becomes apparent that the star graph may be more effective in scenarios with fewer agents, whereas the complete graph exhibits greater effectiveness in scenarios with more agents. This observation aligns with the intuition that the direct influence from the learner to each teammate may not suffice as the number of agents increases. Instead, indirect influence, where a teammate is influenced by the learner to subsequently influence another teammate, becomes crucial. 
        % Therefore, the answer to Question 2 depends on the number of agents involved in the scenario. Consequently, the answer to Question 1 is affirmative. 

    \subsection{Ablation Study}
    \label{subsec:ablation_study}
        \begin{figure}[ht!]
        \centering
            \begin{subfigure}[b]{0.238\textwidth}
            \centering
                \includegraphics[width=\textwidth]{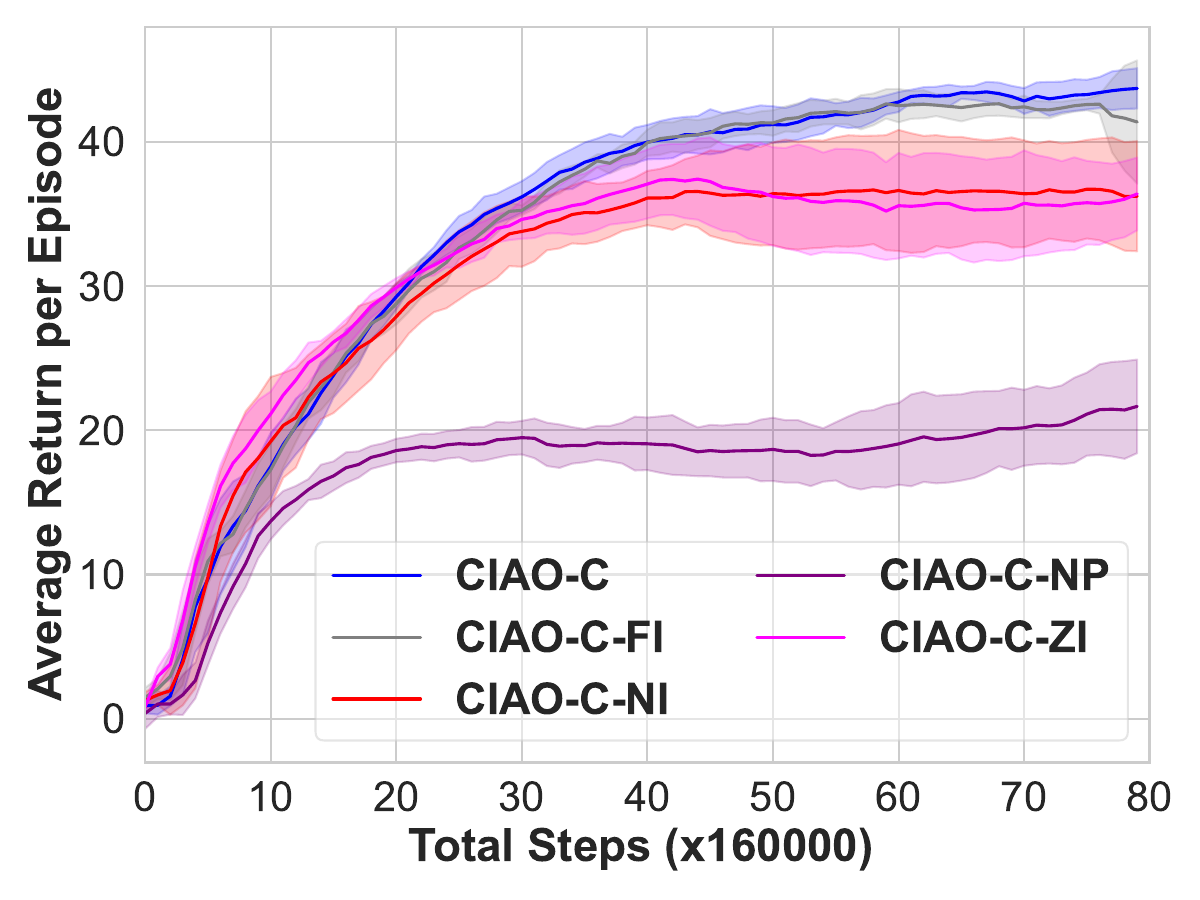}
                \caption{Wolfpack: max. of 5 agents.}
                \label{fig:wolfpack-test-5-ablation-complete}
            \end{subfigure}
            \hfill
            \begin{subfigure}[b]{0.238\textwidth}
            \centering
                \includegraphics[width=\textwidth]{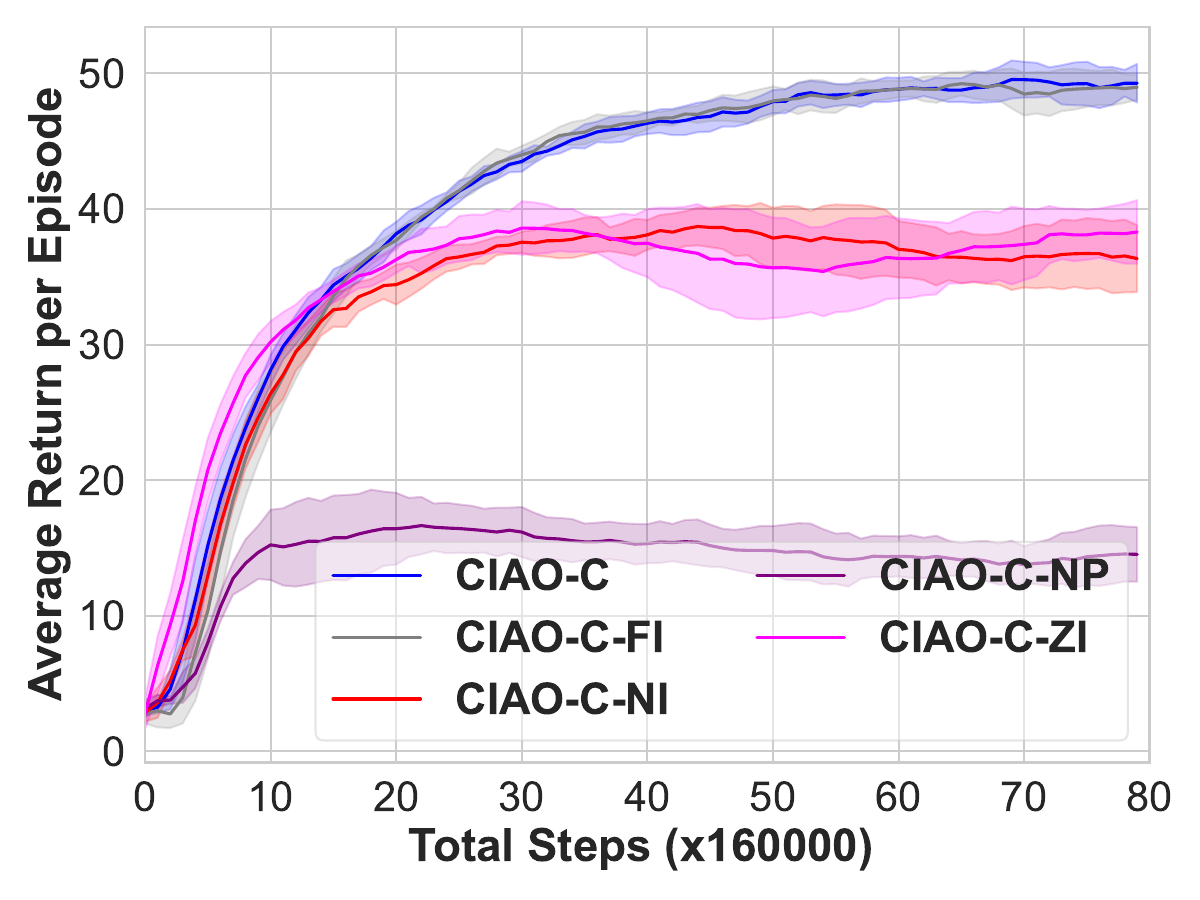}
                \caption{Wolfpack: max. of 9 agents.}
                \label{fig:wolfpack-test-9-ablation-complete}
            \end{subfigure}
            \hfill
            \begin{subfigure}[b]{0.238\textwidth}
            \centering
                \includegraphics[width=\textwidth]{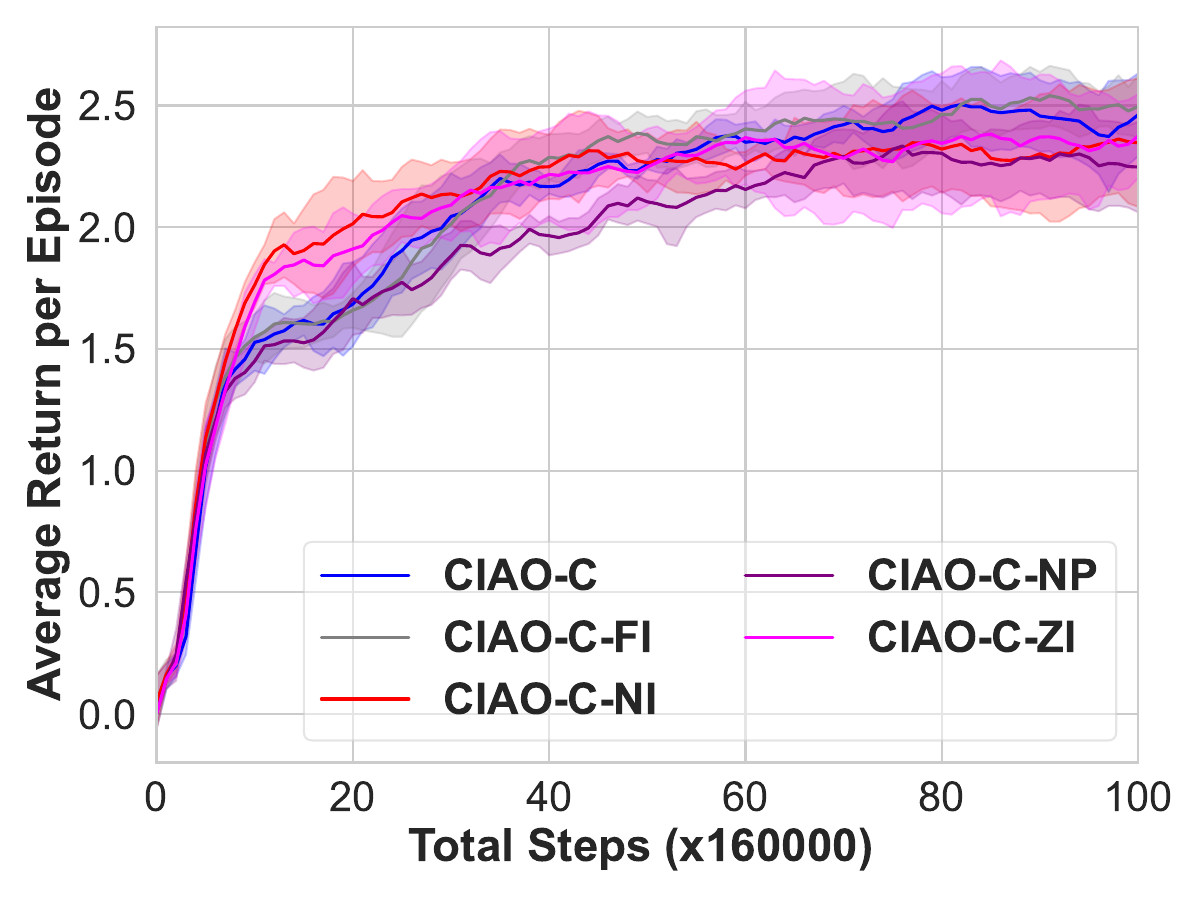}
                \caption{LBF: max. of 5 agents.}
                \label{fig:lbf-test-5-ablation-complete}
            \end{subfigure}
            \hfill
            \begin{subfigure}[b]{0.238\textwidth}
            \centering
                \includegraphics[width=\textwidth]{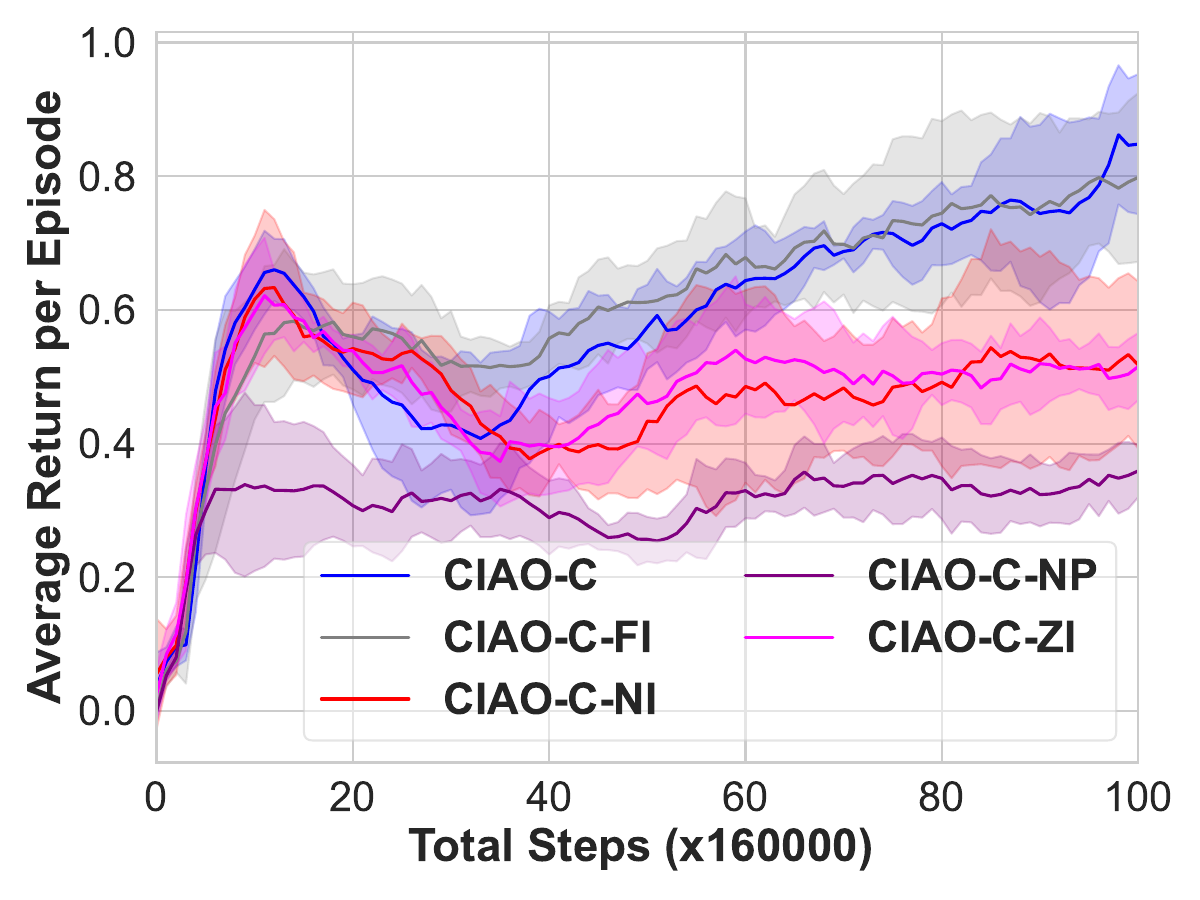}
                \caption{LBF: max. of 9 agents.}
                \label{fig:lbf-test-9-ablation-complete}
            \end{subfigure}
            \caption{Comparison between CIAO-C and its ablations in Wolfpack and LBF with a maximum of 5 and 9 agents.}
        \label{fig:ablation_study-complete}
        \end{figure}
        We present experimental results comparing CIAO-S and its ablations, as well as CIAO-C and its ablations. As illustrated in Figs.~\ref{fig:ablation_study-complete} and \ref{fig:ablation_study-star}, both CIAO-C-NP and CIAO-S-NP exhibit notably inferior performance compared to CIAO-C or CIAO-S. This observation demonstrates the validity of DVSC and confirms the accuracy of the joint Q-value representation based on our theory. This outcome provides an additional perspective in addressing Question 1. 
        \begin{figure}[ht!]
        \centering
            \begin{subfigure}[b]{0.238\textwidth}
            \centering
                \includegraphics[width=\textwidth]{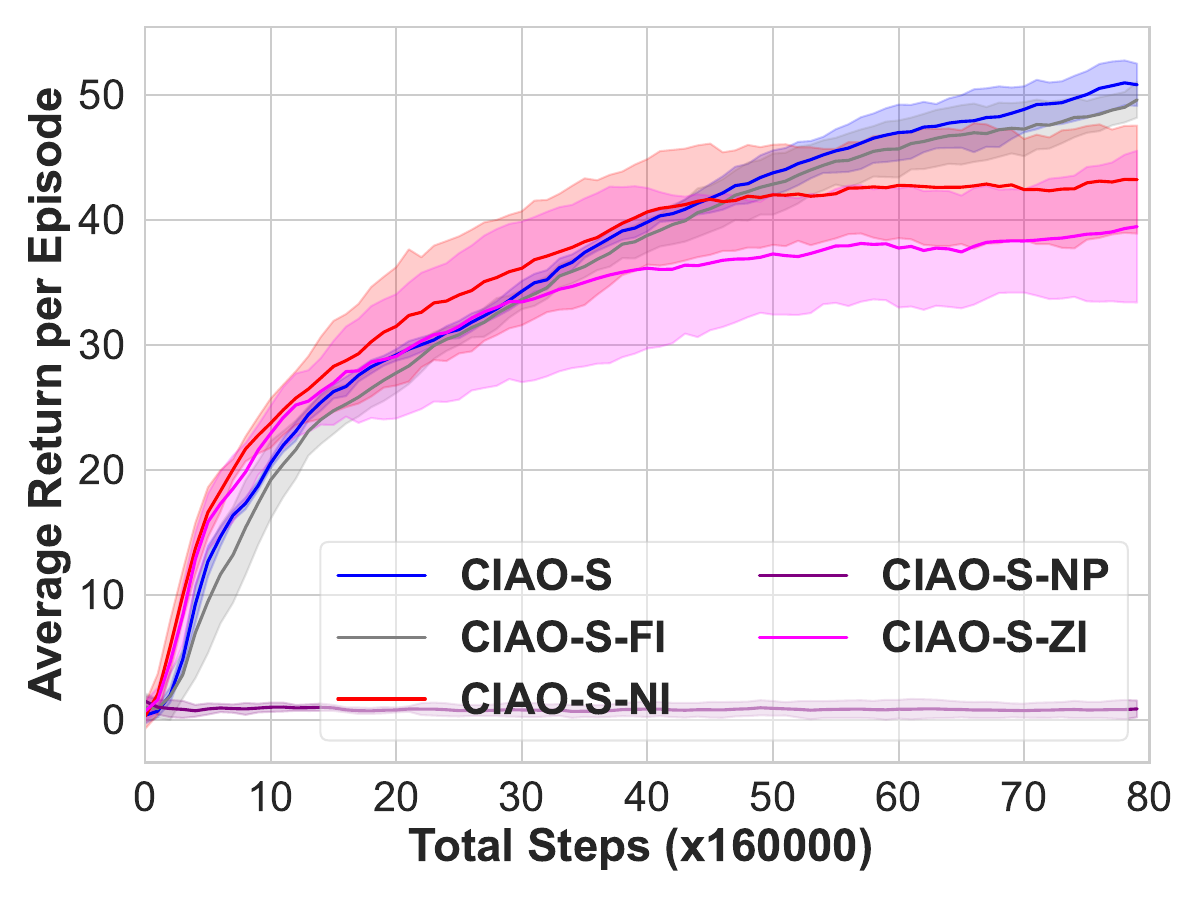}
                \caption{Wolfpack: max. of 5 agents.}
                \label{fig:wolfpack-test-5-ablation-star}
            \end{subfigure}
            \hfill
            \begin{subfigure}[b]{0.238\textwidth}
            \centering
                \includegraphics[width=\textwidth]{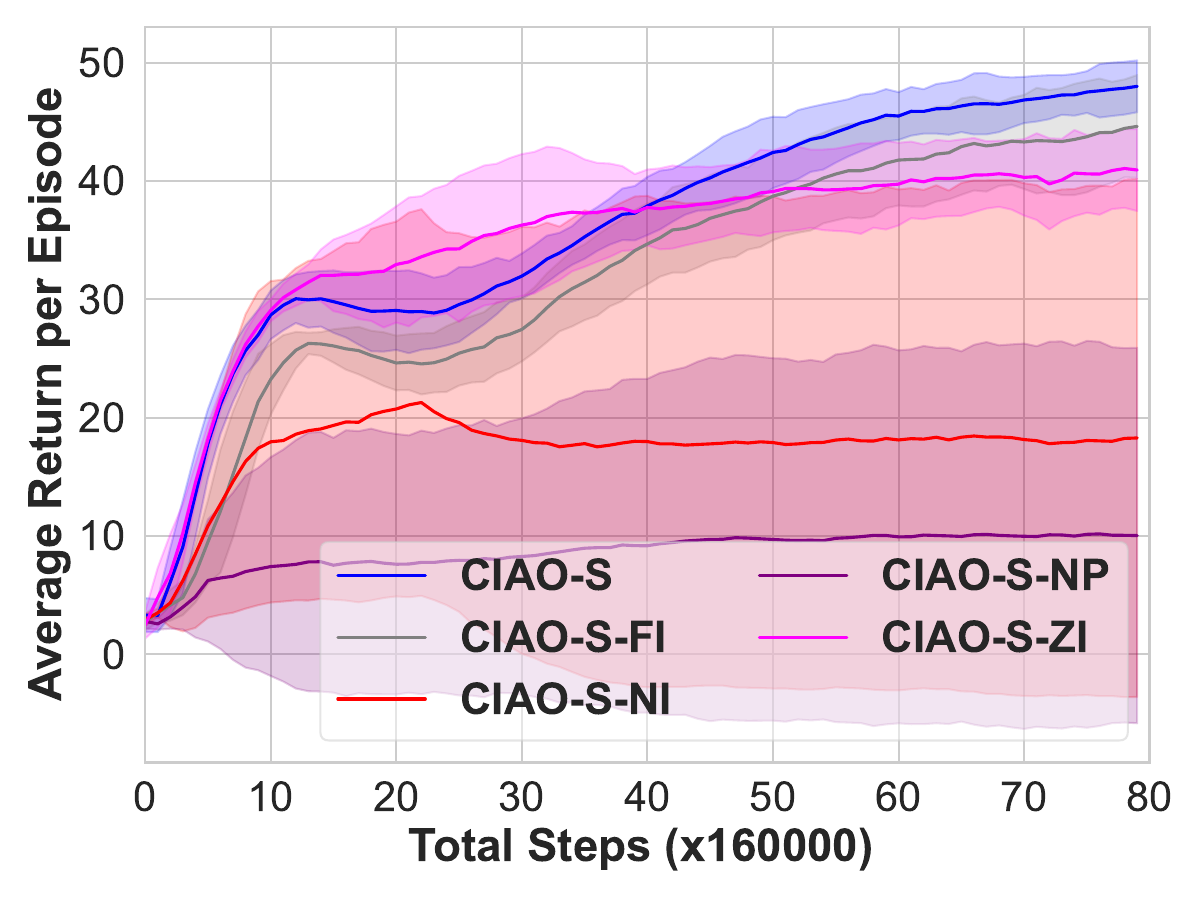}
                \caption{Wolfpack: max. of 9 agents.}
                \label{fig:wolfpack-test-9-ablation-star}
            \end{subfigure}
            \hfill
            \begin{subfigure}[b]{0.238\textwidth}
            \centering
                \includegraphics[width=\textwidth]{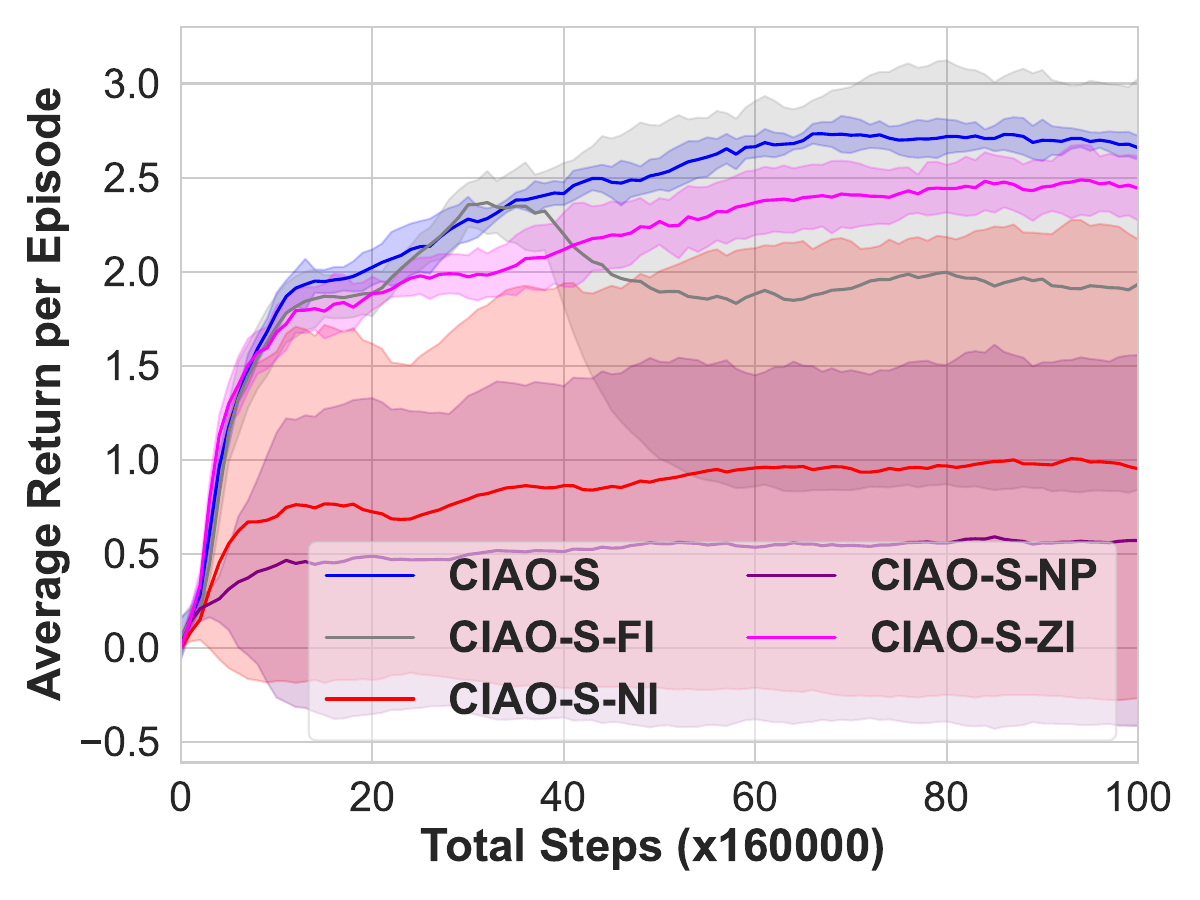}
                \caption{LBF: max. of 5 agents.}
                \label{fig:lbf-test-5-ablation-star}
            \end{subfigure}
            \hfill
            \begin{subfigure}[b]{0.238\textwidth}
            \centering
                \includegraphics[width=\textwidth]{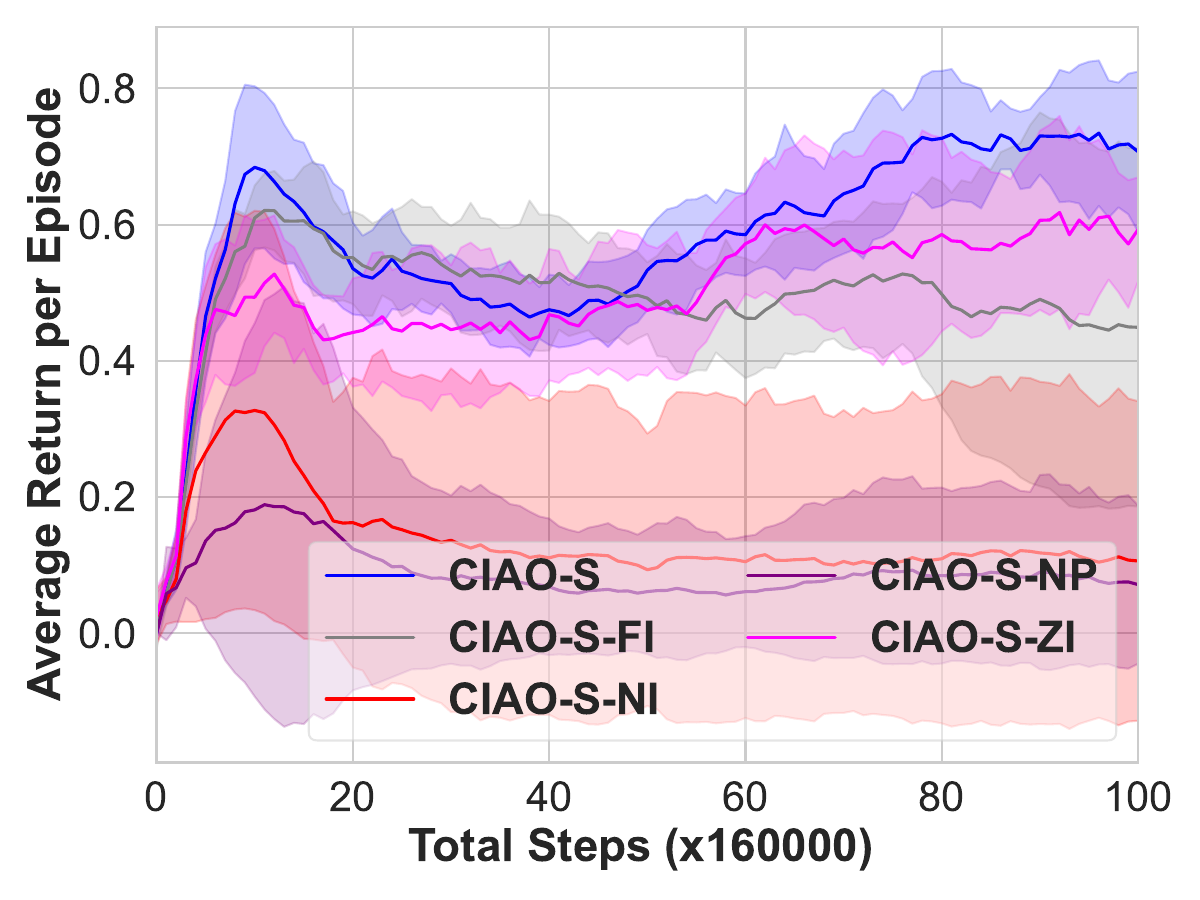}
                \caption{LBF: max. of 9 agents.}
                \label{fig:lbf-test-9-ablation-star}
            \end{subfigure}
            \caption{Comparison between CIAO-S and its ablations in Wolfpack and LBF with a maximum of 5 and 9 agents.}
        \label{fig:ablation_study-star}
        \end{figure}
        Adhering to the tradition of CAG, convention mandates setting individual utility to zero. However, in our theory, we extend its range to include non-zero values, enhancing its adaptability across diverse scenarios. This adaptability is demonstrated in the comparison between CIAO-C or CIAO-S and CIAO-C-ZI or CIAO-S-ZI in Figs.~\ref{fig:ablation_study-complete} and \ref{fig:ablation_study-star}. Although our theory does not inherently provide specific insights into the range of individual utility, we propose a hypothesis aligned with other definitions in CAG, asserting that individual utility is non-negative. This hypothesis ensures self-consistency in our generalization, as detailed in Definition~\ref{def:singleton_coalition_preference_value} in Appendix~\ref{sec:generalization_of_preference_values}. The superior performances of CIAO-C or CIAO-S over their ablations affirm the acceptability of our hypothesis.
        % , addressing Question 3.

    \subsection{Validation for Remark~\ref{rmk:ignorance_effects_of_joining_team}}
    \label{subsec:validity_of_remark2}
        \begin{figure}[ht!]
        \centering
            \begin{subfigure}[b]{0.238\textwidth}
            \centering
                \includegraphics[width=\textwidth]{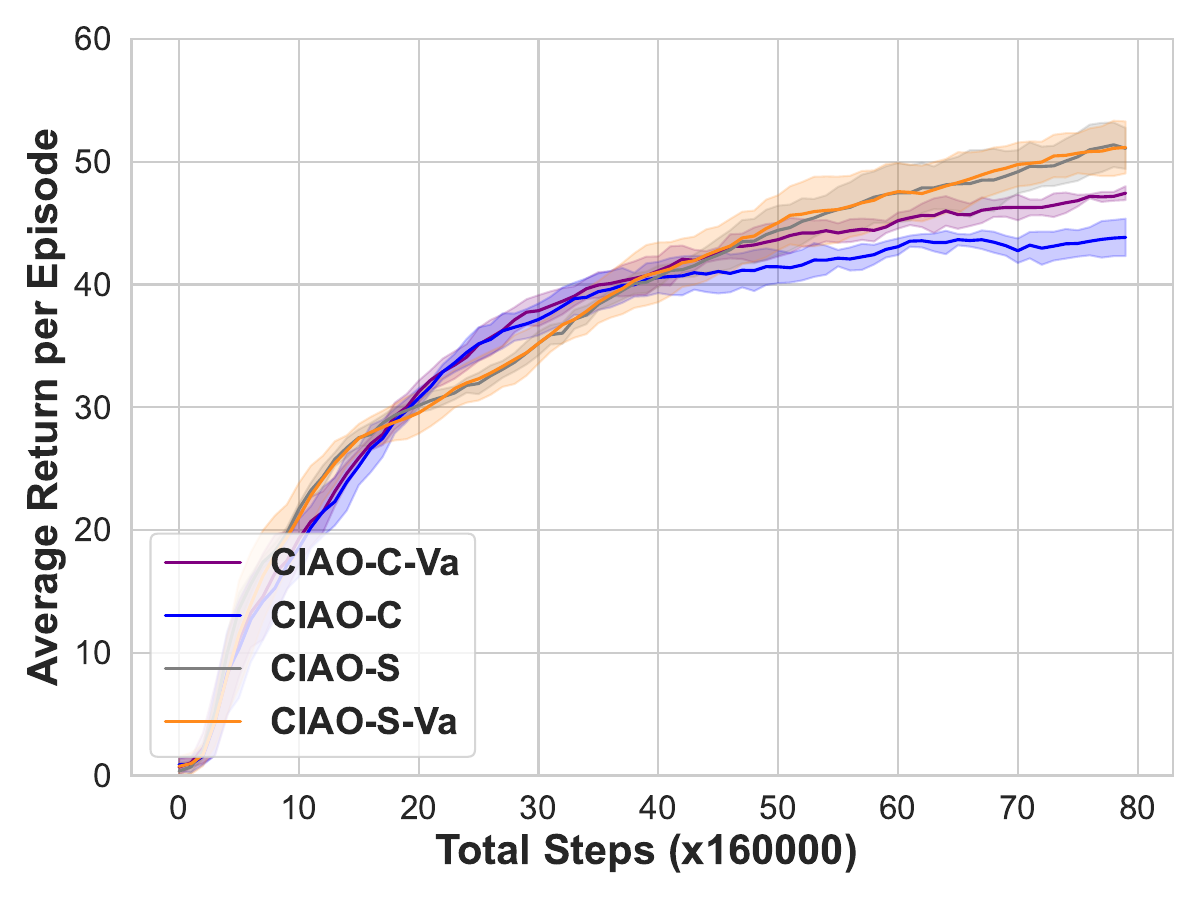}
                \caption{Wolfpack: max. of 5 agents.}
            \label{fig:lbf-test-5-va}
            \end{subfigure}
            \hfill
            \begin{subfigure}[b]{0.238\textwidth}
            \centering
                \includegraphics[width=\textwidth]{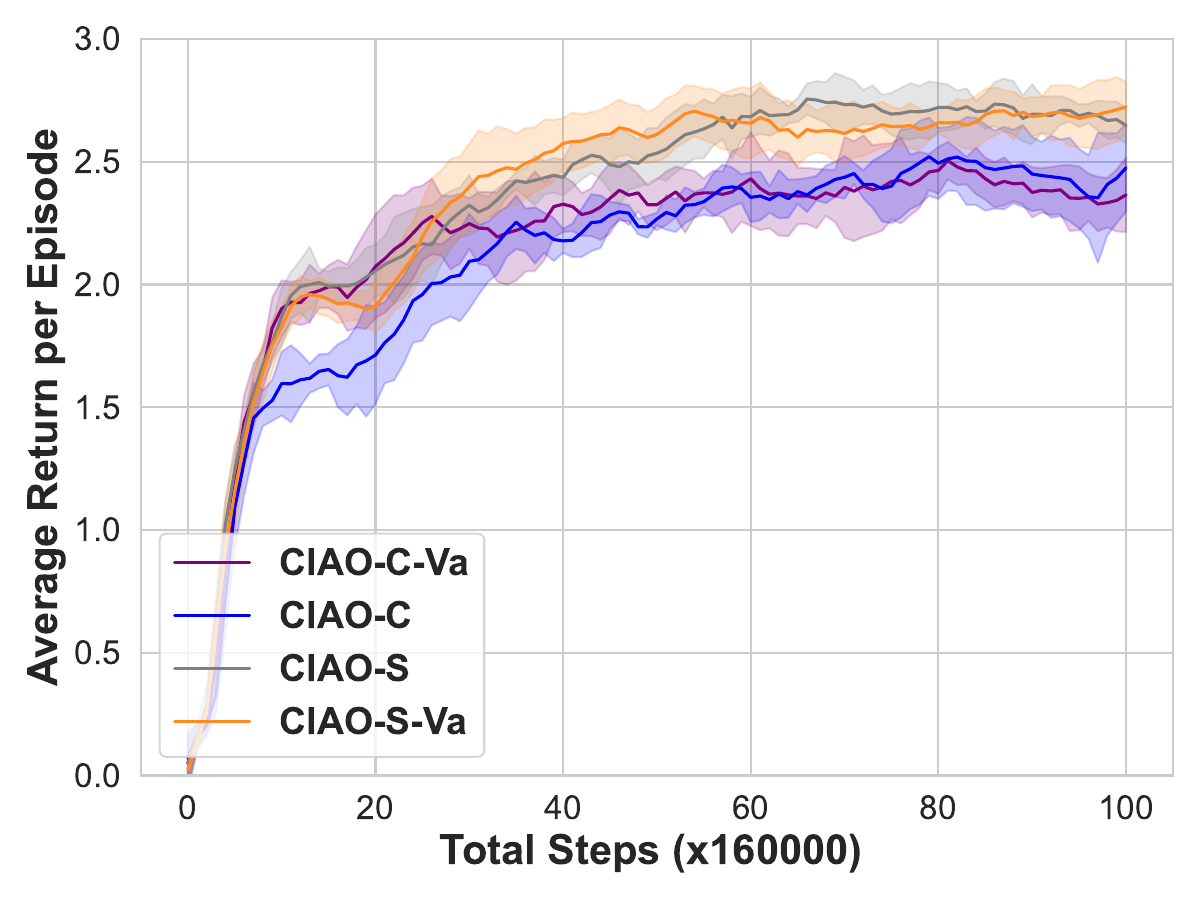}
                \caption{LBF: max. of 5 agents.}
            \label{fig:lbf-test-9-va}
            \end{subfigure}
            \caption{Comparison of training losses for CIAO between the implementations with omitting the effect of $\mathcal{N}_{t} \subset \mathcal{N}_{t+1}$ and those without (denoted as ``-Va'').}
        \label{fig:validity_algorithm}
        \end{figure}
        We now validate our claim in Remark \ref{rmk:ignorance_effects_of_joining_team} that minimizing the GPL training loss (omitting the effect of $\mathcal{N}_{t} \subset \mathcal{N}_{t+1}$) is an approximation of Eq.~\eqref{eq:open_team_hedonic_bellman_operator}. Based on the GPL training loss, we implement its variant that filters out the transition samples of $\mathcal{N}_{t} \subset \mathcal{N}_{t+1}$, following the suggestion from Remark~\ref{rmk:ignorance_effects_of_joining_team}, referred to as CIAO-C-Va and CIAO-S-Va. As shown in Fig.~\ref{fig:validity_algorithm}, in both LBF and Wolfpack with the maximum of 5 agents, CIAO-C and CIAO-S trained with the GPL training loss achieve the approximate performances to those with the variant training loss considering the effect of $\mathcal{N}_{t} \subset \mathcal{N}_{t+1}$.

    \subsection{Generalization of Agent-Type Sets}
    \label{subsec:further_experiments_on_variant_agent-type_sets}
        We now evaluate the generalizability of CIAO to agent-type sets through two scenarios: (1) the agent-type set for training has intersection of one agent-type with that for testing; (2) the agent-type set for training is mutually exclusive to that for testing. As seen from Fig.~\ref{fig:variant_agent-types}, the dynamic affinity graph as the star graph is more generalizable than the complete graph. One hypothesis for this phenomenon is that although the complete graph may be able to capture broader relationships among agents, it could be unnecessary for open ad hoc teamwork (see Remark~\ref{rmk:ad_hoc_affinity_graph}). The underlying principles behind this result deserve to be investigated in the future research. Although the generalizability of GNNs (which is also implemented in GPL) has featured prominently in generalization of agent-type sets~\citep{rahman2021towards}, the overall superior performance of CIAO to GPL still empirically shows CIAO's effectiveness to this problem.
            \begin{figure}[ht!]
            \centering
                \begin{subfigure}[b]{0.238\textwidth}
                \centering
                    \includegraphics[width=\textwidth]{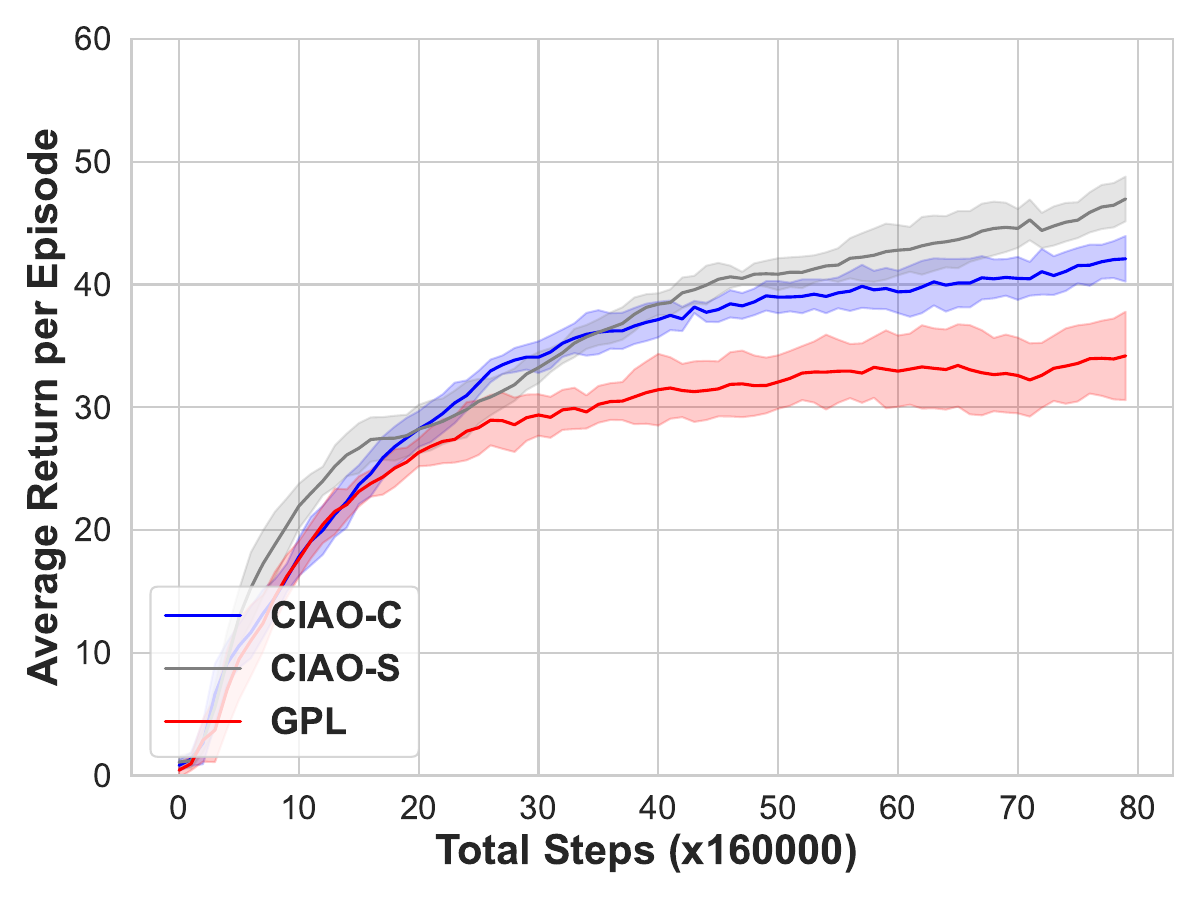}
                    \caption{Wolfpack: intersection.}
                    \label{fig:wolfpack-in}
                \end{subfigure}
                % \hfill
                \begin{subfigure}[b]{0.238\textwidth}
                \centering
                    \includegraphics[width=\textwidth]{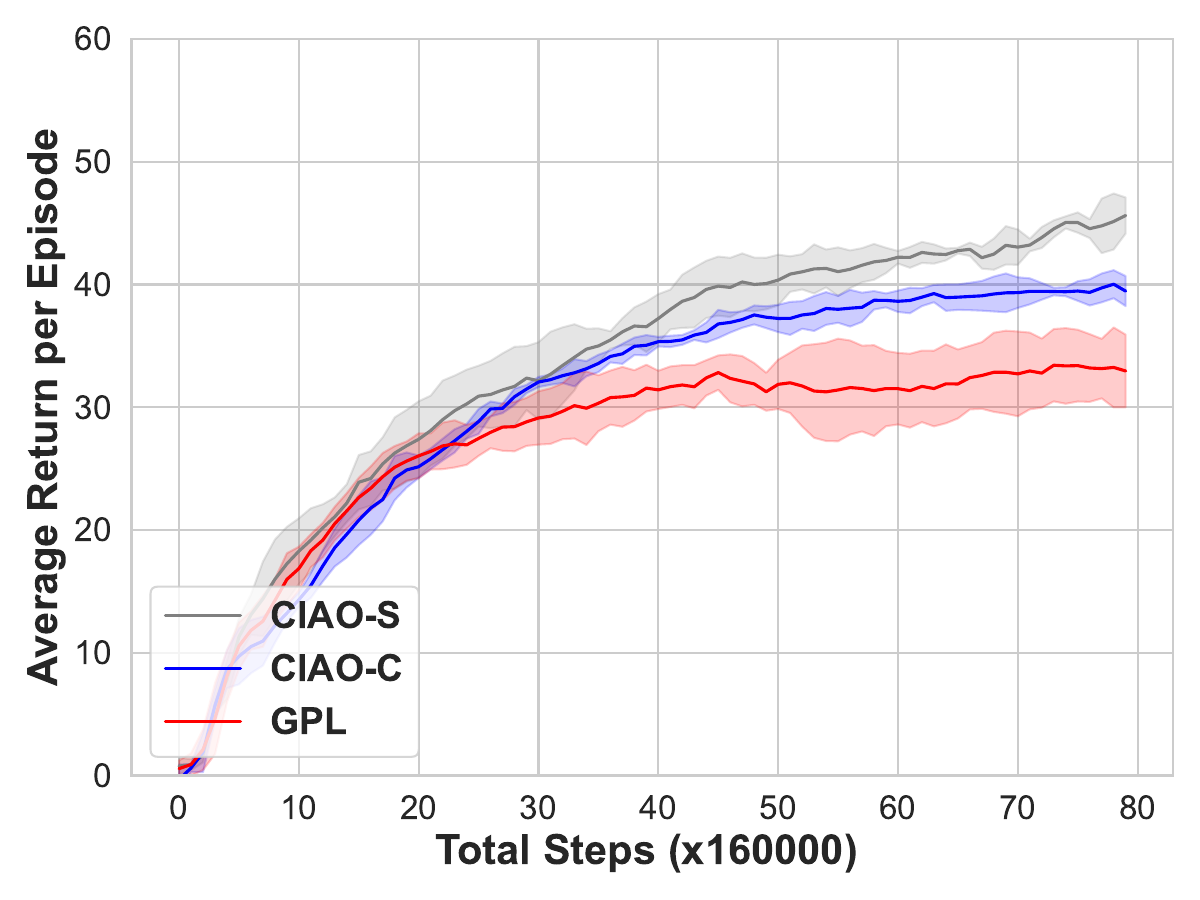}
                    \caption{Wolfpack: mutual exclusion.}
                    \label{fig:wolfpack-ex}
                \end{subfigure}
                \begin{subfigure}[b]{0.238\textwidth}
                \centering
                    \includegraphics[width=\textwidth]{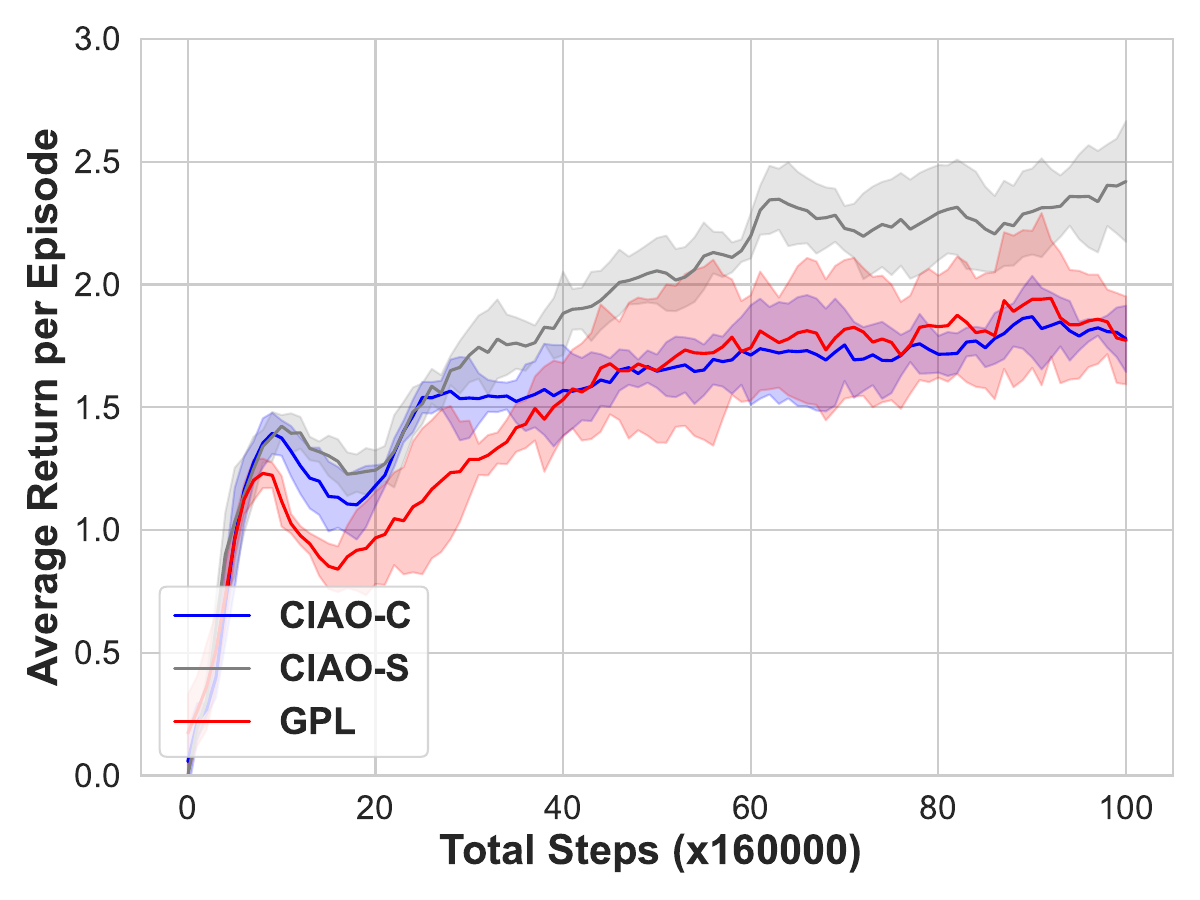}
                    \caption{LBF: intersection.}
                    \label{fig:lbf-in}
                \end{subfigure}
                % \hfill
                \begin{subfigure}[b]{0.238\textwidth}
                \centering
                    \includegraphics[width=\textwidth]{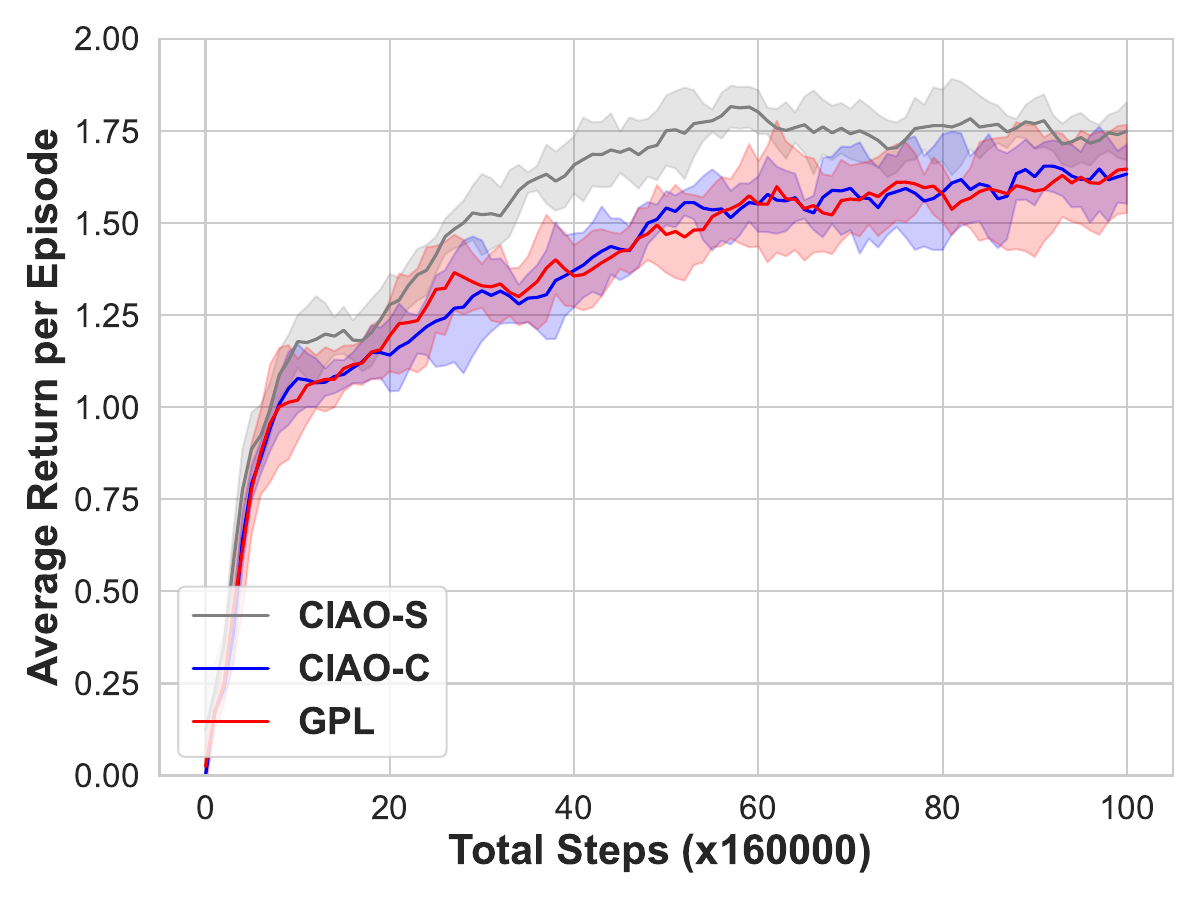}
                    \caption{LBF: mutual exclusion.}
                    \label{fig:lbf-ex}
                \end{subfigure}
                % \caption{Comparison between CIAO and GPL on LBF with intersecting and mutually exclusive agent-type sets in training and testing, respectively. The maximum temporary team size is 5.}
                \caption{Comparison between CIAO and GPL on Wolfpack and LBF, with intersecting (denoted as ``intersection'') and mutually exclusive (denoted as ``mutual exclusion'') agent-type sets in training and testing, respectively. The maximum temporary team size is 5.}
            \label{fig:variant_agent-types}
            \end{figure}
    
\section{Conclusion}
\label{sec:discussion_and_future_work}
    % , which aligns with statistical learning theory principles \citep{shalev2014understanding}
    \textbf{Discussion.} In this work we address the challenging problem of open ad hoc teamwork, aiming to design an agent capable of collaborating with teammates without prior coordination under dynamically changing team compositions. We propose a novel approach by incorporating cooperative game theory to develop a new theory. This theory effectively gives an interpretation to the joint Q-value representation leveraged in the state-of-the-art algorithm, GPL. In addition, the learning paradigm employed in GPL's framework is understood based on our theory. Building upon the empirical foundation of GPL, we introduce a novel algorithm, CIAO, which includes an additional regularizer and a constraint for representation thanks to our theory. Consequently, CIAO can be seen as an extension of GPL, providing extra information through our theory to narrow down the joint Q-value's hypothesis space, to improve learning efficiency. Furthermore, the incorporation of dynamic affinity graphs into OSB-CAG opens up a new avenue of designing graphs describing agent relationships aligned to game objectives. Experimental results validate the effectiveness of our theory and demonstrate the superior performance of CIAO.
    % The taxonomy of these algorithms is summarized in Fig.~\ref{fig:algo_relation}.
    % \begin{figure}[ht!]
    %     \centering
    %     \includegraphics[width=0.55\columnwidth]{}
    %     \caption{Taxonomy of the algorithms mentioned in this paper, based on the joint Q-value representation.}
    % \label{fig:algo_relation}
    % \end{figure}
    
    \textbf{Limitation and Future Work.} This work is the first in establishing both a theory and a practical algorithm rooted in cooperative game theory to address ad hoc teamwork. It opens up avenues of several promising future directions. Firstly, to enhance the scope and applicability of our theory, a logical next step involves exploring the adaptivity of teammates with time-varying agent-types, a factor currently omitted in our theory for simplicity. Another compelling direction is investigating the design of understandable joint Q-value representation for open ad hoc teamwork, other than linear decomposition with pairwise relationships and individual values justified in this work. This thread can push forward the potential deployment of ad hoc teamwork to safety-critical environments requiring trustworthy and cost-saving solutions, with fewer trial-and-error interactions.
    % Another promising direction is investigating why ad hoc teamwork algorithms perform better with agent-type sets including types trained by RL, providing a deeper understanding of the results in Section \ref{subsec:sensitivity_to_different_teammates}. 
    % Another compelling future direction is delving into the intrinsic reasons necessitating the implementation of agent models with graph-based approaches (that is not covered by our theory), beyond their on the surface apparent capability to handle open teams. 
    % , underpinned by a graph-based approach
    
\section*{Impact Statement}
\label{sec:impact_statement}
    % This paper presents work whose goal is to advance the field of Machine Learning. There are many potential societal consequences of our work. Standing on the positive side, the outcome of this paper could potentially benefit the development of autonomous vehicles, smart grids, and other agent decision-making scenarios under uncertainties. In contrast, this research could also lead to some undesirable disadvantages. Like most of machine learning algorithms, it may face the issue of human value alignment. Fortunately, this will be addressed by extending our research, based on other research outcomes from related fields focusing on alignment.
    % This paper aims to contribute to the advancement of the machine learning field, with potential societal implications. On the positive spectrum, 
    The outcomes of this paper could significantly enhance the progress of autonomous vehicles, smart grids, and various decision-making scenarios involving multiple independently controlled agents under uncertainties. However, it is crucial to acknowledge potential drawbacks. Like many machine learning algorithms, our work may encounter challenges related to human value alignment, when the targets in interaction are humans in the potential applications. Addressing this concern is part of our ongoing research, building upon findings from related fields that emphasize alignment issues.

\section*{Acknowledgement}
    This work is partially supported by UKRI Turing AI World-Leading Researcher Fellowship, EP/W002973/1. The computational resources are supported by CSC -- IT Center for Science LTD., Finland. Yuan Zhang receives funding from the European Union’s Horizon 2020 research and innovation program under the Marie Skłodowska-Curie grant agreement No. 953348 (ELO-X).    
\bibliography{icml2024}

\begin{thebibliography}{52}
\providecommand{\natexlab}[1]{#1}
\providecommand{\url}[1]{\texttt{#1}}
\expandafter\ifx\csname urlstyle\endcsname\relax
  \providecommand{\doi}[1]{doi: #1}\else
  \providecommand{\doi}{doi: \begingroup \urlstyle{rm}\Url}\fi

\bibitem[Agmon \& Stone(2012)Agmon and Stone]{agmon2012leading}
Agmon, N. and Stone, P.
\newblock Leading ad hoc agents in joint action settings with multiple teammates.
\newblock In \emph{AAMAS}, pp.\  341--348, 2012.

\bibitem[Agmon et~al.(2014)Agmon, Barrett, and Stone]{agmon2014modeling}
Agmon, N., Barrett, S., and Stone, P.
\newblock Modeling uncertainty in leading ad hoc teams.
\newblock In \emph{Proceedings of the 2014 international conference on Autonomous agents and multi-agent systems}, pp.\  397--404, 2014.

\bibitem[Albrecht \& Ramamoorthy(2013)Albrecht and Ramamoorthy]{albrecht2013game}
Albrecht, S.~V. and Ramamoorthy, S.
\newblock A game-theoretic model and best-response learning method for ad hoc coordination in multiagent systems.
\newblock In \emph{Proceedings of the 2013 international conference on Autonomous agents and multi-agent systems}, pp.\  1155--1156, 2013.

\bibitem[Bahdanau et~al.(2014)Bahdanau, Cho, and Bengio]{bahdanau2014neural}
Bahdanau, D., Cho, K., and Bengio, Y.
\newblock Neural machine translation by jointly learning to align and translate.
\newblock \emph{arXiv preprint arXiv:1409.0473}, 2014.

\bibitem[Barrett \& Stone(2015)Barrett and Stone]{barrett2015cooperating}
Barrett, S. and Stone, P.
\newblock Cooperating with unknown teammates in complex domains: A robot soccer case study of ad hoc teamwork.
\newblock In \emph{Proceedings of the AAAI Conference on Artificial Intelligence}, volume~29, 2015.

\bibitem[Barrett et~al.(2017)Barrett, Rosenfeld, Kraus, and Stone]{barrett2017making}
Barrett, S., Rosenfeld, A., Kraus, S., and Stone, P.
\newblock Making friends on the fly: Cooperating with new teammates.
\newblock \emph{Artificial Intelligence}, 242:\penalty0 132--171, 2017.

\bibitem[Bhat \& Alqahtani(2021)Bhat and Alqahtani]{bhat20216g}
Bhat, J.~R. and Alqahtani, S.~A.
\newblock 6g ecosystem: Current status and future perspective.
\newblock \emph{IEEE Access}, 9:\penalty0 43134--43167, 2021.

\bibitem[Brafman \& Tennenholtz(1996)Brafman and Tennenholtz]{brafman1996partially}
Brafman, R.~I. and Tennenholtz, M.
\newblock On partially controlled multi-agent systems.
\newblock \emph{Journal of Artificial Intelligence Research}, 4:\penalty0 477--507, 1996.

\bibitem[Br{\^a}nzei \& Larson(2009)Br{\^a}nzei and Larson]{branzei2009coalitional}
Br{\^a}nzei, S. and Larson, K.
\newblock Coalitional affinity games.
\newblock In \emph{Proceedings of The 8th International Conference on Autonomous Agents and Multiagent Systems-Volume 2}, pp.\  1319--1320, 2009.

\bibitem[Brown et~al.(2020)Brown, Mann, Ryder, Subbiah, Kaplan, Dhariwal, Neelakantan, Shyam, Sastry, Askell, et~al.]{brown2020language}
Brown, T., Mann, B., Ryder, N., Subbiah, M., Kaplan, J.~D., Dhariwal, P., Neelakantan, A., Shyam, P., Sastry, G., Askell, A., et~al.
\newblock Language models are few-shot learners.
\newblock \emph{Advances in neural information processing systems}, 33:\penalty0 1877--1901, 2020.

\bibitem[Chalkiadakis et~al.(2022)Chalkiadakis, Elkind, and Wooldridge]{chalkiadakis2022computational}
Chalkiadakis, G., Elkind, E., and Wooldridge, M.
\newblock \emph{Computational aspects of cooperative game theory}.
\newblock Springer Nature, 2022.

\bibitem[Chen et~al.(2020)Chen, Andrejczuk, Cao, and Zhang]{chen2020aateam}
Chen, S., Andrejczuk, E., Cao, Z., and Zhang, J.
\newblock {AATEAM:} achieving the ad hoc teamwork by employing the attention mechanism.
\newblock In \emph{The Thirty-Fourth {AAAI} Conference on Artificial Intelligence, {AAAI} 2020, The Thirty-Second Innovative Applications of Artificial Intelligence Conference, {IAAI} 2020, The Tenth {AAAI} Symposium on Educational Advances in Artificial Intelligence, {EAAI} 2020, New York, NY, USA, February 7-12, 2020}, pp.\  7095--7102. {AAAI} Press, 2020.

\bibitem[De~Peuter \& Kaski(2023)De~Peuter and Kaski]{de2023zero}
De~Peuter, S. and Kaski, S.
\newblock Zero-shot assistance in sequential decision problems.
\newblock In \emph{Proceedings of the AAAI Conference on Artificial Intelligence}, volume~37, pp.\  11551--11559, 2023.

\bibitem[Du et~al.(2019)Du, Han, Fang, Liu, Dai, and Tao]{du2019liir}
Du, Y., Han, L., Fang, M., Liu, J., Dai, T., and Tao, D.
\newblock Liir: Learning individual intrinsic reward in multi-agent reinforcement learning.
\newblock \emph{Advances in Neural Information Processing Systems}, 32, 2019.

\bibitem[Duan et~al.(2022)Duan, Yu, Tan, Zhu, and Tan]{duan2022survey}
Duan, J., Yu, S., Tan, H.~L., Zhu, H., and Tan, C.
\newblock A survey of embodied ai: From simulators to research tasks.
\newblock \emph{IEEE Transactions on Emerging Topics in Computational Intelligence}, 6\penalty0 (2):\penalty0 230--244, 2022.

\bibitem[Ernst et~al.(2005)Ernst, Geurts, and Wehenkel]{ernst2005tree}
Ernst, D., Geurts, P., and Wehenkel, L.
\newblock Tree-based batch mode reinforcement learning.
\newblock \emph{Journal of Machine Learning Research}, 6, 2005.

\bibitem[Foerster et~al.(2016)Foerster, Assael, De~Freitas, and Whiteson]{foerster2016learning}
Foerster, J., Assael, I.~A., De~Freitas, N., and Whiteson, S.
\newblock Learning to communicate with deep multi-agent reinforcement learning.
\newblock \emph{Advances in neural information processing systems}, 29, 2016.

\bibitem[Foerster et~al.(2018)Foerster, Farquhar, Afouras, Nardelli, and Whiteson]{foerster2018counterfactual}
Foerster, J.~N., Farquhar, G., Afouras, T., Nardelli, N., and Whiteson, S.
\newblock Counterfactual multi-agent policy gradients.
\newblock In \emph{Thirty-Second AAAI Conference on Artificial Intelligence}, 2018.

\bibitem[Gu et~al.(2022)Gu, Zhao, Hao, and An]{GuZH022}
Gu, P., Zhao, M., Hao, J., and An, B.
\newblock Online ad hoc teamwork under partial observability.
\newblock In \emph{The Tenth International Conference on Learning Representations, {ICLR} 2022, Virtual Event, April 25-29, 2022}. OpenReview.net, 2022.

\bibitem[Harsanyi(1967)]{harsanyi1967games}
Harsanyi, J.~C.
\newblock Games with incomplete information played by “bayesian” players, i--iii part i. the basic model.
\newblock \emph{Management science}, 14\penalty0 (3):\penalty0 159--182, 1967.

\bibitem[Hochreiter \& Schmidhuber(1997)Hochreiter and Schmidhuber]{hochreiter1997long}
Hochreiter, S. and Schmidhuber, J.
\newblock Long short-term memory.
\newblock \emph{Neural computation}, 9\penalty0 (8):\penalty0 1735--1780, 1997.

\bibitem[Jiang \& Lu(2018)Jiang and Lu]{jiang2018learning}
Jiang, J. and Lu, Z.
\newblock Learning attentional communication for multi-agent cooperation.
\newblock \emph{Advances in neural information processing systems}, 31, 2018.

\bibitem[Kalyanakrishnan et~al.(2007)Kalyanakrishnan, Liu, and Stone]{kalyanakrishnan2007half}
Kalyanakrishnan, S., Liu, Y., and Stone, P.
\newblock Half field offense in robocup soccer: A multiagent reinforcement learning case study.
\newblock In \emph{RoboCup 2006: Robot Soccer World Cup X 10}, pp.\  72--85. Springer, 2007.

\bibitem[Kim et~al.(2019)Kim, Moon, Hostallero, Kang, Lee, Son, and Yi]{kim2019learning}
Kim, D., Moon, S., Hostallero, D., Kang, W.~J., Lee, T., Son, K., and Yi, Y.
\newblock Learning to schedule communication in multi-agent reinforcement learning.
\newblock \emph{arXiv preprint arXiv:1902.01554}, 2019.

\bibitem[Kingma \& Ba(2014)Kingma and Ba]{kingma2014adam}
Kingma, D.~P. and Ba, J.
\newblock Adam: A method for stochastic optimization.
\newblock \emph{arXiv preprint arXiv:1412.6980}, 2014.

\bibitem[Koller \& Friedman(2009)Koller and Friedman]{koller2009probabilistic}
Koller, D. and Friedman, N.
\newblock \emph{Probabilistic graphical models: principles and techniques}.
\newblock MIT press, 2009.

\bibitem[Mguni et~al.(2022)Mguni, Jafferjee, Wang, Nieves, Slumbers, Tong, Li, Zhu, Yang, and Wang]{mguniligs}
Mguni, D.~H., Jafferjee, T., Wang, J., Nieves, N.~P., Slumbers, O., Tong, F., Li, Y., Zhu, J., Yang, Y., and Wang, J.
\newblock {LIGS:} learnable intrinsic-reward generation selection for multi-agent learning.
\newblock In \emph{The Tenth International Conference on Learning Representations, {ICLR} 2022, Virtual Event, April 25-29, 2022}. OpenReview.net, 2022.

\bibitem[Mirsky et~al.(2022)Mirsky, Carlucho, Rahman, Fosong, Macke, Sridharan, Stone, and Albrecht]{mirsky2022survey}
Mirsky, R., Carlucho, I., Rahman, A., Fosong, E., Macke, W., Sridharan, M., Stone, P., and Albrecht, S.~V.
\newblock A survey of ad hoc teamwork research.
\newblock In \emph{Multi-Agent Systems: 19th European Conference, EUMAS 2022, D{\"u}sseldorf, Germany, September 14--16, 2022, Proceedings}, pp.\  275--293. Springer, 2022.

\bibitem[Mnih et~al.(2013)Mnih, Kavukcuoglu, Silver, Graves, Antonoglou, Wierstra, and Riedmiller]{mnih2013playing}
Mnih, V., Kavukcuoglu, K., Silver, D., Graves, A., Antonoglou, I., Wierstra, D., and Riedmiller, M.
\newblock Playing atari with deep reinforcement learning.
\newblock \emph{arXiv preprint arXiv:1312.5602}, 2013.

\bibitem[Paszke et~al.(2019)Paszke, Gross, Massa, Lerer, Bradbury, Chanan, Killeen, Lin, Gimelshein, Antiga, et~al.]{paszke2019pytorch}
Paszke, A., Gross, S., Massa, F., Lerer, A., Bradbury, J., Chanan, G., Killeen, T., Lin, Z., Gimelshein, N., Antiga, L., et~al.
\newblock Pytorch: An imperative style, high-performance deep learning library.
\newblock \emph{Advances in neural information processing systems}, 32, 2019.

\bibitem[Rahman et~al.(2021)Rahman, Hopner, Christianos, and Albrecht]{rahman2021towards}
Rahman, M.~A., Hopner, N., Christianos, F., and Albrecht, S.~V.
\newblock Towards open ad hoc teamwork using graph-based policy learning.
\newblock In \emph{International Conference on Machine Learning}, pp.\  8776--8786. PMLR, 2021.

\bibitem[Rashid et~al.(2018)Rashid, Samvelyan, de~Witt, Farquhar, Foerster, and Whiteson]{RashidSWFFW18}
Rashid, T., Samvelyan, M., de~Witt, C.~S., Farquhar, G., Foerster, J.~N., and Whiteson, S.
\newblock {QMIX:} monotonic value function factorisation for deep multi-agent reinforcement learning.
\newblock In \emph{Proceedings of the 35th International Conference on Machine Learning, {ICML} 2018, Stockholmsm{\"{a}}ssan, Stockholm, Sweden, July 10-15, 2018}, volume~80 of \emph{Proceedings of Machine Learning Research}, pp.\  4292--4301. {PMLR}, 2018.

\bibitem[Rashid et~al.(2020)Rashid, Farquhar, Peng, and Whiteson]{rashid2020weighted}
Rashid, T., Farquhar, G., Peng, B., and Whiteson, S.
\newblock Weighted qmix: Expanding monotonic value function factorisation for deep multi-agent reinforcement learning.
\newblock \emph{Advances in Neural Information Processing Systems}, 33, 2020.

\bibitem[Shapley(1953)]{shapley1953stochastic}
Shapley, L.~S.
\newblock Stochastic games.
\newblock \emph{Proceedings of the national academy of sciences}, 39\penalty0 (10):\penalty0 1095--1100, 1953.

\bibitem[Shneiderman(2020)]{shneiderman2020human}
Shneiderman, B.
\newblock Human-centered artificial intelligence: Three fresh ideas.
\newblock \emph{AIS Transactions on Human-Computer Interaction}, 12\penalty0 (3):\penalty0 109--124, 2020.

\bibitem[Sliwinski \& Zick(2017)Sliwinski and Zick]{sliwinski2017learning}
Sliwinski, J. and Zick, Y.
\newblock Learning hedonic games.
\newblock In \emph{IJCAI}, pp.\  2730--2736, 2017.

\bibitem[Smith \& Gasser(2005)Smith and Gasser]{smith2005development}
Smith, L. and Gasser, M.
\newblock The development of embodied cognition: Six lessons from babies.
\newblock \emph{Artificial life}, 11\penalty0 (1-2):\penalty0 13--29, 2005.

\bibitem[Stone \& Kraus(2010)Stone and Kraus]{stone2010teach}
Stone, P. and Kraus, S.
\newblock To teach or not to teach? decision making under uncertainty in ad hoc teams.
\newblock In \emph{Proceedings of the 9th International Conference on Autonomous Agents and Multiagent Systems: volume 1-Volume 1}, pp.\  117--124, 2010.

\bibitem[Stone et~al.(2009)Stone, Kaminka, and Rosenschein]{stone2009leading}
Stone, P., Kaminka, G.~A., and Rosenschein, J.~S.
\newblock Leading a best-response teammate in an ad hoc team.
\newblock In \emph{International Workshop on Agent-Mediated Electronic Commerce}, pp.\  132--146. Springer, 2009.

\bibitem[Stone et~al.(2010)Stone, Kaminka, Kraus, and Rosenschein]{stone2010ad}
Stone, P., Kaminka, G.~A., Kraus, S., and Rosenschein, J.~S.
\newblock Ad hoc autonomous agent teams: Collaboration without pre-coordination.
\newblock In Fox, M. and Poole, D. (eds.), \emph{Proceedings of the Twenty-Fourth {AAAI} Conference on Artificial Intelligence, {AAAI} 2010, Atlanta, Georgia, USA, July 11-15, 2010}. {AAAI} Press, 2010.

\bibitem[Sukhbaatar et~al.(2016)Sukhbaatar, Fergus, et~al.]{sukhbaatar2016learning}
Sukhbaatar, S., Fergus, R., et~al.
\newblock Learning multiagent communication with backpropagation.
\newblock \emph{Advances in neural information processing systems}, 29, 2016.

\bibitem[Sunehag et~al.(2018)Sunehag, Lever, Gruslys, Czarnecki, Zambaldi, Jaderberg, Lanctot, Sonnerat, Leibo, Tuyls, and Graepel]{SunehagLGCZJLSL18}
Sunehag, P., Lever, G., Gruslys, A., Czarnecki, W.~M., Zambaldi, V.~F., Jaderberg, M., Lanctot, M., Sonnerat, N., Leibo, J.~Z., Tuyls, K., and Graepel, T.
\newblock Value-decomposition networks for cooperative multi-agent learning based on team reward.
\newblock In \emph{Proceedings of the 17th International Conference on Autonomous Agents and MultiAgent Systems, {AAMAS} 2018, Stockholm, Sweden, July 10-15, 2018}, pp.\  2085--2087. International Foundation for Autonomous Agents and Multiagent Systems Richland, SC, {USA} / {ACM}, 2018.

\bibitem[Sutton \& Barto(2018)Sutton and Barto]{sutton2018reinforcement}
Sutton, R.~S. and Barto, A.~G.
\newblock \emph{Reinforcement learning: An introduction}.
\newblock MIT press, 2018.

\bibitem[Tacchetti et~al.(2019)Tacchetti, Song, Mediano, Zambaldi, Kram{\'{a}}r, Rabinowitz, Graepel, Botvinick, and Battaglia]{TacchettiSMZKRG19}
Tacchetti, A., Song, H.~F., Mediano, P. A.~M., Zambaldi, V.~F., Kram{\'{a}}r, J., Rabinowitz, N.~C., Graepel, T., Botvinick, M.~M., and Battaglia, P.~W.
\newblock Relational forward models for multi-agent learning.
\newblock In \emph{7th International Conference on Learning Representations, {ICLR} 2019, New Orleans, LA, USA, May 6-9, 2019}. OpenReview.net, 2019.

\bibitem[Wang et~al.(2020)Wang, Zhang, Kim, and Gu]{Wang_2020}
Wang, J., Zhang, Y., Kim, T.-K., and Gu, Y.
\newblock Shapley q-value: A local reward approach to solve global reward games.
\newblock \emph{Proceedings of the AAAI Conference on Artificial Intelligence}, 34\penalty0 (05):\penalty0 7285–7292, Apr 2020.

\bibitem[Wang et~al.(2021)Wang, Xu, Gu, Song, and Green]{wang2021multi}
Wang, J., Xu, W., Gu, Y., Song, W., and Green, T.~C.
\newblock Multi-agent reinforcement learning for active voltage control on power distribution networks.
\newblock \emph{Advances in Neural Information Processing Systems}, 34:\penalty0 3271--3284, 2021.

\bibitem[Wang et~al.(2022)Wang, Zhang, Gu, and Kim]{wang2022shaq}
Wang, J., Zhang, Y., Gu, Y., and Kim, T.-K.
\newblock Shaq: Incorporating shapley value theory into multi-agent q-learning.
\newblock \emph{Advances in Neural Information Processing Systems}, 35:\penalty0 5941--5954, 2022.

\bibitem[Wu et~al.(2011)Wu, Zilberstein, and Chen]{wu2011online}
Wu, F., Zilberstein, S., and Chen, X.
\newblock Online planning for ad hoc autonomous agent teams.
\newblock In \emph{Proceedings of the 22nd International Joint Conference on Artificial Intelligence (IJCAI)}, pp.\  439--445, 2011.

\bibitem[Xie et~al.(2021)Xie, Losey, Tolsma, Finn, and Sadigh]{xie2021learning}
Xie, A., Losey, D., Tolsma, R., Finn, C., and Sadigh, D.
\newblock Learning latent representations to influence multi-agent interaction.
\newblock In \emph{Conference on robot learning}, pp.\  575--588. PMLR, 2021.

\bibitem[Xue et~al.(2022)Xue, Xu, Yuan, Li, Qian, Zhang, and Yu]{xue2022multi}
Xue, K., Xu, J., Yuan, L., Li, M., Qian, C., Zhang, Z., and Yu, Y.
\newblock Multi-agent dynamic algorithm configuration.
\newblock \emph{Advances in Neural Information Processing Systems}, 35:\penalty0 20147--20161, 2022.

\bibitem[Zhao et~al.(2023)Zhao, Zhou, Li, Tang, Wang, Hou, Min, Zhang, Zhang, Dong, et~al.]{zhao2023survey}
Zhao, W.~X., Zhou, K., Li, J., Tang, T., Wang, X., Hou, Y., Min, Y., Zhang, B., Zhang, J., Dong, Z., et~al.
\newblock A survey of large language models.
\newblock \emph{arXiv preprint arXiv:2303.18223}, 2023.

\bibitem[Zintgraf et~al.(2021)Zintgraf, Devlin, Ciosek, Whiteson, and Hofmann]{ZintgrafDCWH21}
Zintgraf, L.~M., Devlin, S., Ciosek, K., Whiteson, S., and Hofmann, K.
\newblock Deep interactive bayesian reinforcement learning via meta-learning.
\newblock In Dignum, F., Lomuscio, A., Endriss, U., and Now{\'{e}}, A. (eds.), \emph{{AAMAS} '21: 20th International Conference on Autonomous Agents and Multiagent Systems, Virtual Event, United Kingdom, May 3-7, 2021}, pp.\  1712--1714. {ACM}, 2021.
\newblock \doi{10.5555/3463952.3464210}.

\end{thebibliography}
\bibliographystyle{icml2024}

%%%%%%%%%%%%%%%%%%%%%%%%%%%%%%%%%%%%%%%%%%%%%%%%%%%%%%%%%%%%%%%%%%%%%%%%%%%%%%%
%%%%%%%%%%%%%%%%%%%%%%%%%%%%%%%%%%%%%%%%%%%%%%%%%%%%%%%%%%%%%%%%%%%%%%%%%%%%%%%
% APPENDIX
%%%%%%%%%%%%%%%%%%%%%%%%%%%%%%%%%%%%%%%%%%%%%%%%%%%%%%%%%%%%%%%%%%%%%%%%%%%%%%%
%%%%%%%%%%%%%%%%%%%%%%%%%%%%%%%%%%%%%%%%%%%%%%%%%%%%%%%%%%%%%%%%%%%%%%%%%%%%%%%
\newpage
\appendix
\onecolumn

\section{Related Works}
\label{sec:related_work}
    \textbf{Theoretical Models for Ad Hoc Teamwork.} In our review of theoretical models for describing ad hoc teamwork (AHT), we begin by discussing foundational works. \citet{brafman1996partially} pioneered the study of ad hoc teamwork by investigating the repeated matrix game with a single teammate. Subsequent contributions extended this line of inquiry to scenarios involving multiple teammates, as exemplified by \citet{agmon2012leading}, who expanded the analysis to incorporate multiple teammates. \citet{agmon2014modeling} further relaxed assumptions by allowing teammates' policies to be drawn from a known set. Stone et al. \citet{stone2010teach} proposed collaborative multi-armed bandits, initially formalizing AHT but with notable assumptions, such as knowing teammates' policies and environments. \citet{albrecht2013game} introduced the stochastic Bayesian game (SBG) as the first complete theoretical model for addressing dynamic environments and unknown teammates in AHT. Building upon the SBG, \citet{rahman2021towards} proposed the open stochastic Bayesian game (OSBG) to address open ad hoc teamwork (OAHT). \citet{ZintgrafDCWH21} modelled AHT as interactive Bayesian reinforcement learning (IBRL) in Markov games, focusing on solving non-stationary teammates' policies within episodes. In contrast, \citet{xie2021learning} introduced a hidden parameter Markov decision process (HiP-MDP) to address scenarios where teammates' policies vary across episodes but remain stationary within each episode. In this paper, we contribute to the theoretical landscape of AHT by extending the coalitional affinity game (CAG) from the perspective of cooperative game theory, under the assumptions similar to SBG and OSBG. In more details, we introduce a novel theoretical model, referred to as \textbf{O}pen \textbf{S}tochastic \textbf{B}ayesian \textbf{C}oalitional \textbf{A}ffinity \textbf{G}ame (OSB-CAG), shedding light on the interactive process between the learner and temporary teammates. This theoretical model can be seen as an extension of OSBG (see Appendix~\ref{sec:general_ad_hoc_teamwork_framework}), where the relationship between agents is conceptualized as a dynamic affinity graph in theory, moving beyond treating the graph solely as an implementation tool.\footnote{If the dynamic affinity graph is with no edges, the OSB-CAG will degrade to a plain OSBG.} Our proposed solution concept, DVSC, provides a fresh perspective on how the learner can find optimal policies to attract temporary teammates for effective collaboration. Furthermore, we introduce a more specified transition function under our theoretical model in place of the one proposed by \citet{rahman2021towards}. The main benefit of our proposed transition function is that it enjoys a strong relationship to the underlying assumptions, and explicitly subsumes the concrete interactive process described by \citet{rahman2021towards}.
    
    \textbf{Algorithms for Ad Hoc Teamwork.} We now review AHT from an algorithmic standpoint. The best response algorithm \citep{stone2009leading}, initially proposed under the assumptions of a matrix game and well-known teammates' policies, laid the foundation for algorithmic solutions in this domain. Extending this work, REACT \citep{agmon2014modeling} emerged as a solution effective for matrices where teammates' policies are drawn from a known set. \citet{wu2011online} introduced a novel approach using biased adaptive play to estimate teammates' actions based on their historical actions. They combined this with Monte Carlo tree search to plan the ad hoc agent's actions. HBA \citep{albrecht2013game} expanded the scope beyond matrix games, maintaining a probability distribution of predetermined agent-types and maximizing long-term payoffs through an extended Bellman operator. PLASTIC-Policy \citep{barrett2017making} addressed more realistic scenarios, such as RoboCup \citep{kalyanakrishnan2007half}, by training teammates' policies through behavior cloning and the ad hoc agent's policy through FQI \citep{ernst2005tree}. AATEAM \citep{chen2020aateam} extended PLASTIC-Policy, incorporating an attention network \citep{bahdanau2014neural} to enhance the estimation of unseen agent-types. \citet{rahman2021towards} integrated modern deep learning techniques, including GNNs and RL algorithms, into HBA to address open ad hoc teamwork (OAHT) and introduced GPL. ODITS \citep{GuZH022} was proposed to handle teammates with rapidly changing behaviors under partial observability. In this paper, we introduce CIAO, a novel algorithm based on our proposed theory (OSB-CAG with DVSC as a solution concept). Specifically, CIAO extends the joint Q-value representation and training loss of GPL. Additionally, CIAO generalizes the implementation of training losses to various structures of the dynamic affinity graph, known as the coordination graph in GPL, with theoretical guarantees. This provides a design paradigm of training loss to facilitate the investigation of diverse dynamic affinity graph structures. This paradigm not only can cater for various scenarios of applications, but also can facilitate realizing the ideas inspired by other fields. Furthermore, we prove in theory and demonstrate in experiments that the existing GPL training loss is a viable approximation of the exact learning paradigm under our theory.
    
    \textbf{Relationship to Cooperative Multi-Agent Reinforcement Learning.} Cooperative multi-agent reinforcement learning (MARL) primarily aims at training and controlling agents altogether to optimally achieve a shared goal. The key research topics are credit assignment (also known as value decomposition in some literature) \citep{foerster2018counterfactual,SunehagLGCZJLSL18,RashidSWFFW18}, reward shaping \citep{du2019liir,mguniligs}, and communication \citep{foerster2016learning,sukhbaatar2016learning,jiang2018learning,kim2019learning}. In this paper, we shift the focus to AHT, where only one agent (referred to as learner) is controllable and trained to collaborate with an unknown set of uncontrollable agents to achieve a shared goal. Although the teammates' behaviours in AHT can be influenced by the learner's action (under assumption that they are capable of reacting to the learner's action) \citep{mirsky2022survey}, the joint policy may still be sub-optimal owing to either the reactivity of teammates or the effectiveness to attract teammates in implementation. On the other hand, a transferable utility game known as the convex game, belonging to cooperative game theory was introduced for employing Shapley value as a credit assignment scheme with theoretical guarantees and interpretation, to address credit assignment \citep{Wang_2020,wang2022shaq}. In this paper, we introduce CAG, belonging to non-transferable utility games (a broader class including transferable utility games), for establishing a graph-based joint Q-value representation with theoretical guarantees and understandings to address OAHT.

\section{Open Stochastic Bayesian Game}
\label{sec:general_ad_hoc_teamwork_framework}
    % \todo{Make a comparison between OSBG and OSB-CAG in related works.}
    We now review the open stochastic Bayesian game (OSBG) that describes the open ad hoc teamwork for establishing GPL \citep{rahman2021towards}. It is defined as a tuple such that $\langle \mathcal{N}, \mathcal{S}, (\mathcal{A}_{j})_{j \in \mathcal{N}}, \Theta, R, T, \gamma \rangle$. $\mathcal{N}$ is a set of all possible agents; $\mathcal{S}$ is a set of states; $\mathcal{A}_{j}$ is agent $j$'s action set; $\Theta$ is a set of all possible agent-types. Let the joint action set under a variable agent set $\mathcal{N}_{t} \subseteq \mathcal{N}$ be defined as that $\mathcal{A}_{{\scriptscriptstyle \mathcal{N}}_{t}} = \times_{j \in \mathcal{N}_{t}} \mathcal{A}_{j}$. Therefore, the joint action space under the variable number of agents is defined as that $\mathcal{A}_{\scriptscriptstyle\mathcal{N}} = \bigcup_{\mathcal{N}_{t} \in \mathbb{P}(\mathcal{N})} \{a \vert a \in \mathcal{A}_{{\scriptscriptstyle \mathcal{N}}_{t}} \}$, while the joint agent-type space under the variable number of agents is defined as that $\Theta_{\scriptscriptstyle\mathcal{N}} = \bigcup_{\mathcal{N}_{t} \in \mathbb{P}(\mathcal{N})} \{\theta \vert \theta \in \Theta^{{\scriptscriptstyle |\mathcal{N}_{t}|}} \}$. $R: \mathcal{S} \times \mathcal{A}_{\scriptscriptstyle\mathcal{N}} \rightarrow \mathbb{R}$ is the learner's reward. $T: \mathcal{S} \times \Theta_{\scriptscriptstyle\mathcal{N}} \times \mathcal{A}_{\scriptscriptstyle\mathcal{N}} \rightarrow \mathcal{S} \times \Theta_{\scriptscriptstyle\mathcal{N}}$ is a transition function to describe the evolution of states and agents of variable types. The learner's action value function $Q^{\pi^{i}}(s_{t}, a_{t}^{i})$ is defined as follows:
    \begin{equation*}
        Q^{\pi^{i}}(s_{t}, a_{t}^{i}) = \mathbb{E}_{a_{t}^{-i} \sim \pi_{t}^{-i}} \left[ Q^{\pi^{i}}(s_{t}, a_{t}^{-i}, a_{t}^{i}) \right] = \mathbb{E}_{\substack{s_{t}, \theta_{t} \sim T, a_{t}^{-i} \sim \pi_{t}^{-i}, a_{t}^{i} \sim \pi^{i}}} \Big[ \sum_{t=0}^{\infty} \gamma^{t} R(s_{t}, a_{t})\Big],
    \end{equation*}
    where $\gamma \in [0, 1)$ is a discount factor; $s_{t}$ is a state at timestep $t$, $a_{t}^{-i}$ is a joint action of teammates $-i$ at timestep $t$ and $a_{t}^{i}$ is the learner $i$'s action at timestep $t$; $\pi^{i}$ is the learner's stationary policy and $\pi_{t}^{-i}$ is a joint policy of teammates $-i$; $Q^{\pi^{i}}(s_{t}, a_{t}^{-i}, a_{t}^{i})$ is a joint Q-value. The learner's policy $\pi^{i,*}$ is optimal, if and only if $Q^{\pi^{i,*}}(s_{t}, a_{t}^{i}) \geq Q^{\pi^{i}}(s_{t}, a_{t}^{i})$ for all $\pi^{i}, s_{t}, a_{t}^{i}$. The teammates' joint policy is expressed as that $\pi_{t}^{-i}: \mathcal{S} \times \Theta_{\scriptscriptstyle\mathcal{N}} \rightarrow \Delta(\mathcal{A}_{\scriptscriptstyle{\mathcal{N}}})$. Although learning an additional learner's policy is not necessary, the definition of teammates' joint policy still implies that the learner's policy can be expressed as that $\pi^{i}: \mathcal{S} \times \Theta \rightarrow \Delta(\mathcal{A}_{i})$ for consistency. As a result, any ad hoc team's joint policy (including the learner's policy) can be expressed as that $\pi_{t}: \mathcal{S} \times \Theta_{\scriptscriptstyle\mathcal{N}} \rightarrow \Delta(\mathcal{A}_{\scriptscriptstyle{\mathcal{N}}})$ for consistency. The learner is unable to observe the teammates' types and their policies, which can only be inferred through the history states and actions. The learner's decision making is conducted by selecting the actions that maximize $Q^{\pi^{i}}(s_{t}, a_{t}^{i})$.

\section{Further Details of Implementation}
\label{sec:further_details_of_implementation}
    Given the learner's lack of knowledge about $P_{E}$ and $\pi_{t}^{-i}$, it is essential to discuss strategies for estimating these terms to achieve the convergence of Eq.~\eqref{eq:open_team_hedonic_bellman_operator}. In the GPL framework, these two terms are implemented as the type inference model and the agent model, respectively. The implementation details are presented below.

    \subsection{GPL Framework}
    \label{sec:gpl_framework}
        We now review the GPL's empirical framework \citep{rahman2021towards}. This framework consists of the following modules: the type inference model, the joint action value model and the agent model. We only summarize the model specifications. Note that while the original GPL framework is oriented towards a fixed coordination graph, specifically a complete graph, we relax this constraint to accommodate any graph structures as needed. 
        % For other details, please refer to the original paper of GPL.
    
        \textbf{Type Inference Model.} This is modelled as a LSTM \citep{hochreiter1997long} to infer agent-types of a team at timestep $t$ given that of a team at timestep $t-1$. The agent-type is modelled as a fixed-length hidden-state vector of LSTM, named as agent-type embedding. At each timestep $t$, the state information of an emergent team $\mathcal{N}_{t}$ is reproduced to a batch of agents' information $B_{t} = [\langle u_{t}, x_{t,1} \rangle, ..., \langle u_{t}, x_{t,{\scriptscriptstyle | \mathcal{N}_{t} | }} \rangle]^{\top}$, where each agent is preserved a vector composing $u_{t}$ and $x_{t,i}$ which are observations and agent specific information extracted from state $s_{t}$. Along with additional information such as the agent-type embedding of $\mathcal{N}_{t-1}$ and the cell state, LSTM estimates the agent-type embedding of $\mathcal{N}_{t}$. To address the situation of changing team size, at each timestep the agent-type embedding of the agents who leave a team would be removed, while the new added agents' agent-type embedding would be set to a zero vector.
    
        \textbf{Joint Action Value Model.} The joint Q-value, denoted as $\hat{Q}^{\pi^{i}}(s_{t}, a_{t})$, is approximated as the sum of the corresponding individual utilities, $\hat{Q}_{j}^{\pi^{i}}(a_{t}^{j} \vert s_{t})$, and pairwise utilities, $\hat{Q}_{jk}^{\pi^{i}}(a_{t}^{j}, a_{t}^{k} \vert s_{t})$, based on the coordination graph structure. The approximation is expressed as follows:
        \begin{equation*}
            \hat{Q}^{\pi^{i}}(s_{t}, a_{t}) = \sum_{j \in \mathcal{N}_{t}} \hat{Q}_{j}^{\pi^{i}}(a_{t}^{j} \vert s_{t}) + \sum_{(j,k) \in \mathcal{E}_{t}} \hat{Q}_{jk}^{\pi^{i}}(a_{t}^{j}, a_{t}^{k} \vert s_{t}).
        \end{equation*}
        
        Both $\hat{Q}_{j}^{\pi^{i}}(a_{t}^{j} \vert s_{t})$ and $\hat{Q}_{jk}^{\pi^{i}}(a_{t}^{j}, a_{t}^{k} \vert s_{t})$ are implemented as multilayer perceptrons (MLPs) parameterised by $\beta$ and $\delta$, denoted as $\text{MLP}_{\beta}$ and $\text{MLP}_{\delta}$. The input of $\text{MLP}_{\beta}$ is the concatenation of the learner's agent-type embedding $\theta_{t}^{i}$ and the teammate $j$'s agent-type embedding $\theta_{t}^{j}$. Its output is a vector with a length of $|\mathcal{A}_{j}|$ estimating $Q_{j}^{\pi^{i}}(a_{t}^{j} \vert s_{t})$. The detailed expression is shown as follows:
        \begin{equation*}
            \hat{Q}_{j}^{\pi^{i}}(a_{t}^{j} \vert s_{t}) = \text{MLP}_{\beta}(\theta_{t}^{j}, \theta_{t}^{i})(a_{t}^{j}).
        \end{equation*}
    
        The pairwise utility $\hat{Q}_{jk}^{\pi^{i}}(a_{t}^{j}, a_{t}^{k} \vert s_{t})$ is approximated by low-rank factorization, as follows:
        \begin{equation*}
            \hat{Q}_{jk}^{\pi^{i}}(a_{t}^{j}, a_{t}^{k} \vert s_{t}) = \big( \text{MLP}_{\delta}(\theta_{t}^{j}, \theta_{t}^{i})^{\top} \text{MLP}_{\delta}(\theta_{t}^{k}, \theta_{t}^{i}) \big)(a_{t}^{j}, a_{t}^{k}),
        \end{equation*}
        where the input of $\text{MLP}_{\delta}$ is the same as $\text{MLP}_{\beta}$; the output of $\text{MLP}_{\delta}(\theta_{t}^{j}, \theta_{t}^{i})$ is a matrix with the shape $K \times |\mathcal{A}_{j}|$ and $K \ll |\mathcal{A}_{j}|$.
    
        \textbf{Agent Model.} It is assumed that all other connected agents, as described by a coordination graph, would influence an agent's actions. To model this situation, GNN is applied to process the agent-type embedding of a temporary team, denoted as $\theta_{t}$, where each team member is represented as a node. More specifically, a GNN model called relational forward model (RFM) \citep{TacchettiSMZKRG19} parameterised by $\eta$ is applied to transform $\theta_{t}$ (as the initial node representation) to $\bar{n}_{t}$ (as the new node representation) considering other agents' effects. Then, $\bar{n}_{t}$ is employed to infer $q_{\zeta,\eta}(a_{t}^{-i} \vert s_{t})$, as the approximation of teammates' joint policy, $\pi_{t}^{-i}(a_{t}^{-i} \vert s_{t}, \theta_{t}^{-i})$. The detailed expression is as follows:
        \begin{equation*}
            \begin{split}
                q_{\zeta,\eta}(a_{t}^{-i} \vert s_{t}) = \prod_{j \in -i} q_{\zeta, \eta}(a_{t}^{j} \vert s_{t}), \\
                q_{\zeta,\eta}(a_{t}^{j} \vert s_{t}) = \text{Softmax}(\text{MLP}_{\eta}(\bar{n}_{t}^{j}))(a_{t}^{j}).
            \end{split}
        \end{equation*}
    
        \textbf{Learner's Action Value Model.} Substituting the agent model and the joint action value model defined above into Eq.~\eqref{eq:objective_ad_hoc}, the learner's Q-value for its own decision making is approximated as follows:
        \begin{equation*}
            \begin{split}
                \hat{Q}^{\pi_{i}}(s_{t}, a_{t}^{i}) &= \hat{Q}_{i}^{\pi^{i}}(a_{t}^{i} \vert s_{t}) + \sum_{\substack{a_{t}^{j} \in \mathcal{A}_{j}, (j, i) \in \mathcal{E}_{t}}} \Big( \hat{Q}_{j}^{\pi^{i}}(a_{t}^{j} \vert s_{t}) + \hat{Q}_{ij}^{\pi^{i}}(a_{t}^{i}, a_{t}^{j} \vert s_{t}) \Big) q_{\zeta,\eta}(a_{t}^{j} \vert s_{t}) \\
                &+ \sum_{\substack{a_{t}^{j} \in \mathcal{A}_{j}, a_{t}^{k} \in \mathcal{A}_{k}, \\ (j,k) \in \mathcal{E}_{t}}} \hat{Q}_{jk}^{\pi^{i}}(a_{t}^{j}, a_{t}^{k} \vert s_{t}) q_{\zeta,\eta}(a_{t}^{j} \vert s_{t}) q_{\zeta,\eta}(a_{t}^{j} \vert s_{t}).
            \end{split}
        \end{equation*}

    \subsection{Overall Training Procedure of CIAO}
    \label{subsec:algo_ciao}
        We now summarize the overall training procedure of CIAO in Algorithm~\ref{alg:ciao}. Note that in the GPL framework, the type inference model is absorbed into the joint Q-value and the agent model as a LSTM, respectively. This construction aims to prevent these two models' gradients from interfering against each other during training \citep{rahman2021towards}.
        \begin{algorithm}[ht!]
        \caption{Overall training procedure of CIAO}
        \label{alg:ciao}
            \begin{algorithmic}
                \STATE {\bfseries Input:} dynamic affinity graph structure $G$, number of training episodes $e$, length of an episode $T$, replay buffer $\mathcal{B}$
                \REPEAT
                \STATE Clear the replay buffer $\mathcal{B}$.
                \STATE Reset the environment and receive the initial observations.
                    \FOR{$\text{timestep}=1$ {\bfseries to} $T$}
                       % \IF{$x_i > x_{i+1}$}
                       %     \STATE Swap $x_i$ and $x_{i+1}$
                       %     \STATE $noChange = false$
                       % \ENDIF
                       \STATE Execute learner's action by $\epsilon$-greedy policy.
                       \STATE Store observations (including teammates' actions) for the current timestep in the replay buffer $\mathcal{B}$.
                   \ENDFOR
                   \STATE Generate the joint Q-value and the agent model as per GPL framework, based on the dynamic affinity graph $G$.
                   \STATE Update parameters of pairwise utilities and individual utilities by the loss function proposed in Section~\ref{subsec:practical_implementation}.
                   % , according to the affinity graph structure applied;
                   \STATE Update parameters of the agent model by the following loss function:
                   \begin{equation*}
                       L(\zeta,\eta) = - \frac{1}{T} \sum_{t=1}^{T} \log q_{\zeta,\eta}(a_{t}^{-i} \vert s_{t}).
                   \end{equation*}
               \UNTIL{meeting the number of training episodes $m$}
            \end{algorithmic}
        \end{algorithm}

\section{Assumptions}
\label{sec:appendix_assumptions}
    In addition to the assumptions of ad hoc teamwork described in the survey~\citep{mirsky2022survey}, we also list the necessary assumptions for this paper as follow.
    \begingroup
    \def\theassumption{\ref{assm:conditional_independencies}}
        \begin{assumption}
            The following conditional independencies are assumed to hold in any distribution $P$ over the set of variables in an OSB-CAG: (1) $( \theta_{t} \bigCI \theta_{t-1}, s_{t-1}, a_{t-1} \ \vert \ \mathcal{N}_{t}, s_{t} )$; (2) $(\mathcal{N}_{t}, s_{t} \bigCI \theta_{t-1} \ \vert \ \mathcal{N}_{t-1}, s_{t-1}, a_{t-1})$; (3) $(\mathcal{N}_{t} \bigCI a_{t} \vert s_{t}, \theta_{t})$; (4) $( \theta_{t}^{j} \bigCI -j, \theta_{t}^{-j} \ \vert \ \{j\}, s_{t} )$.
        \end{assumption}
    \endgroup
    Assumption \ref{assm:conditional_independencies} indicates the assumptions encoding the relationships among random variables that are entailed by any probability distribution describing the \textit{open ad hoc teamwork} process, referred to as conditional independencies \citep[Ch. 2]{koller2009probabilistic}.
    
    As for conditional independence (1), it implies that the agent-types $\theta_{t}$ for the current timestep are conditionally independent of the related variables $\theta_{t-1}, s_{t-1}, a_{t-1}$ for the preceding timestep, given the agent set $\mathcal{N}_{t}$ and the state $s_{t}$ for the current timestep. This is reflected by $P_{E}(\theta_{t} \vert \mathcal{N}_{t}, s_{t}) = P(\theta_{t} \vert \mathcal{N}_{t}, s_{t}, s_{t-1}, a_{t-1}, \theta_{t-1})$.

    As for conditional independence (2), it implies that the agent set $\mathcal{N}_{t}$ and the state $s_{t}$ for the current timestep is independent of the agent-types $\theta_{t-1}$ for the preceding timestep, given the variables $\mathcal{N}_{t-1}, s_{t-1}, a_{t-1}$ for the preceding timestep. This is reflected by $P_{T}(\mathcal{N}_{t}, s_{t} \vert \mathcal{N}_{t-1}, s_{t-1}, a_{t-1}) = P(\mathcal{N}_{t}, s_{t} \vert \mathcal{N}_{t-1}, s_{t-1}, a_{t-1}, \theta_{t-1})$.

    As for conditional independence (3), it implies that the agent set $\mathcal{N}_{t}$ is independent of the joint action $a_{t}$, given the state $s_{t}$ and the agent-type set $\theta_{t}$ for the same timestep. This is reflected by $P(\mathcal{N}_{t} \vert s_{t}, \theta_{t}) = P(\mathcal{N}_{t} \vert s_{t}, a_{t}, \theta_{t})$. Note that this condition coincides with scenarios encoded by Assumption \ref{assm:agent_type_set}, where the agent $j$'s policy is able to be varied across timesteps, and the policy is only correlated with its agent-type. In turn, this implies that an agent's mind could be changed across timesteps, which is an evidence that open ad hoc teamwork is also suitable for modelling human-AI cooperation \citep{shneiderman2020human,de2023zero}. However, for clarity and simplicity to introduce our theory, we assume in this paper that the policy is fixed (time invariant or stationary) across timesteps, as shown in Assumption \ref{assm:agent_type_fixed_policy}.
    
    As for conditional independence (4), it implies that an agent $j$'s agent-type $\theta_{t}^{j}$ for some timestep is conditionally independent of other agents $-j$ and their agent-types $\theta_{t}^{-j}$, given itself denoted as $j$ and the state $s_{t}$ for that timestep. This is reflected by $\prod_{j=1}^{|\mathcal{N}_{t}|} P_{A}(\theta_{t}^{j} \vert \{j\}, s_{t}) = P_{E}(\theta_{t} \vert \mathcal{N}_{t}, s_{t})$.
    
    \begingroup
    \def\theassumption{\ref{assm:agent_leaves_env}}
        \begin{assumption}
            Suppose that $\alpha_{jk}(a_{t}^{j}, a_{t}^{k} \vert s_{t}) = 0$ for $t \geq T$, where $T$ is the timestep when agent $j$ or $k$ leaves the environment, and $R_{j}(a_{t}^{j} \vert s_{t})=0$ for $t \geq T'$, where $T'$ is the timestep when agent $j$ leaves the environment.
        \end{assumption}
    \endgroup
    Assumption~\ref{assm:agent_leaves_env} introduces a metric to quantify the impact of agents leaving the environment. Essentially, it posits that an agent that has departed from the environment no longer exerts any influence on the remaining agents within the environment.
    
    \setcounter{assumption}{2}
    \begin{assumption}
    \label{assm:agent_type_set}
        There exists an underlying agent-type set to generate ad hoc teammates in an environment which is unknown to the learner.
    \end{assumption}
    Assumption~\ref{assm:agent_type_set} provides a natural framework for describing the agent-types of teammates. In scenarios where the agent-type set is sufficiently large, traversing all possible agent-types or compositions becomes impractical. Therefore, this assumption ensures that the generalizability of open ad hoc teamwork is not compromised.
    
    \begin{assumption}
    \label{assm:influence_of_learner}
        Teammates can be influenced by the learner through its decision making. 
    \end{assumption}
    Assumption~\ref{assm:influence_of_learner} constitutes a fundamental and commonly assumed property essential for rationalizing the ad hoc teamwork problem. Often referred to as the reactivity of teammates \citep{barrett2017making}, this assumption posits that teammates must be capable of reacting to or being influenced by the learner. Without this interaction, the problem would regress to a scenario akin to a single-agent problem, where teammates merely function as moving `obstacles.' To avert such a pathological situation, maintaining this assumption serves as a crucial boundary for ad hoc teamwork.
    
    \begin{assumption}
    \label{assm:agents_stay_for_a_while}
        The agents stay in the environment at least for a period of timesteps.
    \end{assumption}
    Assumption~\ref{assm:agents_stay_for_a_while} is a prerequisite ensuring the feasibility of completing arbitrary tasks. Without this condition, wherein an agent joining at a given timestep remains in the environment for a non-instantaneous duration, there would be minimal opportunity for teams of agents to react to and influence each other effectively.
    
    \begin{assumption}
    \label{assm:agent_type_fixed_policy}
        Each teammate of an arbitrary agent-type is equipped with a fixed policy.
    \end{assumption}
    Assumption~\ref{assm:agent_type_fixed_policy} serves as a simplified condition for analyzing the learner's convergence to the optimal policy. By assuming fixed policies for teammates, the Markov process becomes stationary from the learner's perspective, facilitating a more tractable analysis of convergence dynamics. However, this can be further relaxed to cater for more realistic situations.

    \begin{assumption}
    \label{assm:generalizability_gnn}
        Graph neural networks are assumed to enjoy the generalizability to agent-types.
    \end{assumption}
    Assumption~\ref{assm:generalizability_gnn} serves as a justification to explain why graph neural networks as a critical component can result in better performance for GPL as reported in the previous work~\citep{rahman2021towards} and therefore CIAO (which is implemented upon GPL) in this paper, in generalization of agent-type sets. This assumption can be further relaxed through investigating the underlying reasons behind this phenomenon.
    
\section{Generalization of Preference Values for Coalitional Affinity Game}
\label{sec:generalization_of_preference_values}
    At the beginning, it is worth noting that in the original work of CAG \citep{branzei2009coalitional}, the definition of the preference value of an arbitrary agent $j$ is as follows:
    \begin{equation}
    \label{eq:org_def_coalition_value_cag}
        \bar{v}_{j}(\mathcal{C}) = 
        \begin{cases}
            0 & \text{if $\mathcal{C} = \{j\}$},\\
            \sum_{(j,k) \in \mathcal{E}, k \in \mathcal{C}} \bar{w}(j, k) & \text{otherwise}.
        \end{cases}
    \end{equation}

    While the condition that each agent's preference value of a coalition including only itself, equal to zero, is convenient and straightforward for analysis, it imposes limitations on the representational capacity for a wide variety of situations. To address this issue, we generalize the definition of the preference value function in Eq.~\eqref{eq:org_def_coalition_value_cag} to the form as follows:
    \begin{equation}
    \label{eq:new_def_coalition_value_cag}
        v_{j}(\mathcal{C}) = 
        \begin{cases}
            b_{j} \geq 0 & \text{if $\mathcal{C} = \{j\}$},\\
            \sum_{(j,k) \in \mathcal{E}, k \in \mathcal{C}} w(j, k) & \text{otherwise}.
        \end{cases}
    \end{equation}
    The main difference between the definitions in Eq.~\eqref{eq:org_def_coalition_value_cag} and Eq.~\eqref{eq:new_def_coalition_value_cag} is that the preference value of the coalition only including a single agent is not forced to be zero in Eq.~\eqref{eq:new_def_coalition_value_cag}. Albeit that the results shown in the original work of CAG \citep{branzei2009coalitional} are based on each agent's original preference value function shown in Eq.~\eqref{eq:org_def_coalition_value_cag}, we can still generalize and leverage the results through conducting translation to each agent's preference value function by its preference value of the coalition including itself, $v_{j}(\{j\}) = b_{j} \geq 0$ (where $b_{j}$ is a constant), to align with the condition of $\bar{v}_{j}(\mathcal{C})$ in Eq.~\eqref{eq:org_def_coalition_value_cag}. In more details, we can transform the newly defined preference value function in Eq.~\eqref{eq:new_def_coalition_value_cag} as follows:
    \begin{equation}
    \label{eq:transform_new_def_coalition_value_cag}
        \hat{v}_{j}(\mathcal{C}) = v_{j}(\mathcal{C}) - v_{j}(\{j\}) = 
        \begin{cases}
            0 & \text{if $\mathcal{C} = \{j\}$},\\
            \sum_{(j,k) \in \mathcal{E}, k \in \mathcal{C}} w(j, k) - v_{j}(\{j\}) & \text{otherwise}.
        \end{cases}
    \end{equation}
    Therefore, we can directly leverage the results from the previous work \citep{branzei2009coalitional} by replacing $\bar{v}_{j}(\mathcal{C})$ with $\hat{v}_{j}(\mathcal{C})$, and generalize the results to the newly defined preference value function in Eq.~\eqref{eq:new_def_coalition_value_cag} by conducting the change of variables according to Eq.~\eqref{eq:transform_new_def_coalition_value_cag}. 
    
    \textit{The generalised preference value $v_{j}(\mathcal{C})$ plays an important role of proving the results of OSB-CAG in the following sections.}

    \begin{definition}
    \label{def:singleton_coalition_preference_value}
        In a CAG with the generalised preference value function, for any agent $j$, its preference value of the coalition including only itself is defined as that $v_{j}(\{j\}) \geq 0$.
    \end{definition}
    In the conventional definition of a coalition value function\footnote{The preference value function of an agent can be seen as a coalition value function specifically defined for the agent.} in the cooperative game theory, the value of an empty set (empty coalition) is defined as zero \citep{chalkiadakis2022computational}. In the context of a CAG, we can formally extend the domain of an agent $j$'s preference value function by considering the empty set such that $v_{j}(\emptyset) = 0$. This extension can be interpreted as that an agent imagines a scenario where it is not included (with no incentives to join). If $v_{j}(\{j\}) < 0$, it may lead to a paradox that an agent $j$ would choose to disappear from the environment (e.g. suicide) to escape independence owing to $v_{j}(\{j\}) < v_{j}(\emptyset)$, which is apparently opposite to morality and ethics. To avoid the paradox, it is reasonable to generalise an agent $j$'s preference value of the coalition including itself to only the non-negative range such that $v_{j}(\{j\}) \geq 0$. 
        
\section{Derivation of Definition~\ref{def:symmetry_strict_core_stability}}
\label{sec:derivation_details_of_definition_DVSC}
    \begin{definition}
    \label{def:inner_stability}
        We say that a blocking coalition $\mathcal{C}$ \textit{weakly blocks} a coalition structure $\mathcal{CS}$ if every agent $j \in \mathcal{C}$ weakly prefers $\mathcal{C}$ to $\mathcal{CS}(j)$ and there exists at least one agent $k \in \mathcal{C}$ who strictly prefers $\mathcal{C}$ to $\mathcal{CS}(j)$. A coalition structure $\mathcal{CS} = \{ \mathcal{C}_{1}, ..., \mathcal{C}_{m} \}$ admitting no weakly blocking coalition $\mathcal{C} \ \mathlarger{\mathlarger{\subset}} \ \mathcal{C}_{k}$, for some $1 \leq k \leq m$, is called inner stable.
    \end{definition}
    \begin{theorem}[\citet{branzei2009coalitional}]
    \label{thm:symmetry_inner_stability}
        If a CAG is symmetric, then the social-welfare maximizing partition exhibits inner stability.
    \end{theorem}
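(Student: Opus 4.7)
The plan is to argue by contradiction: suppose $\mathcal{CS}^{*} = \{\mathcal{C}_{1}, \ldots, \mathcal{C}_{m}\}$ maximizes social welfare $V(\mathcal{CS}) := \sum_{j \in \mathcal{N}} v_{j}(\mathcal{CS}(j))$, yet admits a weakly blocking coalition $\mathcal{C} \subsetneq \mathcal{C}_{k}$ for some $k$. I would then exhibit a partition $\mathcal{CS}'$ with strictly greater social welfare, obtained by replacing $\mathcal{C}_{k}$ with the two pieces $\mathcal{C}$ and $\mathcal{C}_{k} \setminus \mathcal{C}$, contradicting maximality.

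First I would establish a double-counting identity. For any coalition $\mathcal{S}$ with $|\mathcal{S}| \geq 2$, symmetry of the affinity graph gives
\begin{equation*}
\sum_{j \in \mathcal{S}} v_{j}(\mathcal{S}) \;=\; \sum_{j \in \mathcal{S}} \sum_{(j,k) \in \mathcal{E},\, k \in \mathcal{S}} w(j,k) \;=\; 2 \sum_{\{u,v\} \subseteq \mathcal{S},\, (u,v) \in \mathcal{E}} w(u,v),
\end{equation*}
since each internal edge is counted once from each endpoint with the same weight. Since only agents inside $\mathcal{C}_{k}$ change coalitions when splitting, the change in social welfare reduces to $V(\mathcal{CS}') - V(\mathcal{CS}^{*}) = \bigl[\sum_{j \in \mathcal{C}} v_{j}(\mathcal{C}) + \sum_{j \in \mathcal{C}_{k}\setminus\mathcal{C}} v_{j}(\mathcal{C}_{k}\setminus\mathcal{C})\bigr] - \sum_{j \in \mathcal{C}_{k}} v_{j}(\mathcal{C}_{k})$. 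Applying the identity above (when both pieces have size at least two) yields
\begin{equation*}
V(\mathcal{CS}') - V(\mathcal{CS}^{*}) \;=\; -2 \sum_{\substack{u \in \mathcal{C},\, v \in \mathcal{C}_{k}\setminus\mathcal{C} \\ (u,v) \in \mathcal{E}}} w(u,v),
\end{equation*}
so it suffices to show that the total weight of edges crossing between $\mathcal{C}$ and $\mathcal{C}_{k}\setminus\mathcal{C}$ is strictly negative.

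This is exactly what the weakly blocking hypothesis delivers. For every $j \in \mathcal{C}$, $v_{j}(\mathcal{C}) \geq v_{j}(\mathcal{C}_{k})$ rearranges to $\sum_{v \in \mathcal{C}_{k}\setminus\mathcal{C},\, (j,v) \in \mathcal{E}} w(j,v) \leq 0$, with at least one $j$ giving strict inequality. Summing over $j \in \mathcal{C}$ produces $\sum_{u \in \mathcal{C}} \sum_{v \in \mathcal{C}_{k}\setminus\mathcal{C},\, (u,v) \in \mathcal{E}} w(u,v) < 0$, hence $V(\mathcal{CS}') > V(\mathcal{CS}^{*})$, contradicting the optimality of $\mathcal{CS}^{*}$. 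Therefore no weakly blocking coalition inside any $\mathcal{C}_{k}$ exists, i.e., $\mathcal{CS}^{*}$ is inner stable.

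The main obstacle will be the singleton corner cases, where $\mathcal{C} = \{j_{0}\}$ or $\mathcal{C}_{k}\setminus\mathcal{C} = \{j_{0}\}$, because in the generalized preference value of Eq.~\eqref{eq:new_def_coalition_value_cag} a singleton's value is $b_{j_{0}} \geq 0$ rather than a sum of affinity weights, so the double-counting identity does not apply verbatim. I would handle these cases by direct substitution: the weakly blocking condition $b_{j_{0}} > v_{j_{0}}(\mathcal{C}_{k})$ (strict because $\{j_{0}\}$ is a singleton and must itself supply the strict preference) plus the nonnegativity $b_{j_{0}} \geq 0$ is enough to absorb the missing ``internal edges of the singleton'' contribution and still force $V(\mathcal{CS}') > V(\mathcal{CS}^{*})$, completing the argument in all cases.
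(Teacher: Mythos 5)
Your main argument---splitting $\mathcal{C}_{k}$ into $\mathcal{C}$ and $\mathcal{C}_{k}\setminus\mathcal{C}$, using symmetry and double counting to reduce the welfare change to $-2$ times the total weight of the crossing edges, and extracting strict negativity of that total from the weakly-blocking inequalities---is the standard proof of this result and is correct whenever the blocking coalition $\mathcal{C}$ has at least two members. (The case where only the leftover piece $\mathcal{C}_{k}\setminus\mathcal{C}$ is a singleton is also fine: there $V(\mathcal{CS}')-V(\mathcal{CS}^{*}) = b_{j_{0}} - 2\sum_{u\in\mathcal{C}}w(u,j_{0})$, and $b_{j_{0}}\geq 0$ only helps.) Note the paper does not reprove this theorem; it imports it from \citet{branzei2009coalitional} and asserts, via the translation $\hat{v}_{j}=v_{j}-v_{j}(\{j\})$, that it survives the generalization of singleton values from $0$ to $b_{j}\geq 0$, so there is no in-paper proof to compare against beyond that remark.

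The genuine gap is your final paragraph, the case $\mathcal{C}=\{j_{0}\}$. There the welfare change is exactly
\begin{equation*}
V(\mathcal{CS}') - V(\mathcal{CS}^{*}) \;=\; b_{j_{0}} \;-\; 2\,v_{j_{0}}(\mathcal{C}_{k})
\end{equation*}
(plus $b_{j_{1}}\geq 0$ if the leftover is also a singleton $\{j_{1}\}$), and the hypotheses you invoke, $b_{j_{0}}>v_{j_{0}}(\mathcal{C}_{k})$ and $b_{j_{0}}\geq 0$, do \emph{not} make this positive when $v_{j_{0}}(\mathcal{C}_{k})>0$. Concretely, take $\mathcal{N}=\{1,2\}$ with a symmetric edge $w(1,2)=w(2,1)=1$, $b_{1}=1.5$, $b_{2}=0$. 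The grand coalition is the unique social-welfare maximizer (welfare $2$ versus $1.5$), yet $\{1\}$ weakly blocks it since $v_{1}(\{1\})=1.5>1=v_{1}(\{1,2\})$, so the maximizer is not inner stable. This shows not merely that your ``absorption'' step fails, but that the statement itself is false under the generalized preference values of Eq.~\eqref{eq:new_def_coalition_value_cag}; it is only safe in the original setting $b_{j}=0$ of \citet{branzei2009coalitional}, where a singleton blocker forces $v_{j_{0}}(\mathcal{C}_{k})<0$ and hence $-2v_{j_{0}}(\mathcal{C}_{k})>0$. To make your proof correct you should carry it out for $b_{j}=0$ (which is all the cited theorem claims); extending it to $b_{j}>0$ requires an additional hypothesis such as $b_{j}\geq 2v_{j}(\mathcal{C})$ for the relevant coalitions, and cannot be obtained for free. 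The same two-agent example also shows that the paper's own translation argument does not extend this particular theorem verbatim, since $\hat{v}_{j}(\mathcal{C})=\sum_{k\in\mathcal{C}}w(j,k)-b_{j}$ is not itself induced by coalition-independent affinity weights.
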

    Theorem \ref{thm:symmetry_inner_stability} directly holds for the newly defined $v_{j}(\mathcal{C})$ in this paper, since it is irrelevant to the detailed representation (feasible domain) of a preference value function (see Theorem 2 and 5 in the previous work \citep{branzei2009coalitional}).
    
    \begin{lemma}
    \label{lemm:symmetry_strict_core_grand_coalition}
        If a CAG is symmetric, then maximizing the social welfare under a grand coalition results in strict core stability.
    \end{lemma}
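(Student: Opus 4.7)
The plan is to reduce Lemma~\ref{lemm:symmetry_strict_core_grand_coalition} to Theorem~\ref{thm:symmetry_inner_stability} by exploiting the special structure of the grand coalition. When $\mathcal{CS} = \{\mathcal{N}\}$, the coalition structure has a single block $\mathcal{C}_1 = \mathcal{N}$, so the universal quantifier in inner stability ("no weakly blocking $\mathcal{C} \subset \mathcal{C}_k$ for some $k$") and the one in strict core stability ("no weakly blocking $\mathcal{C} \subseteq \mathcal{N}$") range over essentially the same family of candidate coalitions; they differ only by the single extra candidate $\mathcal{C} = \mathcal{N}$ itself, which we can dispose of directly from Definition~\ref{def:stability_concepts}.

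First, I would invoke Theorem~\ref{thm:symmetry_inner_stability}: the CAG is symmetric by hypothesis, and the social-welfare-maximising partition is, by hypothesis, the grand coalition, so $\mathcal{CS} = \{\mathcal{N}\}$ is inner stable. This already rules out every weakly blocking coalition $\mathcal{C} \subset \mathcal{N}$. Next, I would handle the only remaining candidate for a weakly blocking coalition under the strict core, namely $\mathcal{C} = \mathcal{N}$ itself. By Definition~\ref{def:stability_concepts}, a weakly blocking coalition requires at least one agent $k \in \mathcal{C}$ to strictly prefer $\mathcal{C}$ to $\mathcal{CS}(k)$; but here $\mathcal{CS}(k) = \mathcal{N} = \mathcal{C}$, which forces $\mathcal{C} \sim_k \mathcal{CS}(k)$ and precludes strict preference. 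Therefore no $\mathcal{C} \subseteq \mathcal{N}$ weakly blocks $\mathcal{CS}$, and strict core stability follows.

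I do not anticipate a substantial technical obstacle; the lemma is essentially a bookkeeping observation that, in the grand-coalition case, inner stability already implies strict core stability because the only coalition not covered by inner stability coincides with the existing block and so cannot strictly improve anyone. The single care point is to confirm that Theorem~\ref{thm:symmetry_inner_stability} remains applicable after the generalisation of the preference-value function to $v_j(\{j\}) = b_j \geq 0$; this is handled by the translation $\hat{v}_j(\mathcal{C}) = v_j(\mathcal{C}) - v_j(\{j\})$ in Eq.~\eqref{eq:transform_new_def_coalition_value_cag}, which is an agent-wise constant shift and therefore preserves both the weak and strict preference relations that define weak blocking.
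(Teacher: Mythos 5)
Your proposal is correct and follows essentially the same route as the paper's proof: reduce to Theorem~\ref{thm:symmetry_inner_stability} by observing that, for the grand coalition $\mathcal{CS}=\{\mathcal{N}\}$, inner stability and strict core stability coincide. You in fact spell out the detail the paper dismisses as ``not difficult to observe'' (the only extra candidate $\mathcal{C}=\mathcal{N}$ cannot weakly block since no agent can strictly prefer $\mathcal{N}$ to itself) and correctly note that the agent-wise translation $\hat{v}_j(\mathcal{C})=v_j(\mathcal{C})-v_j(\{j\})$ preserves the preference relations so that the cited theorem still applies under the generalised preference values.
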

    \begin{proof}
        Following Definition \ref{def:inner_stability}, it is not difficult to observe that a grand coalition exhibiting strict core stability is equivalent to a grand coalition exhibiting inner stability. Therefore, we can directly obtain the result by Theorem \ref{thm:symmetry_inner_stability}.
    \end{proof}             
    % If a CAG is symmetric, the affinity graph weights are symmetric. 

    % \begingroup
    % \def\thetheorem{\ref{thm:symmetry_strict_core_stability}}
    %     \begin{theorem}
    %     % \label{thm:symmetry_strict_core_stability}
    %         If a dynamic affinity graph is symmetric, then reaching the strict core stability at each timestep is equivalent to maximizing the long-horizon social welfare under the variable teammates of uncertain agent-types generated by $P_{E}$ and the uncertain states generated by $P_{O}$.
    %     \end{theorem}
    % \endgroup
    \subsection{Derivation of Dynamic Variational Strict Core}
        In an OSB-CAG, at any timestep $t$, under an arbitrary state $s_{t} \in \mathcal{S}$ along with a temporary team (including the learner $i$), denoted as $\mathcal{N}_{t} \ \mathlarger{\mathlarger{\subseteq}} \ \mathcal{N}$, and the temporary team's joint action $a_{t} \in \mathcal{A}_{\scriptscriptstyle \mathcal{N}_{t}}$, the coalition reward can be equivalently expressed as a preference value of an agent belonging to a temporary team $\mathcal{N}_{t}$ such that $R_{j}(s_{t}, a_{t}) = v_{j}(\mathcal{N}_{t})$. The temporary team $\mathcal{N}_{t}$ can be interpreted as the grand coalition at any timestep $t$. Thereby, \textit{reaching the strict core stability at any timestep $t$ is equivalent to maximizing the social welfare at the timestep}. Different from the previous work \citep{branzei2009coalitional} that given the predetermined preference values, the coalition structure is as a decision variable to reach the strict core; in this paper, we predetermine a temporary team, as the target coalition structure, and the learner $i$'s action is as an extended decision variable to change the preference values (coalition rewards) in order to reach the \textit{variational strict core} (VSC) that is defined with the same criterion as the strict core, but with different target variables as elements to form the solution set. The learner $i$'s action is generated by its policy $\pi^{i}$. By Assumption \ref{assm:influence_of_learner}, we can get that the learner's action is able to influence teammates' actions. Therefore, the teammates' coalition rewards as an evaluation of their policies will also be varied accordingly. This explains that the learner's action can be seen as a decision variable that is able to indirectly change teammates' coalition rewards. By Lemma \ref{lemm:symmetry_strict_core_grand_coalition}, if a dynamic affinity graph at timestep $t$ is symmetric, we can express the VSC for any timestep $t$ (under an arbitrary $s_{t} \in \mathcal{S}$ along with a temporary team $\mathcal{N}_{t} \ \mathlarger{\mathlarger{\subseteq}} \ \mathcal{N}$, a joint agent-type $\theta_{t} \in \Theta^{\scriptscriptstyle |\mathcal{N}_{t}|}$, the teammates' policies $\pi_{t}^{-i}(a_{t}^{-i} \vert s_{t}, \theta_{t}^{-i})$ with respect to their agent-types and the state) to find the learner's optimal action (rather than find a coalition structure in the previous work) as follows:
        \begin{equation}
        \label{eq:variational_strict_core_single_timestep}
            \texttt{VSC} := \Big\{ a^{i,*} \ \Big\vert \ \sum_{j \in \mathcal{N}_{t}} R_{j}(s_{t}, a_{t}^{i,*}, a_{t}^{-i}) \geq \sum_{j \in \mathcal{N}_{t}} R_{j}(s_{t}, a_{t}^{i}, a_{t}^{-i}), \ \forall a_{t}^{i} \in \mathcal{A}_{i} \Big\}.
        \end{equation}
        Note that the strict core defined in Eq.~\eqref{eq:variational_strict_core_single_timestep} implicitly assumes that the teammates' reaction is instantaneous (happening at the same timestep). Recall that our aim is to find the learner's optimal stationary policy $\pi^{i,*}$ that generates actions across timesteps (in a long horizon), in order to influence the temporary teammates occurring at any timestep to collaborate (meeting the strict core stability). We now generalize the VSC defined in Eq.~\eqref{eq:variational_strict_core_single_timestep} by considering the process of generating states, teammates, agent-types and teammates' actions, named as \textit{dynamic variational strict core} (DVSC). The DSVC is defined as follows:
        \begin{equation}
        \label{eq:variational_strict_core_single_long_horizon}
            \texttt{DVSC} := \Big\{ \ \pi^{i,*} \ \Big\vert \ \mathbb{E}_{\pi^{i,*}}\big[ \sum_{t=0}^{\infty} \gamma^{t} \sum_{j \in \mathcal{N}_{t}} R_{j}(s_{t}, a_{t}) \big] \geq \mathbb{E}_{\pi^{i}}\big[ \sum_{t=0}^{\infty} \gamma^{t} \sum_{j \in \mathcal{N}_{t}} R_{j}(s_{t}, a_{t}) \big], \forall s_{0} \in \mathcal{S}, \forall \pi^{i} \ \Big\},
        \end{equation}
        where $a_{t}^{i} \sim \pi^{i}$ and $a_{t}^{-i} \sim \pi_{t}^{-i}$; $\mathbb{E}_{\pi^{i}}[\cdot]$ denotes the expectation that also implicitly depends on $\theta_{t} \sim P_{E}$, $\mathcal{N}_{t}, s_{t} \sim P_{O}$. 

        Note that the VSC defined in Eq.~\eqref{eq:variational_strict_core_single_long_horizon} weakens the implicit assumption of the strict core defined in Eq.~\eqref{eq:variational_strict_core_single_timestep}. In more details, it allows the teammates to react at the successor timesteps instead of the mandatory instantaneous reaction at the same timestep. Nevertheless, this requires that the learner has potential for adapting to the teammates (through interaction with teammates for a period). By Assumption \ref{assm:agents_stay_for_a_while} and \ref{assm:agent_type_fixed_policy}, the learner's adaption to the temporary teammates is possible. 
        % \textcolor{red}{[Need some reference to explain the possibility of dealing with the variable agent's policy]}
            
\section{Mathematical Proofs}
\label{sec:appendix_proofs}
    \subsection{The Proof of Proposition \ref{prop:existence_transition_preference}}
    \label{subsec:proof_of_proposition1}
        \begingroup
        \def\theproposition{\ref{prop:existence_transition_preference}}
            \begin{proposition}
                % $T(\theta_{t}, s_{t}, \mathcal{N}_{t} \vert \theta_{t-1}, s_{t-1}, a_{t-1})$ is existent for timesteps $t \geq 1$, given that the following conditions hold: $P_{T}(\mathcal{N}_{t}, s_{t} \vert \mathcal{N}_{t-1}, s_{t-1}, a_{t-1})$ is well defined for timesteps $t \geq 1$; $P_{I}(\mathcal{N}_{0} | s_{0})$ is well defined; and $P_{A}(\theta_{t}^{j} \vert \{j\}, s_{t})$ is well defined for timesteps $t \geq 0$.
                $T(\mathcal{N}_{t}, s_{t}, \theta_{t} \vert s_{t-1}, a_{t-1}, \theta_{t-1})$ for $t \geq 1$ can be expressed in terms of the following well-defined probability distributions: $P_{I}(\mathcal{N}_{0}, s_{0})$, $P_{T}(\mathcal{N}_{t}, s_{t} \vert \mathcal{N}_{t-1}, s_{t-1}, a_{t-1})$ for $t \geq 1$, and $P_{A}(\theta_{t}^{j} \vert \{j\}, s_{t})$ for $t \geq 0$.
            \end{proposition}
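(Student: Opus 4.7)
The plan is to derive the composite transition by applying the chain rule of probability and matching each resulting conditional factor to one of the three primitive distributions $P_{I}$, $P_{T}$, $P_{A}$. A useful preliminary observation is that the joint type vector $\theta_{t-1} \in \Theta^{|\mathcal{N}_{t-1}|}$ is indexed by the members of $\mathcal{N}_{t-1}$, so conditioning on $\theta_{t-1}$ already determines $\mathcal{N}_{t-1}$; in what follows I treat $\mathcal{N}_{t-1}$ as being available whenever $\theta_{t-1}$ is.

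First I would apply the chain rule to write
\begin{equation*}
T(\theta_{t}, s_{t}, \mathcal{N}_{t} \vert \theta_{t-1}, s_{t-1}, a_{t-1}) = P(\theta_{t} \vert s_{t}, \mathcal{N}_{t}, \theta_{t-1}, s_{t-1}, a_{t-1}) \cdot P(s_{t}, \mathcal{N}_{t} \vert \theta_{t-1}, s_{t-1}, a_{t-1}).
\end{equation*}
For the first factor, the specification of $P_{A}$ as a per-agent, state-dependent type-assignment rule yields the product form $\prod_{j \in \mathcal{N}_{t}} P_{A}(\theta_{t}^{j} \vert \{j\}, s_{t})$, which by definition coincides with $P_{E}(\theta_{t} \vert s_{t}, \mathcal{N}_{t})$. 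For the second factor, since $\theta_{t-1}$ determines $\mathcal{N}_{t-1}$ and the type labels themselves do not enter $P_{T}$, the factor reduces to $P_{T}(\mathcal{N}_{t}, s_{t} \vert \mathcal{N}_{t-1}, s_{t-1}, a_{t-1})$, which is precisely $P_{O}(s_{t}, \mathcal{N}_{t} \vert s_{t-1}, a_{t-1}, \theta_{t-1})$. Combining the two factors reproduces the factorization in Eq.~\eqref{eq:preference_trans_decomposition} and hence proves the proposition.

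To ground the recursion at the initial step, I would separately note that the joint law of $(\mathcal{N}_{0}, s_{0}, \theta_{0})$ is well-defined as $P_{I}(\mathcal{N}_{0}, s_{0}) \prod_{j \in \mathcal{N}_{0}} P_{A}(\theta_{0}^{j} \vert \{j\}, s_{0})$, so that the iterative application of $T$ for $t \geq 1$ is based on a legitimate starting distribution.

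The main obstacle is making the two conditional-independence assumptions explicit: (i) $\theta_{t}$ is independent of the pre-$t$ history given $(s_{t}, \mathcal{N}_{t})$, and (ii) the per-agent draws $\theta_{t}^{j}$ are mutually independent given $s_{t}$. Both are implicit in the fact that $P_{A}$ is defined on singletons $\{j\}$ against the current state; once these are flagged, the remainder is a routine application of the chain rule together with the identification $\theta_{t-1} \leftrightarrow (\theta_{t-1}, \mathcal{N}_{t-1})$ noted above.
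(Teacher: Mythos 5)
Your chain-rule skeleton matches the paper's: both proofs first split $T$ into $P(\theta_{t}\vert \mathcal{N}_{t}, s_{t},\dots)\cdot P(\mathcal{N}_{t}, s_{t}\vert \theta_{t-1}, s_{t-1}, a_{t-1})$, reduce the first factor to $P_{E}(\theta_{t}\vert \mathcal{N}_{t}, s_{t})=\prod_{j}P_{A}(\theta_{t}^{j}\vert\{j\},s_{t})$ by conditional independence, and then deal with the second factor. The divergence — and the genuine gap — is in your treatment of that second factor. You assert that conditioning on $\theta_{t-1}$ ``already determines $\mathcal{N}_{t-1}$'' and on that basis collapse $P(\mathcal{N}_{t}, s_{t}\vert\theta_{t-1},s_{t-1},a_{t-1})$ directly to $P_{T}(\mathcal{N}_{t}, s_{t}\vert\mathcal{N}_{t-1},s_{t-1},a_{t-1})$. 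Under the paper's own definitions this identification fails: the joint agent-type space is $\Theta_{\scriptscriptstyle\mathcal{N}} = \bigcup_{\mathcal{N}_{t}}\{\theta\,\vert\,\theta\in\Theta^{\scriptscriptstyle|\mathcal{N}_{t}|}\}$, so a joint type $\theta_{t-1}$ records only the \emph{cardinality} of the team, and two distinct agent sets of the same size yield the same type tuples. Hence $\mathcal{N}_{t-1}$ is not a deterministic function of $\theta_{t-1}$, and the step you flag as a mere bookkeeping identification ($\theta_{t-1}\leftrightarrow(\theta_{t-1},\mathcal{N}_{t-1})$) is exactly the point where the real work is needed.

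The paper handles this by marginalizing: $P_{O}(\mathcal{N}_{t},s_{t}\vert\theta_{t-1},s_{t-1},a_{t-1}) = \sum_{\mathcal{N}_{t-1}}P_{T}(\mathcal{N}_{t},s_{t}\vert\mathcal{N}_{t-1},s_{t-1},a_{t-1})\,P(\mathcal{N}_{t-1}\vert\theta_{t-1},s_{t-1})$, and then it must justify that the posterior $P(\mathcal{N}_{t-1}\vert\theta_{t-1},s_{t-1})$ is well defined. It does so via Bayes' rule, which in turn requires the existence of the marginal $P(\mathcal{N}_{t},s_{t})$; that existence is established by induction on $t$, with base case $P_{I}(\mathcal{N}_{0},s_{0})$ and inductive step $P(\mathcal{N}_{t+1},s_{t+1})=\sum_{\mathcal{N}_{t},s_{t},a_{t}}P_{T}(\cdot)P(\mathcal{N}_{t},s_{t})\pi_{t}(a_{t}\vert s_{t})$. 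This induction is also where $P_{I}$ genuinely enters the argument — in your proposal $P_{I}$ only ``grounds the recursion'' as an initial condition and plays no role in expressing $T$ itself, whereas in the paper it is needed at every $t$ through the marginal. If you wish to keep your shorter route, you would need to explicitly strengthen the model so that joint types are indexed by agent identities (making $\mathcal{N}_{t-1}$ recoverable from $\theta_{t-1}$); as written against the paper's formalization, the proof is missing the marginalization over $\mathcal{N}_{t-1}$ and the accompanying existence argument.
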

        \endgroup
        \begin{proof}
            To ease the proof, we assume that $s_{t}$ and $a_{t}$ are discrete variables with no loss of generality. We prove that $T(\mathcal{N}_{t}, s_{t}, \theta_{t} \vert s_{t-1}, a_{t-1}, \theta_{t-1})$ can be expressed as the probability distributions we have defined, as follows:
            \begin{equation*}
                \begin{split}
                    &\quad T(\mathcal{N}_{t}, s_{t}, \theta_{t} \vert s_{t-1}, a_{t-1}, \theta_{t-1}) \\ 
                    &= P(\theta_{t} \vert \mathcal{N}_{t}, s_{t}, s_{t-1}, a_{t-1}, \theta_{t-1}) P_{O}(\mathcal{N}_{t}, s_{t} \vert s_{t-1}, a_{t-1}, \theta_{t-1}) \\
                    &= P_{E}(\theta_{t} \vert \mathcal{N}_{t}, s_{t}) P_{O}(\mathcal{N}_{t}, s_{t} \vert s_{t-1}, a_{t-1}, \theta_{t-1}) \quad \text{(By conditional independence (1) in Assumption \ref{assm:conditional_independencies}.)} \\
                    &= P_{E}(\theta_{t} \vert \mathcal{N}_{t}, s_{t})\sum_{\scriptscriptstyle{\mathcal{N}}_{t-1}} P(\mathcal{N}_{t}, s_{t}, \mathcal{N}_{t-1} \vert s_{t-1}, a_{t-1}, \theta_{t-1}) \\
                    &= P_{E}(\theta_{t} \vert \mathcal{N}_{t}, s_{t})\sum_{\scriptscriptstyle{\mathcal{N}}_{t-1}} P(\mathcal{N}_{t}, s_{t} \vert \mathcal{N}_{t-1}, s_{t-1}, a_{t-1}, \theta_{t-1}) P(\mathcal{N}_{t-1} \vert s_{t-1}, a_{t-1}, \theta_{t-1}) \\
                    &= P_{E}(\theta_{t} \vert \mathcal{N}_{t}, s_{t}) \sum_{\scriptscriptstyle{\mathcal{N}}_{t-1}} P_{T}(\mathcal{N}_{t}, s_{t} \vert \mathcal{N}_{t-1}, s_{t-1}, a_{t-1}) P(\mathcal{N}_{t-1} \vert s_{t-1}, \theta_{t-1}) \\
                    &\quad \text{(By conditional independence (2) and (3) in Assumption \ref{assm:conditional_independencies}.)} \\
                    &= \prod_{j=1}^{|\mathcal{N}_{t}|} P_{A}(\theta_{t}^{j} \vert \{j\}, s_{t}) \sum_{\scriptscriptstyle{\mathcal{N}}_{t-1}} P_{T}(\mathcal{N}_{t}, s_{t} \vert \mathcal{N}_{t-1}, s_{t-1}, a_{t-1}) P(\mathcal{N}_{t-1} \vert s_{t-1}, \theta_{t-1}). \\
                    &\quad \text{(By conditional independence (4) in Assumption \ref{assm:conditional_independencies}.)}
                \end{split}
            \end{equation*}
            To complete the above proof, we need to further show the expression of $P(\mathcal{N}_{t} \vert s_{t}, \theta_{t})$ as follows:
            \begin{equation}
            \label{eq:induction_condition}
                P(\mathcal{N}_{t} | s_{t}, \theta_{t}) = \frac{\sum_{s_{t}} P_{E}(\theta_{t} \vert \mathcal{N}_{t}, s_{t}) P(\mathcal{N}_{t}, s_{t})}{\sum_{\scriptscriptstyle{\mathcal{N}}_{t}} \sum_{s_{t}} P_{E}(\theta_{t} \vert \mathcal{N}_{t}, s_{t}) P(\mathcal{N}_{t}, s_{t}) }.
            \end{equation}
            Apparently, we require to prove that $P(\mathcal{N}_{t}, s_{t})$ admits factorization into the probability distributions we have defined. We now conduct this by mathematical induction as follows:
            
            \textit{Base case}: As per the definition, $P_{I}(\mathcal{N}_{0}, s_{0})$ is a predefined probability distribution to express $P(\mathcal{N}_{0}, s_{0})$ for $t = 0$.
    
            \textit{Induction case}: Assume the induction hypothesis that $P(\mathcal{N}_{t}, s_{t})$ admits factorization into the probability distributions we have defined, for any $t \geq 0$. 
            
            Next, we aim to prove that $P(\mathcal{N}_{t+1}, s_{t+1})$ admits factorization into the probability distributions we have defined and $P(\mathcal{N}_{t}, s_{t})$ as the induction hypothesis, such that
            \begin{equation*}
                P(\mathcal{N}_{t+1}, s_{t+1}) = \sum_{\scriptscriptstyle{\mathcal{N}}_{t}} \sum_{s_{t}} \sum_{a_{t}} \sum_{\theta_{t}} P(\mathcal{N}_{t+1}, s_{t+1}, \mathcal{N}_{t}, s_{t}, a_{t}, \theta_{t}),
            \end{equation*}
            where
            % \pi_{t}(a_{t} \vert s_{t}, \theta_{t})
            \begin{equation*}
                \begin{split}
                    P(\mathcal{N}_{t+1}, s_{t+1}, \mathcal{N}_{t}, s_{t}, a_{t}, \theta_{t}) &= P(\mathcal{N}_{t+1}, s_{t+1} \vert \mathcal{N}_{t}, s_{t}, a_{t}, \theta_{t}) P(\mathcal{N}_{t} \vert s_{t}, a_{t}, \theta_{t}) P(a_{t} \vert s_{t}, \theta_{t}) P(s_{t}, \theta_{t}) \\
                    &= P_{T}(\mathcal{N}_{t+1}, s_{t+1} \vert \mathcal{N}_{t}, s_{t}, a_{t}) P(\mathcal{N}_{t} \vert s_{t}, \theta_{t}) P(a_{t} \vert s_{t}, \theta_{t}) P(s_{t}, \theta_{t}) \\
                    &\quad \text{(By conditional independence (2) and (3) in Assumption \ref{assm:conditional_independencies}.)} \\
                    &= P_{T}(\mathcal{N}_{t+1}, s_{t+1} \vert \mathcal{N}_{t}, s_{t}, a_{t}) P(\mathcal{N}_{t} \vert s_{t}, \theta_{t}) \pi_{t}(a_{t} \vert s_{t}, \theta_{t}) P(s_{t}, \theta_{t}). \\
                    &\quad \text{(By the definition of $\pi_{t}$ in Appendix~\ref{sec:general_ad_hoc_teamwork_framework}, we can use $\pi_{t}$ to specify $P(a_{t} \vert s_{t}, \theta_{t})$.)}
                \end{split}
            \end{equation*}
            Since $P(\mathcal{N}_{t}, s_{t})$ is the induction hypothesis and $P_{E}(\theta_{t} \vert \mathcal{N}_{t}, s_{t})$ is a probability distribution we have defined, we can derive $P(\mathcal{N}_{t} \vert s_{t}, \theta_{t})$ by Eq.~\eqref{eq:induction_condition}. Also, by that $P(\mathcal{N}_{t}, s_{t})$ is the induction hypothesis and $P_{E}(\theta_{t} \vert \mathcal{N}_{t}, s_{t})$ is a probability distribution we have defined, we can obtain $P(s_{t}, \theta_{t})$ as follows:
            \begin{equation*}
                \begin{split}
                    P(s_{t}, \theta_{t}) &= \sum_{\mathcal{N}_{t}} P(\mathcal{N}_{t}, s_{t}, \theta_{t}) \\
                    &= \sum_{\mathcal{N}_{t}} P_{E}(\theta_{t} \vert \mathcal{N}_{t}, s_{t}) P(\mathcal{N}_{t}, s_{t}).
                \end{split}
            \end{equation*}
            \textit{Conclusion}: Since both the base case and the induction step have been proved as true, $P(\mathcal{N}_{t}, s_{t})$ is proved to admit factorization into the probability distributions we have defined for any $t \geq 0$.
        \end{proof}
        
    \subsection{The Proof of Theorem \ref{thm:dvsc_core_existence}}
        \begin{theorem}[\citet{branzei2009coalitional}]
        \label{thm:grand_coalition_core_cag}
            In a CAG with an affinity graph $G=\langle \mathcal{N}, \mathcal{E} \rangle$, if for all $(j, k) \in \mathcal{E}$, $\bar{w}(j, k) \geq 0$, then the grand coalition is in the strict core.
        \end{theorem}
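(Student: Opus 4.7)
The plan is to prove this directly from the definition of strict core stability (Definition~\ref{def:stability_concepts}), by showing that under non-negative edge weights no coalition $\mathcal{C} \subseteq \mathcal{N}$ can weakly block the grand-coalition partition $\{\mathcal{N}\}$. The strategy is to rule out the ``strictly prefers'' clause of the definition: if $\bar{v}_j(\mathcal{C}) \leq \bar{v}_j(\mathcal{N})$ for every $j \in \mathcal{C}$, then no agent in $\mathcal{C}$ can strictly prefer $\mathcal{C}$ to its current coalition $\mathcal{N}$, so $\mathcal{C}$ fails to be weakly blocking.

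First I would fix an arbitrary $\mathcal{C} \subseteq \mathcal{N}$ and an arbitrary $j \in \mathcal{C}$, and split on $|\mathcal{C}|$. In the case $|\mathcal{C}| \geq 2$, the preference values expand as $\bar{v}_j(\mathcal{C}) = \sum_{(j,k) \in \mathcal{E},\, k \in \mathcal{C}} \bar{w}(j,k)$ and $\bar{v}_j(\mathcal{N}) = \sum_{(j,k) \in \mathcal{E},\, k \in \mathcal{N}} \bar{w}(j,k)$; since $\mathcal{C} \subseteq \mathcal{N}$ and every summand $\bar{w}(j,k)$ is non-negative by hypothesis, monotonicity of sums of non-negative terms yields $\bar{v}_j(\mathcal{C}) \leq \bar{v}_j(\mathcal{N})$. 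In the singleton case $\mathcal{C} = \{j\}$, the original CAG convention in Eq.~\eqref{eq:org_def_coalition_value_cag} gives $\bar{v}_j(\{j\}) = 0$, while $\bar{v}_j(\mathcal{N}) \geq 0$ again from the non-negativity of edge weights, so $\bar{v}_j(\{j\}) \leq \bar{v}_j(\mathcal{N})$ as well.

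Combining the two cases, for every $j \in \mathcal{C}$ we have $\bar{v}_j(\mathcal{C}) \leq \bar{v}_j(\mathcal{N}) = \bar{v}_j(\mathcal{CS}(j))$, so no agent in $\mathcal{C}$ strictly prefers $\mathcal{C}$ to its coalition under $\{\mathcal{N}\}$. This contradicts the second clause in Definition~\ref{def:stability_concepts}, therefore no weakly blocking coalition $\mathcal{C} \subseteq \mathcal{N}$ exists, and the grand coalition lies in the strict core.

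The result is ultimately a routine consequence of monotonicity of sums of non-negative terms, so I anticipate no serious mathematical obstacle. The only bookkeeping care is (i) treating the singleton case $\mathcal{C}=\{j\}$ separately because $\bar{v}_j$ is piecewise defined, and (ii) tracking the direction of the inequality demanded by Definition~\ref{def:stability_concepts}: it is the failure of the \emph{strict} preference, rather than of the weak preference, that kills every candidate blocking coalition.
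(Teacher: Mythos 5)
Your proof is correct. Note, however, that the paper does not prove this statement at all: Theorem~\ref{thm:grand_coalition_core_cag} is imported verbatim from \citet{branzei2009coalitional} and used as a black box in the proof of Lemma~\ref{lemm:grand_coalition_core_cag_generalised}, so there is no in-paper argument to compare against. Your direct argument from Definition~\ref{def:stability_concepts} is the standard one and is sound: since every summand $\bar{w}(j,k)$ is non-negative, restricting the index set from $\mathcal{N}$ to any $\mathcal{C} \subseteq \mathcal{N}$ can only decrease $\bar{v}_j$, and the singleton case is handled by $\bar{v}_j(\{j\}) = 0 \leq \bar{v}_j(\mathcal{N})$ per Eq.~\eqref{eq:org_def_coalition_value_cag}; hence no agent can \emph{strictly} prefer any $\mathcal{C}$ to the grand coalition, which already defeats the existential clause of weak blocking. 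Your two points of bookkeeping care (the piecewise definition of $\bar{v}_j$ at singletons, and the fact that it suffices to kill the strict-preference clause rather than the weak one) are exactly the right ones. The only cosmetic remark is that you frame the conclusion as a contradiction, whereas it is a direct verification that no candidate coalition satisfies the blocking definition; nothing hinges on this.
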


        \begin{lemma}
        \label{lemm:grand_coalition_core_cag_generalised}
            In a CAG with an affinity graph $G=\langle \mathcal{N}, \mathcal{E} \rangle$ and the generalised preference value function $v_{j}(\mathcal{C})$, if the following conditions are satisfied such that
            \begin{equation}
            \label{eq:conditions_grand_coalition_core}
                \begin{split}
                    w(j, k) \geq z_{jk}(\{j\}), \\
                    v_{j}(\{j\}) = \sum_{(j,k) \in \mathcal{E}, k \in \mathcal{N}} z_{jk}(\{j\}), \\
                    \forall (j, k) \in \mathcal{E},
                \end{split}
            \end{equation}
            then the grand coalition is in the strict core.
        \end{lemma}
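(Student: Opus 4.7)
The plan is to rule out every potential weakly blocking coalition against the grand coalition $\mathcal{N}$ by establishing, for every $\mathcal{C} \subseteq \mathcal{N}$ with $j \in \mathcal{C}$, the one-sided comparison $v_j(\mathcal{C}) \leq v_j(\mathcal{N})$. Once this is in hand, the strict-preference clause required by Definition~\ref{def:stability_concepts} cannot be met for any candidate $\mathcal{C}$, so no such $\mathcal{C}$ weakly blocks $\{\mathcal{N}\}$ and the grand coalition lies in the strict core. The argument then reduces to a case analysis on the shape of $\mathcal{C}$.

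First I would dispose of the singleton case $\mathcal{C} = \{j\}$. Applying Eq.~\eqref{eq:new_def_coalition_value_cag} to the grand coalition and the hypothesised per-edge decomposition of $v_j(\{j\})$ gives
\begin{equation*}
v_j(\mathcal{N}) - v_j(\{j\}) \;=\; \sum_{(j,k) \in \mathcal{E}, k \in \mathcal{N}} \bigl[w(j,k) - z_{jk}(\{j\})\bigr],
\end{equation*}
which is non-negative edge-by-edge under the first hypothesis $w(j,k) \geq z_{jk}(\{j\})$. Second, for $\mathcal{C} \neq \{j\}$ with $\mathcal{C} \subsetneq \mathcal{N}$, the difference telescopes to $v_j(\mathcal{N}) - v_j(\mathcal{C}) = \sum_{(j,k) \in \mathcal{E}, k \in \mathcal{N} \setminus \mathcal{C}} w(j,k)$ because the singleton term $b_j$ does not enter either preference. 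Combining the pointwise bound $w(j,k) \geq z_{jk}(\{j\})$ with the tacit non-negativity of the per-edge shares $z_{jk}(\{j\})$ (the natural refinement consistent with Definition~\ref{def:singleton_coalition_preference_value} and the CAG decomposition in Eq.~\eqref{eq:new_def_coalition_value_cag}) yields $w(j,k) \geq 0$ on every relevant edge, hence a non-negative sum. This closes both cases, giving $v_j(\mathcal{C}) \leq v_j(\mathcal{N})$ uniformly.

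The main obstacle is precisely this subcoalition case: the lemma's stated hypothesis relates $w(j,k)$ to $z_{jk}(\{j\})$ edge-wise, whereas the bound needed for weak blocking aggregates $w(j,k)$ only over $\mathcal{N} \setminus \mathcal{C}$, not over all neighbours. An attractive alternative is to route the argument through Eq.~\eqref{eq:transform_new_def_coalition_value_cag}: set $\hat{w}(j,k) = w(j,k) - z_{jk}(\{j\}) \geq 0$, invoke Theorem~\ref{thm:grand_coalition_core_cag} on the transformed CAG to place its grand coalition in the strict core, and then translate this conclusion back to $v_j$. Care is required because $v_j - \hat{v}_j$ is not a constant offset but a $\mathcal{C}$-dependent correction summing $z$ over $\mathcal{N} \setminus \mathcal{C}$; the translation nevertheless preserves the sign of $v_j(\mathcal{N}) - v_j(\mathcal{C})$ once the shares are non-negative, which is the one technical subtlety to verify. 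Concluding via either route, no coalition can weakly block $\{\mathcal{N}\}$, establishing the lemma.
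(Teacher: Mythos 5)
Your route is a direct verification, whereas the paper argues by reduction to its Theorem~\ref{thm:grand_coalition_core_cag}: it translates the preference values via $\hat{v}_{j}(\mathcal{C}) = v_{j}(\mathcal{C}) - v_{j}(\{j\})$ as in Eq.~\eqref{eq:transform_new_def_coalition_value_cag}, defines $\hat{w}(j,k) := w(j,k) - z_{jk}(\{j\}) \geq 0$, and invokes the classical fact that non-negative affinity weights put the grand coalition in the strict core. Your reduction of strict-core membership to the single family of inequalities $v_{j}(\mathcal{C}) \leq v_{j}(\mathcal{N})$ is sound (it removes any agent who could strictly prefer a blocking coalition, so no coalition can weakly block by Definition~\ref{def:stability_concepts}), and your singleton case is exactly the content of the paper's translation step, using precisely the two stated hypotheses and nothing more. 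So the two arguments are different in presentation but pivot on the same quantities; your second, ``alternative'' route is in fact the paper's proof.

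The subtlety you flag in the subcoalition case is genuine, and it is the very point at which the paper's own proof is silently incomplete. The paper rewrites $\hat{v}_{j}(\mathcal{C}) = \sum_{(j,k) \in \mathcal{E}, k \in \mathcal{C}} \hat{w}(j,k)$ using the identity $v_{j}(\{j\}) = \sum_{(j,k) \in \mathcal{E}, k \in \mathcal{C}} z_{jk}(\{j\})$, but the hypothesis in Eq.~\eqref{eq:conditions_grand_coalition_core} fixes that sum over $k \in \mathcal{N}$, not over $k \in \mathcal{C}$; for a proper subcoalition the two sides differ by $\sum_{(j,k) \in \mathcal{E}, k \in \mathcal{N} \setminus \mathcal{C}} z_{jk}(\{j\})$, which is exactly the residual term you isolate. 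Hence both your argument and the paper's require, beyond the stated hypotheses, that this residual be non-negative --- most simply via $z_{jk}(\{j\}) \geq 0$ edge-wise, which with $w(j,k) \geq z_{jk}(\{j\})$ gives $w(j,k) \geq 0$ and closes the case. The stated hypotheses alone only control the total $\sum_{k} z_{jk}(\{j\}) = v_{j}(\{j\}) \geq 0$ and would permit one large negative share offset by a positive one, letting an agent strictly gain by shedding a negatively weighted neighbour. Your proposal is therefore as complete as the paper's proof and more candid about the per-edge non-negativity on which both rest; that convention is consistent with how the paper later instantiates the decomposition (non-negative pairwise and individual utilities in Remark~\ref{rmk:implement_symmetry_dynamic_affinity_graph}), but it should be stated as an explicit hypothesis of the lemma.
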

        \begin{proof}
            Recall that we have generalised the preference value function in this paper (see Appendix \ref{sec:generalization_of_preference_values}). Theorem \ref{thm:grand_coalition_core_cag} only holds for the case where the preference value function is defined as $\bar{v}_{j}(\mathcal{C})$ in Eq.~\eqref{eq:org_def_coalition_value_cag}. As a result, we first investigate the conditions that makes Theorem \ref{thm:grand_coalition_core_cag} still hold for the generalised preference value function $v_{j}(\mathcal{C})$ in Eq.~\eqref{eq:new_def_coalition_value_cag}. As discussed before, we can transform the generalised preference value function $v_{j}(\mathcal{C})$ to the feasible domain of the original preference value function $\bar{v}_{j}(\mathcal{C})$ by translation such that
            \begin{equation*}
                \hat{v}_{j}(\mathcal{C}) = v_{j}(\mathcal{C}) - v_{j}(\{j\}) = 
                \begin{cases}
                    0 & \text{if $\mathcal{C} = \{j\}$},\\
                    \sum_{(j,k) \in \mathcal{E}, k \in \mathcal{C}} w(j, k) - v_{j}(\{j\}) & \text{otherwise}.
                \end{cases}
            \end{equation*}
            It is apparent that the domain of $\hat{v}_{j}(\mathcal{C})$ is aligned with that of $\bar{v}_{j}(\mathcal{C})$. Therefore, we can substitute $\hat{v}_{j}(\mathcal{C})$ for $\bar{v}_{j}(\mathcal{C})$. Since Theorem \ref{thm:grand_coalition_core_cag} only considers the grand coalition, we can temporarily ignore the case of that $\mathcal{C} = \{j\}$. For any $\mathcal{C} \neq \{j\}$ of $\hat{v}_{j}(\mathcal{C})$, we can rewrite the expression $\sum_{(j,k) \in \mathcal{E}, k \in \mathcal{C}} w(j, k) - v_{j}(\{j\})$ as follows:
            \begin{equation*}
                \begin{split}
                    \sum_{(j,k) \in \mathcal{E}, k \in \mathcal{C}} w(j, k) - v_{j}(\{j\}) &= \sum_{(j,k) \in \mathcal{E}, k \in \mathcal{C}} \left\{ w(j, k) - z_{jk}(\{j\}) \right\} \\
                    &:= \sum_{(j,k) \in \mathcal{E}, k \in \mathcal{C}} \hat{w}(j, k),
                \end{split}
            \end{equation*}
            where
            \begin{equation*}
                \begin{split}
                    \hat{w}(j, k) := w(j, k) - z_{jk}(\{j\}), \\
                    v_{j}(\{j\}) = \sum_{(j,k) \in \mathcal{E}, k \in \mathcal{C}} z_{jk}(\{j\}).
                \end{split}
            \end{equation*}
            By the condition that $\hat{w}(j, k) \geq 0$, for $(j, k) \in \mathcal{E}$, from Theorem \ref{thm:grand_coalition_core_cag}, we can directly obtain the conditions to enable the grand coalition $\mathcal{N}$ being in the strict core such that
            \begin{equation}
                \begin{split}
                    w(j, k) \geq z_{jk}(\{j\}), \\
                    v_{j}(\{j\}) = \sum_{(j,k) \in \mathcal{E}, k \in \mathcal{N}} z_{jk}(\{j\}), \\
                    \forall (j, k) \in \mathcal{E}.
                \end{split}
            \end{equation}
        \end{proof}

        \begingroup
        \def\thetheorem{\ref{thm:dvsc_core_existence}}
            \begin{theorem}
                In an OSB-CAG, for any dynamic affinity graph $G_{t} = \langle \mathcal{N}_{t}, \mathcal{E}_{t} \rangle$ at any timestep $t$, if there exists a joint action $a_{t} \in \mathcal{A}_{\scriptscriptstyle \mathcal{N}_{t}}$, for any agent $j \in \mathcal{N}_{t}$, satisfying $R_{j}(a_{t} \vert s_{t}) \geq R_{j}(a_{t}^{j} \vert s_{t})$ for any $s_{t} \in \mathcal{S}$, then DVSC always exists.
            \end{theorem}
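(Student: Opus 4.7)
The plan is to prove non-emptiness of DVSC by combining two ingredients: a per-timestep existence of a strict-core-stable joint action, obtained from the theorem's hypothesis via Lemma~\ref{lemm:grand_coalition_core_cag_generalised}, together with a standard existence argument for optimal stationary policies in a discounted MDP.

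First, I would interpret the hypothesis $R_j(a_t \mid s_t) \geq R_j(a_t^j \mid s_t)$ for every $j \in \mathcal{N}_t$ as the per-timestep analogue of the stability premise of Lemma~\ref{lemm:grand_coalition_core_cag_generalised}. Expanding the preference reward through its affinity-weight definition $R_j(a_t \mid s_t) = \sum_{(j,k) \in \mathcal{E}_t} w_{jk}(a_t^j, a_t^k \mid s_t)$ and splitting each edge weight according to Proposition~\ref{prop:affinity_weight_representation}, one identifies the constants $z_{jk}(\{j\})$ of Lemma~\ref{lemm:grand_coalition_core_cag_generalised} with $\beta_{jk}(a_t^j \mid s_t)$. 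Hence at every $(s_t, \mathcal{N}_t)$ the temporary team, viewed as the grand coalition of the time-$t$ subgame, is in the strict core. Invoking Lemma~\ref{lemm:symmetry_strict_core_grand_coalition}, this per-timestep strict-core stability is equivalent to choosing a joint action that maximizes the instantaneous social welfare $\sum_{j \in \mathcal{N}_t} R_j(a_t \mid s_t) = R(s_t, a_t)$.

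Next, I would lift this per-timestep statement to the long-horizon criterion in Eq.~\eqref{eq:dsvc}. Proposition~\ref{prop:existence_transition_preference} provides a well-defined composite transition, the preference reward $R(s_t, a_t)$ is non-negative and bounded, and Assumptions~\ref{assm:influence_of_learner} and \ref{assm:agent_type_fixed_policy} (the learner influences teammates, whose agent-type-conditioned policies are fixed) ensure that once teammates' policies are absorbed into the dynamics the learner faces a stationary discounted MDP. Applying a standard Banach fixed-point argument to the Bellman operator in Eq.~\eqref{eq:open_team_hedonic_bellman_operator}, which is a $\gamma$-contraction on the space of bounded functions, yields a deterministic stationary $\pi^{i,*}$ attaining the supremum in Eq.~\eqref{eq:dsvc}, and hence $\pi^{i,*}$ belongs to DVSC.

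The main obstacle I foresee is reconciling the greedy per-timestep stability argument with the global nature of the long-horizon optimum: an action that is instantaneously welfare-maximizing need not be long-horizon optimal in a generic MDP. The resolution is that the theorem's hypothesis is quantified uniformly over every $s_t \in \mathcal{S}$ (and implicitly over every feasible $\mathcal{N}_t$), so the feasible set of actions satisfying the stability condition is non-empty at every reachable state; the long-horizon optimizer produced by the fixed-point step then inherits the per-timestep strict-core property along its trajectory, so the two interpretations of DVSC — as the argmax of the discounted social-welfare objective and as a dynamic strict-core concept — agree, and DVSC is non-empty as required.
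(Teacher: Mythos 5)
Your core reduction coincides with the paper's: both proofs map the per-timestep hypothesis onto the premises of Lemma~\ref{lemm:grand_coalition_core_cag_generalised} by identifying $w(j,k)$ with $w_{jk}(a_{t}^{j}, a_{t}^{k} \vert s_{t})$, $v_{j}(\{j\})$ with $R_{j}(a_{t}^{j} \vert s_{t})$, and the constants $z_{jk}(\{j\})$ with a decomposition $\beta_{jk}(a_{t}^{j} \vert s_{t})$, concluding that the temporary team is a strict-core-stable grand coalition at each timestep. One directional slip: Proposition~\ref{prop:affinity_weight_representation} goes from a \emph{given} decomposition to the inequality $R_{j}(a_{t} \vert s_{t}) \geq R_{j}(a_{t}^{j} \vert s_{t})$, whereas here you need the converse --- that the aggregate inequality $\sum_{(j,k) \in \mathcal{E}_{t}} w_{jk}(a_{t}^{j}, a_{t}^{k} \vert s_{t}) \geq R_{j}(a_{t}^{j} \vert s_{t})$ admits \emph{some} decomposition $R_{j}(a_{t}^{j} \vert s_{t}) = \sum_{(j,k) \in \mathcal{E}_{t}} \beta_{jk}(a_{t}^{j} \vert s_{t})$ with $w_{jk} \geq \beta_{jk}$ on every edge. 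This converse is true (e.g., subtract the per-agent surplus evenly across incident edges) and is exactly the step the paper's proof asserts, but it is not Proposition~\ref{prop:affinity_weight_representation} itself. Where you genuinely diverge is the lift to the long horizon: the paper dispatches it with a single sentence (``analogously, for all timesteps''), whereas you invoke the $\gamma$-contraction of the operator in Eq.~\eqref{eq:open_team_hedonic_bellman_operator} to produce a deterministic stationary policy attaining the supremum in Eq.~\eqref{eq:dsvc}. Your route buys a rigorous non-emptiness argument for the set in Eq.~\eqref{eq:dsvc}, and your closing observation correctly isolates the residual issue that neither proof fully resolves: a long-horizon optimizer need not maximize instantaneous social welfare at every step, so per-timestep strict-core stability along its trajectory does not follow automatically from Lemma~\ref{lemm:symmetry_strict_core_grand_coalition} applied timestep by timestep.
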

        \endgroup
        \begin{proof}
            To avoid losing the generality, we consider an arbitrary dynamic affinity graph $G_{t} = \langle \mathcal{N}_{t}, \mathcal{E}_{t} \rangle$ for a temporary team $\mathcal{N}_{t} \ \mathlarger{\mathlarger{\subseteq}} \ \mathcal{N}$ at an arbitrary timestep $t$. For any state $s_{t} \in \mathcal{S}$ and any joint action $a_{t} \in \mathcal{A}_{\scriptscriptstyle \mathcal{N}_{t}}$, the affinity weight $w_{jk}(a_{t}^{j}, a_{t}^{k} \vert s_{t})$ of any $(j, k) \in \mathcal{E}_{t}$ can be represented as a corresponding $w(j, k)$ such that $w(j, k) = w_{jk}(a_{t}^{j}, a_{t}^{k} \vert s_{t})$. Similarly, each agent $j$'s preference reward for the coalition including only itself $R_{j}(a_{t}^{j} \vert s_{t})$ can also be represented as a corresponding $v_{j}(\{j\})$ such that $v_{j}(\{j\}) = R_{j}(a_{t}^{j} \vert s_{t})$. Thereafter, we can apply Lemma \ref{lemm:grand_coalition_core_cag_generalised} to the situation here at a single timestep $t$. Substituting the above variables into Eq.~\eqref{eq:conditions_grand_coalition_core} in Lemma \ref{lemm:grand_coalition_core_cag_generalised}, it is not difficult to observe that if for any state $s_{t} \in \mathcal{S}$, there exists a joint action $a_{t} \in \mathcal{A}_{\scriptscriptstyle \mathcal{N}_{t}}$ such that $\sum_{(j,k) \in \mathcal{E}_{t}, k \in \mathcal{N}_{t}} w_{jk}(a_{t}^{j}, a_{t}^{k} \vert s_{t}) \geq R_{j}(a_{t}^{j} \vert s_{t})$, then there always exists a $R_{j}(a_{t}^{j} \vert s_{t}) = \sum_{(j,k) \in \mathcal{E}_{t}, k \in \mathcal{N}_{t}} \beta_{jk}(a_{t}^{j} \vert s_{t})$ satisfying the condition that $w_{jk}(a_{t}^{j}, a_{t}^{k} \vert s_{t}) \geq \beta_{jk}(a_{t}^{j} \vert s_{t})$, for all $(j, k) \in \mathcal{E}_{t}$, for any state $s_{t} \in \mathcal{S}$. Analogously, we can obtain the same results for all timesteps as above, which achieves the long-horizon objective as defined in the DVSC. Therefore, we can conclude that for any dynamic affinity graph $G_{t} = \langle \mathcal{N}_{t}, \mathcal{E}_{t} \rangle$ at any timestep $t$, if there exists a joint action $a_{t} \in \mathcal{A}_{\scriptscriptstyle \mathcal{N}_{t}}$, for any agent $j \in \mathcal{N}_{t}$, satisfying $R_{j}(a_{t} \vert s_{t}) = \sum_{(j,k) \in \mathcal{E}_{t}, k \in \mathcal{N}_{t}} w_{jk}(a_{t}^{j}, a_{t}^{k} \vert s_{t}) \geq R_{j}(a_{t}^{j} \vert s_{t})$ for any $s_{t} \in \mathcal{S}$, then the DVSC defined in Eq.~\eqref{eq:dsvc} always exists.
        \end{proof}

    \subsection{The Proof of Theorem \ref{thm:joint_q_representation}}
        \begin{lemma}
        \label{lemm:basic_agent_truncated_episode}
            Under Assumption \ref{assm:agent_leaves_env}, it is valid to have the expressions that $Q_{jk}^{\pi^{i}}(a_{t}^{j}, a_{t}^{k} \vert s_{t}) = \mathbb{E}_{\pi^{i}}[\sum_{\tau=t}^{\infty} \gamma^{\tau - t} \alpha_{jk}(a_{\tau}^{j}, a_{\tau}^{k} \vert s_{\tau})]$ and $Q_{j}^{\pi^{i}}(a_{t}^{j} \vert s_{t}) = \mathbb{E}_{\pi^{i}}[\sum_{\tau=t}^{\infty} \gamma^{\tau - t} R_{j}(a_{\tau}^{j} \vert s_{\tau})]$, with the learner $i$'s policy $\pi^{i}$.
        \end{lemma}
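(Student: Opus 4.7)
The plan is to show that the two displayed expressions are well-posed despite being written as infinite-horizon sums over timesteps at which agents $j$ and/or $k$ may no longer be present in the environment. The central idea is that Assumption~\ref{assm:agent_leaves_env} forces every term of the series to vanish once the relevant agents have left, so the infinite sum collapses to a finite (almost-sure) sum along every trajectory.

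First, I would fix a sample trajectory $(s_t, \mathcal{N}_t, a_t)_{t \geq 0}$ drawn under $\pi^i$, $P_O$, $P_E$, and $\pi_t^{-i}$. Along this trajectory, let $T(j,k) := \min\{\tau \geq t : j \notin \mathcal{N}_\tau \text{ or } k \notin \mathcal{N}_\tau\}$ and $T'(j) := \min\{\tau \geq t : j \notin \mathcal{N}_\tau\}$, both of which are finite random variables (since the OSB-CAG does not keep a fixed agent in the environment forever, and even if it did, the sums would simply extend over all of $\mathbb{N}$ without issue). By Assumption~\ref{assm:agent_leaves_env}, $\alpha_{jk}(a_\tau^j, a_\tau^k \mid s_\tau) = 0$ for all $\tau \geq T(j,k)$, and $R_j(a_\tau^j \mid s_\tau) = 0$ for all $\tau \geq T'(j)$. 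Therefore the infinite tails
\begin{equation*}
\sum_{\tau = T(j,k)}^{\infty} \gamma^{\tau - t} \alpha_{jk}(a_\tau^j, a_\tau^k \mid s_\tau) = 0, \qquad \sum_{\tau = T'(j)}^{\infty} \gamma^{\tau - t} R_j(a_\tau^j \mid s_\tau) = 0,
\end{equation*}
so the random series reduce to the finite (or at worst absolutely $\gamma$-discounted) sums
$\sum_{\tau = t}^{T(j,k) - 1} \gamma^{\tau - t} \alpha_{jk}(\cdot)$ and $\sum_{\tau = t}^{T'(j) - 1} \gamma^{\tau - t} R_j(\cdot)$.

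Next, I would take the expectation over all randomness (the learner's policy $\pi^i$ as well as $P_O$, $P_E$, and $\pi_t^{-i}$, as bundled into $\mathbb{E}_{\pi^i}[\cdot]$ per the convention in Section~\ref{subsec:dynamic_variational_Strict_core}). Since the per-trajectory sums are dominated by $\tfrac{M}{1 - \gamma}$ for an upper bound $M$ on $|\alpha_{jk}|$ and $R_j$ (both being real-valued functions on finite action and state spaces, treated as uniformly bounded), the expectation is finite by dominated convergence, and the order of summation and expectation can be exchanged, yielding exactly the two claimed expressions. This identifies $Q_{jk}^{\pi^i}(a_t^j, a_t^k \mid s_t)$ and $Q_j^{\pi^i}(a_t^j \mid s_t)$ as the discounted cumulative affinity (respectively singleton preference) contributions, starting from the current timestep and action profile.

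The main obstacle is conceptual rather than technical: ensuring that $\alpha_{jk}(a_\tau^j, a_\tau^k \mid s_\tau)$ and $R_j(a_\tau^j \mid s_\tau)$ are meaningfully defined at timesteps $\tau > t$ where the agents may or may not belong to $\mathcal{N}_\tau$. Assumption~\ref{assm:agent_leaves_env} resolves this by stipulating that these quantities are simply zero after departure, so one can extend the summation to all $\tau \geq t$ without ambiguity. Once this convention is in place, the remainder of the argument is a routine application of finite stopping and boundedness.
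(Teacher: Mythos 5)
Your proposal is correct and follows essentially the same route as the paper's own proof: both arguments identify the departure time and invoke Assumption~\ref{assm:agent_leaves_env} to conclude that the tail of the discounted series vanishes, so the infinite-horizon sum coincides with the finite sum truncated at departure (the paper writes this in the reverse direction, extending the finite sum by a zero tail). Your added remarks on boundedness and dominated convergence are harmless extra rigor that the paper omits.
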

        \begin{proof}
            Suppose that agent $j$ or $k$ leaves the environment at timestep $T$, then we can have the expression that $Q_{jk}^{\pi^{i}}(a_{t}^{j}, a_{t}^{k} \vert s_{t}) = \mathbb{E}_{\pi^{i}}[\sum_{\tau=t}^{\infty} \gamma^{\tau - t} \alpha_{jk}(a_{\tau}^{j}, a_{\tau}^{k} \vert s_{\tau})]$ by the condition in Assumption \ref{assm:agent_leaves_env} that $\alpha_{jk}(a_{\tau}^{j}, a_{\tau}^{k} \vert s_{\tau}) = 0$ for $\tau \geq T$ if agent $j$ or $k$ leaves the environment at timestep $T$ as follows:
            \begin{equation*}
                \begin{split}
                    Q_{jk}^{\pi^{i}}(a_{t}^{j}, a_{t}^{k} \vert s_{t}) &= \mathbb{E}_{\pi^{i}}[\sum_{\tau=t}^{T} \gamma^{\tau - t} \alpha_{jk}(a_{\tau}^{j}, a_{\tau}^{k} \vert s_{\tau})] \\
                    &= \mathbb{E}_{\pi^{i}}[\sum_{\tau=t}^{T} \gamma^{\tau - t} \alpha_{jk}(a_{\tau}^{j}, a_{\tau}^{k} \vert s_{\tau}) + \underbrace{\sum_{\tau=T}^{\infty} \gamma^{\tau - t} \alpha_{jk}(a_{\tau}^{j}, a_{\tau}^{k} \vert s_{\tau})}_{\qquad\qquad\quad =0 \text{ by Assumption \ref{assm:agent_leaves_env}.}}] \\
                    &= \mathbb{E}_{\pi^{i}}[\sum_{\tau=t}^{\infty} \gamma^{\tau - t} \alpha_{jk}(a_{\tau}^{j}, a_{\tau}^{k} \vert s_{\tau})].
                \end{split}
            \end{equation*}
            Similarly, by the condition in Assumption \ref{assm:agent_leaves_env} that $R_{j}(a_{\tau}^{j} \vert s_{\tau})=0$ for $\tau \geq T'$ if agent $j$ leaves the environment at timestep $T'$, we can derive the result that $Q_{j}^{\pi^{i}}(a_{t}^{j} \vert s_{t}) = \mathbb{E}_{\pi^{i}}[\sum_{\tau=t}^{\infty} \gamma^{\tau-t} R_{j}(a_{\tau}^{j} \vert s_{\tau})]$.
        \end{proof}

        \begingroup
        \def\thetheorem{\ref{thm:joint_q_representation}}
        \begin{theorem}
            Under Assumption \ref{assm:agent_leaves_env}, if $w_{jk}(a_{\tau}^{j}, a_{\tau}^{k} \vert s_{\tau}) = \alpha_{jk}(a_{\tau}^{j}, a_{\tau}^{k} \vert s_{\tau}) + \beta_{jk}(a_{\tau}^{j} \vert s_{\tau})$, then the joint Q-value of the learner's policy $\pi^{i}$ can be expressed as follows:
            \begin{equation*}
                \begin{split}
                    Q^{\pi^{i}}(s_{t}, a_{t}) &= \sum_{(j,k) \in \mathcal{E}_{t}} Q_{jk}^{\pi^{i}}(a_{t}^{j}, a_{t}^{k} \vert s_{t}) + \sum_{j \in \mathcal{N}_{t}} Q_{j}^{\pi^{i}}(a_{t}^{j} \vert s_{t}) \\
                    = \sum_{j \in \mathcal{N}_{t}} &\Big\{ \sum_{(j,k) \in \mathcal{E}_{t}} Q_{jk}^{\pi^{i}}(a_{t}^{j}, a_{t}^{k} \vert s_{t}) + Q_{j}^{\pi^{i}}(a_{t}^{j} \vert s_{t}) \Big\} \\
                    := \sum_{j \in \mathcal{N}_{t}} &Q_{j}^{\pi^{i}}(a_{t} \vert s_{t}),
                \end{split}
            \end{equation*}
            where $Q_{jk}^{\pi^{i}}(a_{t}^{j}, a_{t}^{k} \vert s_{t}) = \mathbb{E}_{\pi^{i}}[\sum_{\tau=t}^{\infty} \gamma^{\tau - t} \alpha_{jk}(a_{\tau}^{j}, a_{\tau}^{k} \vert s_{\tau})]$ and $Q_{j}^{\pi^{i}}(a_{t}^{j} \vert s_{t}) = \mathbb{E}_{\pi^{i}}[\sum_{\tau=t}^{\infty} \gamma^{\tau - t} R_{j}(a_{\tau}^{j} \vert s_{\tau})]$.
        \end{theorem}
        \endgroup
        \begin{proof}
            By Assumption \ref{assm:agent_leaves_env} and the result of Lemma \ref{lemm:basic_agent_truncated_episode}, for any state $s_{t} \in \mathcal{S}$ and any joint action $a_{t} \in \mathcal{A}_{\scriptscriptstyle \mathcal{N}_{t}}$, we can represent the joint Q-value under any learner $i$'s policy $\pi^{i}$ such as $Q^{\pi^{i}}(s_{t}, a_{t})$ as follows:
            \begin{equation}
            \label{eq:joint_team_preference_Q}
                \begin{split}
                    Q^{\pi^{i}}(s_{t}, a_{t}) &= \mathbb{E}_{\pi^{i}} \Big[ \sum_{\tau=t}^{\infty} \gamma^{\tau-t} R(s_{\tau}, a_{\tau}) \Big] \\
                    &= \mathbb{E}_{\pi^{i}} \Big[ \sum_{\tau=t}^{\infty} \gamma^{\tau-t} \sum_{j \in \mathcal{N}_{\tau}} R_{j}(a_{\tau} \vert s_{\tau}) \Big] \\
                    &= \mathbb{E}_{\pi^{i}} \Big[ \sum_{\tau=t}^{\infty} \gamma^{\tau-t} \sum_{j \in \mathcal{N}_{\tau}} \Big( \sum_{(j, k) \in \mathcal{E}_{\tau}} \alpha_{jk}(a_{\tau}^{j}, a_{\tau}^{k} \vert s_{\tau}) + R_{j}(a_{\tau}^{j} \vert s_{\tau}) \Big) \Big] \\
                    &= \mathbb{E}_{\pi^{i}} \bigg[ \sum_{\tau=t}^{\infty} \gamma^{\tau-t} \bigg( \sum_{j \in \mathcal{N}_{\tau}} \Big( \sum_{(j, k) \in \mathcal{E}_{\tau}} \alpha_{jk}(a_{\tau}^{j}, a_{\tau}^{k} \vert s_{\tau}) + R_{j}(a_{\tau}^{j} \vert s_{\tau}) \Big) \\
                    &\qquad\qquad\qquad\qquad + \underbrace{\sum_{j \in \mathcal{N}_{t} \backslash \mathcal{N}_{\tau}} \Big( \sum_{(j, k) \in \mathcal{E}_{t} \backslash \mathcal{E}_{\tau}} \alpha_{jk}(a_{\tau}^{j}, a_{\tau}^{k} \vert s_{\tau}) + R_{j}(a_{\tau}^{j} \vert s_{\tau}) \Big)}_{\qquad\qquad\quad =0 \text{ by Assumption \ref{assm:agent_leaves_env}.}} \bigg) \bigg] \\
                    &= \mathbb{E}_{\pi^{i}} \Big[ \sum_{\tau=t}^{\infty} \gamma^{\tau-t} \sum_{j \in \mathcal{N}_{t}} \Big( \sum_{(j, k) \in \mathcal{E}_{t}} \alpha_{jk}(a_{\tau}^{j}, a_{\tau}^{k} \vert s_{\tau}) + R_{j}(a_{\tau}^{j} \vert s_{\tau}) \Big) \Big] \\
                    &= \sum_{j \in \mathcal{N}_{t}} \bigg\{ \sum_{(j, k) \in \mathcal{E}_{t}} \underbrace{\mathbb{E}_{\pi^{i}}\Big[ \sum_{\tau=t}^{\infty} \gamma^{\tau-t} \alpha_{jk}(a_{\tau}^{j}, a_{\tau}^{k} \vert s_{\tau})}_{= Q_{jk}^{\pi^{i}}(a_{t}^{j}, a_{t}^{k} \vert s_{t}) \text{ by Lemma \ref{lemm:basic_agent_truncated_episode}.}} \Big] + \underbrace{\mathbb{E}_{\pi^{i}}\Big[ \sum_{\tau=t}^{\infty} \gamma^{\tau-t} R_{j}(a_{\tau}^{j} \vert s_{\tau}) \Big]}_{= Q_{j}^{\pi^{i}}(a_{t}^{j} \vert s_{t}) \text{ by Lemma \ref{lemm:basic_agent_truncated_episode}.}} \bigg\} \\
                    &= \sum_{j \in \mathcal{N}_{t}} \bigg\{ \sum_{(j, k) \in \mathcal{E}_{t}}  Q_{jk}^{\pi^{i}}(a_{t}^{j}, a_{t}^{k} \vert s_{t}) + Q_{j}^{\pi^{i}}(a_{t}^{j} \vert s_{t}) \bigg\} \\
                    &= \sum_{j \in \mathcal{N}_{t}} \sum_{(j, k) \in \mathcal{E}_{t}}  Q_{jk}^{\pi^{i}}(a_{t}^{j}, a_{t}^{k} \vert s_{t}) + \sum_{j \in \mathcal{N}_{t}} Q_{j}^{\pi^{i}}(a_{t}^{j} \vert s_{t}) \\
                    &= \sum_{(j, k) \in \mathcal{E}_{t}}  Q_{jk}^{\pi^{i}}(a_{t}^{j}, a_{t}^{k} \vert s_{t}) + \sum_{j \in \mathcal{N}_{t}} Q_{j}^{\pi^{i}}(a_{t}^{j} \vert s_{t}).
                \end{split}
            \end{equation}

            By the fashion of Bellman optimality equation, for any state $s_{t} \in \mathcal{S}$ and any joint action $a_{t} \in \mathcal{A}_{\scriptscriptstyle \mathcal{N}_{t}}$, we can write out each agent $j$'s preference Q-value under the learner $i$'s policy $\pi^{i}$, $Q_{j}^{\pi^{i}}(a_{t} \vert s_{t})$, as follows: 
            \begin{equation}
            \label{eq:single_agent_team_preference_Q}
                \begin{split}
                    Q_{j}^{\pi^{i}}(s_{t}, a_{t}) &= \mathbb{E}_{\pi^{i}} \Big[ \sum_{\tau=t}^{\infty} \gamma^{\tau-t} R_{j}(a_{\tau} \vert s_{\tau}) \Big] \\
                    &= \mathbb{E}_{\pi^{i}} \Big[ \sum_{\tau=t}^{\infty} \gamma^{\tau-t} \Big( \sum_{(j, k) \in \mathcal{E}_{\tau}} \alpha_{jk}(a_{\tau}^{j}, a_{\tau}^{k} \vert s_{\tau}) + R_{j}(a_{\tau}^{j} \vert s_{\tau}) \Big) \Big] \\
                    &= \mathbb{E}_{\pi^{i}} \Big[ \sum_{\tau=t}^{\infty} \gamma^{\tau-t} \Big( \sum_{(j, k) \in \mathcal{E}_{\tau}} \alpha_{jk}(a_{\tau}^{j}, a_{\tau}^{k} \vert s_{\tau}) \Big) \Big] + \mathbb{E}_{\pi^{i}} \Big[ \sum_{\tau=t}^{\infty} \gamma^{\tau-t} R_{j}(a_{\tau}^{j} \vert s_{\tau}) \Big] \\
                    &= \mathbb{E}_{\pi^{i}} \Big[ \sum_{\tau=t}^{\infty} \gamma^{\tau-t} \Big( \sum_{(j, k) \in \mathcal{E}_{\tau}} \alpha_{jk}(a_{\tau}^{j}, a_{\tau}^{k} \vert s_{\tau}) + \underbrace{\sum_{(j, k) \in \mathcal{E}_{t} \backslash \mathcal{E}_{\tau}} \alpha_{jk}(a_{\tau}^{j}, a_{\tau}^{k} \vert s_{\tau})}_{\qquad\qquad\quad =0 \text{ by Assumption \ref{assm:agent_leaves_env}.}} \Big) \Big] + \mathbb{E}_{\pi^{i}} \Big[ \sum_{\tau=t}^{\infty} \gamma^{\tau-t} R_{j}(a_{\tau}^{j} \vert s_{\tau}) \Big] \\
                    &= \mathbb{E}_{\pi^{i}} \Big[ \sum_{\tau=t}^{\infty} \gamma^{\tau-t} \Big( \sum_{(j, k) \in \mathcal{E}_{t}} \alpha_{jk}(a_{\tau}^{j}, a_{\tau}^{k} \vert s_{\tau}) \Big) \Big] + \mathbb{E}_{\pi^{i}} \Big[ \sum_{\tau=t}^{\infty} \gamma^{\tau-t} R_{j}(a_{\tau}^{j} \vert s_{\tau}) \Big] \\
                    &= \sum_{(j, k) \in \mathcal{E}_{t}} \underbrace{\mathbb{E}_{\pi^{i}} \Big[ \sum_{\tau=t}^{\infty} \gamma^{\tau-t} \alpha_{jk}(a_{\tau}^{j}, a_{\tau}^{k} \vert s_{\tau}) \Big]}_{= Q_{jk}^{\pi^{i}}(a_{t}^{j}, a_{t}^{k} \vert s_{t}) \text{ by Lemma \ref{lemm:basic_agent_truncated_episode}.}} + \underbrace{\mathbb{E}_{\pi^{i}} \Big[ \sum_{\tau=t}^{\infty} \gamma^{\tau-t} R_{j}(a_{\tau}^{j} \vert s_{\tau}) \Big]}_{= Q_{j}^{\pi^{i}}(a_{t}^{j} \vert s_{t}) \text{ by Lemma \ref{lemm:basic_agent_truncated_episode}.}} \\
                    &= \sum_{(j, k) \in \mathcal{E}_{t}}  Q_{jk}^{\pi^{i}}(a_{t}^{j}, a_{t}^{k} \vert s_{t}) + Q_{j}^{\pi^{i}}(a_{t}^{j} \vert s_{t}).
                \end{split}
            \end{equation}
            By substituting the expression of $Q_{j}^{\pi^{i}}(s_{t}, a_{t})$ derived in Eq.~\eqref{eq:single_agent_team_preference_Q} into Eq.~\eqref{eq:joint_team_preference_Q}, we can get the following relation:
            \begin{equation}
                Q^{\pi^{i}}(s_{t}, a_{t}) = \sum_{j \in \mathcal{N}_{t}} Q_{j}^{\pi^{i}}(a_{t} \vert s_{t}).
            \end{equation}
        \end{proof}
        
    \subsection{The Proof of the Conditions of Symmetry for Various Dynamic Affinity Graphs}
    \label{subsec:the_proof_of_construction_of_variant_dynamic_affinity_graphs}
        \begingroup
        \def\theproposition{\ref{prop:condition_of_symmetry_star_graph}}
            \begin{proposition}
                For the learner $i$ and any teammate $j$ or $k$, the constraints $R_{i}(a_{t}^{i} \vert s_{t}) = \sum_{j \in -i} R_{j}(a_{t}^{j} \vert s_{t})$ and $\alpha_{jk}(a_{t}^{j}, a_{t}^{k} \vert s_{t}) = \alpha_{kj}(a_{t}^{k}, a_{t}^{j} \vert s_{t})$, for any $a_{t} \in \mathcal{A}_{\scriptscriptstyle \mathcal{N}_{t}}$ and $s_{t} \in \mathcal{S}$, are necessary for a star dynamic affinity graph to be symmetric.
            \end{proposition}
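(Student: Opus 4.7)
The plan is to derive the two conditions directly from the definition that a star dynamic affinity graph with learner $i$ as internal node is symmetric, namely $w_{ij}(a_{t}^{i}, a_{t}^{j} \vert s_{t}) = w_{ji}(a_{t}^{j}, a_{t}^{i} \vert s_{t})$ for every edge $(i,j) \in \mathcal{E}_{t}$ and every state-action argument, combined with the decomposition established in Proposition~\ref{prop:affinity_weight_representation}, namely $w_{jk}(a_{t}^{j}, a_{t}^{k} \vert s_{t}) = \alpha_{jk}(a_{t}^{j}, a_{t}^{k} \vert s_{t}) + \beta_{jk}(a_{t}^{j} \vert s_{t})$ with $R_{j}(a_{t}^{j} \vert s_{t}) = \sum_{(j,k) \in \mathcal{E}_{t}} \beta_{jk}(a_{t}^{j} \vert s_{t})$. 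First I would substitute the decomposition into the symmetry equality to obtain $\alpha_{ij}(a_{t}^{i}, a_{t}^{j} \vert s_{t}) + \beta_{ij}(a_{t}^{i} \vert s_{t}) = \alpha_{ji}(a_{t}^{j}, a_{t}^{i} \vert s_{t}) + \beta_{ji}(a_{t}^{j} \vert s_{t})$ as the starting identity.

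Next I would isolate the second claim, $\alpha_{jk}(a_{t}^{j}, a_{t}^{k} \vert s_{t}) = \alpha_{kj}(a_{t}^{k}, a_{t}^{j} \vert s_{t})$, by a ``separation of variables'' argument. Since $\beta_{ij}$ depends only on $a_{t}^{i}$ and $\beta_{ji}$ only on $a_{t}^{j}$, the difference $\alpha_{ij}(a_{t}^{i}, a_{t}^{j} \vert s_{t}) - \alpha_{ji}(a_{t}^{j}, a_{t}^{i} \vert s_{t}) = \beta_{ji}(a_{t}^{j} \vert s_{t}) - \beta_{ij}(a_{t}^{i} \vert s_{t})$ must split into a function of $a_{t}^{i}$ alone plus a function of $a_{t}^{j}$ alone. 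Requiring $\alpha$ to carry only the genuinely pairwise content (any purely unilateral piece can be gauge-absorbed into the $\beta$'s without altering $w_{jk}$ or $R_{j}$) forces $\alpha_{ij}(a_{t}^{i}, a_{t}^{j} \vert s_{t}) = \alpha_{ji}(a_{t}^{j}, a_{t}^{i} \vert s_{t})$; this canonicalisation is the cleanest way to ensure the statement is well-defined.

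For the first claim, I would use the star topology: the only edge incident to any teammate $j$ is $(j,i)$, so $R_{j}(a_{t}^{j} \vert s_{t}) = \beta_{ji}(a_{t}^{j} \vert s_{t})$, whereas for the internal node $R_{i}(a_{t}^{i} \vert s_{t}) = \sum_{j \in -i} \beta_{ij}(a_{t}^{i} \vert s_{t})$. Summing the starting symmetry identity over $j \in -i$ and invoking the $\alpha$-symmetry just established cancels the pairwise contributions on both sides, yielding
\begin{equation*}
R_{i}(a_{t}^{i} \vert s_{t}) \;=\; \sum_{j \in -i} \beta_{ij}(a_{t}^{i} \vert s_{t}) \;=\; \sum_{j \in -i} \beta_{ji}(a_{t}^{j} \vert s_{t}) \;=\; \sum_{j \in -i} R_{j}(a_{t}^{j} \vert s_{t}),
\end{equation*}
which is precisely the desired necessary constraint.

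I anticipate the main obstacle to be the separation-of-variables step: the $\alpha$/$\beta$ split in Proposition~\ref{prop:affinity_weight_representation} is not literally unique, so care is needed to justify that the identification $\alpha_{ij} = \alpha_{ji}$ follows without circular reasoning. The cleanest resolution is to observe the gauge freedom explicitly (adding $f(a_{t}^{i})$ to $\beta_{ij}$ and subtracting the same from $\alpha_{ij}$ preserves both $w_{ij}$ and, after compensating on the other edges incident to $i$, the $R_{i}$ identity), and then to fix the gauge by requiring each $\alpha_{jk}$ to have vanishing marginals in each argument. Under that canonical choice the pairwise parts are uniquely determined, the argument above closes, and both necessary conditions follow.
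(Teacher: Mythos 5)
Your proof is correct and follows essentially the same route as the paper's: write each edge weight of the star graph via the $\alpha/\beta$ decomposition, impose $w_{ij}(a_{t}^{i}, a_{t}^{j} \vert s_{t}) = w_{ji}(a_{t}^{j}, a_{t}^{i} \vert s_{t})$ edge by edge, read off the $\alpha$-symmetry, and obtain $R_{i}(a_{t}^{i} \vert s_{t}) = \sum_{j \in -i} R_{j}(a_{t}^{j} \vert s_{t})$ by aggregating over the leaves. The only substantive difference is that you flag and resolve the non-uniqueness of the $\alpha/\beta$ split via a vanishing-marginals gauge, whereas the paper simply posits the term-by-term correspondence $R_{ij}(a_{t}^{i} \vert s_{t}) = R_{ji}(a_{t}^{j} \vert s_{t}) = R_{j}(a_{t}^{j} \vert s_{t})$ without addressing that freedom, so your version is, if anything, the more careful one.
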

        \endgroup
        \begin{proof}
            Recall that a symmetric dynamic affinity graph $G_{t} = \langle \mathcal{N}_{t}, \mathcal{E}_{t} \rangle$ needs to satisfy the following condition that $w_{jk}(a_{t}^{j}, a_{t}^{k} \vert s_{t}) = w_{kj}(a_{t}^{k}, a_{t}^{j} \vert s_{t})$, for all $(j, k) \in \mathcal{E}_{t}$, for any state $s_{t} \in \mathcal{S}$ and any joint action $a_{t} \in \mathcal{A}_{\scriptscriptstyle \mathcal{N}_{t}}$. In the dynamic affinity graph as a star graph, the affinity weights of any $(i, j) \in \mathcal{E}_{t}$ or $(j, i) \in \mathcal{E}_{t}$ can be represented as follows:
            \begin{equation*}
                \begin{split}
                    w_{ij}(a_{t}^{i}, a_{t}^{j} \vert s_{t}) = \alpha_{ij}(a_{t}^{i}, a_{t}^{j} \vert s_{t}) + \beta_{ij}(a_{t}^{i} \vert s_{t}), \text{where }R_{i}(a_{t}^{i} \vert s_{t}) = \sum_{j \in -i} \beta_{ij}(a_{t}^{i} \vert s_{t}),\\
                    w_{ji}(a_{t}^{j}, a_{t}^{i} \vert s_{t}) = \alpha_{ji}(a_{t}^{j}, a_{t}^{i} \vert s_{t}) + \beta_{ji}(a_{t}^{j} \vert s_{t}), \text{where }R_{j}(a_{t}^{j} \vert s_{t}) = \beta_{ji}(a_{t}^{j} \vert s_{t}).
                \end{split}
            \end{equation*}
            It is not difficult to observe that for all $\forall s_{t} \in \mathcal{S}$ and $a_{t} \in \mathcal{A}_{\scriptscriptstyle \mathcal{N}_{t}}$ the following conditions that
            \begin{equation*}
                \begin{split}
                    \alpha_{ij}(a_{t}^{i}, a_{t}^{j} \vert s_{t}) = \alpha_{ji}(a_{t}^{j}, a_{t}^{i} \vert s_{t}), \\
                    R_{i}(a_{t}^{i} \vert s_{t}) = \sum_{j \in -i} R_{j}(a_{t}^{j} \vert s_{t}),
                \end{split}
            \end{equation*}
            are necessary for that the star dynamic affinity graph is symmetric. In more details, that $R_{i}(a_{t}^{i} \vert s_{t}) = \sum_{j \in -i} R_{j}(a_{t}^{j} \vert s_{t})$ is a necessary condition for the existence of the one-to-one correspondence that $\beta_{ij}(a_{t}^{i} \vert s_{t}) = \beta_{ji}(a_{t}^{j} \vert s_{t}) = R_{j}(a_{t}^{j} \vert s_{t})$.
        \end{proof}
        
        \begingroup
        \def\theproposition{\ref{prop:condition_of_symmetry_full_graph}}
            \begin{proposition}
                For any two agents $j$ or $k$, the constraints $R_{j}(a_{t}^{j} \vert s_{t}) = R_{k}(a_{t}^{k} \vert s_{t})$ and $\alpha_{jk}(a_{t}^{j}, a_{t}^{k} \vert s_{t}) = \alpha_{kj}(a_{t}^{k}, a_{t}^{j} \vert s_{t})$, for any $a_{t} \in \mathcal{A}_{\scriptscriptstyle \mathcal{N}_{t}}$ and $s_{t} \in \mathcal{S}$, are necessary for the complete dynamic affinity graph to be symmetric.
            \end{proposition}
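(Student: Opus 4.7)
The plan is to adapt the proof of Proposition \ref{prop:condition_of_symmetry_star_graph} to the complete-graph topology. I would begin by recalling that symmetry of a dynamic affinity graph $G_t=\langle\mathcal{N}_t,\mathcal{E}_t\rangle$ requires $w_{jk}(a_t^j,a_t^k\vert s_t)=w_{kj}(a_t^k,a_t^j\vert s_t)$ for every edge $(j,k)\in\mathcal{E}_t$, every state $s_t\in\mathcal{S}$, and every joint action. Since a complete graph has $(j,k)\in\mathcal{E}_t$ for every pair of distinct agents, this identity must hold for all ordered pairs $j\neq k$, not merely those involving the learner as in the star-graph case.

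Next I would substitute the decomposition from Proposition \ref{prop:affinity_weight_representation}, namely $w_{jk}=\alpha_{jk}+\beta_{jk}$ with $R_j(a_t^j\vert s_t)=\sum_{(j,k)\in\mathcal{E}_t}\beta_{jk}(a_t^j\vert s_t)$, to rewrite the symmetry identity as
$$\alpha_{jk}(a_t^j,a_t^k\vert s_t)+\beta_{jk}(a_t^j\vert s_t)=\alpha_{kj}(a_t^k,a_t^j\vert s_t)+\beta_{kj}(a_t^k\vert s_t).$$
Because $\alpha_{jk}$ depends jointly on $(a_t^j,a_t^k)$ while $\beta_{jk}$ depends only on $a_t^j$, I would match the two sides for all joint actions by fixing one argument and varying the other to isolate the $\beta$-component from $\alpha$, yielding the split $\alpha_{jk}(a_t^j,a_t^k\vert s_t)=\alpha_{kj}(a_t^k,a_t^j\vert s_t)$ and $\beta_{jk}(a_t^j\vert s_t)=\beta_{kj}(a_t^k\vert s_t)$, which is the first required constraint.

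Finally, I would exploit the complete-graph bookkeeping exactly as the star-graph proof exploits the single-edge property of each teammate. Every edge $(j,k)$ contributes $\beta_{jk}$ to $R_j$ and a symmetric $\beta_{kj}$ to $R_k$; the one-to-one matching $\beta_{jk}(a_t^j\vert s_t)=\beta_{kj}(a_t^k\vert s_t)$ plays the role that $R_{ij}=R_{ji}=R_j$ plays in the star graph. Aligning the summands of $R_j$ and $R_k$ edge-by-edge across the common adjacency structure of the complete graph then gives $R_j(a_t^j\vert s_t)=R_k(a_t^k\vert s_t)$, establishing the second required constraint.

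The main obstacle is justifying the step that passes from "pairwise symmetry of $\beta$-components" to "equality of the individual rewards $R_j$ and $R_k$": the equation $\beta_{jk}(a_t^j\vert s_t)=\beta_{kj}(a_t^k\vert s_t)$ has its two sides depending on disjoint action variables, so pointwise matching collapses both sides to an action-independent quantity, and one must then appeal to the one-to-one edge-to-reward correspondence implicit in Proposition \ref{prop:affinity_weight_representation} to conclude that the incident-edge sums at any two agents coincide. I would make this argument precise by first fixing $a_t^k$ to solve for $\beta_{jk}$ in closed form, then summing the resulting identities over all $k\neq j$ to obtain $R_j$, and finally repeating the argument with the roles of $j$ and $k$ swapped to recover the claimed equality.
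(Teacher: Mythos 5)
Your proposal follows essentially the same route as the paper's own proof: recall that symmetry means $w_{jk}(a_t^j,a_t^k\vert s_t)=w_{kj}(a_t^k,a_t^j\vert s_t)$ on every edge, substitute the decomposition $w_{jk}=\alpha_{jk}+\beta_{jk}$ with $R_j(a_t^j\vert s_t)=\sum_{k\in -j}\beta_{jk}(a_t^j\vert s_t)$, match the $\alpha$- and $\beta$-components to get $\alpha_{jk}=\alpha_{kj}$ and $\beta_{jk}(a_t^j\vert s_t)=\beta_{kj}(a_t^k\vert s_t)$, and sum the latter over the common neighbourhood of the complete graph to conclude $R_j=R_k$. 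The only difference is that you explicitly flag (and propose to patch) the fact that the two sides of $\beta_{jk}(a_t^j\vert s_t)=\beta_{kj}(a_t^k\vert s_t)$ depend on disjoint action variables — a point the paper's proof passes over by simply invoking the existence of a one-to-one correspondence — so your version is, if anything, slightly more careful on that step.
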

        \endgroup
        \begin{proof}
            Recall that a symmetric dynamic affinity graph $G_{t} = \langle \mathcal{N}_{t}, \mathcal{E}_{t} \rangle$ needs to satisfy the following condition that $w_{jk}(a_{t}^{j}, a_{t}^{k} \vert s_{t}) = w_{kj}(a_{t}^{k}, a_{t}^{j} \vert s_{t})$, for all $(j, k) \in \mathcal{E}_{t}$, for any state $s_{t} \in \mathcal{S}$ and any joint action $a_{t} \in \mathcal{A}_{\scriptscriptstyle \mathcal{N}_{t}}$. In the dynamic affinity graph as a complete graph, the affinity weights of any $(j, k) \in \mathcal{E}_{t}$ can be represented as follows:
            \begin{equation*}
                w_{jk}(a_{t}^{j}, a_{t}^{k} \vert s_{t}) = \alpha_{jk}(a_{t}^{j}, a_{t}^{k} \vert s_{t}) + \beta_{jk}(a_{t}^{j} \vert s_{t}), \text{where }R_{j}(a_{t}^{j} \vert s_{t}) = \sum_{k \in -j} \beta_{jk}(a_{t}^{j} \vert s_{t}).
            \end{equation*}
            It is not difficult to observe that for all $\forall s_{t} \in \mathcal{S}$ and $a_{t} \in \mathcal{A}_{\scriptscriptstyle \mathcal{N}_{t}}$ the following conditions that
            \begin{equation*}
                \begin{split}
                    \alpha_{jk}(a_{t}^{j}, a_{t}^{k} \vert s_{t}) = \alpha_{kj}(a_{t}^{k}, a_{t}^{j} \vert s_{t}), \\
                    R_{j}(a_{t}^{j} \vert s_{t}) = R_{k}(a_{t}^{k} \vert s_{t}),
                \end{split}
            \end{equation*}
            are necessary for that the complete dynamic affinity graph is symmetric. In more details, that $R_{j}(a_{t}^{j} \vert s_{t}) = \sum_{k \in -j} \beta_{jk}(a_{t}^{j} \vert s_{t}) = \sum_{j \in -k} \beta_{kj}(a_{t}^{k} \vert s_{t}) = R_{k}(a_{t}^{k} \vert s_{t})$ is a necessary condition for the existence of the one-to-one correspondence that $\beta_{jk}(a_{t}^{j} \vert s_{t}) = \beta_{kj}(a_{t}^{k} \vert s_{t})$.
        \end{proof}
        
\subsection{The Proof of Theorem \ref{thm:open_team_ad_hoc_bellman_optimality}}
\label{subsec:proof_of_theorem_thm:open_team_ad_hoc_bellman_optimality}
        \begingroup
        \def\thetheorem{\ref{thm:open_team_ad_hoc_bellman_optimality}}
            \begin{theorem}
                Under Assumption \ref{assm:agent_leaves_env} and an arbitrary learner's deterministic stationary policy $\pi^{i}$, the Bellman equation for the OSB-CAG with DVSC as a solution concept is expressed as follows: $Q^{\pi^{i}}(s_{t}, a_{t}) = R(s_{t}, a_{t}) + \gamma \mathbb{E}_{{\scriptscriptstyle \mathcal{N}_{t+1}}, s_{t+1} \sim P_{O}} \Big[ 
\mathbb{E}_{\substack{\theta_{t+1} \sim P_{E}, \ a_{t+1} \sim \pi_{t+1}}} \big[ Q^{\pi^{i}}(s_{t+1}, a_{t+1}) \big] \Big]$.
            \end{theorem}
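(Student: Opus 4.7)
The plan is to derive this as a standard Bellman consistency equation by splitting the infinite horizon return at timestep $t$ and applying the tower property of conditional expectation, using the factorization of the composite transition function from Proposition~\ref{prop:existence_transition_preference} together with Eq.~\eqref{eq:preference_trans_decomposition}. First, I would unfold $Q^{\pi^{i}}(s_t, a_t) = \mathbb{E}_{\pi^{i}}\bigl[\sum_{\tau=t}^{\infty} \gamma^{\tau-t} R(s_{\tau}, a_{\tau})\bigr]$, where the expectation is over the joint process of $(\mathcal{N}_{\tau}, s_{\tau}, \theta_{\tau}, a_{\tau})_{\tau \geq t}$ induced by $P_{O}$, $P_{E}$, the teammates' stationary type-indexed policies $\pi_{\tau}^{-i}$ (Assumption~\ref{assm:agent_type_fixed_policy}), and the learner's stationary deterministic policy $\pi^{i}$. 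Pulling off the $\tau = t$ term yields $R(s_t, a_t) + \gamma\,\mathbb{E}_{\pi^{i}}\bigl[\sum_{\tau=t+1}^{\infty} \gamma^{\tau-t-1} R(s_{\tau}, a_{\tau})\bigr]$.

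Second, I would apply the tower property to the residual expectation in the order dictated by Eq.~\eqref{eq:preference_trans_decomposition}: condition first on $(\mathcal{N}_{t+1}, s_{t+1})$ drawn from $P_{O}(\cdot \mid s_t, a_t, \theta_t)$, then on $\theta_{t+1} \sim P_{E}(\cdot \mid s_{t+1}, \mathcal{N}_{t+1})$, and finally on $a_{t+1} \sim \pi_{t+1}$. Because $\theta_t$ influences the successor only through $(\mathcal{N}_{t+1}, s_{t+1})$ via $P_{O}$, the outer expectation cleanly factors as $\mathbb{E}_{\mathcal{N}_{t+1}, s_{t+1} \sim P_{O}}\bigl[\mathbb{E}_{\theta_{t+1} \sim P_{E},\, a_{t+1} \sim \pi_{t+1}}[\,\cdot\,]\bigr]$, matching exactly the nesting that appears in the statement. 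Inside, stationarity of $\pi^{i}$ together with Assumption~\ref{assm:agent_type_fixed_policy} gives the Markov property, so $\mathbb{E}\bigl[\sum_{\tau=t+1}^{\infty} \gamma^{\tau-t-1} R(s_{\tau}, a_{\tau}) \,\big|\, s_{t+1}, a_{t+1}\bigr] = Q^{\pi^{i}}(s_{t+1}, a_{t+1})$, which is the required recursive identification.

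Assumption~\ref{assm:agent_leaves_env} enters exactly where agents depart between timesteps $t$ and $t{+}1$. Combined with the regularity condition $\mathcal{N}_{t+1} \subseteq \mathcal{N}_{t}$ noted after Eq.~\eqref{eq:open_team_ad_hoc_bellman_optimality}, it ensures that departed agents contribute zero to both $R(s_{\tau}, a_{\tau}) = \sum_{j \in \mathcal{N}_{\tau}} R_{j}(a_{\tau} \mid s_{\tau})$ and to every $Q_{jk}^{\pi^{i}}$ term at later timesteps, exactly as used in Lemma~\ref{lemm:basic_agent_truncated_episode} and Theorem~\ref{thm:joint_q_representation}. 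This makes the restriction of the running reward to $\mathcal{N}_{\tau}$ lossless with respect to the full social welfare and allows the two expressions of $Q^{\pi^{i}}$ on either side of the recursion to refer to consistent objects.

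The main obstacle is the careful bookkeeping induced by the variable team size: the random object $a_{t+1}$ only makes sense once $\mathcal{N}_{t+1}$ (and implicitly $\theta_{t+1}$, through the teammate policies) has been instantiated, so the three conditionings have to appear in the exact sequence $P_{O} \to P_{E} \to \pi_{t+1}$, and $Q^{\pi^{i}}(s_{t+1}, a_{t+1})$ on the right-hand side must be understood as the value on the measurable event where that team and those types have been realized. Once this measurability discipline is respected, the rest of the argument is a direct application of the tower property and the Markov property, so the bulk of the work lies in the setup rather than in any nontrivial calculation.
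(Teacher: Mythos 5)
Your proposal is correct, and it reaches the same equation by a recognizably different route from the paper. The paper's own proof first invokes Theorem~\ref{thm:joint_q_representation} to write $Q^{\pi^{i}}(s_{t}, a_{t}) = \sum_{j \in \mathcal{N}_{t}} Q^{\pi^{i}}_{j}(a_{t} \vert s_{t})$, expands each agent's preference Q-value with its own one-step Bellman equation, and then sums: the crucial step is that for a departed agent $j \in \mathcal{N}_{t} \backslash \mathcal{N}_{t+1}$ the term $Q^{\pi^{i}}_{j}(a_{t+1} \vert s_{t+1})$ vanishes by Assumption~\ref{assm:agent_leaves_env}, so the sum over $\mathcal{N}_{t}$ collapses to a sum over $\mathcal{N}_{t+1}$ and reassembles into $Q^{\pi^{i}}(s_{t+1}, a_{t+1})$. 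You instead work entirely at the aggregate level: peel off the $\tau = t$ term of the social-welfare return, and apply the tower property in the order $P_{O} \to P_{E} \to \pi_{t+1}$ dictated by the factorization in Eq.~\eqref{eq:preference_trans_decomposition}, with Assumption~\ref{assm:agent_leaves_env} entering only to reconcile the change of summation index from $\mathcal{N}_{t}$ to $\mathcal{N}_{t+1}$ in the reward. Your version is shorter and makes the measurability/ordering issues (which the paper leaves implicit) explicit, but it leans on the joint Q-value being well defined as an infinite-horizon return over a team-varying process; the paper's per-agent route earns that for free from Lemma~\ref{lemm:basic_agent_truncated_episode} and makes visible exactly which per-agent terms die when agents leave, which is also what powers the singularity remark about $\mathcal{N}_{t} \subset \mathcal{N}_{t+1}$ at the end of the paper's proof. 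Both arguments share the same slight abuse in writing $Q^{\pi^{i}}(s_{t+1}, a_{t+1})$ without displaying its dependence on the realized team and types, and you flag this explicitly, so there is no gap relative to the paper's own standard of rigor.
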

        \endgroup
        \begin{proof}
            We derive Eq.~\eqref{eq:open_team_ad_hoc_bellman_optimality} as follows.
            
            By the result of Theorem \ref{thm:joint_q_representation}, we can represent the joint Q-value under an arbitrary learner's deterministic stationary policy $\pi^{i}$ referred to as $Q^{\pi^{i}}(s_{t}, a_{t})$ as follows:
            \begin{equation}
            \label{eq:condition0}
                Q^{\pi^{i}}(s_{t}, a_{t}) = \sum_{j \in \mathcal{N}_{t}} Q^{\pi^{i}}_{j}(a_{t} \vert s_{t}),
            \end{equation}
            Next, we can expand the preference Q-value of each agent $j \in \mathcal{N}_{t}$ following the fashion of the Bellman equation such that
            \begin{equation}
            \label{eq:single_agent_optimality_equation}
                Q^{\pi^{i}}_{j}(a_{t} \vert s_{t}) = R_{j}(a_{t} \vert s_{t}) + \gamma \mathbb{E}_{{\scriptscriptstyle \mathcal{N}_{t+1}}, s_{t+1} \sim P_{O}} \Big[ \mathbb{E}_{\substack{\theta_{t+1} \sim P_{E},\\ a_{t+1} \sim \pi_{t+1}}} \big[ Q^{\pi^{i}}_{j}(a_{t+1} \vert s_{t+1}) \big] \Big].
            \end{equation}
            Then, we can sum up Eq.~\eqref{eq:single_agent_optimality_equation} for all possible agents belonging to the temporary team $\mathcal{N}_{t}$ and get an equation to evaluate the influence of the learner's policy $\pi^{i}$ to a temporary team $\mathcal{N}_{t}$ such that
            \begin{equation}
            \label{eq:sum_agent_optimality_equation}
                \begin{split}
                    Q^{\pi^{i}}(s_{t}, a_{t}) &= \sum_{j \in \mathcal{N}_{t}} Q^{\pi^{i}}_{j}(a_{t} \vert s_{t}) \\
                    &= \sum_{j \in \mathcal{N}_{t}} R_{j}(a_{t} \vert s_{t}) + \sum_{j \in \mathcal{N}_{t}} \gamma \mathbb{E}_{{\scriptscriptstyle \mathcal{N}_{t+1}}, s_{t+1} \sim P_{O}} \Big[ \mathbb{E}_{\substack{\theta_{t+1} \sim P_{E},\\ a_{t+1} \sim \pi_{t+1}}} \big[ Q^{\pi^{i}}_{j}(a_{t+1} \vert s_{t+1}) \big] \Big] \\
                    &= R(s_{t}, a_{t}) + \underbrace{\sum_{j \in \mathcal{N}_{t+1}} \gamma \mathbb{E}_{{\scriptscriptstyle \mathcal{N}_{t+1}}, s_{t+1} \sim P_{O}} \Big[ \mathbb{E}_{\substack{\theta_{t+1} \sim P_{E},\\ a_{t+1} \sim \pi_{t+1}}} \big[ Q^{\pi^{i}}_{j}(a_{t+1} \vert s_{t+1}) \big] \Big]}_{\text{Since $Q^{\pi^{i}}_{j}(a_{t+1} \vert s_{t+1}) = 0$ for agent $j \in \mathcal{N}_{t} \backslash \mathcal{N}_{t+1}$ by Assumption \ref{assm:agent_leaves_env}.}} \\
                    &= R(s_{t}, a_{t}) + \gamma \mathbb{E}_{{\scriptscriptstyle \mathcal{N}_{t+1}}, s_{t+1} \sim P_{O}} \Big[ \mathbb{E}_{\substack{\theta_{t+1} \sim P_{E},\\ a_{t+1} \sim \pi_{t+1}}} \big[ \sum_{j \in \mathcal{N}_{t+1}} Q^{\pi^{i}}_{j}(a_{t+1} \vert s_{t+1}) \big] \Big] \\
                    &= R(s_{t}, a_{t}) + \gamma \mathbb{E}_{{\scriptscriptstyle \mathcal{N}_{t+1}}, s_{t+1} \sim P_{O}} \Big[ 
 \mathbb{E}_{\substack{\theta_{t+1} \sim P_{E},\\ a_{t+1} \sim \pi_{t+1}}} \big[ Q^{\pi^{i}}(s_{t+1}, a_{t+1}) \big] \Big].
                \end{split}
            \end{equation}
            Note that Eq.~\eqref{eq:sum_agent_optimality_equation} does not hold if $\mathcal{N}_{t} \ \mathlarger{\mathlarger{\subset}} \  \mathcal{N}_{t+1}$, since it is problematic to expand the preference Q-value of an agent $k \in \mathcal{N}_{t+1}$ but $\notin \mathcal{N}_{t}$ at timestep $t$, which can be seen as a singularity of this equation. More specifically, $0 = Q^{\pi^{i}}_{k}(a_{t} \vert s_{t}) = R_{k}(a_{t} \vert s_{t}) + \gamma \mathbb{E}_{{\scriptscriptstyle \mathcal{N}_{t+1}}, s_{t+1} \sim P_{O}} \Big[ \mathbb{E}_{\substack{\theta_{t+1} \sim P_{E},\\ a_{t+1} \sim \pi_{t+1}}} \big[ Q^{\pi^{i}}_{k}(a_{t+1} \vert s_{t+1}) \big] \Big] > 0$ is impossible, given that at least $R_{k}(a_{t'} \vert s_{t'}) > 0$, implying agent $k$'s preference for collaborating with other agents, at a timestep $t' \geq t$.
        \end{proof}
        
\section{Experimental Settings}
\label{sec:experimental_settings}
    We evaluate our proposed CIAO in two existing environments, LBF and Wolfpack, both configured with open team settings \citep{rahman2021towards}. In these settings, teammates are randomly selected to enter the environment and remain for a specified number of timesteps. If a teammate surpasses its allocated lifetime, it is removed from the environment and placed in a re-entry queue with a randomly assigned waiting time. The randomized re-entry queue results in varied compositions of teammates in a temporary team. When the number of agents in the environment does not reach its maximum, agents in the re-entry queue are introduced to the environment. Specifically, in the Wolfpack environment, we uniformly determine the active duration by selecting a value between 25 and 35 timesteps, while the dead duration is uniformly sampled between 15 and 25 timesteps. Conversely, the durations for LBF are somewhat shorter, with the active duration uniformly sampled between 15 and 25 timesteps, and the dead duration between 10 and 20 timesteps.
    
    The teammate policies adhere to the experimental settings used for testing GPL~\citep{rahman2021towards}, which encompass a range of heuristic policies and pre-trained policies. Specifically, for Wolfpack, the teammate set includes the following agents: random agent, greedy agent, greedy probabilistic agent, teammate-aware agents, GNN-Based teammate-aware agents, graph DQN agents, greedy waiting agents, greedy probabilistic waiting agents, and greedy team-aware waiting agents. In the case of LBF, a combination of heuristics and A2C agents is employed as the teammate policy set. For more detailed information about teammate policies, we recommend referring to Appendix B.4 of GPL's paper.
    
    In our investigation of different agent-type sets within LBF experiments (see Appendix \ref{subsec:sensitivity_to_different_teammates}), we deliberately exclude the A2C agent from the original agent-type set, thereby establishing a distinct agent-type subset. It's crucial to acknowledge that the A2C agent provided by GPL is designed for scenarios with a maximum of 5 agents. Tailored to scenarios involving a greater number of agents, specifically up to 9, we undertake the additional step of training an A2C agent tailored to these expanded requirements.

    In our experiments of studying the generalizability of CIAO, we constructed the agent-type sets for training and testing, respectively, for Wolfpack and LBF. The details are shown in Tab.~\ref{tab:variant_agent-type_sets}.

    \begin{table}[htbp]
      \centering
      \caption{Variant agent-type sets for training and testing in experiments for evaluating generalizability of CIAO. The shorthand ``Int'' stands for the scenario where agent-type sets for training have intersection with testing. The shrothand ``Exc'' stands for the scenarios where agent-type sets for training are mutually exclusive to testing.}
      \label{tab:variant_agent-type_sets}
      \scalebox{0.95}{
      \begin{tabular}{ccc}
        \toprule
        \textbf{Scenario Name} & \textbf{Training} & \textbf{Testing} \\
        \midrule
        Wolfpack-Int & \texttt{\shortstack{GreedyPredatorAgent, \\ GreedyProbabilisticAgent, \\ TeammateAwarePredator, \\ DistilledCoopAStarAgent, \\ GraphDQNAgent}} & \texttt{\shortstack{GraphDQNAgent, \\ RandomAgent, \\ GreedyWaitingAgent, \\ 
        GreedyProbabilisticWaitingAgent, \\
                          TeammateAwareWaitingAgent}} \\
        \midrule
        Wolfpack-Exc & \texttt{\shortstack{GreedyPredatorAgent, \\ GreedyProbabilisticAgent, \\
                          TeammateAwarePredator, \\ DistilledCoopAStarAgent, \\ GraphDQNAgent}} & \texttt{\shortstack{RandomAgent, \\ GreedyWaitingAgent, \\ GreedyProbabilisticWaitingAgent, \\
                          TeammateAwareWaitingAgent}} \\
        \midrule
        LBF-Int & \texttt{H8, H7, H6, H5, A2C0} & \texttt{A2C0, H1, H2, H3, H4} \\
        \midrule
        LBF-Exc & \texttt{H8, H7, H6, H5, A2C0} & \texttt{H1, H2, H3, H4} \\
        \bottomrule
      \end{tabular}
      }
    \end{table}
    
    \subsection{Detailed Hyperparameters and Computing Resources}
    \label{subsec:detailed_hyperparameters}
        We summarize the values of the common hyperparameters of algorithms that are used in our experiments, as shown in Tabs.~\ref{tab:lbf} and \ref{tab:wolfpack}. The optimizer we use during training is Adam \citep{kingma2014adam}, with the default hyperparameters except learning rate. All algorithms in experiments are implemented in PyTorch \citep{paszke2019pytorch}.
        \begin{table}[htbp]
          \centering
          \caption{Shared hyperparameters for LBF. Note that the arguments \texttt{intersection\_generalization}, \texttt{exclusion\_generalization} and \texttt{exclude\_A2Cagent} cannot be simultaneously set to be \texttt{True}.}
          \label{tab:lbf}
          \scalebox{0.95}{
          \begin{tabular}{cc}
            \toprule
            \textbf{Hyperparameter} & \textbf{Value} \\
            \midrule
            lr & \texttt{0.00025} \\
            gamma & \texttt{0.99} \\
            max\_num\_steps & \texttt{1000000} \\
            eps\_length & \texttt{200} \\
            update\_frequency & \texttt{4} \\
            saving\_frequency & \texttt{50} \\
            pair\_comp & bmm \\
            num\_envs & \texttt{16} \\
            tau & \texttt{0.001} \\
            eval\_eps & \texttt{5} \\
            weight\_predict & \texttt{1.0} \\
            num\_players\_train & \texttt{3} \\
            \multirow{2}{*}{num\_players\_test} & \texttt{5} for a maximum of 5 agents \\
            & \texttt{9} for a maximum of 9 agents \\
            \multirow{2}{*}{exclude\_A2Cagent} & \texttt{True} for the agent-type set excluding A2C agent \\
            & \texttt{False} for the default agent-type sets \\
            \multirow{2}{*}{intersection\_generalization} & \texttt{True} for the agent-type sets for training and testing are intersected \\
            & \texttt{False} for the default agent-type sets \\
            \multirow{2}{*}{exclusion\_generalization} & \texttt{True} for the agent-type sets for training and testing are mutually exclusive \\
            & \texttt{False} for the default agent-type sets \\
            seed & \texttt{0} \\
            eval\_init\_seed & \texttt{2500} \\
            % Add more rows as needed
            \bottomrule
          \end{tabular}
          }
        \end{table}
    
        \begin{table}[ht!]
          \centering
          \caption{Shared hyperparameters for Wolfpack. Note that the arguments \texttt{intersection\_generalization} and \texttt{exclusion\_generalization} cannot be simultaneously set to be \texttt{True}.}
          \label{tab:wolfpack}
          \scalebox{0.95}{
          \begin{tabular}{cc}
            \toprule
            \textbf{Hyperparameter} & \textbf{Value} \\
            \midrule
            lr & \texttt{0.00025} \\
            gamma & \texttt{0.99} \\
            num\_episodes & \texttt{4000} \\
            update\_frequency & \texttt{4} \\
            saving\_frequency & \texttt{50} \\
            pair\_comp & bmm \\
            num\_envs & \texttt{16} \\
            tau & \texttt{0.001} \\
            eval\_eps & \texttt{5} \\
            weight\_predict & \texttt{1.0} \\
            num\_players\_train & \texttt{3} \\
            \multirow{2}{*}{num\_players\_test} & \texttt{5} for a maximum of 5 agents \\
            & \texttt{9} for a maximum of 9 agents \\
            \multirow{2}{*}{intersection\_generalization} & \texttt{True} for the agent-type sets for training and testing are intersected \\
            & \texttt{False} for the default agent-type sets \\
            \multirow{2}{*}{exclusion\_generalization} & \texttt{True} for the agent-type sets for training and testing are mutually exclusive \\
            & \texttt{False} for the default agent-type sets \\
            seed & \texttt{0} \\
            eval\_init\_seed & \texttt{2500} \\
            close\_penalty & \texttt{0.5} \\
            % Add more rows as needed
            \bottomrule
          \end{tabular}
          }
        \end{table}
    
        Then, we list the exclusive hyperparameters of all algorithms implemented in this work, as shown in Tab.~\ref{tab:exclusive_args}.
        \begin{table}[ht!]
          \centering
          \caption{Exclusive hyperparameters of all algorithms implemented in this paper.}
          \label{tab:exclusive_args}
          \scalebox{0.95}{
          \begin{tabular}{ccccc}
            \toprule
            \textbf{Algorithm} & \textbf{weight\_regularizer} & \textbf{graph} & \textbf{pair\_range} & \textbf{indiv\_range} \\
            \midrule
            GPL & \texttt{0.0} & \texttt{complete} & \texttt{free} & \texttt{free} \\
            CIAO-S & \texttt{0.5} & \texttt{star} & \texttt{pos} & \texttt{pos} \\
            CIAO-S-NP & \texttt{0.5} & \texttt{star} & \texttt{neg} & \texttt{pos} \\
            CIAO-S-FI & \texttt{0.5} & \texttt{star} & \texttt{pos} & \texttt{free} \\
            CIAO-S-ZI & \texttt{0.5} & \texttt{star} & \texttt{pos} & \texttt{zero} \\
            CIAO-S-NI & \texttt{0.5} & \texttt{star} & \texttt{pos} & \texttt{neg} \\
            CIAO-C & \texttt{0.5} & \texttt{complete} & \texttt{pos} & \texttt{pos} \\
            CIAO-C-NP & \texttt{0.5} & \texttt{complete} & \texttt{neg} & \texttt{pos} \\
            CIAO-C-FI & \texttt{0.5} & \texttt{complete} & \texttt{pos} & \texttt{free} \\
            CIAO-C-ZI & \texttt{0.5} & \texttt{complete} & \texttt{pos} & \texttt{zero} \\
            CIAO-C-NI & \texttt{0.5} & \texttt{complete} & \texttt{pos} & \texttt{neg} \\
            % Add more rows as needed
            \bottomrule
          \end{tabular}
          }
        \end{table}
        All experiments have been run on Xeon Gold 6230 with 30 CPU cores and 30 GB primary memory. An experiment conducted on Wolfpack requires approximately 11 hours, whereas an experiment on LBF typically takes around 12 hours.
    
    \section{Additional Experimental Results}
    \label{sec:additional_experimental_results}
        % More video demos about trajectory visualization and preference Q-value analysis are available at our project website~\footnote{\url{https://sites.google.com/view/cagpl/}}. 
        \subsection{Additional Evaluation on Small Number of Agents}
        \label{subsec:small_number_of_agents}
            \begin{figure}[ht!]
                \centering
                \begin{subfigure}[b]{0.33\textwidth}
                    \centering
                    \includegraphics[width=\textwidth]{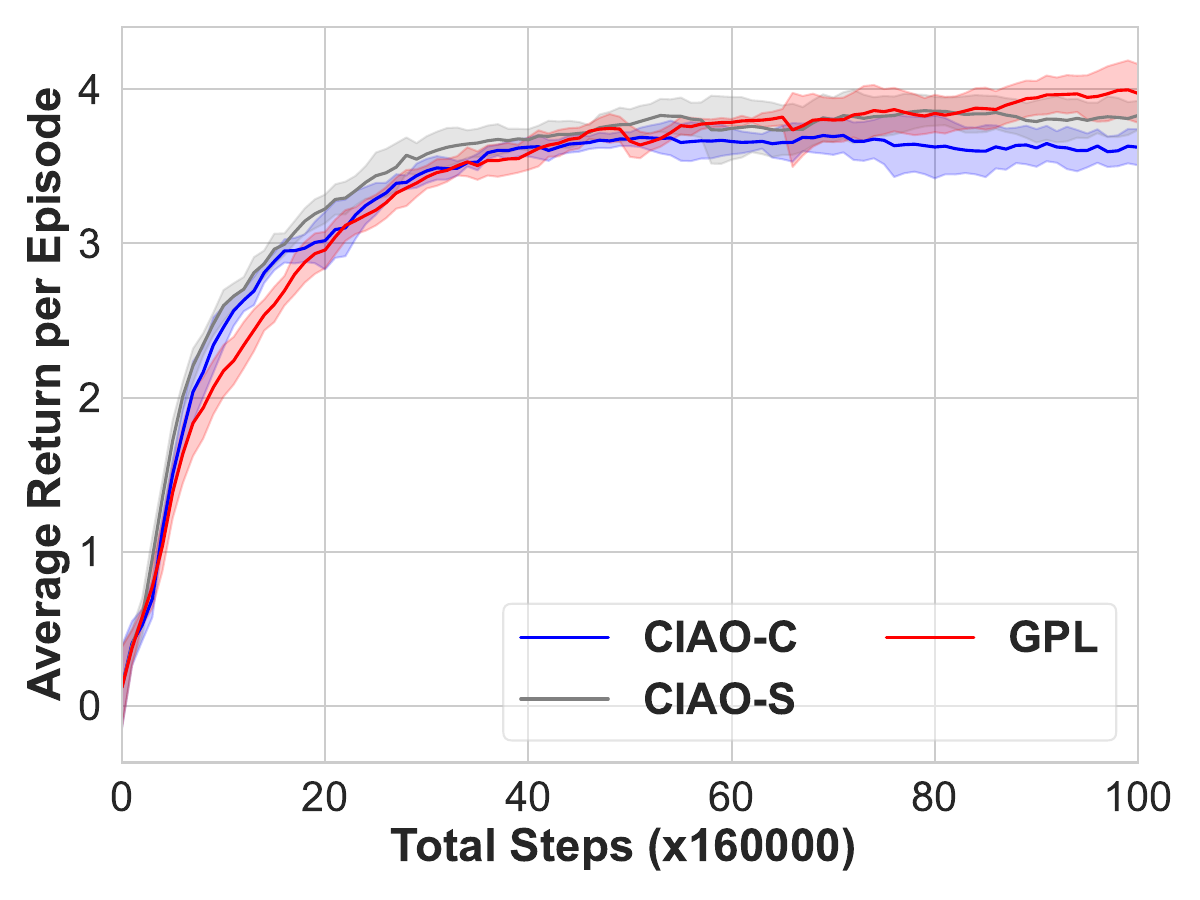}
                    \caption{LBF including A2C agent.}
                \end{subfigure}
                \hfill
                \begin{subfigure}[b]{0.33\textwidth}
                    \centering
                    \includegraphics[width=\textwidth]{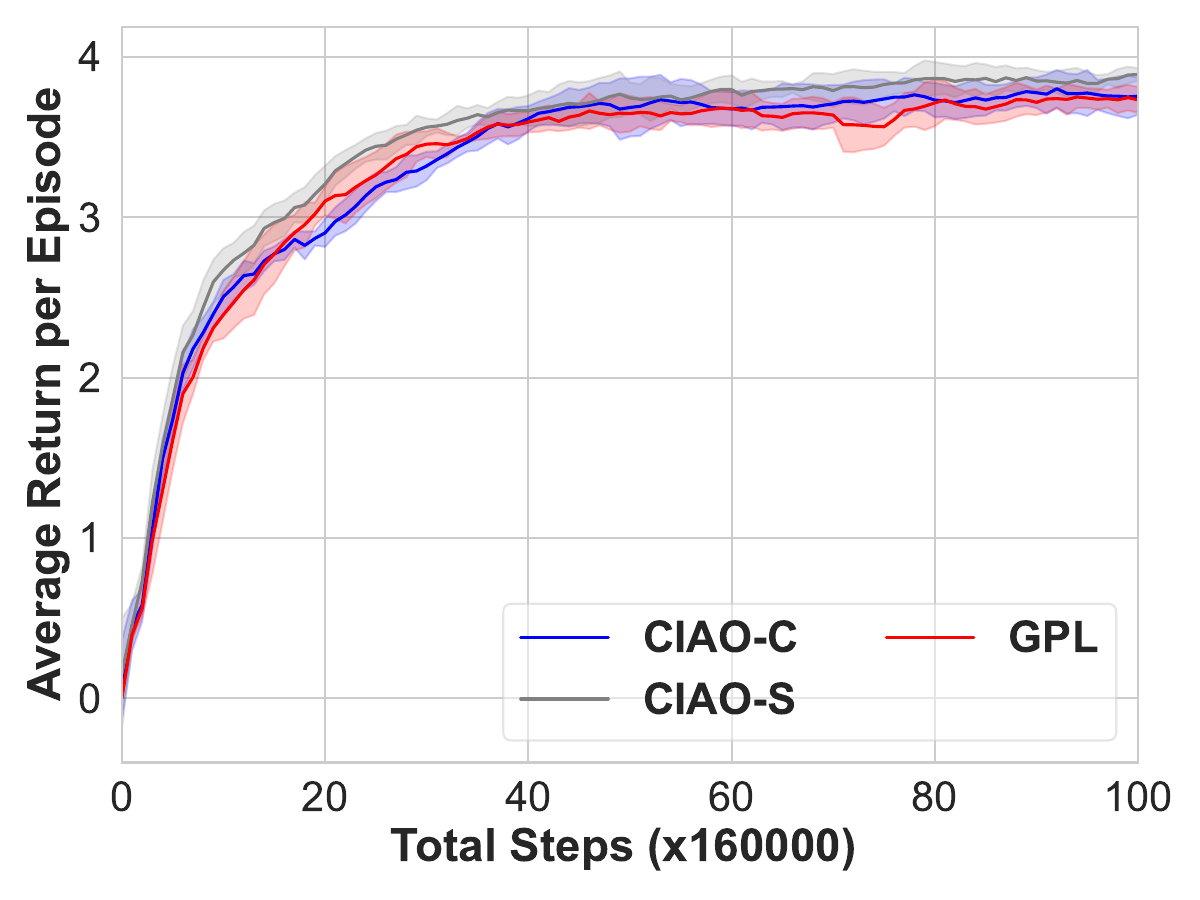}
                    \caption{LBF excluding A2C agent.}
                \end{subfigure}
                \hfill
                \begin{subfigure}[b]{0.33\textwidth}
                    \centering
                    \includegraphics[width=\textwidth]{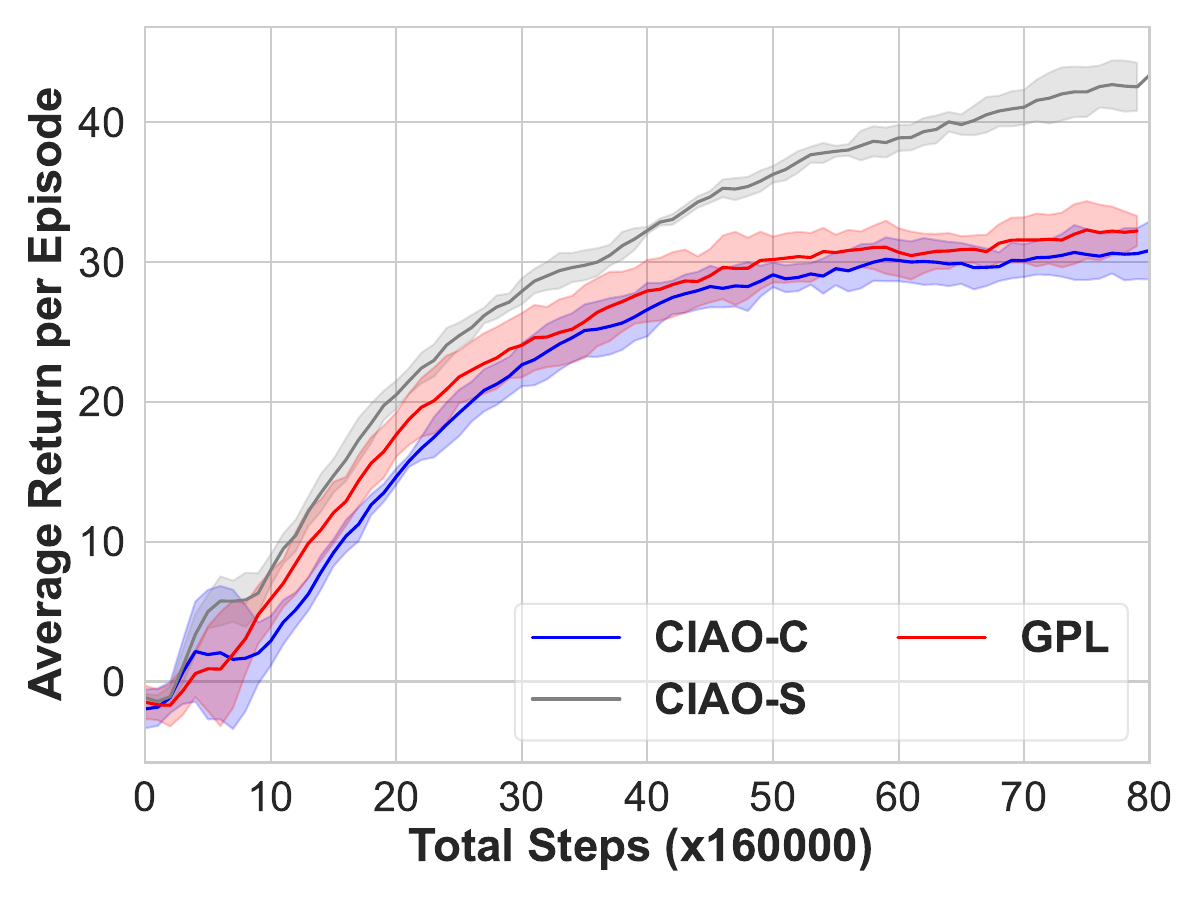}
                    \caption{Wolfpack.}
                \end{subfigure}
            \caption{Comparison between CIAO and GPL in evaluation, across different scenarios of a maximum of 3 agents.}
            \label{fig:test-3-agents}
            \end{figure}
            % We also show the comparison of performances between CIAO and GPL across different scenarios of a maximum of 3 agents in evaluation in Fig.~\ref{fig:test-3-agents}. It can be observed that the performances are similar on LBF, while CIAO-S outperforms the other two algorithms with a large margin on Wolfpack. We can only conclude that the star graph structure is more suitable for Wolfpack. The possible reason is that when the number of agents is small in Wolfpack, it is less effective to convey the learner's ``instructions'' through one teammate to another one (i.e. some action conducted by the learner that would let one teammate to influence another teammate). This result is consistent with Fig.~\ref{fig:wolfpack-test-9-normal} that the large number of agents necessitates conveying the learner's ``instruction'' through one teammate to another one.
            We present a performance comparison between CIAO and GPL across various scenarios involving a maximum of 3 agents, as illustrated in Fig.~\ref{fig:test-3-agents}. The results indicate comparable performances on LBF, while CIAO-S significantly outperforms the other algorithms in the Wolfpack scenario. This observation leads to the conclusion that the star graph structure is better suited for Wolfpack. The rationale behind this outcome is that, in instances with a small number of agents in Wolfpack, conveying the learner's 'instructions' through one teammate to another is less effective. This contrasts with the scenario depicted in Fig.~\ref{fig:wolfpack-test-9-normal}, where a larger number of agents necessitates transmitting the learner's instructions through an intermediary teammate. The consistency of these findings reinforces the argument for the star graph structure's superiority in Wolfpack scenarios.

        \subsection{LBF with Agent-Type Sets Excluding A2C Agent}
        \label{subsec:sensitivity_to_different_teammates}
            \begin{figure}[ht!]
            \centering
                \begin{subfigure}[b]{0.45\textwidth}
                \centering
                    \includegraphics[width=\textwidth]{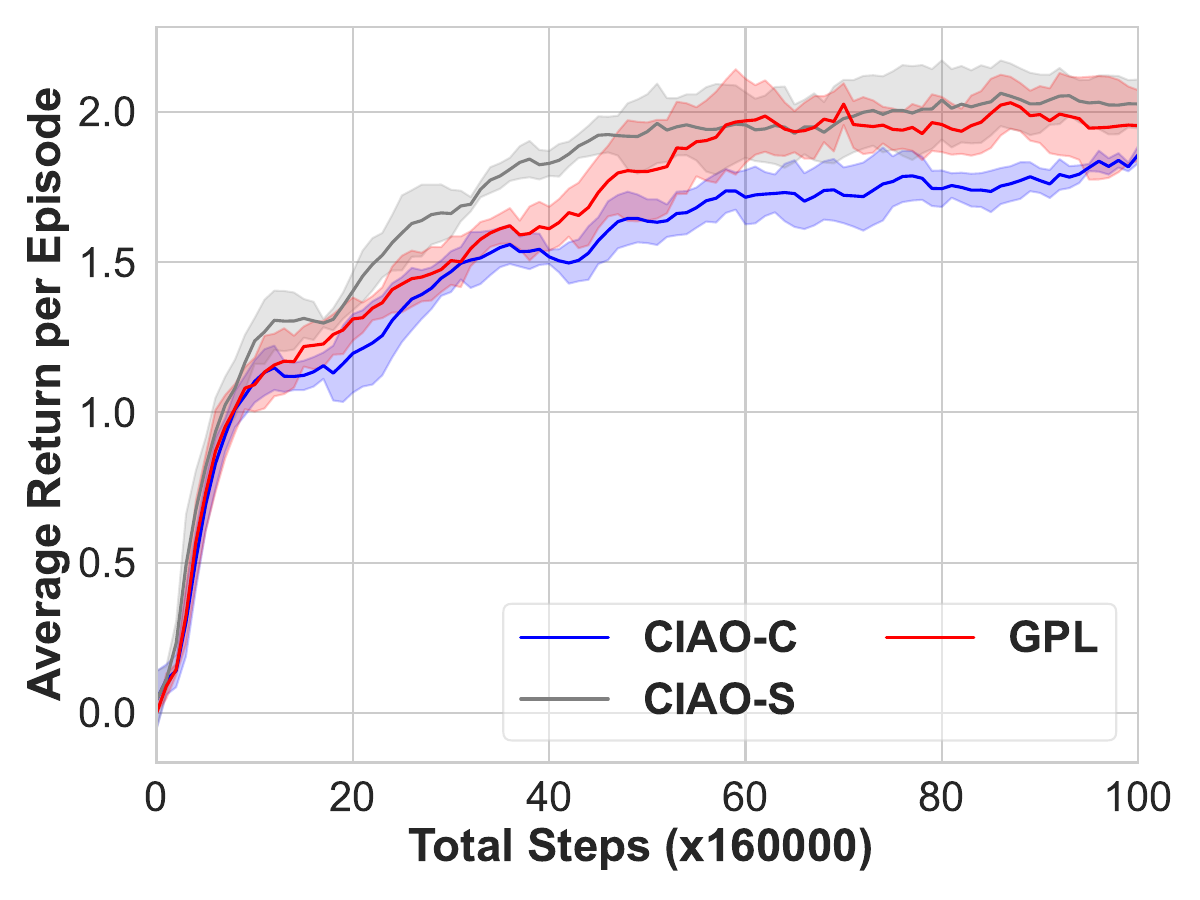}
                    \caption{LBF: max. of 5 agents.}
                    \label{fig:lbf-test-5-noA2C}
                \end{subfigure}
                % \hfill
                \begin{subfigure}[b]{0.45\textwidth}
                \centering
                    \includegraphics[width=\textwidth]{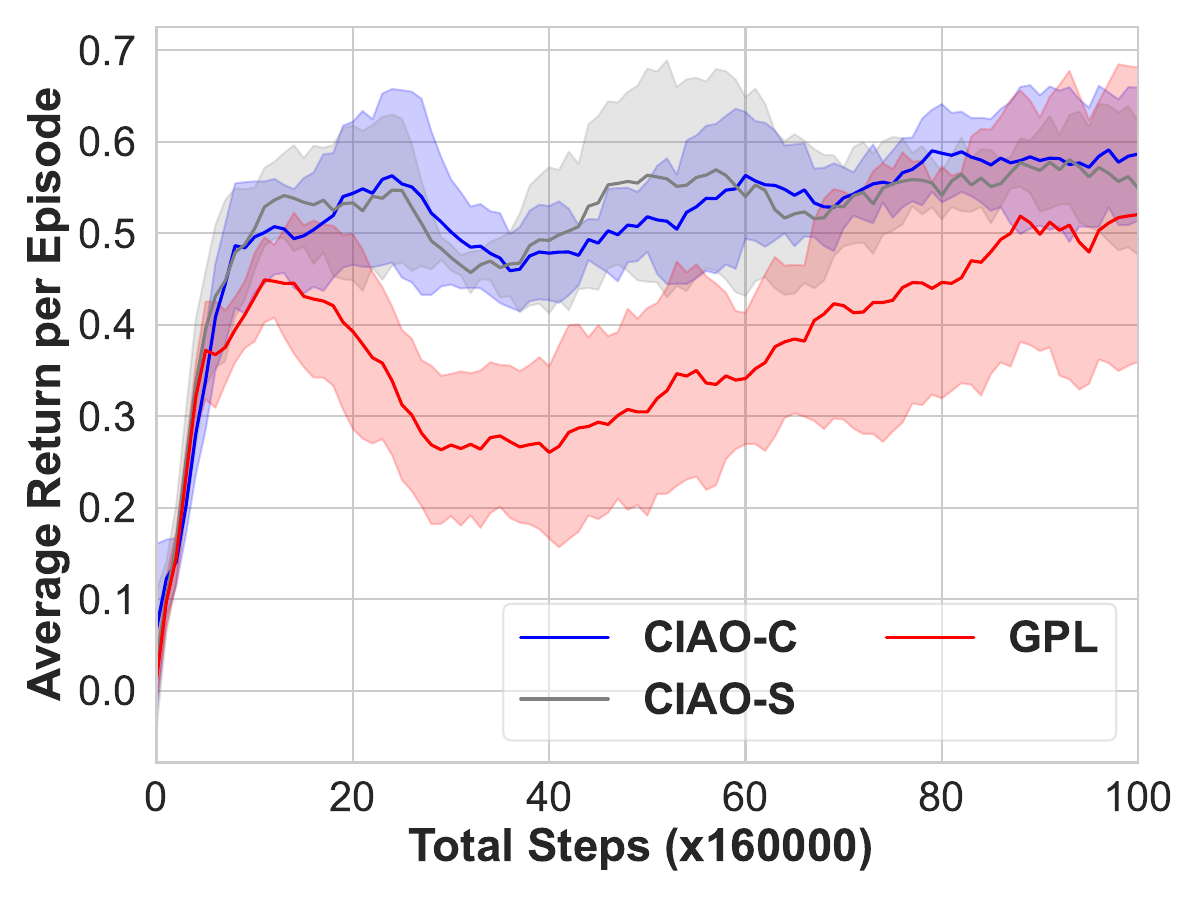}
                    \caption{LBF: max. of 9 agents.}
                    \label{fig:lbf-test-9-noA2C}
                \end{subfigure}
                \caption{Comparison between CIAO and GPL on LBF with the agent-type set excluding the agent-type generated by RL algorithms, in scenarios with a maximum of 5 and 9 agents.}
            \label{fig:LBF_noA2C}
            \end{figure}
            We extend our evaluation of CIAO to LBF considering the agent-type set without the agent-type trained by RL (A2C agent), as depicted in Fig.~\ref{fig:LBF_noA2C}. A comparison between Fig.~\ref{fig:LBF_noA2C} and Fig.~\ref{fig:main_results} leads to the conclusion that CIAO-S exhibits comparatively robust performance across different agent-type sets, whereas CIAO-C demonstrates robustness primarily in scenarios with a larger number of agents. The underlying reasons for CIAO-C's limited robustness in situations with a small number of agents remain a topic for future investigation. Additionally, exploring the correlation between the performance of these algorithms in testing and RL-based agent-types is a valuable topic for further research. 
            % Consequently, while Question 4 is partially addressed, further insights may emerge from continued exploration.
        
        \subsection{Additional Ablation Study on LBF with Agent-Type Sets Excluding A2C Agent}
        \label{subsec:ablation_lbf_no_a2c}
            \begin{figure}[ht!]
                \centering
                \begin{subfigure}[b]{0.33\textwidth}
                    \centering
                    \includegraphics[width=\textwidth]{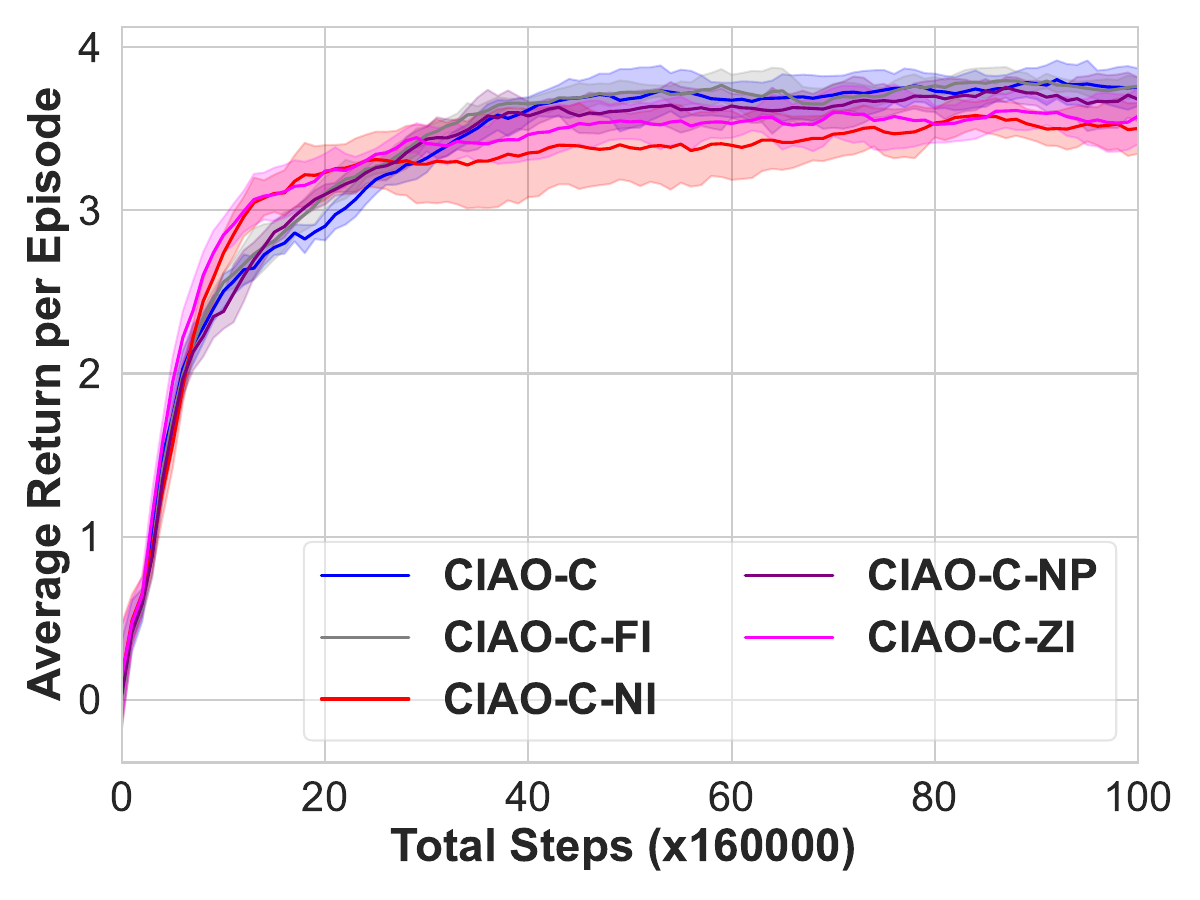}
                    \caption{Maximum of 3 agents.}
                \end{subfigure}
                \hfill
                \begin{subfigure}[b]{0.33\textwidth}
                    \centering
                    \includegraphics[width=\textwidth]{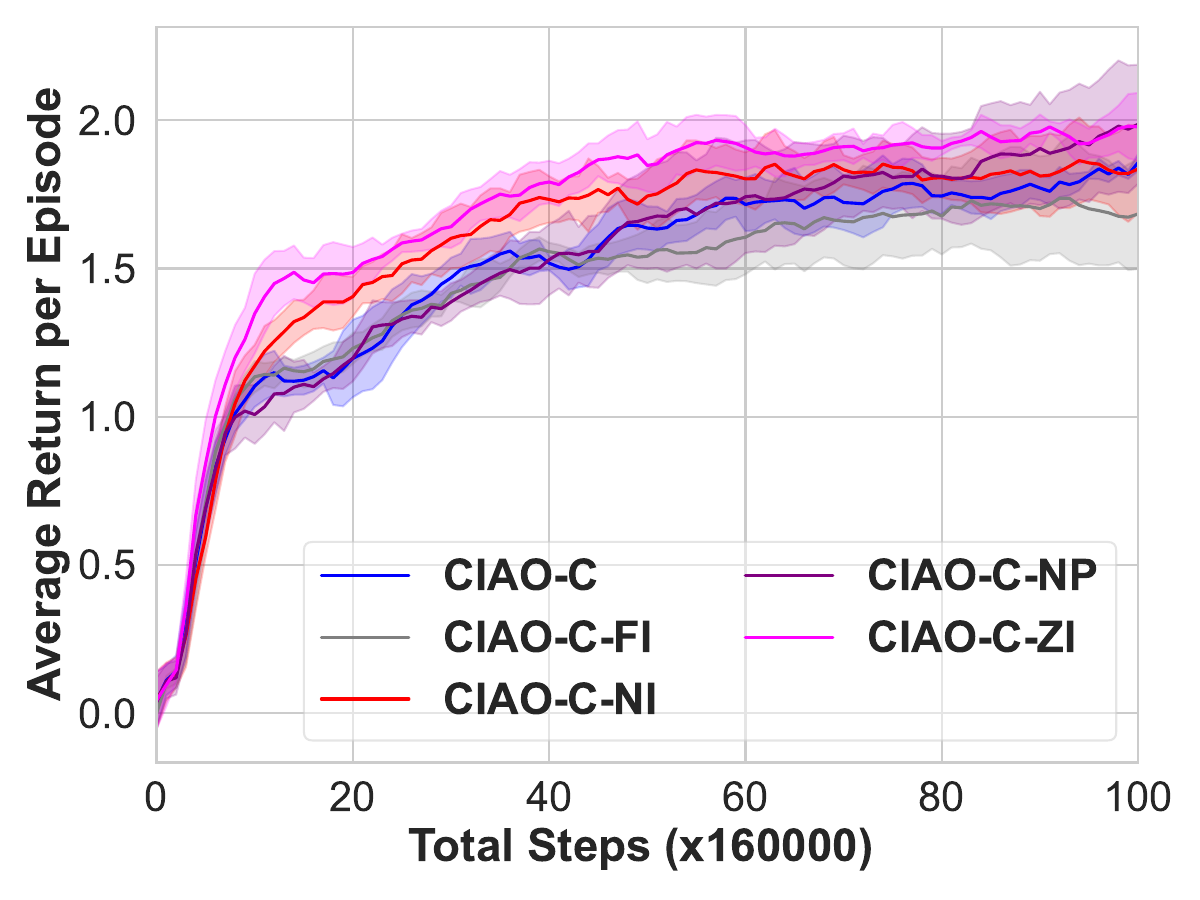}
                    \caption{Maximum of 5 agents.}
                \end{subfigure}
                \hfill
                \begin{subfigure}[b]{0.33\textwidth}
                    \centering
                    \includegraphics[width=\textwidth]{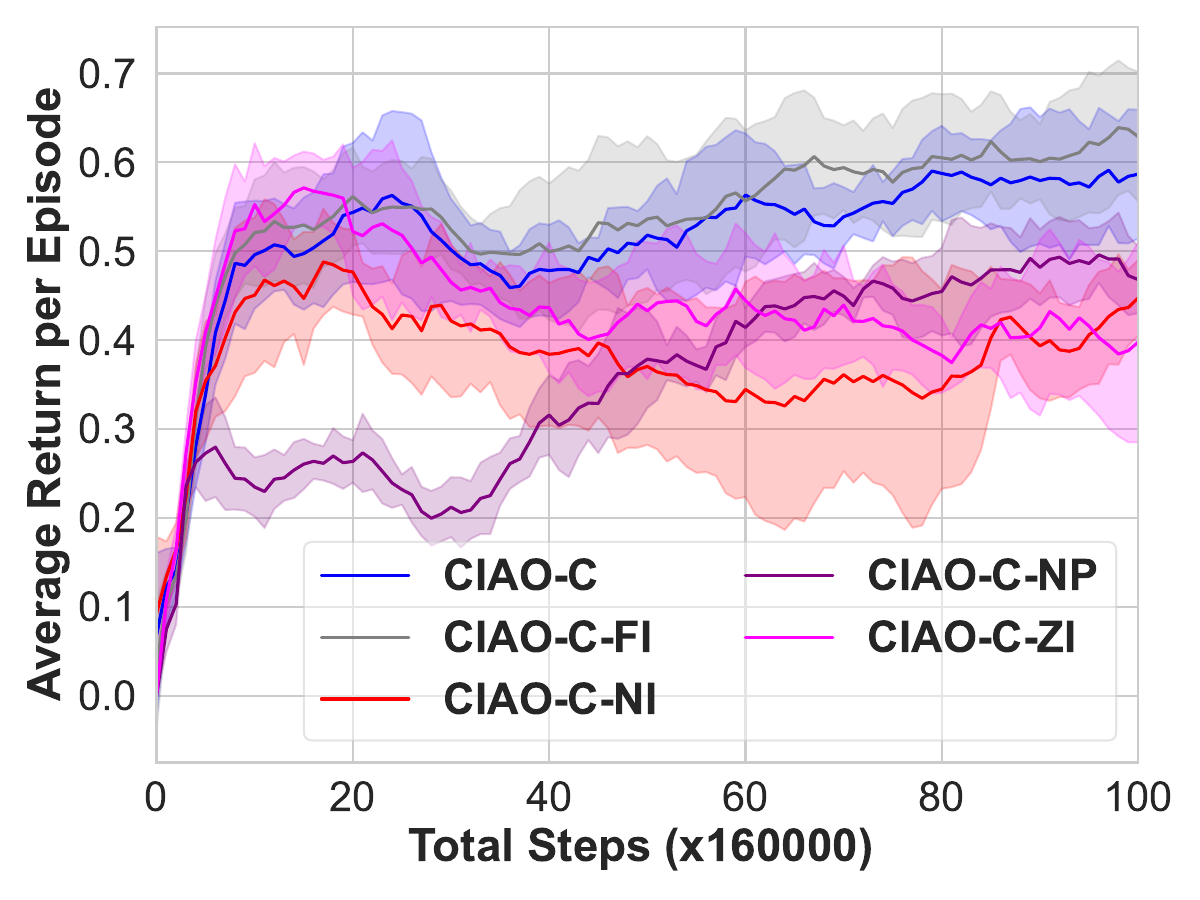}
                    \caption{Maximum of 9 agents.}
                \end{subfigure}
                \caption{Comparison between CIAO-C and its ablations in evaluation, on LBF excluding A2C agent, across scenarios of various maximum numbers of agents as 3, 5 and 9, respectively.}
                \label{fig:ablation-complete-noA2C}
            \end{figure}

            \begin{figure}[ht!]
                \centering
                \begin{subfigure}[b]{0.33\textwidth}
                    \centering
                    \includegraphics[width=\textwidth]{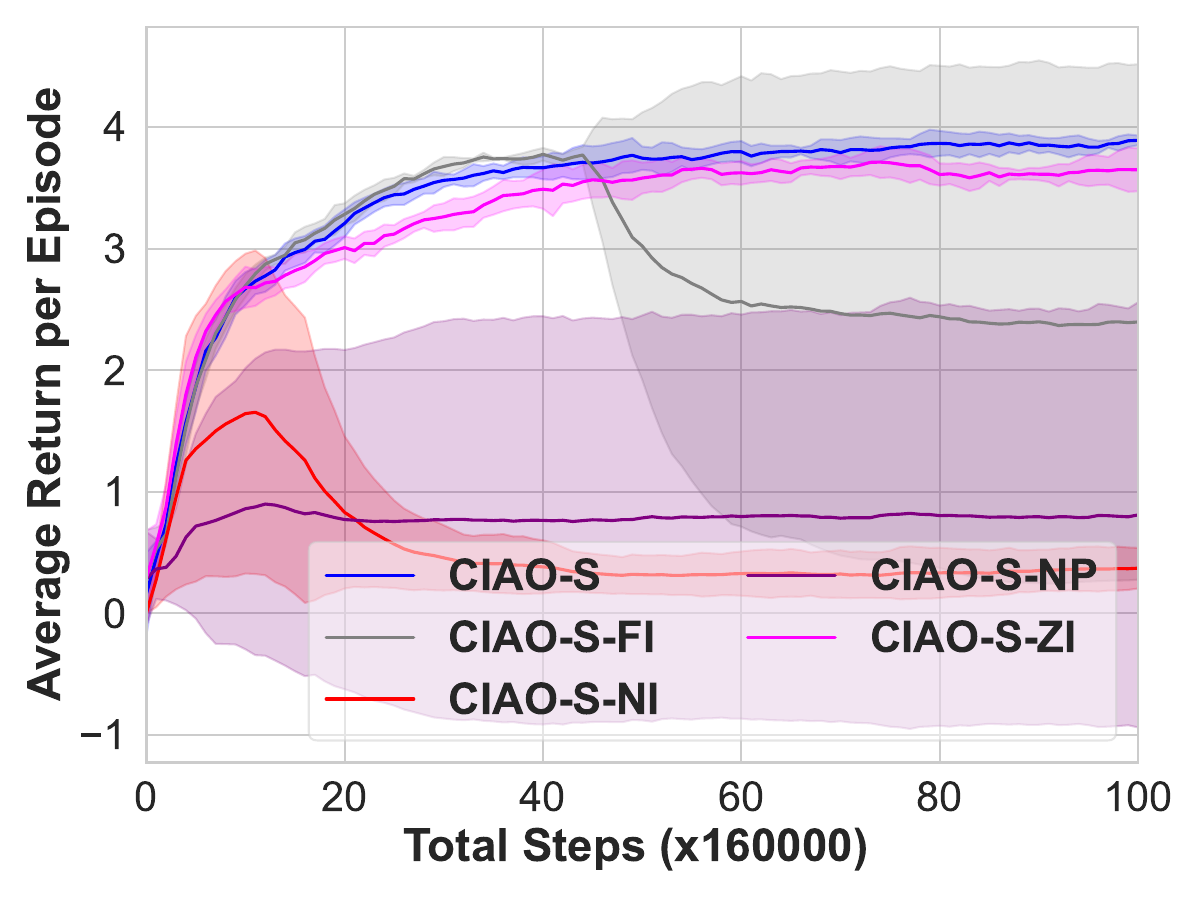}
                    \caption{Maximum of 3 agents.}
                \end{subfigure}
                \hfill
                \begin{subfigure}[b]{0.33\textwidth}
                    \centering
                    \includegraphics[width=\textwidth]{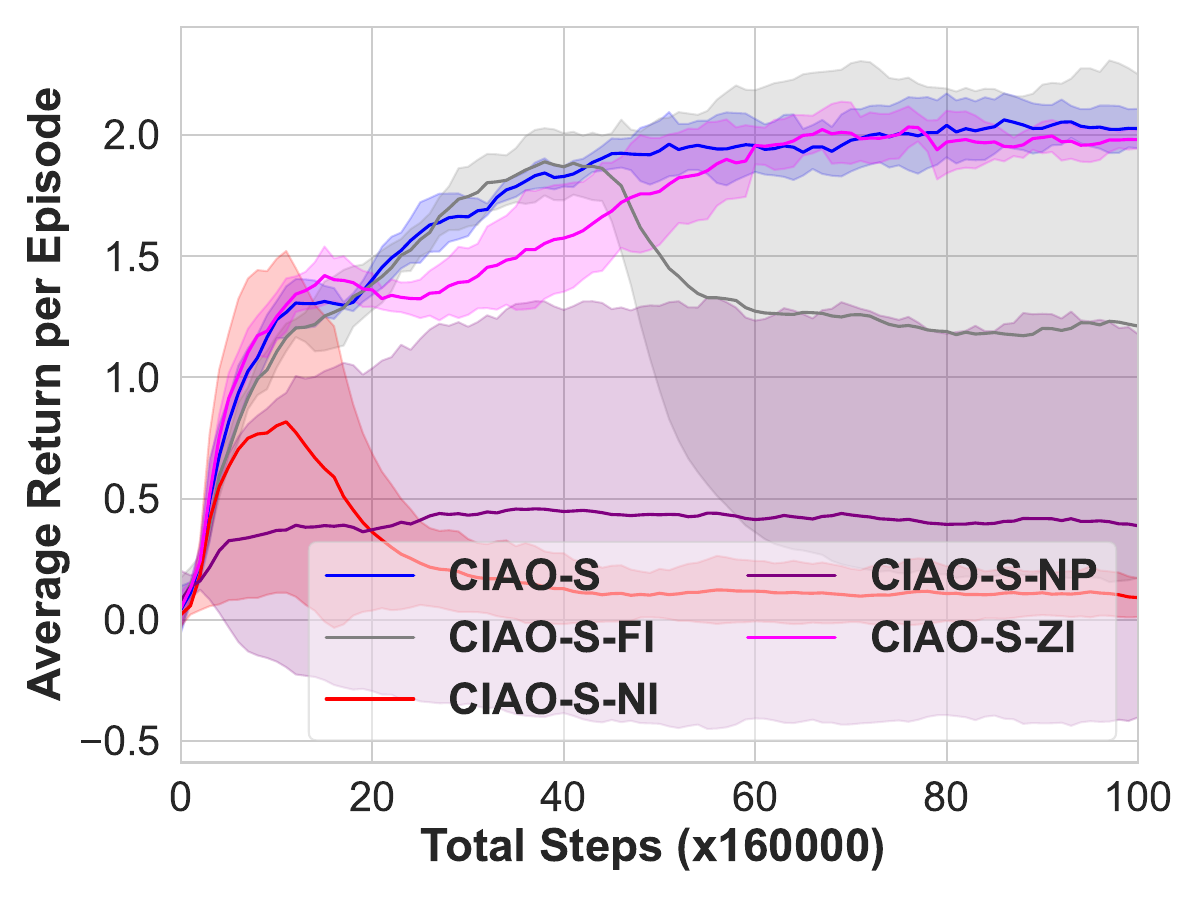}
                    \caption{Maximum of 5 agents.}
                \end{subfigure}
                \hfill
                \begin{subfigure}[b]{0.33\textwidth}
                    \centering
                    \includegraphics[width=\textwidth]{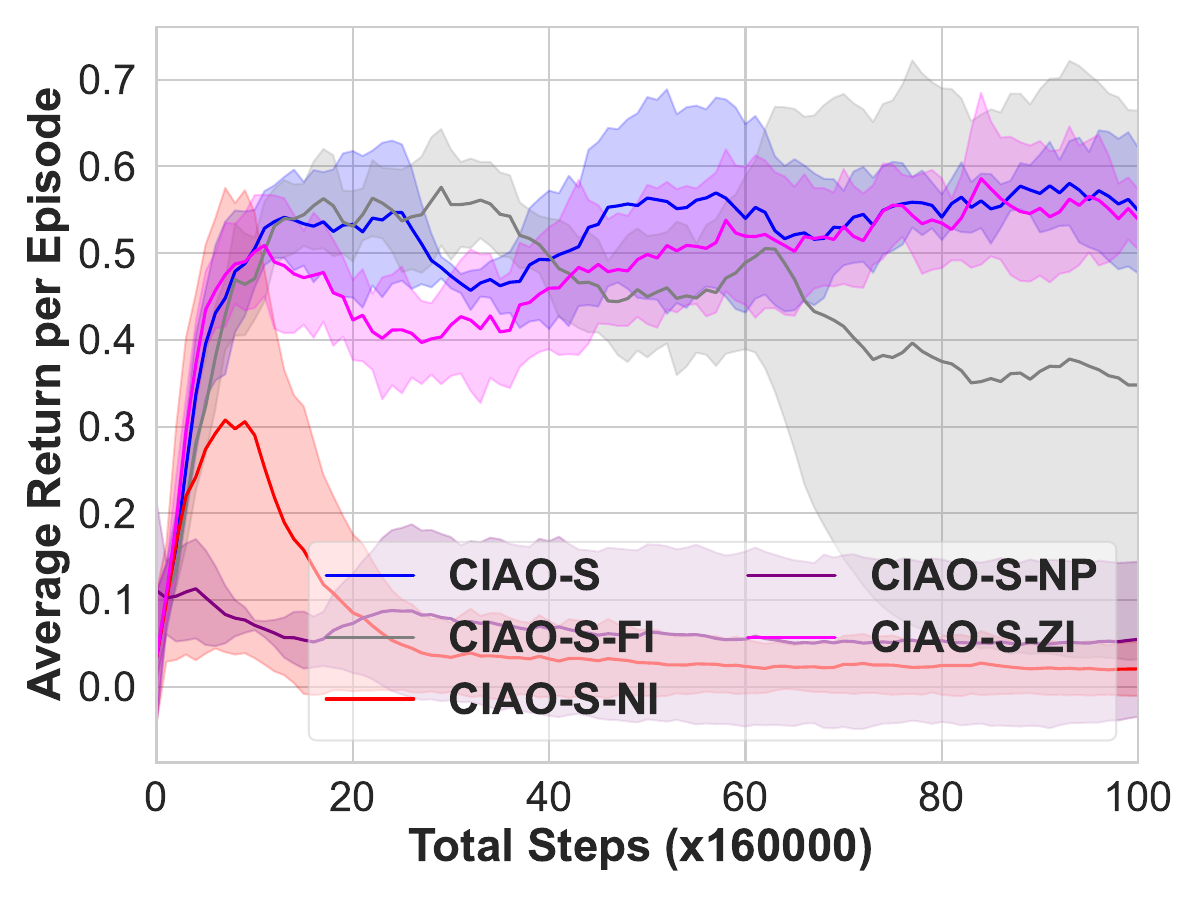}
                    \caption{Maximum of 9 agents.}
                \end{subfigure}
                \caption{Comparison between CIAO-S and its ablations in evaluation, on LBF excluding A2C agent, across scenarios of various maximum numbers of agents.}
                \label{fig:ablation-star-noA2C}
            \end{figure}
            We present a comprehensive performance comparison among CIAO-C, CIAO-S, and their respective ablation variants on LBF, excluding the A2C agent. Figs. \ref{fig:ablation-complete-noA2C} and \ref{fig:ablation-star-noA2C} illustrate the results for CIAO-C and CIAO-S, respectively. In the majority of situations, our hypothesis regarding the non-negative individual utility range is validated. However, we note that the unregularized individual utility exhibits satisfactory performance but is prone to instability. Additionally, our theoretical expectation of a non-negative pairwise utility range is violated for CIAO-C in scenarios involving a maximum of 3 and 5 agents. The root cause of this deviation requires further investigation, suggesting a potential avenue for future research into dynamic affinity graph structures.

        \subsection{Additional Ablation Study on CIAO with No Regularizers}
        \label{subsec:ablation_noReg}
            \begin{figure}[ht!]
                \centering
                \begin{subfigure}[b]{0.33\textwidth}
                    \centering
                    \includegraphics[width=\textwidth]{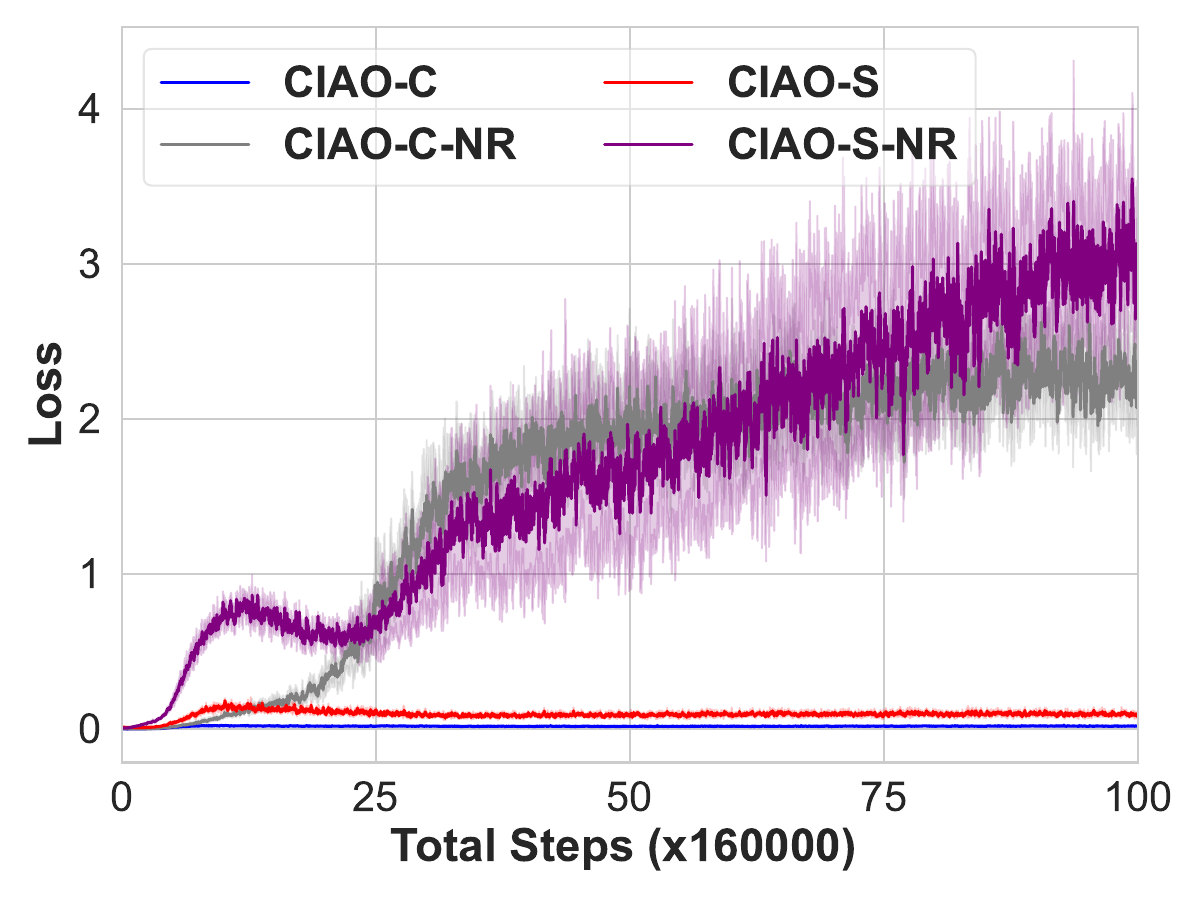}
                    \caption{LBF including A2C agent.}
                \end{subfigure}
                \hfill
                \begin{subfigure}[b]{0.33\textwidth}
                    \centering
                    \includegraphics[width=\textwidth]{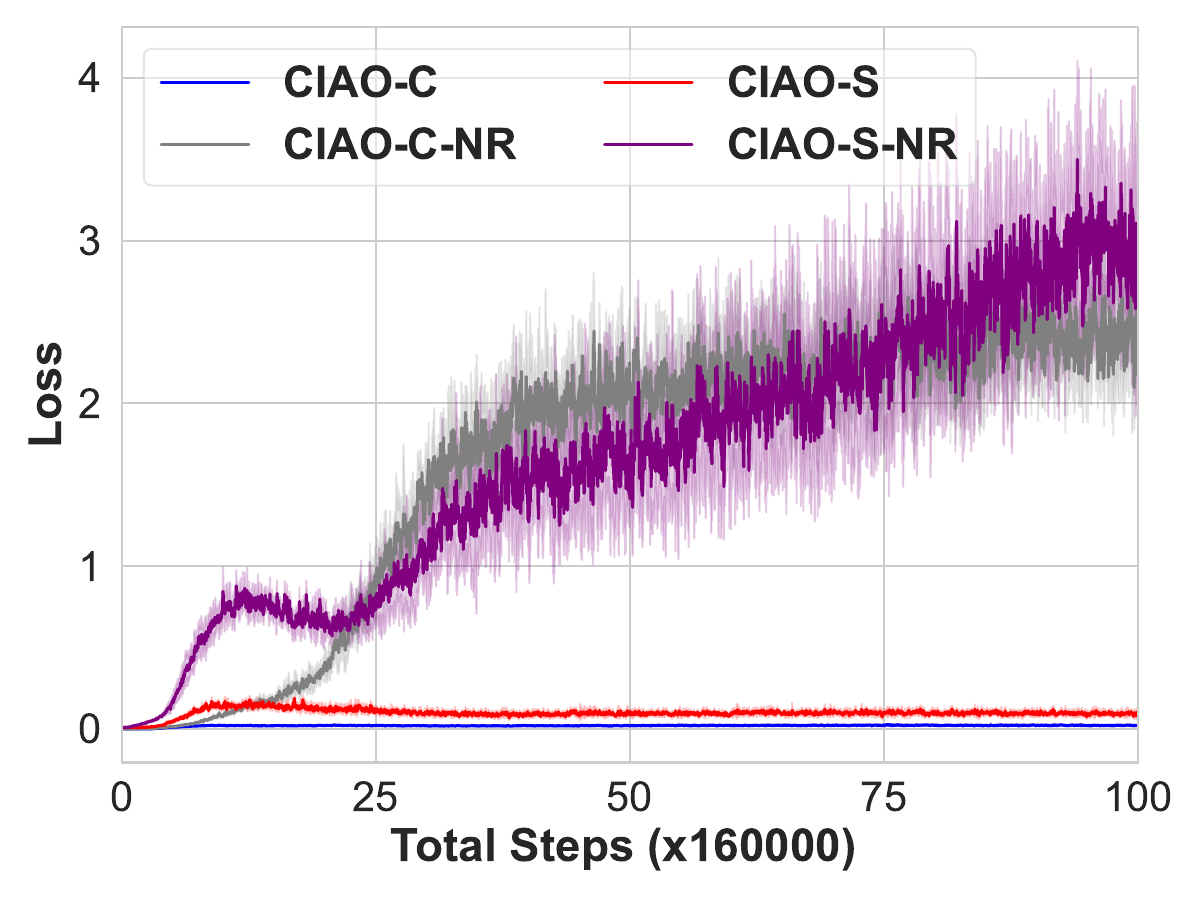}
                    \caption{LBF excluding A2C agent.}
                \end{subfigure}
                \hfill
                \begin{subfigure}[b]{0.33\textwidth}
                    \centering
                    \includegraphics[width=\textwidth]{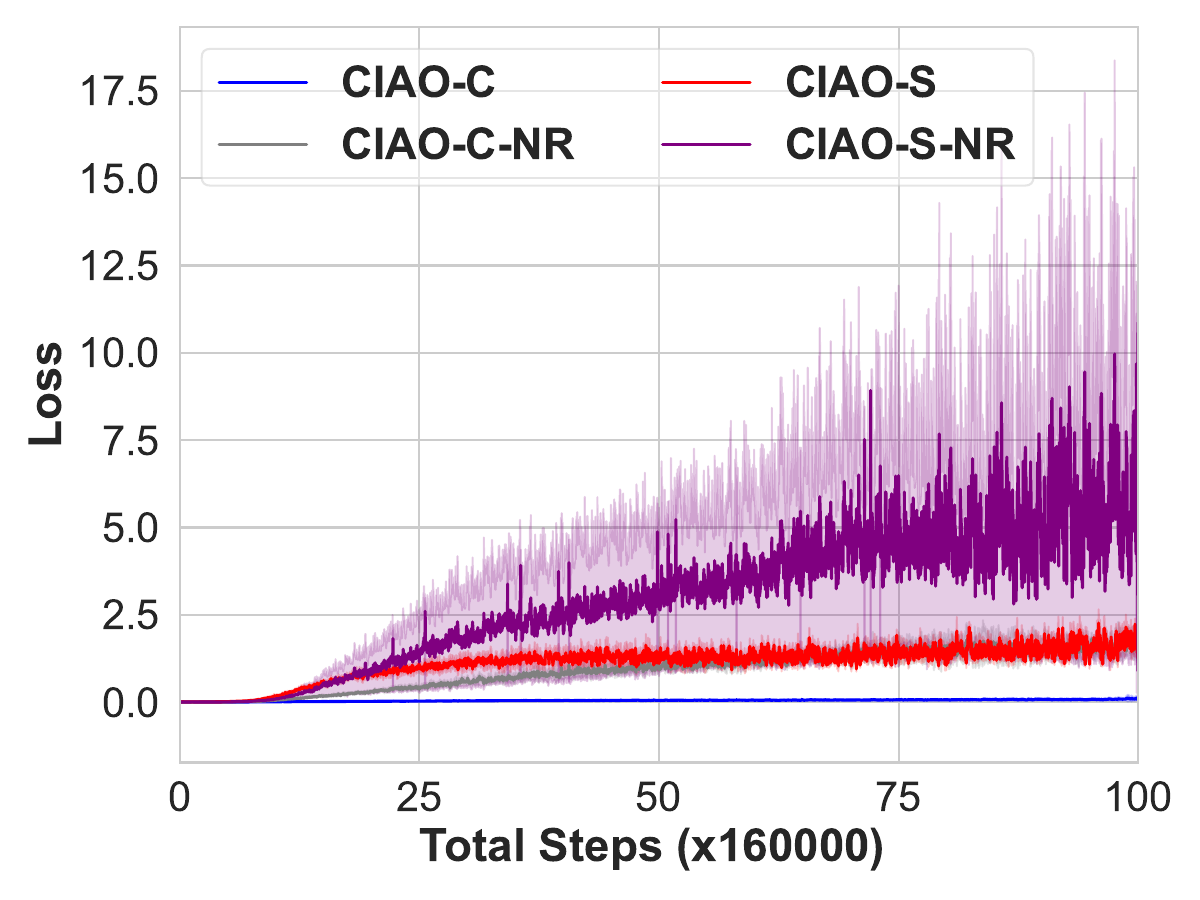}
                    \caption{Wolfpack.}
                \end{subfigure}
                \caption{Comparison between CIAO and its ablation variant with no consideration of regularizers, denoted as CIAO-X-NR (``X'' is either ``C'' or ``S''), on the regularization loss during training, across different scenarios where the training is conducted with a maximum of 3 agents.}
                \label{fig:ablation-noReg-loss}
            \end{figure}

            \begin{figure}[ht!]
                \centering
                \begin{subfigure}[b]{0.33\textwidth}
                    \centering
                    \includegraphics[width=\textwidth]{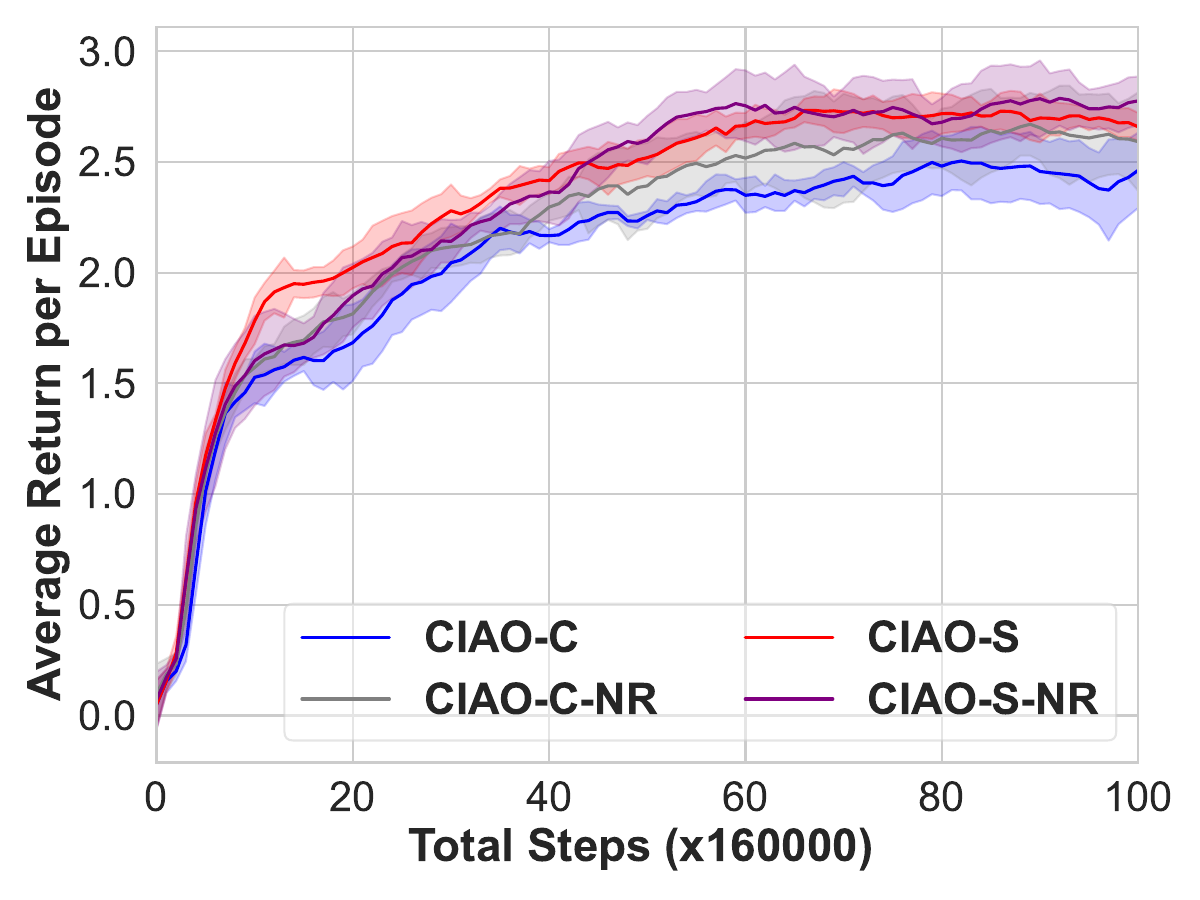}
                    \caption{LBF including A2C agent.}
                \end{subfigure}
                \hfill
                \begin{subfigure}[b]{0.33\textwidth}
                    \centering
                    \includegraphics[width=\textwidth]{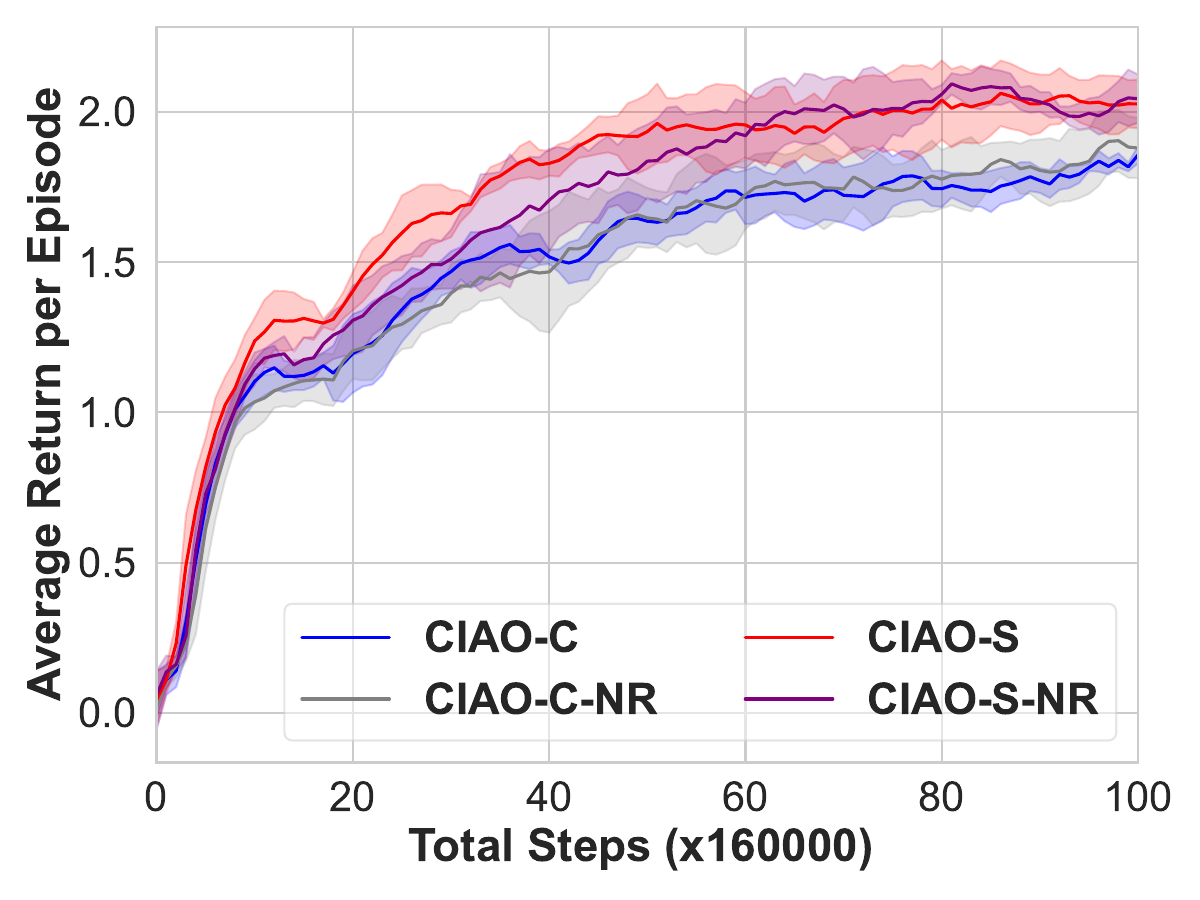}
                    \caption{LBF excluding A2C agent.}
                \end{subfigure}
                \hfill
                \begin{subfigure}[b]{0.33\textwidth}
                    \centering
                    \includegraphics[width=\textwidth]{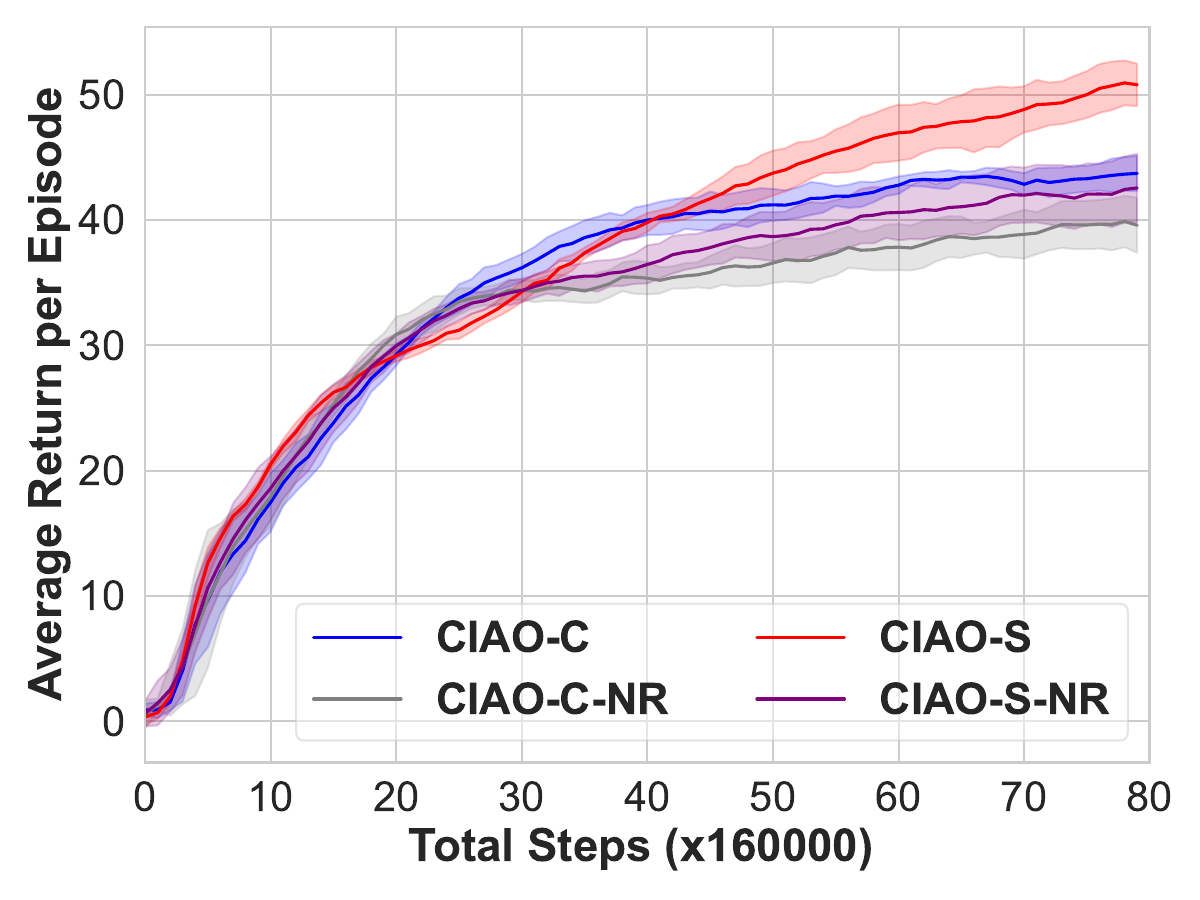}
                    \caption{Wolfpack.}
                \end{subfigure}
                \caption{Comparison between CIAO and its ablation variant with no consideration of regularizers, denoted as CIAO-X-NR (``X'' is either ``C'' or ``S''), across different scenarios where the evaluation is conducted with a maximum of 5 agents.}
                \label{fig:ablation-noReg-5-eval}
            \end{figure}

            \begin{figure}[ht!]
                \centering
                \begin{subfigure}[b]{0.33\textwidth}
                    \centering
                    \includegraphics[width=\textwidth]{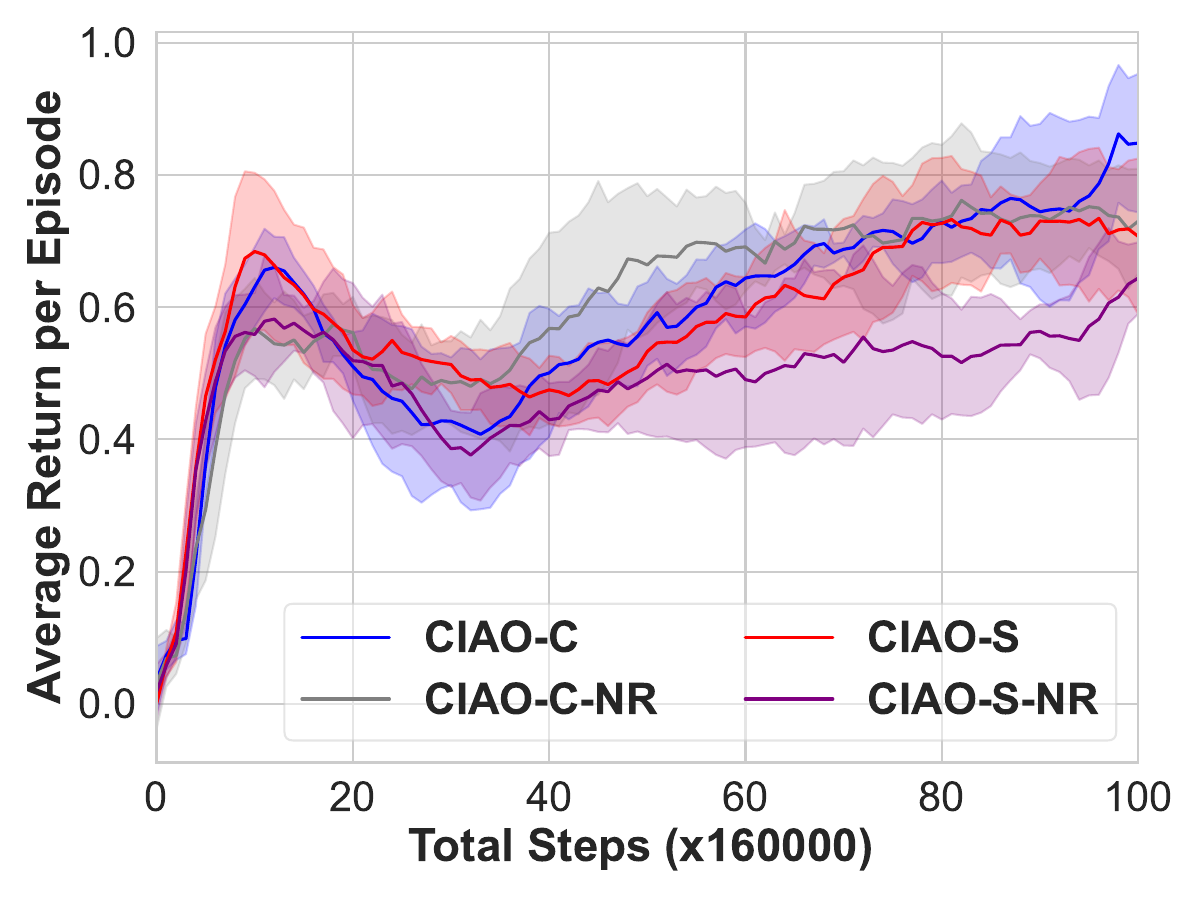}
                    \caption{LBF including A2C agent.}
                \end{subfigure}
                \hfill
                \begin{subfigure}[b]{0.33\textwidth}
                    \centering
                    \includegraphics[width=\textwidth]{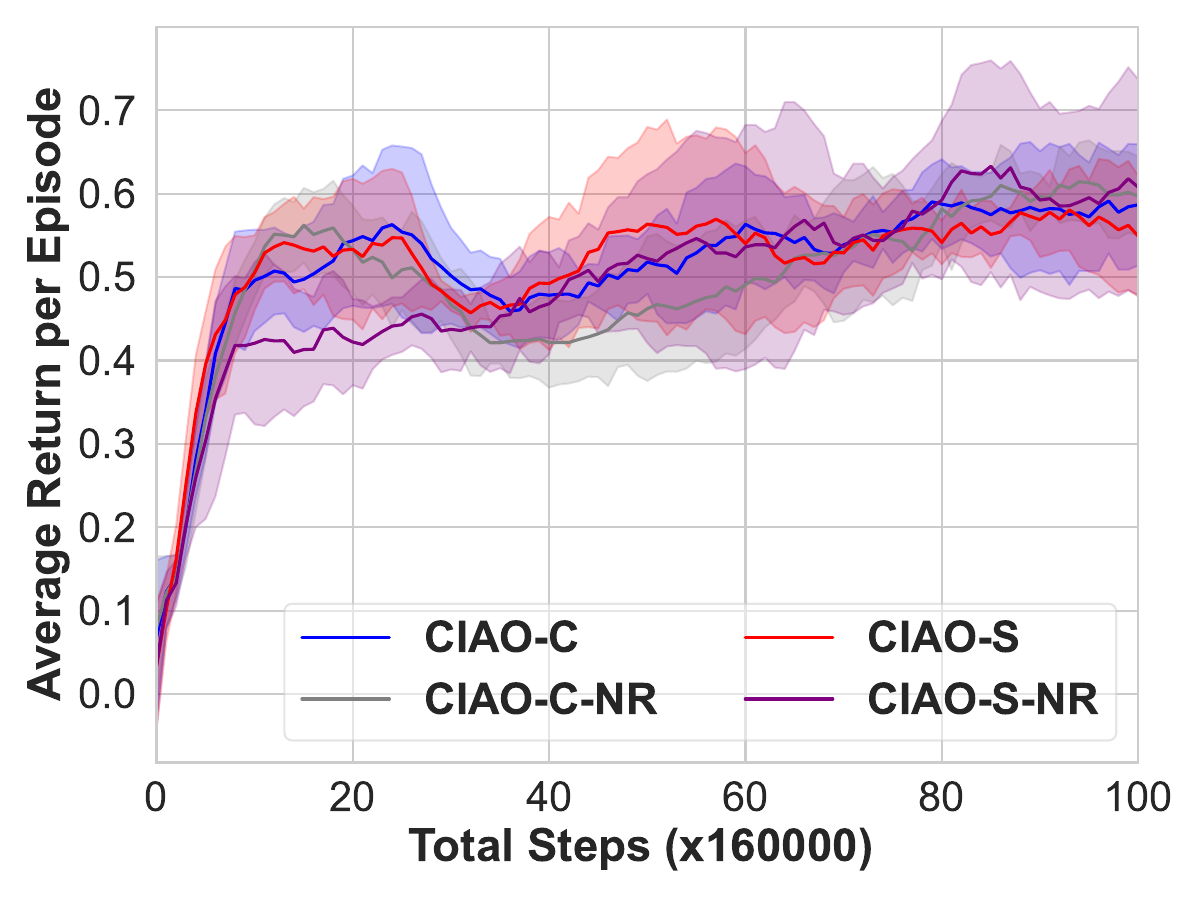}
                    \caption{LBF excluding A2C agent.}
                \end{subfigure}
                \hfill
                \begin{subfigure}[b]{0.33\textwidth}
                    \centering
                    \includegraphics[width=\textwidth]{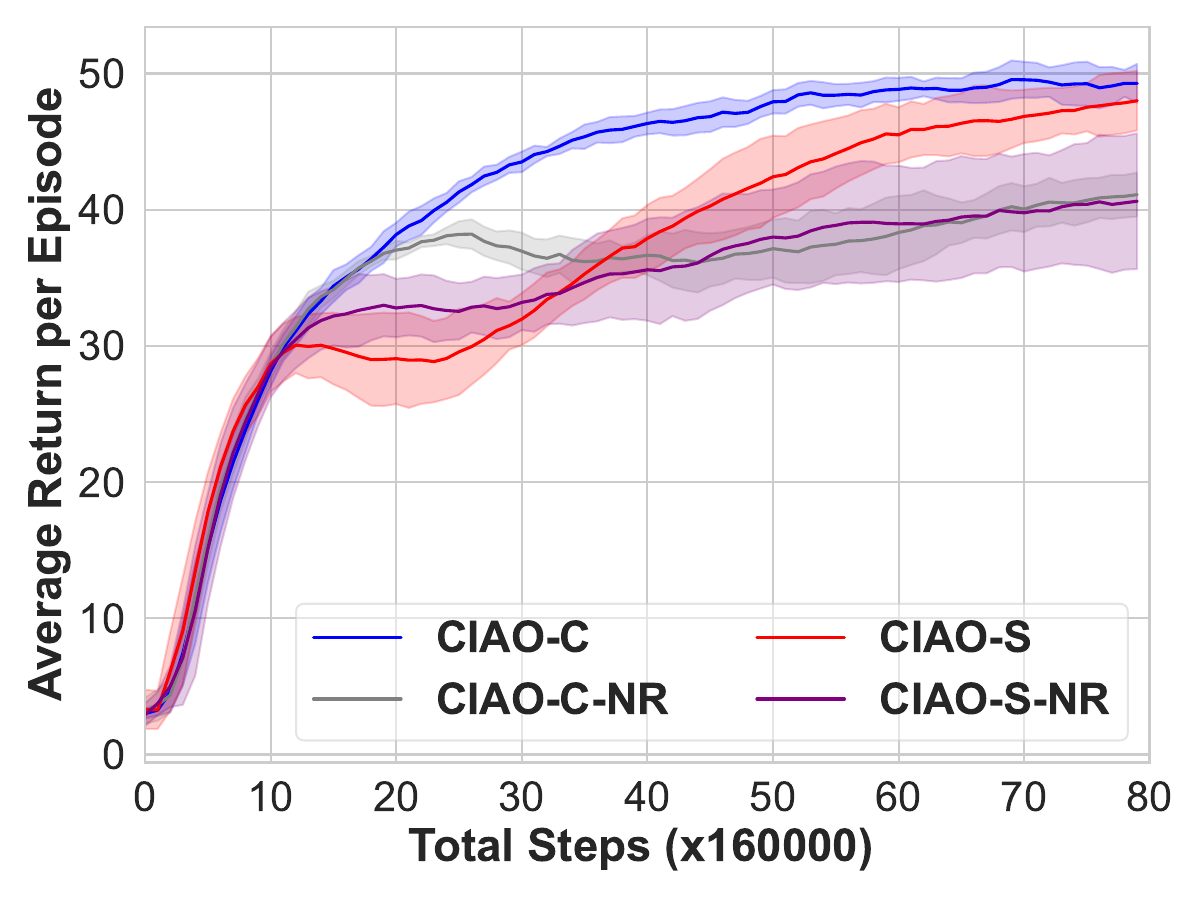}
                    \caption{Wolfpack.}
                \end{subfigure}
                \caption{Comparison between CIAO and its ablation variant with no consideration of regularizers, denoted as CIAO-X-NR (``X'' is either ``C'' or ``S''), across different scenarios where the evaluation is conducted with a maximum of 9 agents.}
                \label{fig:ablation-noReg-9-eval}
            \end{figure}

            % We present a performance comparison between CIAO and its ablation variant with no consideration of regularizers. Fig. \ref{fig:ablation-noReg-loss} illustrates the regularization losses during training, which validates the functionality of adding regularizers. It is observed that regularizer is not always effective enough in LBF. There are two possible reasons: (1) it is owing to some special properties of the environment of LBF; (2) the regularization is only motivated by a sufficient condition to validate solving DVSC as an RL problem, which means that it could exist some necessary conditions, still validating the above statement. This is left to the future research.
            We conduct a performance comparison between CIAO and its ablation variant, excluding considerations of regularizers. In Fig. \ref{fig:ablation-noReg-loss}, the regularization losses during training are depicted, affirming the importance of incorporating regularizers. Notably, the effectiveness of regularizers is not consistently robust in the context of LBF, as shown in Figs.~\ref{fig:ablation-noReg-5-eval} and \ref{fig:ablation-noReg-9-eval}. Two potential explanations arise: (1) unique properties of the LBF environment may diminish the impact of regularizers, and (2) the regularization, driven by a sufficient condition to address DVSC as an RL problem, may lack consideration of other eligible conditions. The exploration of these possibilities is deferred to the future research.

\end{document}